\newcommand{\squishlist}{
   \begin{list}{$\bullet$}
    { \setlength{\itemsep}{0pt}      \setlength{\parsep}{3pt}
      \setlength{\topsep}{3pt}       \setlength{\partopsep}{0pt}
      \setlength{\leftmargin}{1.5em} \setlength{\labelwidth}{1em}
      \setlength{\labelsep}{0.5em} } }
\newcommand{\squishlisttwo}{
   \begin{list}{$\bullet$}
    { \setlength{\itemsep}{0pt}    \setlength{\parsep}{0pt}
      \setlength{\topsep}{0pt}     \setlength{\partopsep}{0pt}
      \setlength{\leftmargin}{2em} \setlength{\labelwidth}{1.5em}
      \setlength{\labelsep}{0.5em} } }
\newcommand{\squishend}{
    \end{list}  }
\newcommand{\eat}[1]{}
\newtheorem{definition}{Definition}[section]
\newtheorem{lemma}{Lemma}[section]
\newtheorem{heuristic}{Heuristic}[section]
\newproof{proof}{Proof}[section]
\begin{document}

\begin{frontmatter}



\title{Probabilistic Voronoi Diagrams for Probabilistic Moving Nearest Neighbor Queries}


\author{Mohammed Eunus Ali\footnote{The corresponding author}, Egemen Tanin, Rui Zhang, and Ramamohanarao Kotagiri}

\address{Department of Computer Science and Software Engineering\\University of Melbourne, Victoria, 3010, Australia\\Tel.: +61 3 8344 1350\\Fax: +61 3 9348 1184\\\{eunus,egemen,rui,rao\}@csse.unimelb.edu.au}

\begin{abstract}
A large spectrum of applications such as location based services
and environmental monitoring demand efficient query processing on
uncertain databases. In this paper, we propose the probabilistic
Voronoi diagram (PVD) for processing moving nearest neighbor
queries on uncertain data, namely the probabilistic moving nearest
neighbor (PMNN) queries. A PMNN query finds the most probable
nearest neighbor of a \emph{moving query point} continuously. To process PMNN queries efficiently, we
provide two techniques: a pre-computation approach and an incremental
approach. In the pre-computation
approach, we develop an algorithm to efficiently evaluate PMNN
queries based on the pre-computed PVD for the entire data set. In
the incremental approach, we propose an incremental probabilistic safe
region based technique that does not require to pre-compute the
whole PVD to answer the PMNN query. In this incremental approach,
we exploit the knowledge for a known region to compute the lower bound of the probability of an object being the nearest
neighbor. Experimental results show that our approaches significantly
outperform a sampling based approach by orders of magnitude in
terms of I/O, query processing time, and communication overheads.
\end{abstract}

\begin{keyword}
Voronoi diagrams \sep continuous queries \sep moving
objects \sep uncertain data

\end{keyword}

\end{frontmatter}

\section{Introduction}
\label{sec:intro}
Uncertainty is an inherent property in many
database applications that include location based
services~\cite{goci04:mo}, environmental
monitoring~\cite{Madden03.sigmod}, and feature extraction
systems~\cite{liu06.icpr}. The inaccuracy or imprecision of data
capturing devices, the privacy concerns of users, and the
limitations on bandwidth and battery power introduce uncertainties
in different attributes such as the location of an object or the
measured value of a sensor. The values of these attributes are
stored in a database, known as an uncertain database.

In recent years, query processing on an uncertain database has received significant attention from the research community due to its wide range of applications. Consider a location based application where the location information of users may need to be pre-processed before publishing due to the privacy concern of users. Alternatively, a user may want to provide her position as a larger region in order to prevent her location to be identified to a particular site. In such cases, locations of users are stored as uncertain attributes such as regions instead of points in the database. An application that deals with the location of objects (e.g., post office, hospital) obtained from satellite images is another example of an uncertain database. Since the location information may not be possible to identify accurately from the satellite images due to noisy transmission, locations of objects need to be represented as regions denoting the probable locations of objects. Likewise, in a biological database, objects identified from microscopic images need to be presented as uncertain attributes due to inaccuracies of data capturing devices.

In this paper, we propose a novel concept called \emph{Probabilistic
Voronoi Diagram} (PVD), which has a potential to efficiently process nearest neighbor (NN) queries on an uncertain database. The PVD for a given set of uncertain objects
${o_{1},o_{2},...,o_{n}}$ partitions the data space into
a set of \emph{Probabilistic Voronoi Cells} (PVCs) based on the
probability measure. Each cell $PVC(o_i)$ is a region in the data
space, where each data point in this region has a higher
probability of being the NN to $o_i$ than any other object.

A nearest neighbor (NN) query on an uncertain database, called a
Probabilistic Nearest Neighbor (PNN) query, returns a set of
objects, where each object has a non-zero probability of being the
nearest to a query point. A common variant of the PNN query that
finds the most probable NN to a given query point is also called a
top-1-PNN query. Existing research focuses on efficient processing
of PNN
queries~\cite{Cheng03.sigmod,Cheng04.tkde,Cheng.ICDE10,Kriegel07.dasfaa}
and its
variants~\cite{beskales08.vldb,chris07:efficienttop-k,Soliman07:top-kquery}
for a \emph{static query point}. In this paper, we are interested
in answering Probabilistic Moving Nearest Neighbor (PMNN) queries
on an uncertain database, where data objects are \emph{static},
the query is \emph{moving}, and the future path of the moving
query is \emph{unknown}. A PMNN query returns the most probable
nearest object for a moving query point continuously.

\eat{A Nearest Neighbor (NN) query on an uncertain database, called a
Probabilistic Nearest Neighbor (PNN) query, returns a set of
objects, where each object has a non-zero probability of being the
nearest to a query point. A common variant of the PNN query that
finds the most probable NN to a given query point is also called a
top-1-PNN query. Existing research focuses on efficient processing
of PNN
queries~\cite{Cheng03.sigmod,Cheng04.tkde,Cheng.ICDE10,Kriegel07.dasfaa}
and its
variants~\cite{beskales08.vldb,chris07:efficienttop-k,Soliman07:top-kquery}
for a \emph{static query point}. In this paper, we are interested
in answering Probabilistic Moving Nearest Neighbor (PMNN) queries
on an uncertain database, where data objects are \emph{static},
the query is \emph{moving}, and the future path of the moving
query is \emph{unknown}. A PMNN query returns the most probable
nearest object for a moving query point continuously.
}
\eat{

\begin{figure}[htbp]
    \centering
        \includegraphics[width=2.5in]{circle_pmnn.pdf}
    \caption{An example of a PMNN query}
    \label{fig:circle_PMNN}
\end{figure}

As an example of the PMNN query, a taxi driver wants to find the
most probable nearest passenger continuously while driving his
car. In this case, the location of a passenger is modeled as a two
dimensional static uncertain region and the trajectory of the taxi
is the path of the moving query point. There are various reasons
for the uncertainty of a passenger's
location~\cite{ChenC07,pfoser_ssd_99,Sistla98:queryingthe}. For
example, in our taxi example, a passenger may want to provide her
position as a larger region such as a static circular region, in
order to prevent her location to be identified to a particular
site. The inaccuracy of the location determining technique may
result in an uncertain region as well. Also, the uncertainty can
come from the mobility of a passenger. For instance, a passenger
might request for a taxi using her mobile device while she is
roaming.  Given the comparative speed of a car and a pedestrian,
and how large region they can cover, a relaxed representation of a
pedestrian's location as a static uncertain region can be
realistic for the duration of a query.

Figure~\ref{fig:circle_PMNN} shows the possible locations of two
passengers $o_1$ and $o_2$ as uncertain circular regions, and a
path $qq^{\prime}$ of the taxi driver. Passenger $o_1$ is the most
probable NN when the taxi driver is at $q$, and $o_2$ is the most
probable NN when the taxi is at $q^\prime$. The PMNN query gives
the most probable NN for every location of the query point, i.e.,
every point on $qq^{\prime}$ in this example.

Another example application for the PMNN query is the Back Strain
Monitoring (BSM) system~\cite{bsm}. BSM sensors capture movement
of the lumbar spine during patients' normal activities. The data
collected from patients are classified as different ranges of
movement values representing different levels of risks associated
with movements, and are stored as a collection of one-dimensional
ranges (e.g., 0-10 low risk, 10-20 medium risk) in an uncertain
database. For real-time postural feedback of possible risks of the
move, a patient may want to continuously monitor the movement of
the lumbar spine and find the most probable match of the current
movement position with the classified stored data. In this
application, the continuous change of the position of the lumbar
spine with respect to \emph{the base position of the lumbar} can
be seen as a sequence of real numbers representing a moving query
on a one-dimensional scale.
}

A straightforward approach for evaluating a PMNN query is to use
a sampling-based method, which processes the PMNN query as a
sequence of PNN queries at sampled locations on the query path.
However, to obtain up-to-date answers, a high sampling rate is
required, which makes the sampling-based approach inefficient due
to the frequent processing of PNN queries.

To avoid high processing cost of the sampling based approach
and to provide continuous results, recent approaches for
continuous NN query processing on a \emph{point data set} rely on
safe-region based techniques, e.g., Voronoi
diagram~\cite{okabe00:voronoi}. In a Voronoi diagram based
approach, the data space is partitioned into disjoint Voronoi
cells where all points inside a cell have the same NN. Then, the
NN of a query point is reduced to identifying the cell for the
query point, and the result of a moving query point remains valid
as long as it remains inside that cell. Motivated by the
safe-region based paradigm, in this paper we propose a Voronoi
diagram based approach for processing a PMNN query on a set of
uncertain objects.

Voronoi diagrams for uncertain
objects~\cite{Cheng.ICDE10,evans08.CCCG} based on a simple
distance metric, such as the minimum and maximum distances to
objects, result in a large neutral region that contains those
points for which no specific NN object is defined. Thus, these are
not suitable for processing a PMNN query. In this paper, we propose
the PVD that divides the space based on a probability measure
rather than using just a simple distance metric.

\eat{The key idea of our approach is to develop a \emph{Probabilistic
Voronoi Diagram} (PVD) for a given set of uncertain objects
${o_{1},o_{2},...,o_{n}}$. The PVD partitions the data space into
a set of \emph{Probabilistic Voronoi Cells} (PVCs) based on the
probability measure. Each cell $PVC(o_i)$ is a region in the data
space, where each data point in this region has a higher
probability of being the NN to $o_i$ than any other object. }

A naive approach to compute the PVD is to find the top-1-PNN for
every possible location in the data space using existing static
PNN query processing
techniques~\cite{Cheng03.sigmod,Cheng04.tkde,beskales08.vldb},
which is an impractical solution due to high computational
overhead. In this paper, we propose a practical solution to compute
the PVD for a set of uncertain objects. The key idea of our approach is to efficiently compute the probabilistic bisectors between two neighboring objects that forms the basis of PVCs for the PVD.

After computing the PVD, the most probable NN can be determined by simply identifying the PVC
in which the query point is currently located. The result of the
query does not change as long as the moving query point remains in
the current PVC. A user sends its request as soon as it exits the
PVC. Thus, in contrast to the sampling based approach, the PVD
ensures the most probable NN for every point of a moving query path is
available. Since this approach requires the
pre-computation of the whole PVD, we name it the \emph{pre-computation
approach} in this paper.

The pre-computation approach needs to access all the objects from
the database to compute the entire PVD. In addition, the PVD needs
to be re-computed for any updates (insertion or deletion) to the
database. Thus the pre-computation approach may not be suitable
for the cases when the query is confined into a small region in
the data space or when there are frequent updates in the database.
For such cases, we propose an incremental algorithm based on the
concept of local PVD. In this approach, a set of surrounding
objects and an associated search space, called \emph{known
region}, with respect to the current query position are retrieved
from the database. Objects are retrieved based on their
probabilistic NN rankings from the current query location. Then, we compute
the local PVD based only on the retrieved data set, and develop  a
\emph{probabilistic safe region} based PMNN query processing
technique. The probabilistic safe region defines  a region for an
uncertain object where the object is guaranteed to be the most
probable nearest neighbor. This probabilistic safe region enables a user to
utilize the retrieved data more efficiently and reduces the
communication overheads when a client is connected to the server
through a wireless link. The process needs to be repeated as soon
as the retrieved data set cannot provide the required answer for
the moving query point. We name this PMNN query processing
technique the \emph{incremental approach} in this paper.

In summary, we make the following contributions in this paper:

\begin{itemize}
    \item We formulate the Probabilistic Voronoi Diagram (PVD) for uncertain objects and propose techniques to compute the PVD.
    \item We provide an algorithm for evaluating PMNN
    queries based on the pre-computed PVD.
    \item We propose an incremental algorithm for evaluating PMNN queries based on the concept of local PVD.
    \item We conduct an extensive experimental study which shows that our PVD based approaches outperform the sampling based approach significantly.
\end{itemize}

The rest of the paper is organized as follows.
Section~\ref{sec:ps} discusses preliminaries and the problem
setup. Section~\ref{sec:rw} reviews related work. In
Section~\ref{sec:pvd}, we formulate the concept of PVD and present
methods to compute it, focusing on one and two dimensional spaces.
In Section~\ref{sec:pmnn}, we present two techniques:
pre-computation approach and incremental approach for processing
PMNN queries. Section~\ref{sec:exp} reports our experimental
results and Section~\ref{sec:conc} concludes the paper.

\section{Preliminaries and Problem Setup}
\label{sec:ps} Let $O$ be a set of uncertain objects in a
$d$-dimensional data space. An uncertain object $o_{i} \in O$, $1
\leq i \leq |O|$, is represented by a $d$-dimensional uncertain
range $R_{i}$ and a probability density function~$(pdf)$
$f_{i}(u)$ that satisfies $\int_{R_{i}}f_{i}(u)du = 1$ for $u \in
R_{i}$. If $u \notin R_{i}$, then $f_{i}(u)=0$.\eat{ In this
paper, our goal is to introduce the concept of probabilistic
Voronoi diagram (PVD). Even the PVD for uncertain objects with
uniform pdf is not trivial, and hence} We assume that the pdf of
uncertain objects follow uniform distributions for the sake of
easy explication. Our concept of PVD is applicable for other types
of distributions. We briefly discuss PVDs for other distributions
in Section~\ref{subsec:discus}).\eat{For simplicity, we assume
uniform distributions for the pdf of uncertain objects to
illustrate the concept of PVD.} For uniform distribution, the pdf
of $o_{i}$ can be expressed as $f_{i}(u)=\frac{1}{Area(R_{i})}$
for $u\in R_i$. For example, for a circular object $o_i$, the
uncertainty region and the pdf are represented as
$R_{i}=(c_i,r_i)$ and $f_{i}(u)=\frac{1}{\pi r_{i}^2}$,
respectively, where $c_i$ is the center and $r_i$ is the radius of
the region. We also assume that the uncertainty of objects remain
constant.

An NN query on a traditional database consisting of a set of data
points (or objects) returns the nearest data point to the query
point. An NN query on an uncertain database does not return a
single object, instead it returns a set of objects that have
non-zero probabilities of being the NN to the query point. Suppose
that the database maintains only point locations $c_1$, $c_2$, and
$c_3$ for objects $o_1$, $o_2$, and $o_3$, respectively (see
Figure~\ref{fig:circle_prb}). Then an NN query with respect to $q$
returns $o_2$ as the NN because the distance $dist(c_2,q)$ is the
least among all other objects. In this case, $o_1$ and $o_3$ are
the second and third NNs, respectively, to the query point $q$. If
the database maintains the uncertainty regions $R_{1}=(c_1,r_1)$,
$R_{2}=(c_2,r_2)$, and $R_{3}=(c_3,r_3)$ for objects $o_1$, $o_2$,
and $o_3$, respectively, then the NN query returns all three
$(o_1,p_1),(o_2,p_2),(o_3,p_3)$ as probable NNs for the query
point $q$, where $p_1>p_2>p_3>0$ (see
Figure~\ref{fig:circle_prb}).

A Probabilistic Nearest Neighbor (PNN) query~\cite{Cheng03.sigmod}
is defined as follows:

\begin{definition}
\label{def:2_1} (\textit{PNN}) Given a set $O$ of uncertain
objects in a $d$-dimensional database, and a query point $q$, a
PNN query returns a set $P$ of tuples $(o_{i},p_{i})$, where
$o_i\in O$ and $p_{i}$ is the non-zero probability that the
distance of $o_i$ to $q$ is the minimum among all objects in $O$.
\end{definition}

The probability $p(o_i,q)$ (or simply $p_{i}$) of an object
$o_{i}$ of being the NN to a query point $q$ can be computed as
follows. For any point $u\in R_i$, where $R_i$ is the uncertainty
region of an object $o_i$, we need to first find out the
probability of $o_i$ being at $u$ and multiply it by the
probabilities of all other objects being farther than $u$ with
respect to $q$, and then summing up these products for all $u$ to
compute $p_i$. Thus, $p_i$ can be expressed as follows:

\begin{equation}
\label{eq:pnn}
    p_{i} = \int_{u\in R_{i}}f_{i}(u)\big(\prod_{j\neq i}
    \int_{v\in R_{j}}P(dist(v,q)>dist(u,q))dv\big)du,
\end{equation}

where the function $P(.)$ returns the probability that a point
$v\in R_j$ of $o_j$ is farther from a point $u\in R_i$ of $o_i$.

\begin{figure}[htbp]
    \centering
        \includegraphics[width=1.8in]{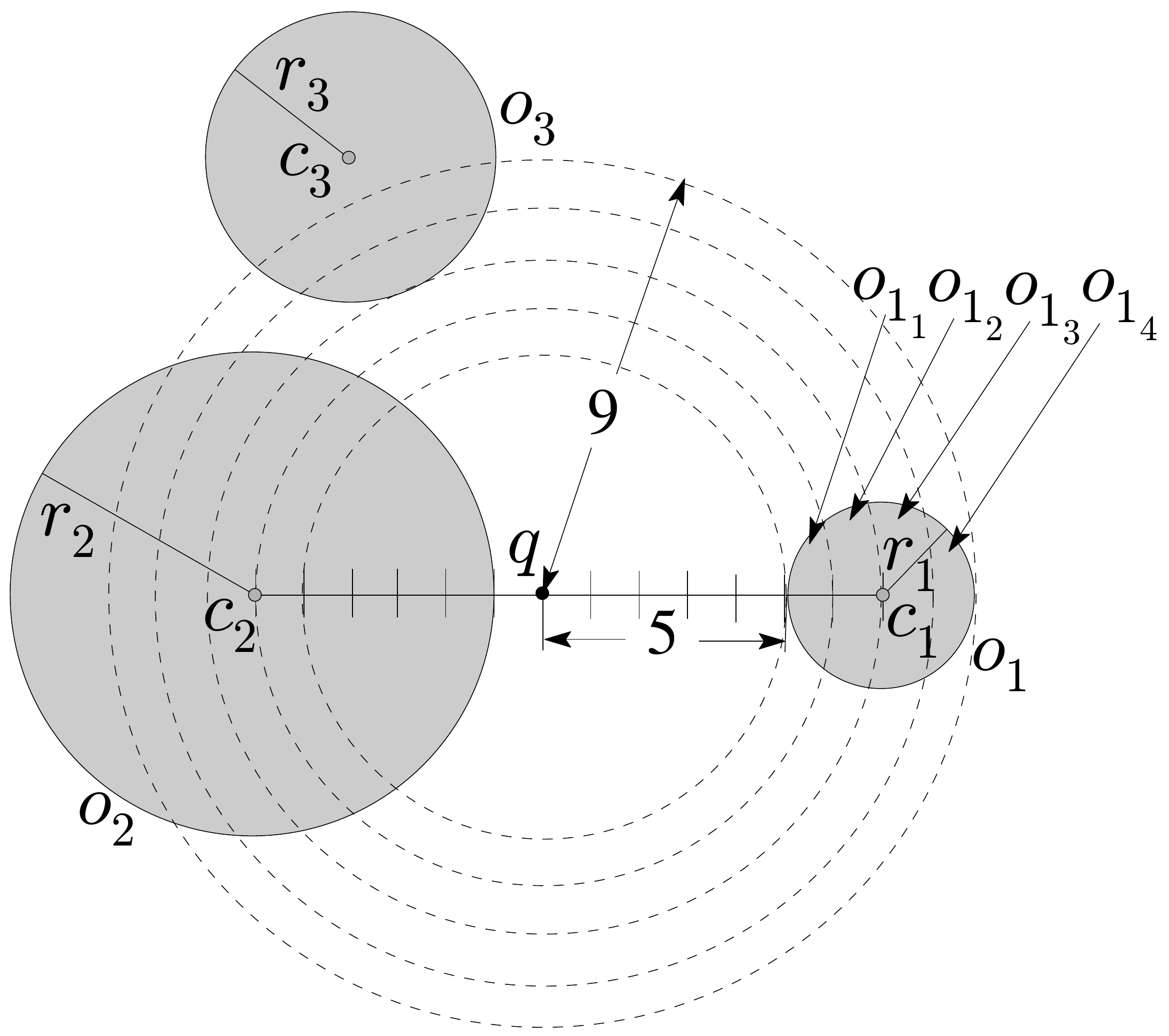}
    \caption{An example of a PNN query}
    \label{fig:circle_prb}
\end{figure}

Figure~\ref{fig:circle_prb} shows a query point $q$, and three
objects $o_1$, $o_2$, and $o_3$. Based on Equation~\ref{eq:pnn},
the probability $p_1$ of object $o_1$ being the NN to $q$ can be
computed as follows. In this example, we assume \emph{a discrete space} where the radii of three objects
are 5, 2, and 3 units, respectively, and the minimum distance of
$o_1$ to $q$ is 5 units. Suppose that the dashed circles $(q,5)$,
$(q,6)$, $(q,7)$, $(q,8)$, and $(q,9)$ centered at $q$ with radii
5, 6, 7, 8, and 9 units, respectively, divide the uncertain region
$R_1$ of $o_1$ into four sub-regions $o_{1_1}$, $o_{1_2}$,
$o_{1_3}$, and $o_{1_4}$,  where $o_{1_1}=(c_1,r_1)\cap(q,6)$,
$o_{1_2}=(c_1,r_1)\cap(q,7)-o_{1_1}$,
$o_{1_3}=(c_1,r_1)\cap(q,8)-(o_{1_1}\cup o_{1_2})$,
$o_{1_4}=(c_1,r_1)\cap(q,9)-(o_{1_1}\cup o_{1_2}\cup o_{1_3})$;
similarly $R_2$ is divided into six sub-regions $o_{2_1}$,
$o_{2_2}$, $o_{2_3}$, $o_{2_4}$, $o_{2_5}$, and $o_{2_6}$; $R_3$
is divided into three sub-regions $o_{3_1}$, $o_{3_2}$, and
$o_{3_3}$.

Then $p_1$ can be computed by summing: (i) the probability of
$o_1$ being within the sub-region $o_{1_1}$ multiplied by the
probabilities of $o_2$ and $o_3$ being outside the circular region
$(q,6)$, (ii) the probability of $o_1$ being within the sub-region
$o_{1_2}$ multiplied by the probabilities of $o_2$ and $o_3$ being
outside the circular region $(q,7)$, (iii) the probability of
$o_1$ being within the sub-region $o_{1_3}$ multiplied by the
probabilities of $o_2$ and $o_3$ being outside the circular region
$(q,8)$, and (iv) the probability of $o_1$ being within the
sub-region $o_{1_4}$ multiplied by the probabilities of $o_2$ and
$o_3$ being outside the circular region $(q,9)$.

As we have discussed in the introduction, in many applications a
user may often be interested in the most probable nearest
neighbor. In such cases, a PNN only returns the object with the
highest probability of being the NN, also known as a
\emph{top-1-PNN query}. In this paper, we address the probabilistic
moving NN query that continuously reports the most probable NN for
each query point of a moving query.

\eat{
We define this Probabilistic
Moving Nearest Neighbor (PMNN) query as follows:

\begin{definition}
\label{def:2_2} (\textit{PMNN}) Given a set $O$ of uncertain
objects in a $d$-dimensional database, and a moving query point
$q$, a PMNN query returns object $o_{i}$ having the highest
probability of being the nearest neighbor for every position of
$q$.
\end{definition}
}

From Equation~\ref{eq:pnn}, we see that finding the most probable
NN to a static query point is expensive as it involves costly
integration and requires to consider the uncertainty of other
objects. Hence, for a moving user that needs to be updated with
the most probable answer continuously, it requires repetitive
computation of the top object for every sampled location of the
moving query. In this paper, we propose PVD based approaches for
evaluating a PMNN query.

In this paper, we propose two techniques: a pre-computation
approach and an incremental approach to answer PMNN queries. Based
on the nature of applications, one can choose any of these
techniques that suits best for her purpose.  Moreover, both of our
techniques fit into any of the two most widely used query
processing paradigms: \emph{centralized paradigm}, and
\emph{client-server paradigm}. In the centralized paradigm the
query issuer and the processor reside in the same machine, and the
total query processing cost is the main performance measurement
metric. On the other hand, in the client-server paradigm, a client
issues a query to a server that processes the query, through
wireless links such as mobile phone networks. Thus, in the
client-server paradigm the performance metric includes both the
communication cost and the query processing cost.

In the rest of the paper, we use the following functions:
$min(v_{1}, v_{2},...,v_n)$ and $max(v_{1}, v_{2},...,v_n)$ return
the minimum and the maximum, respectively, of a given set of
values $v_{1}$, $v_{2}$,...,$v_n$; $dist(p_1,p_2)$ returns the
Euclidian distance between two points $p_1$ and $p_2$;
$mindist(p,o)$ and $maxdist(p,o)$ return the minimum and maximum
Euclidian distances, respectively,  between a point $p$ and an
uncertain object $o$.

We also use the following terminologies. When the possible range
of values of two uncertain objects overlap then we call them
\emph{overlapping objects}; otherwise they are called
\emph{non-overlapping objects}. If the ranges of two objects are
of equal length then we call them \emph{equi-range objects};
otherwise they are called \emph{non-equi-range objects}.
%

\section{Background}
\label{sec:rw}

\eat{Previous work on uncertain databases have studied query types
such as range queries~\cite{Cheng03.sigmod,cheng04:threshold}; NN
queries~\cite{Cheng04.tkde,Cheng03.sigmod,Kriegel07.dasfaa,traj09:trajectory};
top-$k$
queries~\cite{beskales08.vldb,chris07:efficienttop-k,Soliman07:top-kquery};
and skyline queries~\cite{pei07:skyline}.}

In this section, we first give an overview of existing PNN query
processing techniques on uncertain databases that are closely
related to our work. Then we present existing work on Voronoi
diagrams.

\subsection{Probabilistic Nearest Neighbor}
\label{subsec:pnn} Processing PNN queries on uncertain databases
has received significant attention in recent years.
In~\cite{Cheng03.sigmod}, Cheng~et~al. proposed a numerical
integration based technique to evaluate a PNN query for
one-dimensional sensor data. In~\cite{Cheng04.tkde}, an I/O
efficient technique based on numerical integration was developed
for evaluating PNN queries on two-dimensional uncertain moving
object data. In~\cite{Kriegel07.dasfaa}, authors presented a
sampling based technique to compute PNN, where both data and query
objects are uncertain. Probabilistic threshold NN queries have
been introduced in~\cite{Cheng08.icde}, where all objects with
probabilities above a specified threshold are reported.
In~\cite{traj09:trajectory}, a PNN algorithm was presented where
both data and query objects are static trajectories, where the
algorithm finds objects that have non-zero probability of any
sub-intervals of a given trajectory. Lian et
al.~\cite{lian08:pgnn} presented a technique for a group PNN query
that minimizes the aggregate distance to a set of static query
points.

The PNN variant, top-$k$-PNN query reports top $k$ objects which
have higher probabilities of being the nearest than other objects
in the
database~\cite{beskales08.vldb,chris07:efficienttop-k,Soliman07:top-kquery}.
Among these works,
techniques~\cite{chris07:efficienttop-k,Soliman07:top-kquery} aim
to reduce I/O and CPU costs independently.
In~\cite{beskales08.vldb}, the authors proposed a unified cost
model that allows interleaving of I/O and CPU costs while
processing top-$k$-PNN queries. This method~\cite{beskales08.vldb}
uses lazy computational bounds for probability calculation which
is found to be very efficient for finding  top-$k$-PNN.

Any existing methods for static PNN
queries~\cite{Cheng03.sigmod,Cheng04.tkde,Kriegel07.dasfaa} or its
variants~\cite{beskales08.vldb,chris07:efficienttop-k,Soliman07:top-kquery}
can be used for evaluating PMNN queries which process the PMNN
query as a sequence of PNN queries at sampled locations on the
query path. Since in this paper we are only interested in the most
probable answer, we use the recent
technique~\cite{beskales08.vldb} to compute top-1-PNN for
processing PMNN queries in a comparative sampling based approach
and also for the probability calculation in the PVD.

Some techniques~\cite{dai05:existprob,yiu09:existprob} have been
proposed for answering PNN queries (including top-$k$-PNN) for
existentially uncertain data, where objects are represented as
points with associated membership probabilities. However, these
techniques are not related to our work as they do not support
uncertainty in objects' attributes. Our problem should also not be
confused with maximum likelihood classifiers~\cite{thomas97:ml}
where they use statistical decision rules to estimate the
probability of an object being in a certain class, and assign the
object to the class with the highest probability.

All of the above mentioned schemes assume a static query point for
PNN queries. Though, continuous processing of NN queries for a
moving query point on a \emph{point data set} was also a topic of
interest for many years~\cite{tao02:tp}, we are the first to
address such queries on an \emph{uncertain data set}. In this
paper, we propose efficient techniques for probabilistic moving
NN queries on an uncertain database, where we continuously report
the most probable NN for a moving query point.


\subsection{Voronoi Diagrams}
\label{subsec:vd}

The Voronoi diagram~\cite{okabe00:voronoi} is a popular approach
for answering both static and continuous nearest neighbor queries
for two-dimensional \emph{point data}~\cite{zhang:sq}. Voronoi
diagrams for extended objects (e.g., circular
objects)~\cite{kar01:VDcircle} have been proposed that use
boundaries of objects, i.e., minimum distances to objects, to
partition the space. However, these objects are not uncertain, and
thus,~\cite{kar01:VDcircle} cannot be used for PNN queries.

Voronoi diagrams for uncertain objects have been proposed that can
divide the space for a set of sparsely distributed
objects~\cite{Cheng.ICDE10,evans08.CCCG}. Both of these approaches are based on the distance metric, where $mindist$ and $maxdist$ to objects are used to calculate the boundary of the Voronoi edges.

The Voronoi diagram of~\cite{evans08.CCCG} can be described as follows.

Let $R_1,R_2,...,R_n$ be the regions of a set $O$ of uncertain
objects $o_1,o_2,...,o_n$, respectively. Then a set of sub-regions
or cells $V_1,V_2,...,V_n$ in the data space can be determined
such that a point in $V_i$ must be closer to any point in $R_i$
than to any point in any other object's region. For two objects
$o_i$ and $o_j$, let $H(i,j)$ be the set of points in the space
that are at least as close to any point in $R_i$ as any point in
$R_j$, i.e.,
\begin{displaymath}
H(i,j)=\{p\|\forall x\in R_i \forall y\in R_j~dist(p,x)\leq
dist(p,y)\},
\end{displaymath}
where $p$ is a point in the data space.

Then, the cell $V_i$ of object $o_i$ can be defined as follows:
\begin{displaymath}
V_i=\cap_{j\neq i}H(i,j).
\end{displaymath}
The boundary $B(i,j)$ of $H(i,j)$ can be defined as a set of
points in $H(i,j)$, where $p\in B(i,j)$ and
$maxdist(p,o_i)=mindist(p,o_j)$. If the regions are circular, the
boundary of object $o_i$ with $o_j$ is a set of points $p$ that
holds the following condition:
\begin{displaymath}
dist(p,c_i)+r_i=dist(p,c_j)-r_j,
\end{displaymath}
where $c_i$ and $c_j$ are the centers and $r_i$ and $r_j$ are the
radii of the regions for objects $o_i$ and $o_j$, respectively.

\begin{figure}[htbp]
    \centering
        \includegraphics[width=1.5in]{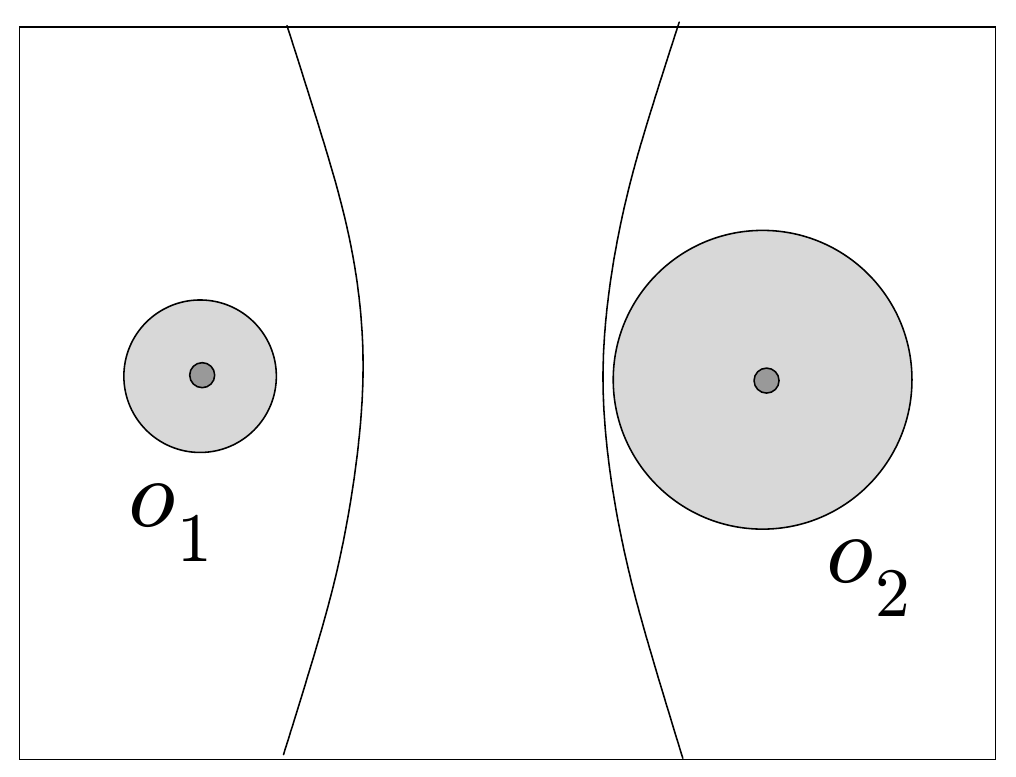}
    \caption{A guaranteed Voronoi diagram}
    \label{fig:guranteed-vd}
\end{figure}

Since $r_i$ and $r_j$ are constants, the points $p$ that satisfy
the above equation lie on the hyperbola (with foci $c_i$ and
$c_j$) arm closest to $o_i$. Figure~\ref{fig:guranteed-vd} shows
an example of this Voronoi diagram for uncertain objects $o_1$ and
$o_2$. The figure also shows the neutral region (the region
between two hyperbolic arms) for which the NN cannot be defined by
using this Voronoi diagram. Since this Voronoi diagram divides the
space based on only the distances (i.e., \emph{mindist} and
\emph{maxdist} of objects), there may not exist any partition of
the space when there is no point such that \emph{mindist} of an
object is equal to \emph{maxdist} of the other object, i.e., when
the regions of objects overlap or too close to each other.

In this approach, a Voronoi cell $V_i$ only contains those points
in the data space that have $o_i$  as the nearest object with
probability one. Thus, this diagram is called a guaranteed Voronoi
diagram for a given set of uncertain objects. However, in our
application domain, an uncertain database can contain objects with
overlapping ranges or objects with close proximity (or densely
populated)~\cite{Cheng03.sigmod,Cheng04.tkde,beskales08.vldb}.
Hence a PNN query returns a set of objects (possibly more than
one) which have the possibilities of being the NN to the query
point. Having such a data distribution, the guaranteed Voronoi
diagram cannot divide the space at all, and as a result the
neutral regions cover most of the data space for which no nearest
object can be determined. However, for an efficient PMNN query
evaluation we need to continuously find the most probable nearest
object for each point of the query path. We propose a
Probabilistic Voronoi Diagram (PVD) that works for any
distribution of data objects.

Cheng et al.~\cite{Cheng.ICDE10} also propose a Voronoi diagram
for uncertain data, called Uncertain-Voronoi diagram (UV-diagram).
The UV-diagram partitions the space based on the distance metric
similar to the guaranteed Voronoi diagram~\cite{evans08.CCCG}. For
each uncertain object $o_i$, the UV-diagram defines a region (or
UV-cell) where $o_i$ has a non-zero probability of being the NN
for any point in this region. The main difference of the
UV-diagram from the guaranteed Voronoi diagram is that the
guaranteed Voronoi diagram concerns about finding the region for a
object where the object is \emph{guaranteed} to be the NN for any
point in this region, on the other hand UV-diagram concerns about
defining a region for an object where the object has a
\emph{chance} of being the NN for any point in this region. For
example, in Figure~\ref{fig:guranteed-vd}, all points that are
left side of the hyperbolic arm closest to $o_2$ have non-zero
probabilities of $o_1$ being the NN, and thus the region left to
this hyperbolic line (i.e., closest to $o_2$) defines the UV-cell
for object $o_1$. Similarly, the region right to the hyperbolic
line closest to $o_1$ defines the UV-cell for object $o_2$. Since
both UV-diagram and guaranteed Voronoi diagram are based on the
concept of similar distance metrics, the UV-diagram suffers from
similar limitations as of the guaranteed Voronoi diagram (as
discussed above) and is not suitable for our purpose. \eat{More
importantly, UV-diagram answers static PNN queries, where it finds
all objects that has a chance of being the NN for a given query
point, on the other hand, we are interested in finding the most
probable NN for a moving query point.}

\eat{Note that, in this paper we assume that the privacy can be
one of the many reasons of uncertainty of data objects, and focus
on PMNN query processing on uncertain objects. The privacy-aware
continuous query processing paradigm can also be benefited from
this work, where the privacy of users associated with the data
objects are important.}

\eat{The important innovation of our approach is dividing the
space based on probability measure rather than using just the
distance metric.}

\section{Probabilistic Voronoi Diagram}
\label{sec:pvd}

A Probabilistic Voronoi Diagram (PVD) is defined as follows:

\begin{definition}
\label{def:pvd} (\textit{PVD}) Let $O$ be a set of uncertain
objects in a $d$-dimensional data space. The probabilistic Voronoi
diagram partitions the data space into a set of disjoint regions,
called Probabilistic Voronoi Cells (PVCs). The PVC of an object
$o_{i}\in O$ is a region or a set of non-contiguous region, denoted by $PVC(o_{i})$, such that
$p(o_{i},q)>p(o_{j},q)$ for any point $q\in PVC(o_{i})$ and for
any object $o_{j}\in O-\{o_i\}$, where $p(o_{i},q)$ and
$p(o_{j},q)$ are the probabilities of  $o_{i}$ and $o_{j}$ of
being the NNs to $q$.
\end{definition}

\eat{Note that, although neighboring PVCs are joint at their
bisectors, we treat them as disjoint regions. }

The basic idea of computing a PVD is to identify the PVCs of all objects. To find a PVC of an object, we need to find the boundaries of the PVC with all neighboring objects. The boundary line/curve that separates two neighboring PVCs is called the probabilistic bisector of two corresponding objects, as both objects have equal probabilities of being the NNs for any point on the boundary. Let $o_i$ and $o_j$ be two uncertain objects, $pb_{o_{i}o_{j}}$ be the probabilistic bisector of $o_{i}$ and $o_{j}$ that separates $PVC(o_i)$ and $PVC(o_j)$. Then, for any point $q\in pb_{o_{i}o_{j}}$, $p(o_{i},q)=p(o_{j},q)$, and for any point $q\in PVC(o_i)$, $p(o_{i},q)>p(o_{j},q)$, and for any point $q\in PVC(o_j)$, $p(o_{i},q)<p(o_{j},q)$.

\eat{the probabilistic bisector for all
pairs of uncertain objects. The probabilistic bisector between two
objects is defined as follows:

The basic idea of
computing a PVD is to find the probabilistic bisector for all
pairs of uncertain objects. The probabilistic bisector between two
objects is defined as follows:}
\eat{
\begin{definition}
\label{def:pb} (\textit{Probabilistic Bisector}) Let $o_{i}$ and
$o_{j}$ be two uncertain objects in a data space $D$, and the
probabilities of $o_{i}$ and $o_{j}$ of being the NN to a point
$q\in D$ are $p(o_{i},q)$ and $p(o_{j},q)$, respectively. The
probabilistic bisector $pb_{o_{i}o_{j}}$ of $o_{i}$ and $o_{j}$
partitions $D$ into two sub-spaces $D_{1}$ and $D_{2}$ such that
$p(o_{i},q)=p(o_{j},q)$ for any point $q\in pb_{o_{i}o_{j}}$,
$p(o_{i},q)>p(o_{j},q)$ for any point $q\in D_{1}$, and
$p(o_{i},q)<p(o_{j},q)$ for any point $q\in D_{2}$.
\end{definition}
}

A naive approach to compute the PVD requires the processing of PNN
queries by using Equation~\ref{eq:pnn} at every possible location
in the data space for determining the PVCs based on the calculated
probabilities. This approach is \emph{prohibitively} expensive in
terms of computational cost and thus impractical. In this paper,
we propose an efficient and practical solution for computing the
PVD for uncertain objects. Next, we show how to efficiently
compute PVDs, focusing on 1-dimensional (1D) and 2-dimensional
(2D) spaces.\eat{Note that, in a 1D data space, the probabilistic
bisector is a point that divides the space into two subspaces; on
the other hand, in a 2D data space, the probabilistic bisector is
a curve (or line) divides the space into two subspaces.} We briefly
discuss higher dimensional cases at the end of this section.

\subsection{Probabilistic Voronoi Diagram in a 1D Space}

Applications such as environmental
monitoring, feature extraction systems capture one dimensional
uncertain attributes, and store these values in a database. In
this section, we derive the PVD for 1D uncertain objects.

An uncertain 1D object $o_i$ can be represented as a range
$[l_i,u_i]$, where $l_i$ and $u_i$ are lower and upper bounds of
the range. Let $m_i$ and $n_i$ be the midpoint and the length of
the range $[l_i,u_i]$, i.e., $m_i=\frac{l_i+u_i}{2}$ and
$n_i=u_i-l_i$. The probabilistic bisector $pb_{o_{i}o_{j}}$ of two
1D objects $o_i$ and $o_j$ is a point $x$ within the range
$[min(l_i,l_j),max(u_i,u_j)]$ such that $p(o_i,x)=p(o_j,x)$, and
$p(o_i,x^\prime)>p(o_j,x^\prime)$ for any point $x^\prime<x$ and
$p(o_i,x^{\prime\prime})<p(o_j,x^{\prime\prime})$ for any point
$x^{\prime\prime}>x$. Since only the equality condition is not
sufficient, other two conditions must also hold. In our proof for
lemmas, we will show that a probabilistic bisector needs to
satisfy all three conditions.

For example, Figure~\ref{fig:line_lemma}(b) shows two uncertain
objects $o_1$ and $o_2$, and their probabilistic bisector
$pb_{o_{1}o_{2}}$ as a point $x$. In this example, the lengths of
range for $o_1$ and $o_2$ are $n_{1}=8$ and $n_{2}=4$,
respectively, and the minimum distances from $x$ to $o_1$ and
$o_2$ are $d_{1}=1$ and $d_{2}=3$, respectively. Then based on
Equation~\ref{eq:pnn}, we can compute the probabilities of $o_1$
and $o_2$ of being the NN to $x$ as follows:
\begin{align*}
p(o_{1},x)&= \frac{2}{8}\cdot\frac{4}{4}+\frac{1}{8}\cdot\frac{3}{4}+\frac{1}{8}\cdot\frac{2}{4}+\frac{1}{8}\cdot\frac{1}{4}=\frac{14}{32},
\end{align*}
and
\begin{align*}
p(o_{2},x) &= \frac{1}{4}\cdot\frac{5}{8}+\frac{1}{4}\cdot\frac{4}{8}+\frac{1}{4}\cdot\frac{3}{8}+\frac{1}{4}\cdot\frac{2}{8}=\frac{14}{32}.
\end{align*}

A naive approach for finding the $pb_{o_{i}o_{j}}$ requires the
computation of probabilities (using Equation~\ref{eq:pnn}) of
$o_i$ and $o_j$ for every position within the range
$[min(l_i,l_j),max(u_i,u_j)]$. To avoid high computational
overhead of this naive approach, in our method we show that for
two equi-range objects (i.e., $n_i=n_j$), we can always directly
compute the probabilistic bisector (see Lemma~\ref{lemma:one}) by
using the upper and lower bounds of two candidate objects.
Similarly, we also show that for two non-equi-range objects, where
$n_i\neq n_j$, we can directly compute the probabilistic bisector
for certain scenarios shown in
Lemmas~\ref{lemma:two}-\ref{lemma:three}, and for the remaining
scenarios of non-equi-range objects we exploit these lemmas to
find probabilistic bisectors at reduced computational cost.

\begin{figure}[htbp]
    \centering
        \includegraphics[width=1.8in]{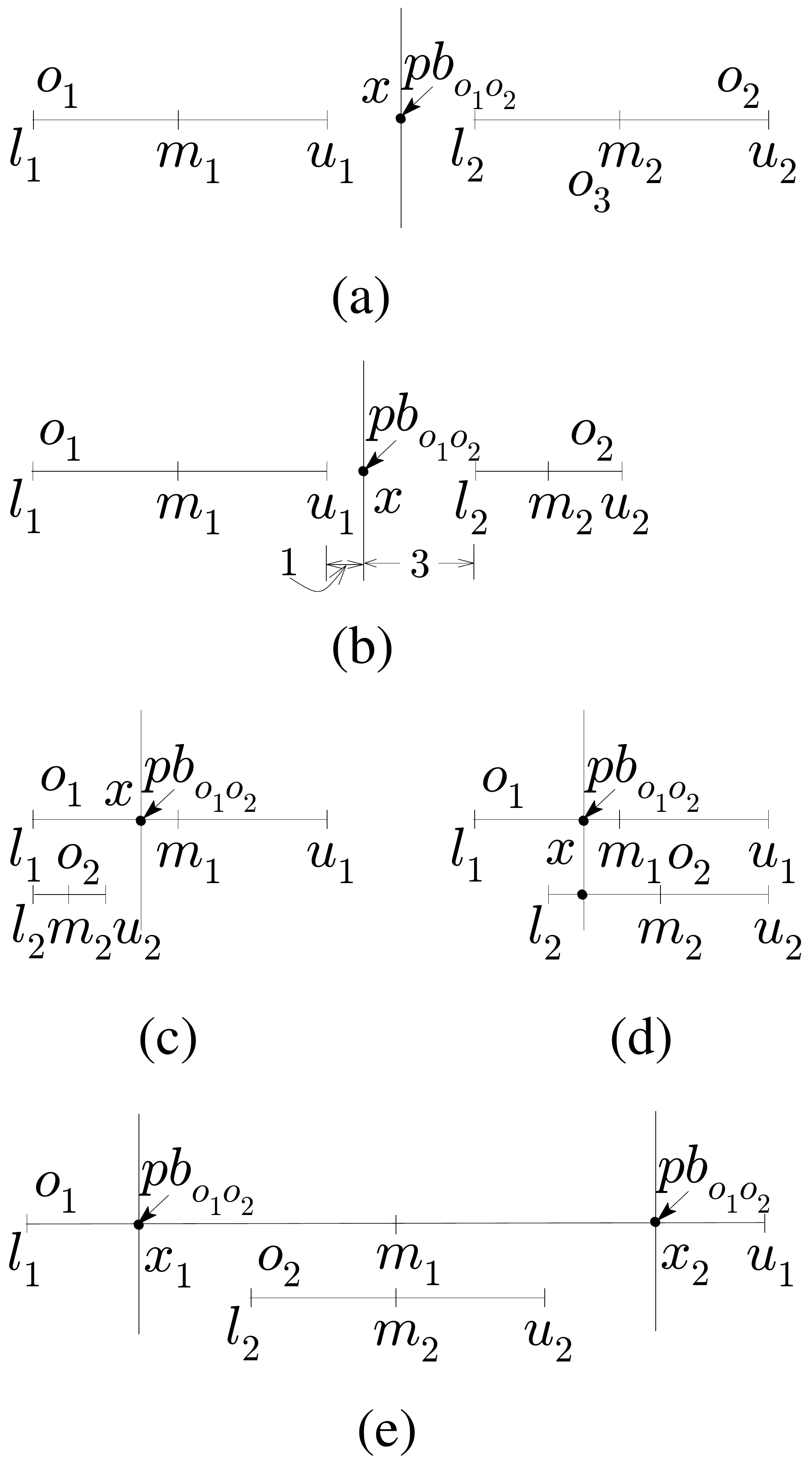}
    \caption{Scenarios of lemmas}
    \label{fig:line_lemma}
\end{figure}

Next, we present the lemmas for 1D objects. Lemma~\ref{lemma:one}
gives the probabilistic bisector of two equi-range objects,
overlapping and non-overlapping. Figure~\ref{fig:line_lemma}(a)
is an example of a non-overlapping case. (Note that if $l_i=l_j$
and $u_i=u_j$, then two objects $o_i$ and $o_j$ are assumed to be
the same and no probabilistic bisector exists between them.)

\begin{lemma}
\label{lemma:one} Let $o_i$ and $o_j$ be two objects where $m_{i}\neq m_{j}$. If
$n_i=n_j$, then the probabilistic bisector $pb_{o_{i}o_{j}}$ of
$o_i$ and $o_j$ is the bisector of $m_{i}$ and $m_{j}$.
\end{lemma}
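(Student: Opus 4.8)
The plan is to recast the probability in Equation~\ref{eq:pnn} into a form suited to a symmetry argument. For two objects, Equation~\ref{eq:pnn} reduces to $p(o_i,x)=P(D_i<D_j)$, where $D_i=|T_i-x|$ and $D_j=|T_j-x|$ are the distances from $x$ to independent points $T_i,T_j$ drawn uniformly from the ranges of $o_i$ and $o_j$; since ties have probability zero, $p(o_i,x)+p(o_j,x)=1$. Assuming without loss of generality that $m_i<m_j$, I set $x_0=\tfrac{m_i+m_j}{2}$ and must verify the three defining conditions of a probabilistic bisector at $x_0$: equality at $x_0$, $p(o_i,x)>p(o_j,x)$ to its left, and the reverse to its right. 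The device driving the proof is the reflection $t\mapsto 2x_0-t$: because $n_i=n_j$ and $x_0$ is the midpoint of $m_i$ and $m_j$, this reflection maps the range $[l_i,u_i]$ exactly onto $[l_j,u_j]$, interchanging $o_i$ and $o_j$.

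First I would settle equality and symmetry. Since $x_0$ is fixed by the reflection and the reflection carries the configuration of $o_i$ onto that of $o_j$, the variables $D_i$ and $D_j$ are identically distributed when $x=x_0$, so $p(o_i,x_0)=\tfrac12=p(o_j,x_0)$. Applying the same reflection at a general query point shows that $g(x):=p(o_i,x)-p(o_j,x)$ is odd about $x_0$, i.e.\ $g(2x_0-x)=-g(x)$. Hence the two monotonicity conditions are equivalent, and it suffices to prove the single claim $g(x)>0$ for every $x<x_0$.

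This monotonicity claim is the main obstacle. I would put $a=m_i-x$, $b=m_j-x$ and write $D_i=|a+U_i|$, $D_j=|b+U_j|$ with $U_i,U_j$ i.i.d.\ uniform on $[-\tfrac{n}{2},\tfrac{n}{2}]$, where $n=n_i=n_j$; a short check shows $x<x_0$ forces $b>|a|\ge0$. Since $U_i$ is symmetric, $|a+U_i|$ has the same law as $\big||a|+U_i\big|$, so I may take $0\le a<b$. The crux is to show that $|b+U|$ stochastically dominates $|a+U|$: for each $r\ge0$ the difference $P(|b+U|>r)-P(|a+U|>r)$ equals the mass $U$ assigns to $(r-b,\,r-a)$ minus the mass it assigns to $(-r-b,\,-r-a)$, two intervals of equal length $b-a$ whose centres sit at distances $\big|r-\tfrac{a+b}{2}\big|$ and $r+\tfrac{a+b}{2}$ from the origin. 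As the former distance never exceeds the latter, the symmetric uniform law places at least as much mass on the first interval, and strictly more for a range of $r$. Therefore $D_i$ is stochastically smaller than $D_j$, and by independence $P(D_i<D_j)>\tfrac12$, i.e.\ $g(x)>0$. Combining equality, oddness, and this strict monotonicity shows that $x_0=\tfrac{m_i+m_j}{2}$, the bisector of $m_i$ and $m_j$, is the probabilistic bisector of $o_i$ and $o_j$.
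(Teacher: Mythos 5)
Your proposal is correct, but it takes a genuinely different route from the paper's proof. The paper argues inside the discretized model it uses throughout: it writes the two probabilities at the candidate point $x=\frac{m_i+m_j}{2}$ as explicit unit-step sums, $p(o_i,x)=\sum_{s=1}^{n_i-1}\frac{1}{n_i}\frac{n_j-s}{n_j}$ and symmetrically for $o_j$, observes the sums coincide when $n_i=n_j$, and then recomputes both sums at a perturbed point $x'=x-\epsilon$ to obtain $p(o_i,x')>p(o_j,x')$, with the right-hand side dismissed by symmetry. You instead stay in the continuous setting of Equation~\ref{eq:pnn}, recast $p(o_i,x)$ as $P(D_i<D_j)$ for independent uniform draws, and make two structural observations the paper does not: (i) the reflection about $x_0$ carries $o_i$ onto $o_j$, which simultaneously gives the equality $p(o_i,x_0)=p(o_j,x_0)=\tfrac12$ and the oddness of $g(x)=p(o_i,x)-p(o_j,x)$, so only one side of the monotonicity needs proving; (ii) for $x<x_0$ the distance $D_j$ stochastically dominates $D_i$, shown by comparing the uniform mass of two equal-length intervals at different distances from the origin. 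Your version buys rigor in the actual continuous model, halves the case analysis, and generalizes: the reflection step needs only that the two pdfs are mirror images, and the interval-mass comparison needs only symmetry and unimodality of the pdf, so the same argument would cover, e.g., truncated Gaussians (relevant to the discussion in Section~\ref{subsec:discus}); the paper's computation, in exchange, is elementary and its intermediate sums are reused in Lemmas~\ref{lemma:two}--\ref{lemma:three} and in the third-object discussion. One gloss worth tightening: to conclude the strict inequality $P(D_i<D_j)>\tfrac12$ rather than $\geq$, you need the set of radii $r$ where the dominance is strict to meet the support of $D_i$ in positive measure; this does hold here (if the survival functions agreed on the whole support of $D_i$, continuity and monotonicity would force them to agree everywhere, contradicting strictness anywhere), but it deserves an explicit sentence.
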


\begin{proof}
Let $o_i$ and $o_j$ be two equi-range objects, i.e., $n_i=n_j$. Let $x$ be the bisector of two midpoints $m_i$ and $m_j$, i.e., $x=\frac{m_{i}+m_{j}}{2}$.

Then, by using Equation~\ref{eq:pnn}, we can calculate the probability of $o_i$ being the NN to $x$ as follows.
\begin{align*}
p(o_{i},x) &=
\sum_{s=1}^{n_{i}-1}\frac{1}{n_{i}}\frac{n_j-s}{n_{j}}.
\end{align*}
Similarly, we can calculate the probability of $o_j$ being the NN to $x$, as follows.
\begin{align*}
p(o_{j},x) &=
\sum_{s=1}^{n_{j}-1}\frac{1}{n_{j}}\frac{n_i-s}{n_{i}}.
\end{align*}

If we put $n_i=n_j$ in the above two equations, we have $p(o_{i},x)=p(o_{j},x)$. Thus, the probabilities of $o_i$ and $o_j$ of being the NN from the point $x$ are equal.

Now, let $x^{\prime}=\frac{m_{i}+m_{j}}{2}-\epsilon$ be a
point on the left side of $x$. Then we can calculate the probability of $o_i$ of being the NN to $x^{\prime}$
\begin{align*}
p(o_{i},x^{\prime}) &=
2\epsilon\frac{n_j}{n_{i}n_{j}}+\sum_{s=1}^{n_{i}-2\epsilon}\frac{1}{n_{i}}\frac{n_j-s}{n_{j}}.
\end{align*}
Similarly, we can calculate the probability of $o_j$ being the NN to $x^{\prime}$, as follows.
\begin{align*}
p(o_{j},x^{\prime}) &=
\sum_{s=2\epsilon+1}^{n_{i}-1}\frac{1}{n_{j}}\frac{n_i-s}{n_{i}}.
\end{align*}

Now, if we put  $n_i=n_j$ in the above two equations, then we have $p(o_{i},x^{\prime})>p(o_{j},x^{\prime})$ at $x^{\prime}$. Similarly we can prove that
$p(o_{i},x^{\prime\prime})<p(o_{j},x^{\prime\prime})$ for a point
$x^{\prime\prime}$ on the right side of $x$.

Thus, we can conclude that $x$ is the probabilistic bisector of $o_i$ and $o_j$, i.e., $pb_{o_{i}o_{j}}=x$.
\end{proof}

The following lemma shows how to compute the probabilistic
bisector of two non-equi-range objects that are non-overlapping
(see Figure~\ref{fig:line_lemma}(b)).

\begin{lemma}
\label{lemma:two} Let $o_i$ and $o_j$ be two non-overlapping
objects, where $n_i\neq n_j$. If there are no other objects within
the range $[min(l_i,l_j),max(u_i,u_j)]$, then the probabilistic
bisector $pb_{o_{i}o_{j}}$ of $o_i$ and $o_j$ is the bisector of
$m_{i}$ and $m_{j}$.
\end{lemma}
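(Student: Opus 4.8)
The plan is to reduce the statement to the two--object sub-problem and then settle all three defining conditions of a probabilistic bisector simultaneously through a single symmetry observation. The hypothesis that no other object lies in $[\min(l_i,l_j),\max(u_i,u_j)]$ lets me treat $o_i$ and $o_j$ as the only objects that matter near the candidate point, exactly as in the proof of Lemma~\ref{lemma:one}, so that $p(o_i,x)$ and $p(o_j,x)$ reduce to the NN probabilities determined by $o_i$ and $o_j$ alone. Writing $U$ and $V$ for independent samples drawn uniformly from $R_i$ and $R_j$, I then have $p(o_i,x)=\Pr(|U-x|<|V-x|)$ and $p(o_j,x)=\Pr(|V-x|<|U-x|)$.

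The key step I would isolate is a clean reformulation of the NN event. Because $o_i$ and $o_j$ are \emph{non-overlapping}, their ranges are disjoint, so (taking $o_i$ to the left) $U<V$ almost surely. Then
\[
|U-x|<|V-x|\iff (V-x)^2-(U-x)^2>0\iff (V-U)(U+V-2x)>0,
\]
and since $V-U>0$ this is equivalent to $x<\tfrac{U+V}{2}$. Hence, setting $M=\tfrac{U+V}{2}$, the midpoint of the two random samples,
\[
p(o_i,x)=\Pr\!\big(M>x\big),\qquad p(o_j,x)=\Pr\!\big(M<x\big).
\]
Now $U$ is symmetric about $m_i$ and $V$ about $m_j$, so $M$ is symmetric about $\tfrac{m_i+m_j}{2}$, which is precisely the bisector candidate $x=\tfrac{m_i+m_j}{2}$ of Lemma~\ref{lemma:one}.

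With this reformulation the three conditions follow at once. At $x=\tfrac{m_i+m_j}{2}$, symmetry of $M$ about $x$ gives $\Pr(M>x)=\Pr(M<x)$, hence $p(o_i,x)=p(o_j,x)$. For a point $x'$ to the left, monotonicity of the tail gives $\Pr(M>x')\ge\Pr(M>x)=\Pr(M<x)\ge\Pr(M<x')$, strictly once $x'$ lies in the support of $M$ and strictly (with the two sides equal to $1$ and $0$) once $x'$ lies to its left, so $p(o_i,x')>p(o_j,x')$; the symmetric argument handles a point $x''$ to the right. I would only need to record that $M$ has positive density on the relevant interval to get the strict inequalities, which holds since the distribution of $M$ (a convolution of two uniforms) is supported on $[\tfrac{l_i+l_j}{2},\tfrac{u_i+u_j}{2}]$ with $x$ in its interior.

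The main obstacle is conceptual rather than computational. In the equi-range case of Lemma~\ref{lemma:one} the point $x$ always lies outside both objects, so the distances $|U-x|$ and $|V-x|$ are themselves uniform and symmetric about a common value and the symmetry is transparent. For non-equi-range objects this fails: when $n_i\neq n_j$ the midpoint $x$ can fall \emph{inside} the longer object, whereupon its distance distribution becomes folded and is no longer symmetric, so a naive distance-symmetry argument breaks down. The device that rescues the proof is to refuse to reason about distances directly and instead pass to the sample midpoint $M$, whose symmetry about $\tfrac{m_i+m_j}{2}$ is insensitive to whether $x$ lies inside or outside either range; the non-overlap hypothesis is exactly what makes the equivalence $|U-x|<|V-x|\iff x<M$ valid, and this is where the lemma rests. (In a discrete range model the ties $|U-x|=|V-x|$ carry positive mass, but symmetry still forces $\Pr(M>x)=\Pr(M<x)$, so the equality of the two probabilities, and hence the lemma, is unaffected.)
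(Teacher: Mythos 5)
Your proof is correct, and it takes a genuinely different route from the paper's. The paper stays inside its discretized model and computes: it writes $p(o_i,x)$ and $p(o_j,x)$ at $x=\frac{m_i+m_j}{2}$ as explicit finite sums, uses the relation $d_j-d_i=\frac{n_i-n_j}{2}$ to show the two sums agree, and then redoes the computation at perturbed points $x\mp\epsilon$ to obtain the strict inequalities on either side. You compute nothing: non-overlap fixes the sign of $V-U$, which turns the event $|U-x|<|V-x|$ into $x<M$ for $M=\frac{U+V}{2}$, and then all three defining conditions of the probabilistic bisector follow from the symmetry of $M$ about $\frac{m_i+m_j}{2}$ together with positivity of its density on the interior of its support. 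Your route is shorter, works verbatim for continuous uniform pdfs, and is in fact more robust than the paper's in an edge case: when the two ranges are close and $n_i\gg n_j$, the point $\frac{m_i+m_j}{2}$ can lie \emph{inside} the longer object (e.g., ranges $[0,10]$ and $[10,12]$ give $x=8$), and there the paper's formulas silently break down---a distance ring then meets $o_i$ in two unit pieces rather than one, and $d_j-d_i$ no longer equals $\frac{n_i-n_j}{2}$ when $d_i$ is the unsigned minimum distance---whereas your midpoint argument is indifferent to whether $x$ lies inside either range and still yields the correct conclusion. What the paper's template buys in exchange is reuse: the same summation style carries over to the overlapping cases of Lemma~\ref{lemma:three}, where your device is unavailable because $V-U$ changes sign. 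Two minor caveats: your motivating remark that in Lemma~\ref{lemma:one} the bisector always lies outside both objects is false for overlapping equi-range objects (equality there really comes from the reflection about $x$ that swaps $o_i$ with $o_j$, not from symmetry of the distance distributions); and your opening reduction to two objects ignores objects lying outside $[\min(l_i,l_j),\max(u_i,u_j)]$, which can still influence both probabilities---but the paper's proof makes exactly the same tacit assumption, so this is a limitation of the lemma as stated rather than of your argument.
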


\begin{proof}
Let $n_i>n_j$, and $x$ be the bisector of two midpoints $m_i$ and $m_j$ of objects $o_i$ and $o_j$, respectively, i.e., $x=\frac{m_{i}+m_{j}}{2}$, and the minimum
distances from $x$ to $o_i$ and $o_j$ are $d_i$ and $d_j$,
respectively.

Then, by using Equation~\ref{eq:pnn}, we can calculate the probability of $o_i$ being the NN to $x$ as follows.
\begin{align*}
p(o_{i},x) &= (d_{j}-d_{i})\frac{1}{n_{i}}\frac{n_j}{n_{j}}+\sum_{s=1}^{n_{j}-1}\frac{1}{n_{i}}\frac{n_j-s}{n_{j}} \\
&= (d_{j}-d_{i})\frac{n_j}{n_{i}n_{j}}+\frac{n_{j}(n_{j}-1)}{2n_{i}n_{j}}.
\end{align*}
Similarly, we can calculate the probability of $o_j$ being the NN to $x$ as follows.
\begin{align*}
p(o_{j},x) &=
\sum_{s=1}^{n_{j}}\frac{1}{n_{j}}\frac{n_{i}-(d_{j}-d_{i}+s)}{n_{i}}.
\end{align*}
Since, we have $d_j-d_i=\frac{n_i-n_j}{2}$, i.e.,
$n_i=2(d_j-d_i)+n_j$. By replacing $n_i$ in the numerator of
$p(o_{j},x)$, we can have the following,
\begin{align*}
p(o_{j},x) &= \sum_{s=1}^{n_{j}}\frac{1}{n_{j}}\frac{2(d_j-d_i)+n_j-(d_j-d_i+s)}{n_{i}} \\
&= (d_{j}-d_{i})\frac{n_j}{n_{i}n_{j}}+\frac{n_{j}(n_{j}-1)}{2n_{i}n_{j}}.
\end{align*}

Since $p(o_{i},x)=p(o_{j},x)$, we have $pb_{o_{i}o_{j}}=x$.

On the other hand, let $x^{\prime}=\frac{m_{i}+m_{j}}{2}-\epsilon$ be a
point on the left side of the probabilistic bisector.

Then, by using Equation~\ref{eq:pnn}, we can calculate the probability of $o_i$ being the NN to $x^{\prime}$ as follows.

\begin{align*}
p(o_{i},x^{\prime}) &=
(d_{j}-d_{i}+2\epsilon)\frac{n_j}{n_{i}n_{j}}+\frac{n_{j}(n_{j}-1)}{2n_{i}n_{j}}.
\end{align*}

Similarly, we can calculate the probability of $o_j$ being the NN to $x^{\prime}$, as follows.

\begin{align*}
p(o_{j},x^{\prime}) &=
(d_{j}-d_{i}-2\epsilon)\frac{n_j}{n_{i}n_{j}}+\frac{n_{j}(n_{j}-1)}{2n_{i}n_{j}}.
\end{align*}

So, we can say $p(o_{i},x^{\prime})>p(o_{j},x^{\prime})$ for a point
$x^{\prime}$ on the left side of $pb_{o_{i}o_{j}}$. Similarly we can prove that
$p(o_{i},x^{\prime\prime})<p(o_{j},x^{\prime\prime})$ for a point
$x^{\prime\prime}$ on the right side of $pb_{o_{i}o_{j}}$.
\end{proof}

For two non-equi-range objects that are overlapping, the following
lemma directly computes the probabilistic bisector for the
scenarios where lower, upper, or mid-point values of two candidate
objects are same (see Figure~\ref{fig:line_lemma}(c), (d), and
(e)).

\begin{lemma}
\label{lemma:three} Let $o_i$ and $o_j$ be two overlapping
objects, where $n_i\neq n_j$, $l_{i}\leq l_{j}\leq u_{j}\leq
u_{i}$, and there are no other objects within the range
$[min(l_i,l_j),max(u_i,u_j)]$.
\begin{enumerate}
    \item If $l_{i}=l_{j}$, then the probabilistic
bisector $pb_{o_io_j}$ of $o_i$ and $o_j$ is the bisector of
$m_{i}$ and $u_{j}$.
    \item If $u_{i}=u_{j}$, then
the probabilistic bisector $pb_{o_io_j}$ of $o_i$ and $o_j$ is the
bisector of $m_{i}$ and $l_{j}$.
    \item If $m_{i}=m_{j}$, then the
probabilistic bisectors $pb_{o_io_j}$ of $o_i$ and $o_j$ are the
bisectors of $l_{i}$ and $l_{j}$, and  $u_{i}$ and $u_{j}$.
\end{enumerate}
\end{lemma}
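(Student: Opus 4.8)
The plan is to prove each of the three cases separately, mirroring the structure used for Lemma~\ref{lemma:one} and Lemma~\ref{lemma:two}: first exhibit the candidate point $x$, verify $p(o_i,x)=p(o_j,x)$ there via Equation~\ref{eq:pnn}, and then perturb to $x^{\prime}=x-\epsilon$ and $x^{\prime\prime}=x+\epsilon$ to establish the two strict inequalities that a probabilistic bisector must satisfy. Throughout I would take $n_i>n_j$ (forced by $l_i\le l_j\le u_j\le u_i$ together with $n_i\neq n_j$), so $o_j$ is the interval contained inside $o_i$, and I would reuse the same ``unit-slice weighted by the fraction of the other range that is farther from $x$'' summation format already established in Lemma~\ref{lemma:two}.

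For case~1 ($l_i=l_j$) I set $x=\frac{m_i+u_j}{2}$ and write out $p(o_i,x)$ and $p(o_j,x)$ as discrete sums. The decisive difference from Lemma~\ref{lemma:two} is that here $x$ lies \emph{inside} $R_i$ (one checks $x<u_i$ always), and, when $n_i\le 2n_j$, also inside $R_j$: the distance $|u-x|$ from $x$ to a point $u$ of a range is no longer monotone along the range but folds at $x$, so the sum for $p(o_i,x)$ (and possibly for $p(o_j,x)$) must be split into the contribution from the portion of the range left of $x$ and the portion right of $x$. The core algebraic step is to show that, after this split, the choice $x=\frac{m_i+u_j}{2}$ exactly balances the two sums; intuitively this works because placing $x$ midway between $m_i$ and $u_j$ equalizes the objects' median distances from $x$, and I expect the summation identities to telescope just as in Lemma~\ref{lemma:two}, only with the extra left-of-$x$ term carried along. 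The perturbation step then tracks how the left/right split shifts by $2\epsilon$ and yields the two strict inequalities.

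Case~2 ($u_i=u_j$) I would not re-prove from scratch: it is the mirror image of case~1 under the reflection $t\mapsto -t$ of the line, which swaps lower and upper bounds (sending $l_i=l_j$ to $u_i=u_j$) and carries the bisector of $m_i,u_j$ to the bisector of $m_i,l_j$, so it follows immediately. Case~3 ($m_i=m_j$) I would handle using the symmetry of both ranges about the common midpoint $m=m_i=m_j$: since $R_i$ and $R_j$ are each symmetric about $m$, the functions $p(o_i,\cdot)$ and $p(o_j,\cdot)$ are symmetric about $m$, which forces the solutions of $p(o_i,x)=p(o_j,x)$ to occur in pairs symmetric about $m$ --- matching the two claimed bisectors, the bisector of $l_i,l_j$ and the bisector of $u_i,u_j$. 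Verifying that one of these points balances the probabilities reduces to a one-sided computation of the same flavour as case~1.

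The step I expect to be the main obstacle is the equality computation in case~1 (and its one-sided analogue in case~3): because the query point sits inside the larger uncertainty region, the distance profile folds at $x$ and the clean ``object-to-the-left, object-to-the-right'' bookkeeping of Lemma~\ref{lemma:two} breaks down, so the summations must be partitioned carefully, with a further sub-case according to whether $n_i\le 2n_j$ (so that $x$ also lies inside $R_j$) or $n_i>2n_j$. A cleaner route to the same conclusion, which I would keep in reserve if the direct summation becomes unwieldy, is to use that in the two-object setting licensed by the ``no other objects in $[min(l_i,l_j),max(u_i,u_j)]$'' hypothesis we have (in the continuous model) $p(o_i,x)+p(o_j,x)=1$, so the bisector is exactly the point where $P(dist(U_i,x)<dist(U_j,x))=\frac{1}{2}$ for uniform random points $U_i\in R_i$ and $U_j\in R_j$; the three claimed bisectors can then be verified as the median-crossing points of this comparison, and the strict inequalities follow from its monotonicity in $x$.
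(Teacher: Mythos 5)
Your proposal is correct and follows essentially the same route as the paper's proof: exhibit the candidate point, verify $p(o_i,x)=p(o_j,x)$ by the discrete summation of Equation~\ref{eq:pnn}, obtain the two strict inequalities by perturbation as in Lemma~\ref{lemma:two}, and dispose of case~2 (and the second bisector in case~3) by symmetry. If anything, your version is more careful than the paper's: in case~1 the paper's summations implicitly assume $m_i\geq u_j$ (i.e., $n_i\geq 2n_j$), so that $x$ lies outside $R_j$ and the factor $n_j/n_j=1$ is valid for all $s\leq d$, whereas you explicitly flag and handle the sub-case $n_j<n_i<2n_j$ in which $x$ also falls inside $R_j$ and the sums must be folded; your reserve argument via $p(o_i,x)+p(o_j,x)=1$ is likewise a legitimate shortcut in the two-object continuous setting, which the paper does not exploit.
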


\begin{proof}
Let $n_i>n_j$, $l_i=l_j$, $x=\frac{m_{i}+l_{j}}{2}$, and $d$ be
the distance from $x$ to both $m_i$ and $l_j$.

Then, by using Equation~\ref{eq:pnn}, we can calculate the probability of $o_i$ being the NN to $x$ as follows.
\begin{align*}
p(o_{i},x) &=
\sum_{s=1}^{d}\frac{2}{n_{i}}\frac{n_j}{n_{j}}+\sum_{s=1}^{n_j-1}\frac{2}{n_{i}}\frac{n_j-s}{n_{j}}\\
 &= \frac{2d}{n_{i}}+\frac{n_j(n_j-1)}{n_{i}n_{j}}.
\end{align*}
Similarly, we can calculate the probability of $o_j$ being the NN to $x$ as follows.
\begin{align*}
p(o_{j},x) &=
\sum_{s=1}^{n_j}\frac{1}{n_{j}}\frac{n_i-(2d+2s)}{n_{i}}.
\end{align*}
However, $\frac{n_i}{2}-n_j=2d$, that is $n_i=4d+2n_j$. By
replacing $n_i$ in the numerator and simplifying the term, we can
have the following,
$p(o_{j},x)=\frac{2d}{n_{i}}+\frac{n_j(n_j-1)}{n_{i}n_{j}}$. Since
$p(o_{i},x)=p(o_{j},x)$, $pb_{o_{i}o_{j}}=x$. Similar to
Lemma~\ref{lemma:two}, we can prove that
$p(o_{i},x^{\prime})>p(o_{j},x^{\prime})$ for any point
$x^{\prime}$ on the left, and
$p(o_{i},x^{\prime\prime})<p(o_{j},x^{\prime\prime})$ for any
point $x^{\prime\prime}$ on the right side of $pb_{o_{i}o_{j}}$.

Similarly, we can prove the case for $u_{i}=u_{j}$.

Let $m_i=m_j$, $x_{1}=\frac{l_{i}+l_{j}}{2}$, and $d$ be the
distance from $x_{1}$ to both $l_i$ and $l_j$.

Then, by using Equation~\ref{eq:pnn}, we can calculate the probability of $o_i$ being the NN to $x_{1}$ as follows.

\begin{align*}
p(o_{i},x_{1}) &=
\sum_{s=1}^{d}\frac{2}{n_{i}}\frac{n_j}{n_{j}}+\sum_{s=1}^{n_j-1}\frac{1}{n_{i}}\frac{n_j-s}{n_{j}}\\
&= \frac{2d}{n_{i}}+\frac{n_j(n_j-1)}{2n_{i}n_{j}}.
\end{align*}
Similarly, we can calculate the probability of $o_j$ being the NN to $x_{1}$ as follows.
\begin{align*}
p(o_{j},x_{1})=\sum_{s=1}^{n_j}\frac{1}{n_{j}}\frac{n_i-(2d+s)}{n_{i}}.
\end{align*}
However, $\frac{n_i}{2}-\frac{n_j}{2}=2d$, that is $n_i=4d+n_j$.
By replacing $n_i$ in the numerator and simplifying the term, we
can have the following,
$p(o_{j},x_{1})=\frac{2d}{n_{i}}+\frac{n_j(n_j-1)}{2n_{i}n_{j}}$.
Since $p(o_{i},x^\prime)=p(o_{j},x^\prime)$, we have
$pb_{o_{i}o_{j}}=x_{1}$. Similar to Lemma~\ref{lemma:two}, we can
prove that $p(o_{i},x^{\prime})>p(o_{j},x^{\prime})$ for any point
$x^{\prime}$ on the left, and
$p(o_{i},x^{\prime\prime})<p(o_{j},x^{\prime\prime})$ for any
point $x^{\prime\prime}$ on the right side of $pb_{o_{i}o_{j}}$.

Similarly, we can prove that the other probabilistic bisector
exists at $x_{2}=\frac{u_{i}+u_{j}}{2}$, as the case is symmetric to that of $x_1$.

Note that, since $n_i>n_j$ and $m_i=m_j$, $o_i$ completely contains $o_j$. Thus the probability of $o_j$ is higher than that of $o_i$ around the mid-point ($m_i$), and the probability of $o_i$ is higher than that of $o_j$ towards the boundary points ($l_i$ and $u_i$). Therefore in this case, we have two probabilistic bisectors between $o_i$ and $o_j$.
\end{proof}

Figures~\ref{fig:line_lemma}(c-e) show an example of three cases
as described in Lemma~\ref{lemma:three}.
Figure~\ref{fig:line_lemma}(c) shows the first case for objects
$o_1$ and $o_2$, where $l_1=l_2$ and
$pb_{o_1o_2}=\frac{m_1+u_2}{2}$. Similarly, Figure~\ref
{fig:line_lemma}(d) shows an example of the second case for
objects $o_1$ and $o_2$, where $u_1=u_2$ and
$pb_{o_1o_2}=\frac{m_1+l_2}{2}$. Finally, Figure~\ref
{fig:line_lemma}(e) shows an example of the third case for objects
$o_1$ and $o_2$, where $m_1=m_2$, and $x_1=\frac{l_1+l_2}{2}$ and
$x_2=\frac{u_1+u_2}{2}$ are two probabilistic bisectors. In such a
case, two probabilistic bisectors, $x_1$ and $x_2$, divide the
space into three subspaces. That means, the Voronoi cell of object
$o_1$ comprises of two disjoint subspaces. In Figure~\ref{fig:line_lemma}(e), the
subspace left to $x_1$ and the subspace right to $x_2$ form the
Voronoi cell of $o_1$, and the subspace bounded by $x_1$ and $x_2$
forms the Voronoi cell of $o_2$.

Apart from the above mentioned scenarios, the remaining scenarios
of two overlapping non-equi-range objects are shown in
Figure~\ref{fig:line_lemmano}, where it is not possible to compute
the probabilistic bisector directly by using lower and upper
bounds of two candidate objects. In these scenarios,
Lemma~\ref{lemma:three} can be used for choosing a point, called
the initial probabilistic bisector, which approximates the actual
probabilistic bisector and thereby reducing the computational
overhead. Figure~\ref{fig:line_lemmano} (a), (b), (c) show three
scenarios, where three cases of Lemma~\ref{lemma:three} (1), (2),
(3), are used to compute the initial probabilistic bisector,
respectively, for our algorithm. We will see (in Algorithm~\ref{algo:Prob-Bisectors-OneDim}) how to use our lemmas to find the probabilistic bisectors for these scenarios.

\begin{figure}[htbp]
    \centering
        \includegraphics[width=2.0in]{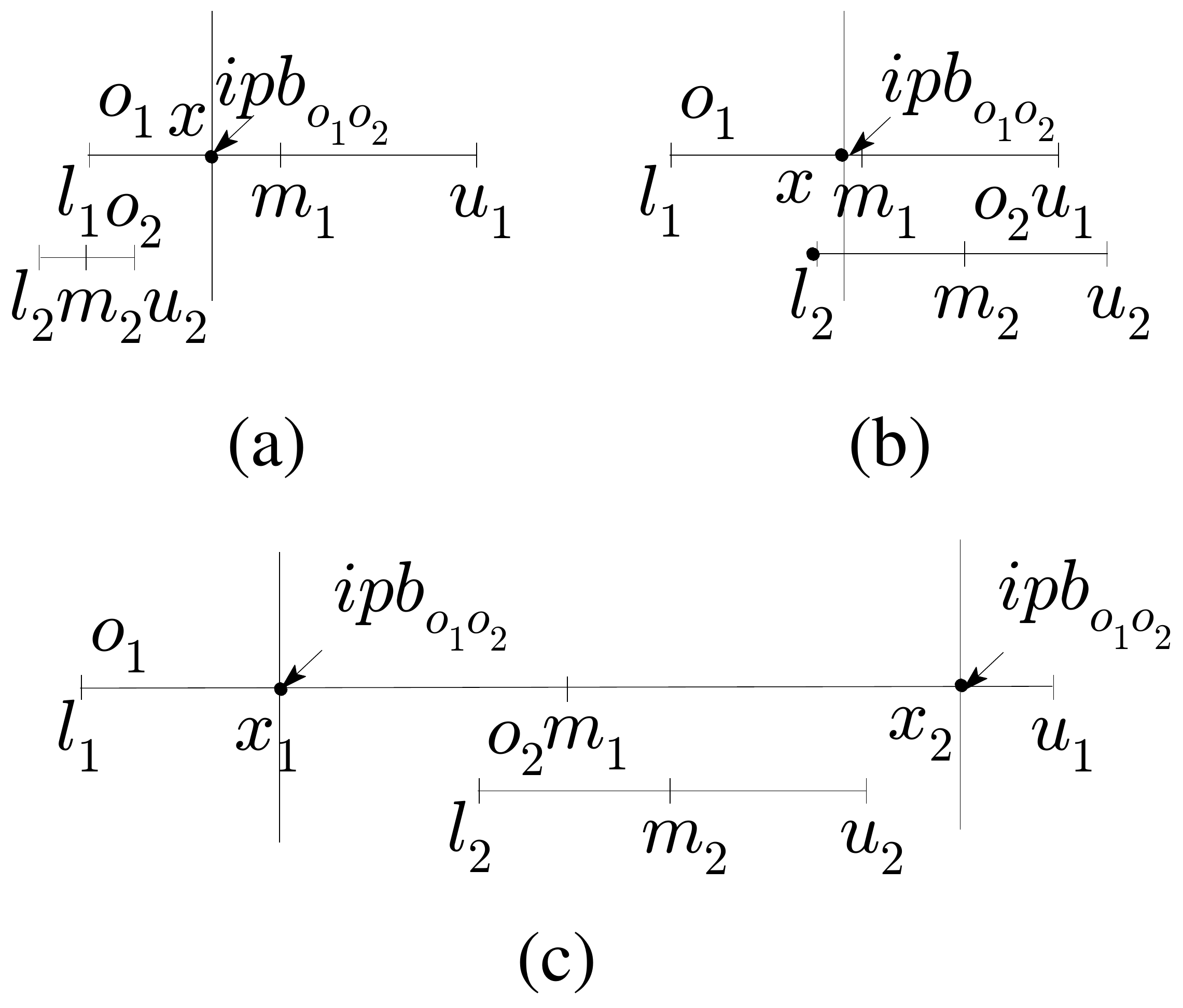}
    \caption{Remaining scenarios}
    \label{fig:line_lemmano}
    \vspace{0mm}
\end{figure}

So far we have assumed that no other objects exist within the
ranges of two candidate objects. However, the probabilities of two
candidate objects may change in the presence of any other objects
within their ranges (as shown in Equation~\ref{eq:pnn}). Only the probabilistic bisector of two equi-range objects remains
the same in the presence of any other object within their ranges.

Let $o_k$ be the third object that overlaps with the range
$[min(l_i,l_j),max(u_i,u_j)]$ for the case in Figure~\ref{fig:line_lemma}(a). Then, using Equation~\ref{eq:pnn}, we can calculate the NN probability of object $o_i$ from $x$ as follows.

\begin{align*}
p(o_{i},x) &=
\sum_{s=1}^{n_{i}-1}\frac{1}{n_{i}}\frac{n_j-s}{n_{j}}\frac{n_k-s}{n_{k}}.
\end{align*}
Similarly, we can calculate the NN probability of object $o_j$ from $x$ as follows.
\begin{align*}
p(o_{j},x) &=
\sum_{s=1}^{n_{j}-1}\frac{1}{n_{j}}\frac{n_i-s}{n_{i}}\frac{n_k-s}{n_{k}}.
\end{align*}

Since $n_i=n_j$, we have $p(o_{i},x)=p(o_{j},x)$ and
$pb_{o_{i}o_{j}}=x$. Therefore, the probabilistic bisector
$pb_{o_{i}o_{j}}$ does not change with the presence a third
object.

Therefore, for scenarios, except for the case when two candidate
objects are equi-range, when any other object exists within the
ranges two candidate objects, we again use one of the
Lemmas~\ref{lemma:one}-\ref{lemma:three} to compute the initial
probabilistic bisector, and then find the actual probabilistic
bisector. For example, if two non-equi-range candidate objects do
not overlap each other (see Figure~\ref{fig:line_lemma}(b)) and a
third object exists, which is not shown in figure, within the
range of these two candidate objects, then we use
Lemma~\ref{lemma:two} to find the initial probabilistic bisector.
Similarly, we choose the corresponding lemmas for other scenarios
to compute initial probabilistic bisectors. Then we use these
computed initial probabilistic bisectors to find actual
probabilistic bisectors.

The position of a probabilistic bisector depends on the relative
positions and the uncertainty regions of two candidate objects. We
have shown that for some scenarios the probabilistic bisectors can
be directly computed using the proposed lemmas. In some other
scenarios, there is no straightforward way to compute
probabilistic bisectors. For this latter case, the initial
probabilistic bisector of two candidate objects is chosen based on
the actual probabilistic bisector of the scenario that can be
directly computed and has the most similarity (relative positions
of candidate objects) with two candidate objects. This ensures
that the initial probabilistic bisector is essentially close to
the actual probabilistic bisector.

\setlength{\algomargin}{2em} \dontprintsemicolon
\begin{algorithm}[htbp]
\begin{small}
\caption{ProbBisector1D($o_i,o_j,O$)}
\label{algo:Prob-Bisectors-OneDim}
    $pb_{o_{i}o_{j}}\leftarrow\emptyset$\;

 \uIf{$o_i$ and $o_j$ satisfy one of the Lemmas~\ref{lemma:one}-~\ref{lemma:three}}
    {
         $pb_{o_{i}o_{j}} \leftarrow BisectorBasedOnLemmas(o_{i},o_{j},O)$\;
    }
    \Else
    {
    (any pair of objects that does not satisfy Lemmas~\ref{lemma:one}-~\ref{lemma:three})

        \uIf{$o_i$ and $o_j$ do not overlap}
        {
             $ipb\leftarrow Bisector(m_{i},m_{j})$\;
             $pb_{o_{i}o_{j}}\leftarrow FindProbBisector1D(o_i,o_j,ipb,O)$\;
        }
        \Else
        {
        (three possible cases for overlapping pairs of objects (Figure~\ref{fig:line_lemmano}))

            (assume $l_i\leq l_j$ (the other case $l_i\geq l_j$ is
            symmetric))\;
            \uIf{$l_{i}<l_{j}$ and $u_{j}<u_{i}$}
             {
                $ipb_{1}\leftarrow Bisector(l_{i},l_{j})$\;
                $ipb_{2}\leftarrow Bisector(u_{i},u_{j})$\;

                $pb_{o_{i}o_{j}}\leftarrow FindProbBisector1D(o_i,o_j,ipb_{1},O)\cup FindProbBisector1D(o_i,o_j,ipb_{2},O)$\;
             }
            \ElseIf{$l_{j}-l_{i}<u_{j}-u_{i}$}
            {
                $ipb\leftarrow Bisector(m_{j},u_{i})$\;
                $pb_{o_{i}o_{j}}\leftarrow FindProbBisector1D(o_i,o_j,ipb,O)$\;
             }
             \Else{ 
                $ipb\leftarrow Bisector(m_{i},l_{j})$\;
                $pb_{o_{i}o_{j}}\leftarrow FindProbBisector1D(o_i,o_j,ipb,O)$\;
             }
        }
    }
    \Return $pb_{o_{i}o_{j}}$;
\end{small}
\end{algorithm}

\noindent\emph{\textbf{Algorithms: }}\par Based on the above
lemmas, Algorithm~\ref{algo:Prob-Bisectors-OneDim} summarizes the
steps of computing the probabilistic bisector $pb_{o_{i}o_{j}}$
for any two objects $o_i$ and $o_j$, where $O$ is a given set of
objects and $o_i,o_j \in O$. If $o_i$ and $o_j$ satisfy any of
Lemmas~\ref{lemma:one}-\ref{lemma:three} the algorithm directly
computes $pb_{o_{i}o_{j}}$
(Lines~\ref{algo:Prob-Bisectors-OneDim}.2-~\ref{algo:Prob-Bisectors-OneDim}.3).
Otherwise, if any other object exists within the range of two
candidate non-equi-range objects $o_i$ and $o_j$, or two candidate
non-equi-range objects fall in any of the scenarios shown in
Figure~\ref{fig:line_lemmano}. The algorithm first computes an
initial probabilistic bisector $ipb$ using our lemmas, where the
given scenario has the most similarity in terms of relative
positions of candidate objects to the corresponding lemma. Then,
the algorithm uses the function $FindProbBisector1D$ to find
$pb_{o_{i}o_{j}}$ by using $ipb$ as a base.

\eat{ When two non-equi-range objects are non-overlapping and any
other object exists within their ranges, the algorithm computes
$ipb$ using Lemma~\ref{lemma:two}
(Line~\ref{algo:Prob-Bisectors-OneDim}.6). We choose
Lemma~\ref{lemma:two} for computing $ipb$ because,
Lemma~\ref{lemma:two} can directly find the probabilistic bisector
for any two non-equi-range non-overlapping objects when no other
object presents within the range these two candidate objects, and
if a third objects exist within the range of these two objects,
the probabilistic bisector slightly deviates from the case when no
third object exists. Similarly, we choose $ipb$ for other possible
cases using lemmas. If the ranges of two candidate objects overlap
each other, they satisfy one of the three conditions shown in
Lines~\ref{algo:Prob-Bisectors-OneDim}.10,~\ref{algo:Prob-Bisectors-OneDim}.14,
and~\ref{algo:Prob-Bisectors-OneDim}.17. When one object
completely contains the other object, the algorithm uses
Lemma~\ref{lemma:three}(3) to compute $ipb$
(Lines~\ref{algo:Prob-Bisectors-OneDim}.11-~\ref{algo:Prob-Bisectors-OneDim}.12).
When the lower bounds of two candidate objects are closer to each
other than that of the upper bounds, the algorithm computes $ipb$
using Lemma~\ref{lemma:three}(1)
(Line~\ref{algo:Prob-Bisectors-OneDim}.15). If the above two
conditions fail (e.g., Figure~\ref{fig:line_lemmano}(b)), the
algorithm uses Lemma~\ref{lemma:three}(2) to compute $ipb$
(Line~\ref{algo:Prob-Bisectors-OneDim}.18). }

After computing the $ipb$ the
algorithm calls a function $FindProbBisector1D$ to find the probabilistic bisector
$pb_{o_{i}o_{j}}$
(Lines~\ref{algo:Prob-Bisectors-OneDim}.8,~\ref{algo:Prob-Bisectors-OneDim}.15,~\ref{algo:Prob-Bisectors-OneDim}.18,
and~\ref{algo:Prob-Bisectors-OneDim}.21).

\eat{
\setlength{\algomargin}{2em} \dontprintsemicolon
\begin{algorithm}[htbp]
\begin{small}
\caption{FindProbBisector1D($o_i,o_j,ipb,O$)}
\label{algo:FindProbBisector-OneDim}
    $i\leftarrow step\_unit$\;
    $x\leftarrow ipb$\;
     \uIf{$p(o_i,ipb)=p(o_j,ipb)$}
        {
            \Return $ipb$;
        }
    $dir\leftarrow SearchDirection(p(o_i,ipb),p(o_j,ipb))$\;

    \While{true}
    {
        $px\leftarrow x$\;

         \uIf{$dir=left$}
        {
            $x\leftarrow x-i$\;
        }
        \Else
        {
             $x\leftarrow x+i$\;
        }
          \uIf{$(dir=left$ and $p(o_i,x)\geq p(o_j,x))$ or $(dir=right$ and $p(o_i,x)\leq p(o_j,x))$}
        {
            $nx\leftarrow x$\;
            break;
        }
        $i\leftarrow i*2$\;
    }
    $pb_{o_{i}o_{j}}\leftarrow BinarySearch(px,nx)$\;
    \Return $pb_{o_{i}o_{j}}$;
    \end{small}
\end{algorithm}

}

\eat{The function $FindProbBisector1D$ (Algorithm~\ref{algo:FindProbBisector-OneDim}) computes $pb_{o_{i}o_{j}}$ by refining
$ipb$. If the probabilities of $o_i$ and $o_j$ of being the NN from $ipb$ are equal , then the
algorithm returns $ipb$ as the probabilistic bisector. Otherwise,
the algorithm decides in which direction from $ipb$ it should
continue the search for $pb_{o_{i}o_{j}}$. Let $x=ipb$. If
$p(o_i,x)$ is smaller than $p(o_j,x)$, then $pb_{o_{i}o_{j}}$ is
to the left of $x$ and within the range $[min(l_i,l_j),x]$,
otherwise $pb_{o_{i}o_{j}}$ is to the right of $x$ and within the
range $[x,max(l_i,l_j)]$. Since using lemmas, we choose $ipb$ as
close as possible to $pb_{o_{i}o_{j}}$, in most of the cases the
probabilistic bisector is found very close to the position of
$ipb$. Thus, instead of running a binary search within the whole
range, one can perform a step-wise search by increasing (or
decreasing) the value of $x$ in every step by $2^i$, where $i=1$
unit at step 1, $i=2$ units at step 2, $i=4$ units at step 3, and
so on, until the probability ranking of two objects swaps (Note that, we assume a discrete data space and in our experiments the total data space for 1D is assumed to be 10000 units). In this
way, a tighter range can be found where $pb_{o_{i}o_{j}}$ lies in,
and then a binary search can find the exact value for
$pb_{o_{i}o_{j}}$ within that range. Note that, since the
precision of probability measures affects the performance of the
above search, we assume that the two probability measures are
equal when the difference between them is smaller than a
threshold.}

The function $FindProbBisector1D$ computes
$pb_{o_{i}o_{j}}$ by refining $ipb$. If the probabilities of $o_i$
and $o_j$ of being the NN from $ipb$ are equal, then the algorithm
returns $ipb$ as the probabilistic bisector. Otherwise, the
algorithm decides in which direction from $ipb$ it should continue
the search for $pb_{o_{i}o_{j}}$. Let $x=ipb$. We also assume that $o_i$ is left to $o_j$. If $p(o_i,x)$ is
smaller than $p(o_j,x)$, then $pb_{o_{i}o_{j}}$ is to the left of
$x$ and within the range $[min(l_i,l_j),x]$, otherwise
$pb_{o_{i}o_{j}}$ is to the right of $x$ and within the range
$[x,max(l_i,l_j)]$. Since using lemmas, we choose $ipb$ as close
as possible to $pb_{o_{i}o_{j}}$, in most of the cases the
probabilistic bisector is found very close to the position of
$ipb$. Thus, as an alternative to directly running a binary search
within the range, one can perform a step-wise search first, by
increasing (or decreasing) the value of $x$ \eat{(in every step by
$2^i$, where $i=1$ unit at step 1, $i=2$ units at step 2, $i=4$
units at step 3, and so on) }until the probability ranking of two
objects swaps. Since the precision of probability measures affects
the performance of the above search, we assume that the two
probability measures are equal when the difference between them is
smaller than a threshold. The value of the threshold can be found
experimentally given an application domain.

\setlength{\algomargin}{2em} \dontprintsemicolon
\begin{algorithm}[htbp]
\begin{small}
 \caption{ProbVoronoi1D($O$)}
 \label{algo:Prob-Voronoi-OneDim}
    $PVD\leftarrow\emptyset$\;
    $PBL\leftarrow\emptyset$\;
    $SortObjects(O)$\;
    \For{each $o_{i}\in O$}
    {
        $o_{j}\leftarrow getNext(O)$
        $pb_{o_{i}o_{j}}\leftarrow ProbBisector1D(o_{i},o_{j},O)$
        $PBL\leftarrow PBL \cup pb_{o_{i}o_{j}}$\;
        $N\leftarrow getCandidateObjects(O,pb_{o_{i}o_{j}})$\;
        \For{each $o_{k}\in N$}
         {
            $pb_{o_{i}o_{k}}\leftarrow ProbBisector1D(o_{i},o_{k},O)$
            $PBL\leftarrow PBL \cup pb_{o_{i}o_{k}}$\;

         }

    }
    $SPBL\leftarrow\ SortProbBisectors(PBL)$\;

    $o^{\prime} \leftarrow initialMostProbableObject()$\;

    \While{$SPBL$ is not empty}
    {
        $pb_{o_{i}o_{j}} \leftarrow popNextPB(SPBL)$\;
        $left \leftarrow LeftSideObject()$\;
        $right \leftarrow RightSideObject()$\;
        \uIf{$PVD$ is empty OR $o^{\prime} = left$}
            {
                $PVD \leftarrow PVD \cup ProbBisector(pb_{o_{i}o_{j}},o_{i},o_{j})$\;
                $o^{\prime} \leftarrow right$ \;

            }
            \Else
            {
                $Discard(pb_{o_{i}o_{j}})$\;
            }
    }

   \Return $PVD$;
\end{small}
\end{algorithm}

Finally, Algorithm~\ref{algo:Prob-Voronoi-OneDim} shows the steps
for computing a PVD for a set of 1D uncertain objects $O$. In 1D
data space, the PVD contains a list of bisectors that divides the
total data space into a set of Voronoi cells or 1D ranges. The
basic idea of Algorithm~\ref{algo:Prob-Voronoi-OneDim} is that,
once we have the probabilistic bisectors of all pairs of objects
in a sorted list, a sequential scan of the list can find the
candidate probabilistic bisectors that comprise the probabilistic
Voronoi diagram in 1D space.

To avoid computing probabilistic bisectors for all pairs of
objects $o_i,o_j\in O$, we use the following heuristic:

\begin{heuristic}
\label{heuristic:one} Let $o_i$ be an object in the ordered
(in ascending order of $l_i$) list of objects $O$, and $o_j$ be
the next object right to $o_i$ in $O$. Let $x=pb_{o_{i}o_{j}}$,
and $d=dist(x,l_i)$. Let $o_k$ be an object in $O$. If
$dist(x,l_{k})>d$, then the probabilistic bisector
$pb_{o_{i}o_{k}}$ of $o_i$ and $o_k$ is $x^{\prime}$, and
$x^{\prime}$ is to the right of $x$, i.e., $x^{\prime}>x$;
therefore $pb_{o_{i}o_{k}}$ does not need to be computed.
\end{heuristic}

Algorithm~\ref{algo:Prob-Voronoi-OneDim} runs as follows. First,
the algorithm sorts all objects in ascending order of their lower
bounds (Line~\ref{algo:Prob-Voronoi-OneDim}.3). Second, for each
object $o_i$, it computes probabilistic bisectors of $o_i$ with
the next object $o_j\in O$ and with a set $N$ of objects returned
by the function $getCandidateObjects$ based on
Heuristic~\ref{heuristic:one}
(Lines~\ref{algo:Prob-Voronoi-OneDim}.4-\ref{algo:Prob-Voronoi-OneDim}.8).
$PBL$ maintains the list all computed probabilistic bisectors.
Third, the algorithm sorts the list $PBL$ in ascending order of
the position of probabilistic bisectors and assigns the sorted
list to $SPBL$ (Line~\ref{algo:Prob-Voronoi-OneDim}.9). Finally,
from $SPBL$, the algorithm selects probabilistic bisectors that
contribute to the PVD
(Lines~\ref{algo:Prob-Voronoi-OneDim}.10-\ref{algo:Prob-Voronoi-OneDim}.19).
For this final step, the algorithm first finds the most probable
NN $o^\prime$ with respect to the starting position of the data
space. Then for each $pb_{o_{i}o_{j}}\in SPBL$, the algorithm
decides whether $pb_{o_{i}o_{j}}$ is a candidate for the $PVD$
(Lines~\ref{algo:Prob-Voronoi-OneDim}.11-\ref{algo:Prob-Voronoi-OneDim}.19).
We assume that $o_i$ is the left side object and $o_j$ is the
right side object of the probabilistic bisector. If
$o^\prime=o_i$, then $pb_{o_{i}o_{j}}$ is included in the $PVD$,
and $o^\prime$ is updated with the most probable object on the
right region of $pb_{o_{i}o_{j}}$
(Line~\ref{algo:Prob-Voronoi-OneDim}.17). Otherwise,
$pb_{o_{i}o_{j}}$ is discarded
(Line~\ref{algo:Prob-Voronoi-OneDim}.19). This process continues
until $SPBL$ becomes empty, and the algorithm finally returns
$PVD$.

The proof of correctness and the complexity of this algorithm are
provided as follows.

\emph{Correctness}: Let $SPBL$ be the list of probabilistic
bisectors in ascending order of their positions. Let $o^{\prime}$
be the most probable NN with respect to the starting point $l$ of
the 1D data space. Let $pb_{o_{i}o_{j}}$ be the next probabilistic
bisector fetched from $SPBL$. Now we can have the following two
cases: (i) Case 1: $o^{\prime}=o_{i}$. The probability $p_i$ of
$o_i$ being the nearest is the highest for all points starting
from $l$ to $pb_{o_{i}o_{j}}$ and the probability $p_j$ of $o_j$
being the nearest is the highest for points on the right side of
$pb_{o_{i}o_{j}}$ until the next valid probabilistic bisector is
found. Hence, $pb_{o_{i}o_{j}}$ is a valid probabilistic bisector
and is added to the PVD. Then the algorithm updates $o^{\prime}$
by $o_j$ since $o_j$ will be the most probable on the right of
$pb_{o_{i}o_{j}}$ and will be on the left region of the next valid
probabilistic bisector. (ii) Case 2: $o^{\prime}\neq o_{i}$. Let
us assume that $p_i>p^{\prime}$ at $pb_{o_{i}o_{j}}$. We already
know that $p^{\prime}>p_i$ at the starting point $l$. So there
should be some point within the range [$l,pb_{o_{i}o_{j}}$] where
$p^{\prime}=p_i$, which is the position of the probabilistic
bisector of $o^{\prime}$ and $o_i$. Since no such bisector is
found within this range, $p_i>p^{\prime}$ is not true at
$pb_{o_{i}o_{j}}$. Thus, $p^{\prime}$ is the highest even at
$pb_{o_{i}o_{j}}$, and will remain the highest until it fetches
another $pb_{o_{i^\prime}o_{j^\prime}}$ from $SPBL$, where
$o^{\prime}=o_{i^\prime}$.  The above process continues until the
algorithm reaches the end of the data space.

\eat{This is because according the definition~\ref{def:pb}, $p_{i}
= p_{j}$ at $pb_{o_{i}o_{j}}$, $p_{i}>p_{j}$ for points on the
left side of $pb_{o_{i}o_{j}}$, and $p_{i}<p_{j} for points on the
right side of $pb_{o_{i}o_{j}}$}

\emph{Complexity}: The complexity of
Algorithm~\ref{algo:Prob-Voronoi-OneDim} can be determined as
follows. Let $C_b$ be the cost of computing the probability of an
object being the NN of a query point, and $C_{pb}$ be the cost of
finding the probabilistic bisector of two objects. The complexity
of Algorithm~\ref{algo:Prob-Voronoi-OneDim} is dominated by the
complexity of executing the
Lines~\ref{algo:Prob-Voronoi-OneDim}.4-\ref{algo:Prob-Voronoi-OneDim}.8,
which is $O(nNC_{pb})$, where $n$ is the total number of objects,
and $N$ is the expected number of probabilistic bisectors that
need to be computed for each object in $O$. For real data sets, $N$ is found to be a small value since each object has a small number of surrounding objects (in the worst case it can be $n-1$). The cost of
$C_{pb}=O(C_{b}\log_{2}D)$, where $D$ is the expected distance
between our initial probabilistic bisector $ipb$ and the actual
probabilistic bisector. This is because, the cost of finding a
probabilistic bisector is $O(1)$ for the cases when our algorithm
can directly compute the probabilistic bisector, and for other
cases our algorithm first finds $ipb$ by $O(1)$ and then searches
for the actual probabilistic bisector using $FindProbBisector1D$ by
$O(\log D)$.

\subsection{Probabilistic Voronoi Diagram in a 2D Space}

In location-based applications, locations of objects such as a
passenger and a building, in a 2D space can be uncertain due to
the imprecision of data capturing devices or the privacy concerns of users. In these
applications, the location of an object $o_i$ can be represented
as a circular region $R_i=(c_i,r_i)$, where $c_i$ is the center
and $r_i$ is the radius of the region, and  the actual location of
$o_i$ can be anywhere in $R_i$. The area of $o_i$ is expressed as
$A_i=\pi r_{i}^2$. In this section, we derive the PVD for 2D
uncertain objects.

Similar to the 1D case, a naive approach to find the probabilistic
bisector $pb_{o_{i}o_{j}}$ of $o_i$ and $o_j$ requires an
exhaustive computation of probabilities using
Equation~\ref{eq:pnn} for every position in a large area. In our
approach, we first show that we can directly compute
$pb_{o_{i}o_{j}}$ as the bisector $bs_{c_ic_j}$ of $c_i$ and $c_j$
when two candidate objects are equi-range (i.e., $r_i=r_j$). Next,
we show that for two non-equi-range objects (i.e., $r_i\neq r_j$),
depending on radii and relative positions of objects
$pb_{o_{i}o_{j}}$ slightly shifts from $bs_{c_ic_j}$. In this
case, we use $bs_{c_ic_j}$ to choose a line, called the initial
probabilistic bisector, to approximate the actual probabilistic
bisector $pb_{o_{i}o_{j}}$. Although for simplicity of
presentation, we will use examples where two candidate objects are
non-overlapping, Lemmas~\ref{lemma:five}-\ref{lemma:eight} also
hold for overlapping objects.

For two equi-range uncertain circular objects $o_i$ and $o_j$, we
have the following lemma:

\begin{lemma}
\label{lemma:five} Let $o_i$ and $o_j$ be two circular uncertain
objects with uncertain regions $(c_i,r_i)$ and $(c_j,r_j)$, respectively. If $r_{i}=r_{j}$, then the probabilistic bisector
$pb_{o_{i}o_{j}}$ of $o_i$ and $o_j$ is the bisector $bs_{c_ic_j}$
of $c_{i}$ and $c_{j}$.
\end{lemma}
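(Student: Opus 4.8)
The plan is to exploit the reflective symmetry of the configuration about the line $bs_{c_ic_j}$, exactly as Lemma~\ref{lemma:one} exploits the symmetry of two equal-length 1D ranges about their common midpoint. First I would rewrite the NN probability of Equation~\ref{eq:pnn} in terms of distance distributions. Let $D_k=dist(U_k,q)$ be the random distance from the query point $q$ to a point $U_k$ drawn from the uniform density $f_k$ on $R_k$; the $D_k$ are independent across objects. Then $p(o_i,q)=\int_0^\infty f_{D_i}(t)\,\prod_{k\neq i}P(D_k>t)\,dt$, where $f_{D_i}$ is the density of $D_i$ and $P(D_k>t)$ is its survival function. This is just Equation~\ref{eq:pnn} after the change of variable $t=dist(u,q)$, and it isolates the only thing that matters at $q$: how the disks look from $q$ as distance profiles.

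Next I would establish the equality on the bisector. Let $\sigma$ be the reflection across $bs_{c_ic_j}$. Because $r_i=r_j$, the map $\sigma$ sends $R_i$ onto $R_j$ and vice versa, preserves all Euclidean distances, and fixes every point of $bs_{c_ic_j}$. Hence for $q\in bs_{c_ic_j}$ the disks $R_i$ and $R_j$ present identical distance profiles to $q$, so $D_i$ and $D_j$ are identically distributed; moreover every third object contributes the factor $P(D_k>t)$, which depends on $q$ only through the common distance level $t$. Substituting $u=\sigma(w)$ in the integral for $p(o_i,q)$ therefore turns it term by term into the integral for $p(o_j,q)$, giving $p(o_i,q)=p(o_j,q)$. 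This argument never assumes the absence of other objects, which is why (as in Lemma~\ref{lemma:one}) no such hypothesis is needed.

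The main work, and the part I expect to be the real obstacle, is the two strict inequalities off the bisector. Fix $q'$ on the $o_i$ side, so $dist(q',c_i)<dist(q',c_j)$, and write $G_k(t)=P(D_k>t)$. The geometric fact I would use is that the overlap area $Area\big(R_k\cap(q',t)\big)$ of the disk $R_k$ with the ball $(q',t)$ is a decreasing function of $dist(q',c_k)$ for fixed radius; since $r_i=r_j$ and $c_i$ is the nearer center, this yields $G_i(t)\le G_j(t)$ for all $t$, strictly on an interval, i.e.\ $D_i$ is stochastically smaller than $D_j$. The delicate step is to pass from this domination to $p(o_i,q')>p(o_j,q')$ in the presence of the common non-increasing weight $W(t)=\prod_{k\neq i,j}G_k(t)\in[0,1]$ from the other objects. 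Setting $A=G_j-G_i\ge0$, integration by parts gives
\begin{align*}
p(o_i,q')-p(o_j,q') &= \int_0^\infty\big(G_iA'-A\,G_i'\big)\,W\,dt \\
&= -2\int_0^\infty A\,G_i'\,W\,dt-\int_0^\infty A\,G_i\,W'\,dt,
\end{align*}
where the boundary terms vanish since $A(0)=0$ and $A,G_i\to0$ at infinity. Both integrals are non-negative because $G_i'\le0$, $W'\le0$, and $A,G_i,W\ge0$, and the first is strictly positive since $A>0$ on an interval where $G_i'<0$ and $W>0$; hence $p(o_i,q')>p(o_j,q')$. The symmetric computation on the $o_j$ side gives the reverse inequality, and together with the equality on $bs_{c_ic_j}$ this verifies all three defining conditions of a probabilistic bisector, establishing $pb_{o_io_j}=bs_{c_ic_j}$. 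The points needing care are the monotonicity of the overlap area in the center distance and the sign bookkeeping above; the reduction is robust to the extra objects precisely because they enter only through the common monotone weight $W$.
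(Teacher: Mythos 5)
Your proposal is correct and in fact proves more than the paper does. For the equality on the bisector, your route and the paper's share the same underlying idea---the mirror symmetry of two equal disks about $bs_{c_ic_j}$---but the formalizations differ: the paper fixes a point $x$ on the bisector, \emph{assumes no other object lies within distance $d+2r_i$ of $x$}, slices both disks into unit-width annuli centered at $x$, and observes that corresponding pieces have equal areas ($o_{i_s}=o_{j_s}$), whereas you phrase everything through the distance distributions $D_i,D_j$ and the reflection map $\sigma$, which is cleaner and, importantly, absorbs all third objects into the common factor $W(t)=\prod_{k\neq i,j}P(D_k>t)$ so that no such no-interference hypothesis is needed. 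The bigger difference is that you also verify the two strict inequalities off the bisector (via the stochastic dominance $G_i\leq G_j$ from the monotonicity of the lens area in the center distance, plus the integration-by-parts identity $p(o_i,q')-p(o_j,q')=-2\int A\,G_i'\,W\,dt-\int A\,G_i\,W'\,dt$); the paper's proof of this lemma checks \emph{only} the equality condition, even though its own 1D treatment stresses that all three conditions must hold, so your argument fills a genuine gap in the published proof. Two small points of care: first, your claimed strictness requires $W>0$ somewhere on the interval where $A>0$ and $G_i'<0$; if some third object is entirely contained in the ball centered at $q'$ of radius $mindist(q',o_i)$, then $p(o_i,q')=p(o_j,q')=0$ and the inequality is not strict---but such a $q'$ lies in neither $PVC(o_i)$ nor $PVC(o_j)$, so the paper's definition (which demands strictness only on the cells) is still satisfied; it would be cleaner to state the strict inequality under the hypothesis $p(o_i,q')>0$, which your own computation then delivers, since wherever $G_i'<0$, $G_j>0$ and $W>0$ simultaneously one also has $A>0$. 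Second, the lens-area monotonicity and the a.e.\ differentiability of $G_i$, $A$, $W$ are asserted rather than proved; both are standard, but the monotonicity claim is the geometric heart of the dominance step and deserves at least a sentence of justification.
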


\begin{proof}
Let $x$ be any point on $bs_{c_ic_j}$, and $d=mindist(x,o_i)$(or
$mindist(x,o_j)$). Let there be no other objects within the
circular range centered at $x$ with radius $d+2r_i$. Suppose
circles centered at $x$ with radii $d+1$ to $d+2r_i$ partition
$o_i$ into $2r_i$ sub-regions
$o_{i_{1}},o_{i_{2}},...,o_{i_{2r_{i}}}$, such that
$\sum_{s=1}^{2r_i}\frac{o_{i_{s}}}{A_i}=1$. Similarly, $o_j$ is
divided into $2r_i$ sub-regions
$o_{j_{1}},o_{j_{2}},...,o_{j_{2r_{i}}}$, where
$\sum_{s=1}^{2r_i}\frac{o_{j_{s}}}{A_j}=1$. By using Equation~\ref{eq:pnn}, we can calculate the
probability of $o_i$ being the nearest from $x$, as follows.
\begin{align*}
p(o_i,x) &=
\sum_{s=d+1}^{2r_i+d}\frac{o_{i_{s-d}}}{A_i}(1-\sum_{u=d+1}^{s}\frac{o_{j_{u-d}}}{A_j}).
\end{align*}
Similarly, we can calculate the
probability of $o_j$ being the nearest from $x$, as follows.
\begin{align*}
p(o_j,x) &=
\sum_{s=d+1}^{2r_i+d}\frac{o_{j_{s-d}}}{A_j}(1-\sum_{u=d+1}^{s}\frac{o_{i_{u-d}}}{A_i}).
\end{align*}
Since, $r_i=r_j$ and $o_{i_{s}}=o_{j_{s}}$ for all $1\leq s\leq
2r_{i}$, we have $p(o_i,x)=p(o_j,x)$.
\eat{ Let $o_k$ be the third
object that overlaps with the circular boundary centered at $x$
with radius $d+2r_{i}$. Let $d^\prime$ be the $mindist(x,o_k)$.
Suppose circles with radii $d^\prime+1$ to $d^\prime+2r_k$
partition $o_k$ into $2r_k$ sub-regions
$o_{k_{1}},o_{k_{2}},...,o_{k_{2r_{k}}}$, where
$\sum_{s=1}^{2r_k}\frac{o_{k_{s}}}{Area(R_k)}=1$. We can
recalculate
\begin{align*}
&p(o_i,x)=\\&\sum_{s=d+1}^{2r_i+d}\frac{o_{i_{s-d}}}{A_i}(1-\sum_{u=d+1}^{s}\frac{o_{j_{u-d}}}{A_j})(1-\sum_{v=d^\prime+1}^{s}\frac{o_{k_{v-d^\prime}}}{A_k}),
\end{align*}
and
\begin{align*}
&p(o_j,x)=\\&\sum_{s=d+1}^{2r_1+d}\frac{o_{j_{s-d}}}{A_j}(1-\sum_{u=d+1}^{s}\frac{o_{i_{u-d}}}{A_i})(1-\sum_{v=d^\prime+1}^{s}\frac{o_{k_{v-d^\prime}}}{A_k}).
\end{align*}
Since $o_k$ contributes to both $p(o_i,x)$ and $p(o_j,x)$ in the
same proportion, we have $p(o_i,x)=p(o_j,x)$.}
\end{proof}

The probabilistic bisector $pb_{o_1o_2}$ of two equi-range objects
$o_1$ and $o_2$ is shown in Figure~\ref{fig:circle_lemma1}.

\begin{figure}[htbp]
    \centering
        \includegraphics[width=1.5in]{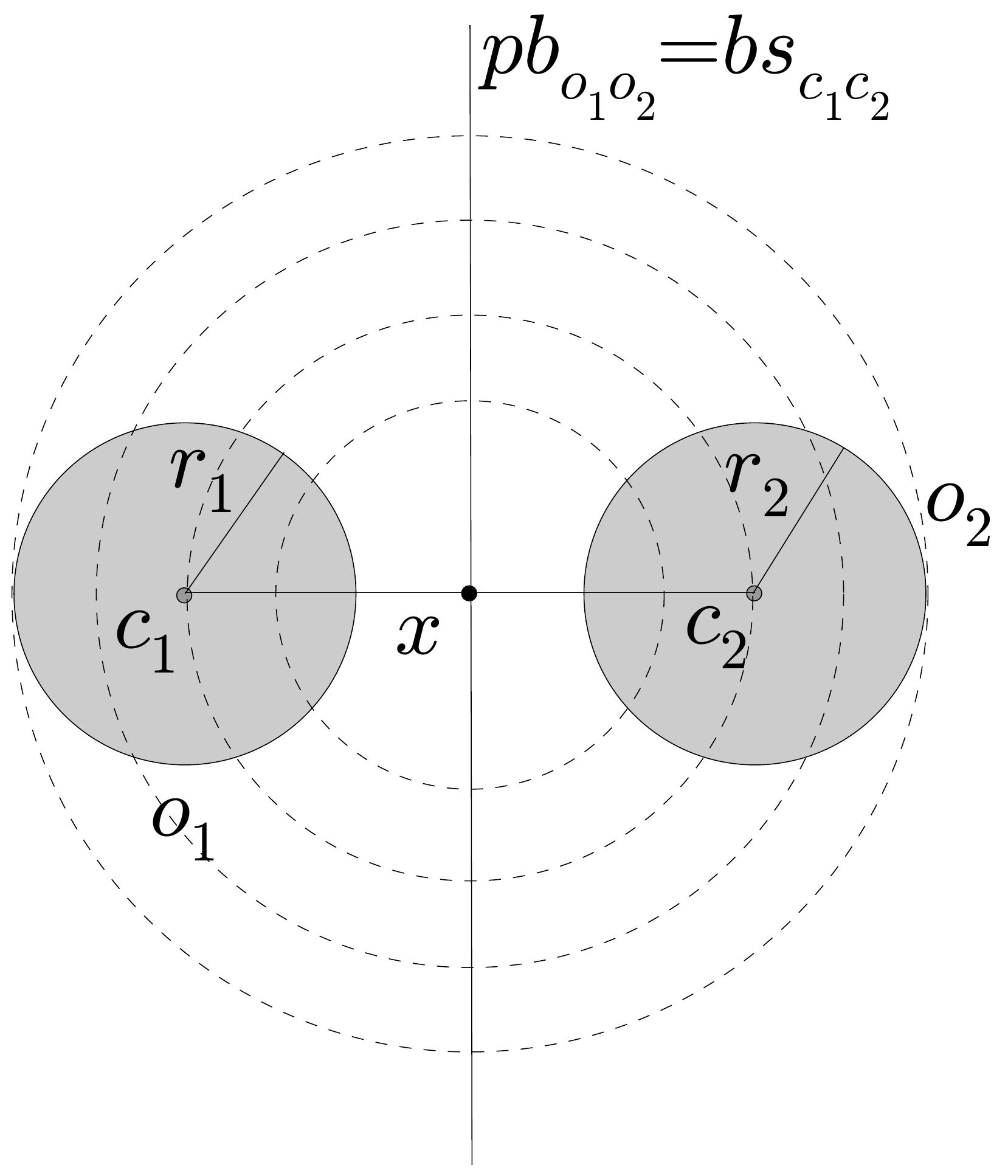}
    \caption{The probabilistic bisector of objects $o_1$ and $o_2$, where $r_{1}=r_{2}$}
    \label{fig:circle_lemma1}
\end{figure}

Lemmas~\ref{lemma:six} and~\ref{lemma:seven} show how the
probabilistic bisector of two non-equi-range objects $o_i$ and
$o_j$ is related to the bisector of $c_i$ and $c_j$
(Figure~\ref{fig:circle_lemma3} and ~\ref{fig:circle_lemma2}).

\begin{figure}[htbp]
    \centering
        \includegraphics[width=2in]{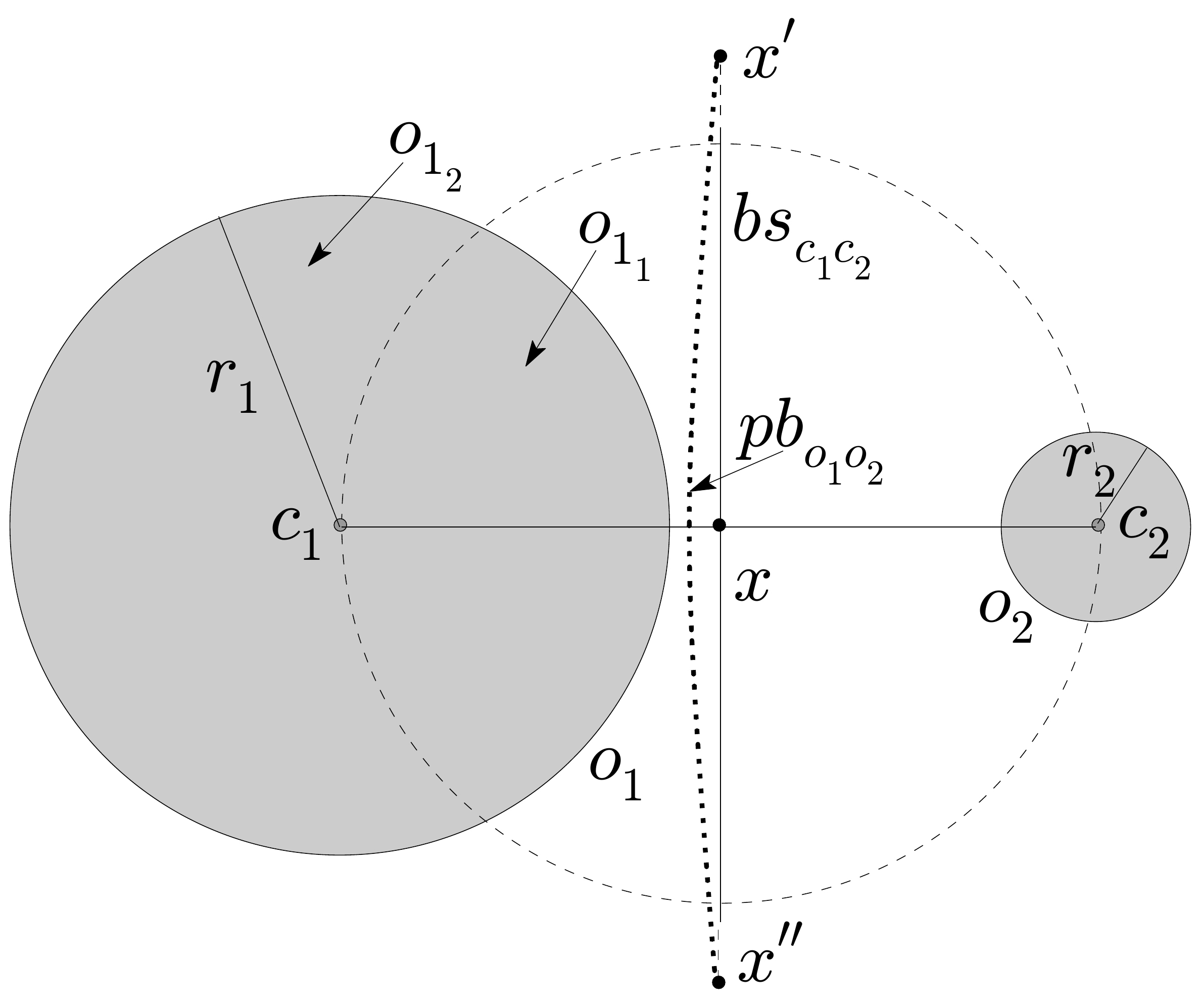}
    \caption[The probabilistic bisector of objects $o_1$ and $o_2$, where $r_{1}>r_{2}$]{The probabilistic bisector of objects $o_1$ and $o_2$, where $r_{1}>r_{2}$. The curve, $pb_{{o_1}{o_2}}$, is the probabilistic bisector
between $o_1$ and $o_2$, i.e., $p(o_1,x)=p(o_2,x)$, for any point
$x \in \{pb_{{o_1}{o_2}}\}$}
    \label{fig:circle_lemma3}
\end{figure}

Next, we will show in Lemma~\ref{lemma:six} that the shape of
$pb_{o_{i}o_{j}}$ for two non-equi-range circular objects $o_i$
and $o_j$ is a curve, and the distance of this curve from
$bs_{c_ic_j}$ is maximum on the line $\overline{c_{i}c_{j}}$.
Figure~\ref{fig:circle_lemma3} shows the bisector $bs_{c_1c_2}$
and the probabilistic bisector $pb_{o_{1}o_{2}}$ for $o_1$ and
$o_2$.

\begin{lemma}
\label{lemma:six} Let $o_i$ and $o_j$ be two objects with non-equi-range
uncertain circular regions $(c_i,r_i)$ and $(c_j,r_j)$, respectively, and $bs_{c_ic_j}$ be the bisector of
$c_i$ and $c_j$. Then the maximum distance between $bs_{c_ic_j}$
and $pb_{o_{i}o_{j}}$ occurs on the line $\overline{c_{i}c_{j}}$.
This distance gradually decreases as we move towards positive or
negative infinity along the bisector $bs_{c_ic_j}$.
\end{lemma}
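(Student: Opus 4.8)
The plan is to set up coordinates adapted to the two curves, reduce the probability balance to a condition on center-distances, and then analyze how that condition behaves as we slide along the bisector.

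\emph{Setup and reduction.} First I would place $bs_{c_ic_j}$ along the $y$-axis and the two centers symmetrically on the $x$-axis, writing $c_i=(-a,0)$ and $c_j=(a,0)$ with $2a=dist(c_i,c_j)$; the line $\overline{c_ic_j}$ is then the $x$-axis. Because each object is a disk and $p(o_i,x)$ (Equation~\ref{eq:pnn}) is computed from the distribution of $dist(\cdot,x)$ over $R_i$ and $R_j$, and because the triangle $x c_i c_j$ is determined by $\rho_i=dist(x,c_i)$ and $\rho_j=dist(x,c_j)$ once $dist(c_i,c_j)$ is fixed, the value $p(o_i,x)$ depends on the position of $x$ only through $\rho_i,\rho_j$ and the fixed radii. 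Hence the probabilistic bisector is the level set of a single antisymmetric function $\Psi(\rho_i,\rho_j):=p(o_i,x)-p(o_j,x)=0$. Parametrizing $pb_{o_io_j}$ by height as $(\delta(t),t)$, the distance from this point to $bs_{c_ic_j}$ is exactly $|\delta(t)|$, and the elementary identity $\rho_i^2-\rho_j^2=(\delta+a)^2-(\delta-a)^2=4a\delta$ gives $\delta(t)=\frac{\rho_i^2-\rho_j^2}{4a}$. The whole lemma thus reduces to a statement about how $\rho_i^2-\rho_j^2$ varies along the balance curve $\Psi=0$ as $t$ ranges over the reals.

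\emph{The two boundary behaviors.} Two qualitative facts anchor the claim. On the center line ($t=0$) the sum $\rho_i+\rho_j$ attains its minimum value $2a$ over the whole bisector (triangle inequality, with equality only on the segment $\overline{c_ic_j}$); this is where $x$ sits closest to the configuration and where the skew of a disk's distance-profile, which is precisely what breaks the equal-radius symmetry of Lemma~\ref{lemma:five}, is most pronounced, so $|\delta|$ should be largest here. At the other extreme, as $t\to\pm\infty$ both $\rho_i,\rho_j\to\infty$ and the distance-profile of each disk seen from the far point becomes symmetric about its center-distance, so the radius difference stops mattering and the balance condition degenerates to $\rho_i=\rho_j$, exactly the one-dimensional situation of Lemma~\ref{lemma:two}. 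Therefore $\delta(t)\to 0$, i.e. $pb_{o_io_j}$ is asymptotic to $bs_{c_ic_j}$. I would also record the reflection symmetry $\delta(-t)=\delta(t)$, which comes from the configuration being invariant under flipping across $\overline{c_ic_j}$, so it suffices to treat $t\ge 0$.

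\emph{Monotonicity, the main obstacle.} The hard step is to show $|\delta(t)|$ is strictly decreasing for $t>0$, so that the maximum is attained only at $t=0$. My approach would be implicit differentiation of $\Psi(\rho_i,\rho_j)=0$ along the curve: writing $\rho_i,\rho_j$ as explicit functions of $(\delta,t)$ and differentiating in $t$ expresses $\delta'(t)$ through the partials $\Psi_{\rho_i}$ and $\Psi_{\rho_j}$. The signs of these partials I would extract from the summation form of Equation~\ref{eq:pnn}: $p(o_i,x)$ is strictly decreasing in its own center-distance $\rho_i$ and strictly increasing in the competitor's $\rho_j$ (pushing your object farther lowers its chance, pushing the rival farther raises yours). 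Feeding these signs, together with $\partial\rho_i/\partial t$ and $\partial\rho_j/\partial t$ from the explicit distances, into the implicit-differentiation formula should pin down $\operatorname{sign}\delta'(t)$. I expect the genuine difficulty to lie here, because in the form $\delta=\frac{(\rho_i-\rho_j)(\rho_i+\rho_j)}{4a}$ the difference $\rho_i-\rho_j$ shrinks while the sum $\rho_i+\rho_j$ grows as $t$ increases, so $\delta$ is a product of one decaying and one growing factor and its monotonicity is not automatic; controlling that competition rigorously from quantitative profile-skew estimates is where the real work of the proof resides.
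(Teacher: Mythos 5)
Your two anchor observations are, in substance, exactly the paper's proof. What you call the skew of a disk's distance-profile is the paper's ``curvature'' argument: at the point $x$ where $bs_{c_ic_j}$ meets $\overline{c_{i}c_{j}}$, the paper draws the circle centered at $x$ with radius $\frac{1}{2}dist(c_i,c_j)$, which passes through both centers, and notes that by curvature it contains a smaller fraction of the larger disk than of the smaller one, hence $p(o_j,x)>p(o_i,x)$ and the probabilistic bisector must be displaced from $x$ toward $c_i$; at a point $x^{\prime}$ far out on $bs_{c_ic_j}$, the analogous circle through both centers is locally a straight line, so it halves both disks, the probabilities approach equality, and the displacement tends to zero. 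Your $t=0$ and $t\to\pm\infty$ analyses restate these two facts in a cleaner frame, and your reduction of the lemma to the behavior of $\delta(t)=(\rho_i^2-\rho_j^2)/(4a)$ along the level curve $\Psi=0$ is more precise than anything the paper writes down.

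The gap you flag is genuine in your proposal: you never extract the sign of $\delta^{\prime}(t)$, so neither the strict decrease of $|\delta(t)|$ in $|t|$ nor even the weaker claim that the maximum displacement is attained on $\overline{c_{i}c_{j}}$ (rather than at some intermediate height) is actually established. But you should know that the paper's own proof has exactly the same hole: it proves only the two endpoint facts above, and the ``gradually decreases'' clause of the lemma is asserted purely on the curvature intuition, with no monotonicity argument at all. So your proposal is not below the paper's standard of rigor; it proves the same two facts and is more candid about what remains open. If you wanted to close the gap for real, your own framework suggests the route: on $bs_{c_ic_j}$ the common center-distance is $\rho(t)=\sqrt{a^2+t^2}$, strictly increasing in $|t|$; show that the probability imbalance $\Psi$ evaluated on the bisector is controlled by the curvature $1/\rho$ of the dividing circles and decays monotonically in $\rho$, then convert imbalance into displacement using the signs of $\Psi_{\rho_i}$ and $\Psi_{\rho_j}$ that you already identified.
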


\begin{proof}
Let $x=\frac{c_i+c_j}{2}$ be the intersection point of
$bs_{c_ic_j}$ and $\overline{c_{i}c_{j}}$. Suppose a circle
centered at $x$ with radius $\frac{dist(c_i,c_j)}{2}$ divides
$o_{i}$ into $o_{i_{1}}$ and $o_{i_{2}}$, where
$\frac{o_{i_{1}}}{A_i}+\frac{o_{i_{2}}}{A_i}=1$, and $o_{j}$ into
$o_{j_{1}}$ and $o_{j_{2}}$, where
$\frac{o_{j_{1}}}{A_j}+\frac{o_{j_{2}}}{A_j}=1$. According to
curvature properties of circles, since $r_{i}>r_{j}$, we have
$\frac{o_{i_{1}}}{A_i}<\frac{o_{j_{1}}}{A_j}$ (in
Figure~\ref{fig:circle_lemma3},
$\frac{o_{1_1}}{A_1}<\frac{o_{2_1}}{A_2}$), which intuitively
means, $o_j$ is a more probable NN than $o_i$ to $x$, i.e.,
$p(o_j,x)>p(o_i,x)$. Thus, $x$ needs to be shifted to a point
towards $c_i$ (along the line $\overline{xc_{i}}$), such that the
probabilities of $o_i$ and $o_j$ being the NNs to the new point
become equal.

Suppose a point $x^\prime$ is on $bs_{c_ic_j}$ at the positive
infinity. If a circle centered at $x^\prime$ goes through the
centers of both objects $o_i$ and $o_j$, then the curvature of the
portion of the circle that falls inside an object ($o_i$ or $o_j$)
will become a straight line. This is because, in this case we
consider a small portion of the curve of an infinitely large
circle. This circle divides both objects $o_i$ and $o_j$ into two
equal parts $o_{i_{1}}=o_{i_{2}}$ and $o_{j_{1}}=o_{j_{2}}$,
respectively. Thus, the probabilities of $o_i$ and $o_j$ being the
NNs will approach to being equal at positive infinity, i.e.,
$p(o_j,x^\prime)\approx p(o_i,x^\prime)$, for a large values of
$dist(x^{\prime},x)$. Similarly, we can show the case for a point
$x^{\prime\prime}$ at the negative infinity on $bs_{c_ic_j}$ (see
Figure~\ref{fig:circle_lemma3}).
\end{proof}

Next, we show in Lemma~\ref{lemma:seven} that $pb_{o_{i}o_{j}}$
shifts from  $bs_{c_{i}c_{j}}$ towards the object with larger
radius, and the distance of $pb_{o_{i}o_{j}}$ from
$bs_{c_{i}c_{j}}$ widens with the increase of the ratio of two
radii (i.e., $r_i$ and $r_j$). Figure~\ref{fig:circle_lemma2}
shows an example of this case.

\begin{lemma}
\label{lemma:seven} Let $o_i$ and $o_j$ be two objects with non-equi-range
uncertain circular regions $(c_i,r_i)$ and $(c_j,r_j)$, respectively, and $x=\frac{c_i+c_j}{2}$ be the
midpoint of the line segment $\overline{c_{i}c_{j}}$. If
$r_{i}>r_{j}$, then the probabilistic bisector $pb_{o_{i}o_{j}}$
meets $\overline{c_{i}c_{j}}$ at point $x^{\prime}$, where
$x^{\prime}$ lies between $x$ and $c_i$. If the circular range of
$o_i$ increases such that $r_{i}^{\prime}>r_{i}$, then the new
probabilistic bisector $pb^{\prime}_{o_{i}o_{j}}$ meets
$\overline{c_{i}c_{j}}$ at point $x^{\prime\prime}$ , where
$x^{\prime\prime}$ lies between $x$ and $c_{i}$, and
$dist(x,x^\prime)<dist(x,x^{\prime\prime})$.
\end{lemma}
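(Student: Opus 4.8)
The plan is to derive both parts from the monotonicity of the nearest-neighbor probability along the segment $\overline{c_{i}c_{j}}$, reusing the area-fraction (curvature) fact already established in the proof of Lemma~\ref{lemma:six}. Throughout I keep $o_j$ fixed and parametrise points $y$ on $\overline{c_{i}c_{j}}$ as moving from $c_j$ toward $c_i$, and I set $g(y)=p(o_i,y)-p(o_j,y)$, so that a probabilistic bisector on the line is a zero of $g$.

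\textbf{First part.} At the midpoint $x$ the proof of Lemma~\ref{lemma:six} already shows that, because $r_i>r_j$, the near-side fraction of $o_i$ is strictly smaller than that of $o_j$, whence $p(o_j,x)>p(o_i,x)$, i.e. $g(x)<0$. As $y$ moves from $x$ toward $c_i$ it gets strictly closer to $o_i$ and strictly farther from $o_j$, so $p(o_i,y)$ increases and $p(o_j,y)$ decreases; hence $g$ is strictly increasing on the segment. Lemma~\ref{lemma:six} guarantees that $pb_{o_{i}o_{j}}$ meets $\overline{c_{i}c_{j}}$, so $g$ has a (unique, by monotonicity) zero $x^{\prime}$ on the line; since $g(x)<0$ and $g$ increases toward $c_i$, this zero must lie strictly between $x$ and $c_i$, which is exactly the first claim.

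\textbf{Second part.} Now let $p_r(o_i,y)$ denote the nearest-neighbor probability of $o_i$ when its disk has radius $r$, with $c_i$, $o_j$, and $y$ all held fixed. At the old crossing point we have $p_{r_i}(o_i,x^{\prime})=p(o_j,x^{\prime})$. The goal is to show that enlarging the disk to radius $r_i^{\prime}>r_i$ strictly lowers this value, $p_{r_i^{\prime}}(o_i,x^{\prime})<p(o_j,x^{\prime})$. Granting this, the function $y\mapsto p_{r_i^{\prime}}(o_i,y)-p(o_j,y)$ is negative at $x^{\prime}$, is still strictly increasing toward $c_i$ by the same point-monotonicity used above, and is positive at $c_i$ (there $y$ lies inside $o_i$, so $o_i$ is almost surely nearest); its unique zero $x^{\prime\prime}$ therefore lies strictly between $x^{\prime}$ and $c_i$. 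Consequently $x^{\prime\prime}$ is again between $x$ and $c_i$ and is farther from $x$ than $x^{\prime}$, giving $dist(x,x^{\prime})<dist(x,x^{\prime\prime})$.

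\textbf{Main obstacle.} The technical heart is the radius-monotonicity claim: increasing the radius of $o_i$, with its center and the query point fixed, strictly decreases its probability of being the nearest neighbor. I would prove it through the same near/far area split as in Lemma~\ref{lemma:six}. Regrouping Equation~\ref{eq:pnn} by distance, one writes $p_r(o_i,y)=\int_{0}^{\infty} f_{D_i}(t)\,S_j(t)\,dt$, where $D_i$ is the distance from $y$ to a uniform point of $o_i$ and $S_j(t)=P(dist(v,q)>t)$ is the decreasing survival weight contributed by $o_j$. It then suffices to show that enlarging $r_i$ redistributes the mass of $D_i$ so that this integral against the decreasing weight $S_j$ strictly drops; geometrically this is the curvature statement that, for the relevant family of circles centred at $y$, the larger disk places a strictly smaller fraction of its area on the near side relative to the far side. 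Turning this intuitive curvature comparison into a fully rigorous monotonicity in $r$ — rather than the asymptotic ``approaches equality at infinity'' reasoning of Lemma~\ref{lemma:six} — is the delicate step, and it is where I would concentrate the detailed calculation.
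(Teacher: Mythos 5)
Your first part is essentially the paper's own argument: the paper likewise uses the curvature fact at the midpoint $x$ (the circle centered at $x$ passing through $c_i$ and $c_j$ cuts a smaller normalized area out of the larger disk, $\frac{o_{i_1}}{A_i}<\frac{o_{j_1}}{A_j}$) to conclude $p(o_j,x)>p(o_i,x)$, and then shifts $x$ toward $c_i$ until the probabilities balance. Your explicit monotonicity of $g(y)=p(o_i,y)-p(o_j,y)$ along the segment is a sound way to make that shift rigorous, and it is provable: as $y$ moves toward $c_i$, the area of $B(y,s)\cap o_i$ grows and that of $B(y,s)\cap o_j$ shrinks for every $s$ (intersection area of two disks decreases in the distance between their centers), so the distance to $o_i$ stochastically decreases while the survival function of the distance to $o_j$ rises pointwise, and both effects push $p(o_i,y)$ up and $p(o_j,y)$ down.

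The genuine gap is in your second part. Its pivot --- ``increasing the radius of $o_i$, with its center and the query point fixed, strictly decreases its probability of being the nearest neighbor'' --- is not proven in your proposal, and as stated it is false. Take a query point $y$, let $o_j$ span distances $[2.5,3.5]$ from $y$ and $o_i$ span distances $[9,11]$, so that $p(o_i,y)=0$; enlarging $o_i$ until its near edge comes within distance $2$ of $y$ makes $p(o_i,y)>0$, i.e., the probability \emph{increases}. Enlargement has two competing effects (the near edge of the disk moves toward $y$; the bulk of the added area lies farther away), and which one wins depends on the configuration. So the ``delicate step'' you defer is not a routine calculation to be filled in: any correct proof of radius-monotonicity must use the hypothesis that $x^{\prime}$ is the balance point of the original pair (where, with only two objects, both probabilities equal $1/2$) together with $r_i>r_j$ --- and that is essentially the whole content of the lemma. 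Note how the paper sidesteps exactly this trap: it compares the enlarged object at the midpoint $x$, not at $x^{\prime}$, where the cutting circle passes through the centers of both disks and the inequality $\frac{o^{\prime}_{i_1}}{A^{\prime}_i}<\frac{o_{i_1}}{A_i}<\frac{o_{j_1}}{A_j}$ is a clean curvature fact; the price it pays is an informal jump from ``larger deficit at $x$'' to ``larger shift of the crossing point,'' which, like your argument, implicitly requires a pointwise comparison of the two difference functions along the segment. One further small repair: your assertion that $g_{r_i^{\prime}}(c_i)>0$ because ``$o_i$ is almost surely nearest'' at $c_i$ is correct only while the enlarged disk stays disjoint from $o_j$ (then every point of $o_i$ is within $r_i^{\prime}<dist(c_i,c_j)-r_j$ of $c_i$); for overlapping objects, which the paper claims these lemmas also cover, that endpoint sign needs separate justification.
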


\begin{proof}
Suppose a circle centered at $x$ with radius
$\frac{dist(c_i,c_j)}{2}$ divides $o_{i}$ into $o_{i_{1}}$ and
$o_{i_{2}}$, where
$\frac{o_{i_{1}}}{A_i}+\frac{o_{i_{2}}}{A_i}=1$, and $o_{j}$ into
$o_{j_{1}}$ and $o_{j_{2}}$, where
$\frac{o_{j_{1}}}{A_j}+\frac{o_{j_{2}}}{A_j}=1$. According to
curvature properties of circles, since $r_{i}>r_{j}$, we have
$\frac{o_{i_{1}}}{A_i}<\frac{o_{j_{1}}}{A_j}$ (in
Figure~\ref{fig:circle_lemma2},
$\frac{o_{1_1}}{A_1}<\frac{o_{2_1}}{A_2}$), which intuitively
means, $o_j$ is a more probable NN than $o_i$ to $x$, i.e.,
$p(o_j,x)>p(o_i,x)$. Thus, $x$ needs to be shifted to a point
$x^\prime$ towards $c_i$, such that the probabilities of $o_i$ and
$o_j$ being the NN to $x^\prime$ become equal. Let
$o_{i}^{\prime}$ be an object, such that $r_{i}^{\prime}>r_{i}$
and $c_{i}^{\prime}=c_{i}$. Then the circle centered at $x$ with
radius $\frac{dist(c_i,c_j)}{2}$ divides $o^{\prime}_{i}$ into
$o^{\prime}_{i_{1}}$ and $o^{\prime}_{i_{2}}$, where
$\frac{o^{\prime}_{i_{1}}}{A_{i}^{\prime}}+\frac{o^{\prime}_{i_{2}}}{A_{i}^{\prime}}=1$.
Now, we have
$\frac{o^{\prime}_{i_{1}}}{A_{i}^{\prime}}<\frac{o_{i_{1}}}{A_i}<\frac{o_{j_{1}}}{A_j}$.
Thus, $x$ needs to be shifted to a point $x^{\prime\prime}$ more
towards $c_i$, i.e., $dist(x,x^\prime)<dist(x,x^{\prime\prime})$,
such that the probabilities of $o^{\prime}_{i}$ and $o_j$ being
the NNs become equal at $x^{\prime\prime}$.\eat{ Therefore, we
have $dist(x,x^\prime)<dist(x,x^{\prime\prime})$.}
\end{proof}

\begin{figure}[htbp]
    \centering
        \includegraphics[width=2in]{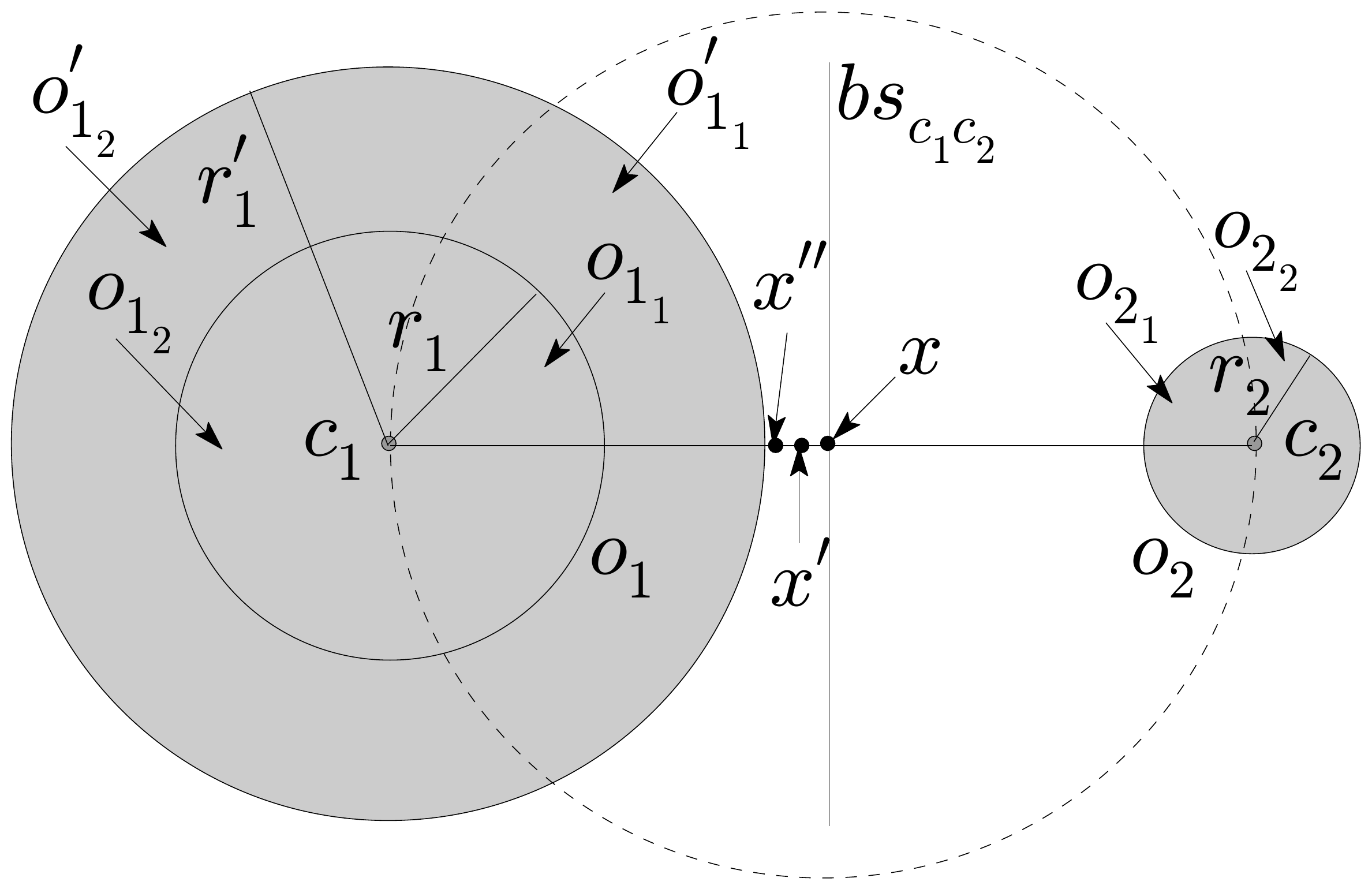}
    \caption{Influence of objects' sizes on the probabilistic bisector}
    \label{fig:circle_lemma2}
\end{figure}

The next lemma shows the influence of a third object on the
probabilistic bisector of two non-equi-range objects. (Note that
the probabilistic bisector of two equi-range objects does not
change with the influence of any other object (see
Lemma~\ref{lemma:five})). Figure~\ref{fig:circle_influence} shows
an example, where object $o_3$ influences the probabilistic
bisector of objects $o_1$ and $o_2$. In this figure, the dotted
circle centered at $s_1$ with radius $dist(s_1,c_1)+r_1$ encloses
one candidate object $o_1$, but only touches the third object
$o_3$. Thus, the probability of $o_3$ being the NN to $s_1$ is
zero. However, for any point between $s_1$ and $s_2$, $o_3$ has a
non-zero probability of being the NN of that point, and thus $o_3$
influences $pb_{o_1o_2}$.

\begin{lemma}
\label{lemma:eight} Let $o_i$ and $o_j$ be two objects with non-equi-range
uncertain circular regions $(c_i,r_i)$ and $(c_j,r_j)$, respectively, where $r_{i}<r_{j}$, and $bs_{c_ic_j}$ be the bisector of $c_i$ and
$c_j$. An object $o_k$ influences the probabilistic bisector
$pb_{o_{i}o_{j}}$ for the part of the segment $[s_{1},s_{2}]$ on
the line $bs_{c_ic_j}$, where
$dist(s,c_{i})+r_i>dist(s,c_{k})-r_k$ for $s\in bs_{c_ic_j}$.
\end{lemma}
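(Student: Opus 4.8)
The plan is to characterize ``influence'' directly through Equation~\ref{eq:pnn} and then translate that characterization into the stated geometric inequality. First I would fix a point $s$ on the line $bs_{c_ic_j}$ and ask when the presence of $o_k$ alters the nearest-neighbor probabilities that define $pb_{o_io_j}$. Writing $p(o_i,s)$ by Equation~\ref{eq:pnn}, the third object $o_k$ enters only through the factor $\int_{v\in R_k}P(dist(v,s)>dist(u,s))\,dv$ inside the product over $j\ne i$. This factor equals $1$ for every $u\in R_i$ -- so that $o_k$ leaves $p(o_i,s)$ untouched -- precisely when every point of $o_k$ is at least as far from $s$ as the farthest point of $o_i$, i.e. when $mindist(s,o_k)\ge maxdist(s,o_i)$. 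Hence $o_k$ perturbs $p(o_i,s)$ exactly when $mindist(s,o_k)<maxdist(s,o_i)$. Using the circular forms $maxdist(s,o_i)=dist(s,c_i)+r_i$ and $mindist(s,o_k)=dist(s,c_k)-r_k$, this is $dist(s,c_i)+r_i>dist(s,c_k)-r_k$, the inequality in the statement.

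Next I would argue that on the line $bs_{c_ic_j}$ the smaller object $o_i$ is the binding competitor, which is what makes $o_i$ (rather than $o_j$) appear in the threshold. Because $s\in bs_{c_ic_j}$ gives $dist(s,c_i)=dist(s,c_j)$ and $r_i<r_j$, we have $maxdist(s,o_i)=dist(s,c_i)+r_i<dist(s,c_j)+r_j=maxdist(s,o_j)$ for every such $s$. Consequently the condition $mindist(s,o_k)<maxdist(s,o_i)$ already implies $mindist(s,o_k)<maxdist(s,o_j)$, so on this set $o_k$ can be closer to $s$ than both $o_i$ and $o_j$ and therefore attains a positive probability of being the nearest neighbor. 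This is exactly the regime in which the two-object competition underlying $pb_{o_io_j}$ is genuinely disturbed, matching the tangency picture of Figure~\ref{fig:circle_influence}: at the endpoints the circle centered at $s$ with radius $maxdist(s,o_i)=dist(s,c_i)+r_i$ merely touches $o_k$, giving $p(o_k,s)=0$, while strictly inside the set $o_k$ penetrates that circle.

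Finally I would identify the endpoints $s_1,s_2$ as the points of $bs_{c_ic_j}$ at which equality $dist(s,c_i)+r_i=dist(s,c_k)-r_k$ holds, equivalently $dist(s,c_k)-dist(s,c_i)=r_i+r_k$. These are the intersections of the line $bs_{c_ic_j}$ with the hyperbola arm having foci $c_i$ and $c_k$ and constant difference $r_i+r_k$; since a line meets such an arm in at most two points, the set on which the influence inequality holds is delimited by the boundary points $s_1$ and $s_2$, which is the portion $[s_1,s_2]$ claimed in the statement.

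I expect the main obstacle to be pinning down ``influence'' rigorously rather than the algebra. Equation~\ref{eq:pnn} makes it immediate that $o_k$ leaves $p(o_i,s)$ unchanged outside the stated set, but one must still justify that a positive nearest-neighbor probability for $o_k$ actually shifts the equality locus $p(o_i,s)=p(o_j,s)$ and is not merely a necessary condition; the ``binding competitor'' observation above -- that on $bs_{c_ic_j}$ the smaller object always has the smaller $maxdist$ -- is the crux that both forces $o_i$ into the threshold and ties the analytic perturbation of Equation~\ref{eq:pnn} to the geometric tangency endpoints $s_1,s_2$ of Figure~\ref{fig:circle_influence}.
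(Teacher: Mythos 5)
Your proposal is correct and takes essentially the same approach as the paper's proof: both rest on the observation that for $s\in bs_{c_ic_j}$ we have $dist(s,c_i)=dist(s,c_j)$ and $r_i<r_j$, so $o_i$ is the binding competitor and ``influence'' reduces to $o_k$ having a non-zero chance of being the NN at $s$, i.e. $mindist(s,o_k)<maxdist(s,o_i)$, which is exactly $dist(s,c_i)+r_i>dist(s,c_k)-r_k$. Your additional material---the factor-wise reading of Equation~\ref{eq:pnn} and the identification of $s_1,s_2$ as the intersections of $bs_{c_ic_j}$ with the hyperbola arm $dist(s,c_k)-dist(s,c_i)=r_i+r_k$---is elaboration of what the paper leaves implicit, and your proof shares the paper's own informality about whether perturbing $p(o_j,s)$ alone (without $o_k$ ever being a possible NN) should also count as influence.
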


\begin{proof}
Since $r_{i}<r_{j}$, we have $maxdist(s,o_i)<maxdist(s,o_j)$.
Thus, if the minimum distance $mindist(s,o_k)$ of an object $o_k$
from $s$ is greater than the maximum distance $maxdist(s,o_j)$ of
$o_j$ from $s$, i.e., $dist(s,c_{k})-r_k>dist(s,c_{i})+r_i$, the
object $o_k$ cannot be the NN to the point $s$, otherwise $o_k$ has the possibility of being
the NN to $s$ and hence $o_k$ influences $pb_{o_io_j}$.
\end{proof}

It is noted when the centers of two non-equal objects coincide
each other, the probability of the smaller object dominates the
probability of the larger object.\eat{ We can give similar proof as
shown in Lemma~\ref{lemma:seven}, where we can show that the
probability of a smaller circular object to a query point will be
higher than that of a larger circular object.} Therefore, in those
cases, we only consider the object with a smaller radius, and the
other object is discarded. Also, if two objects are equal and
their centers coincide each other, no probabilistic bisector
exists between them, thus any one of these two objects is
considered for computing the PVD.

\begin{figure}[htbp]
    \centering
        \includegraphics[width=1.5in]{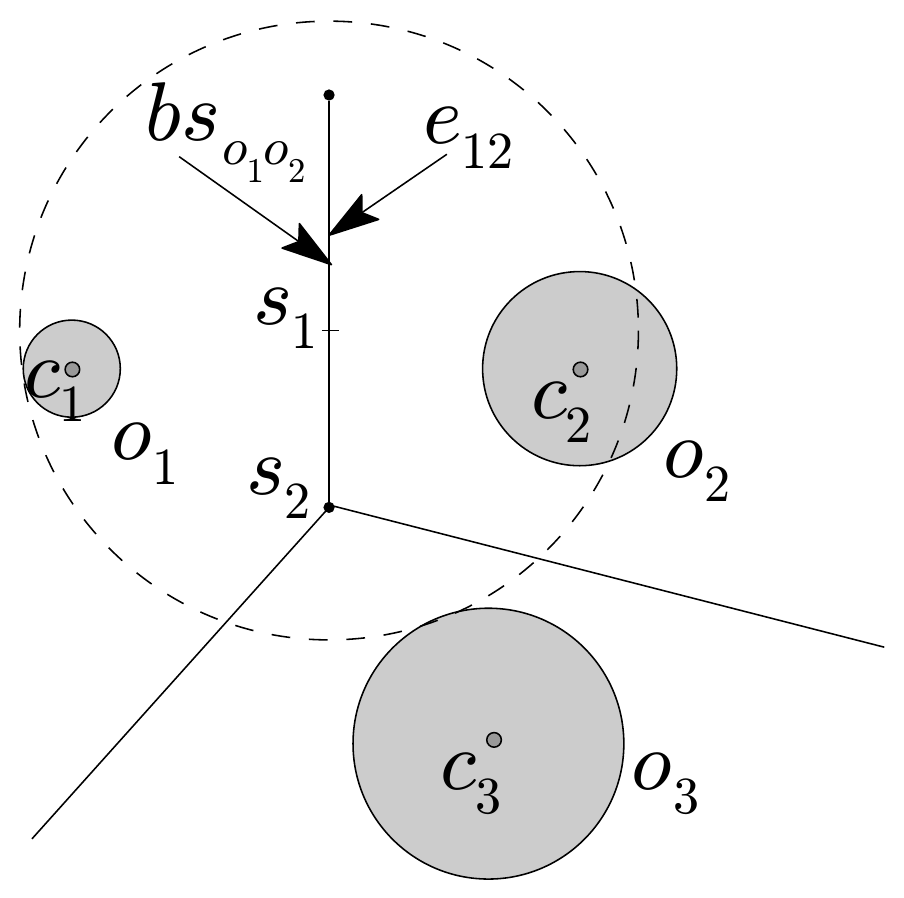}
    \caption{Influence of object $o_3$ on the probabilistic bisector of $o_1$ and $o_2$}
    \label{fig:circle_influence}
\end{figure}

%
\noindent\emph{\textbf{Algorithms: }}\par Based on the above
lemmas, we propose algorithms to find the probabilistic bisector
of any two uncertain 2D objects. We have shown in
Lemma~\ref{lemma:five} that the probabilistic bisector of two
circular uncertain objects is a straight line when the radii of
two objects are equal. On the other hand,
Lemma~\ref{lemma:six}-Lemma~\ref{lemma:seven} show that the
probabilistic bisector is a curve when the radii of two objects
are non-equal. However, to avoid the computational and maintenance
costs, we maintain a bounding box (i.e., quadrilateral) that
encloses the actual probabilistic bisector of two objects. Hence,
we name the probabilistic bisector of two circular objects as the
\emph{Probabilistic Bisector Region} (PBR). For example, the bounding box that encloses the curve in Figure~\ref{fig:circle_lemma3} is the PBR for two objects $o_1$ and $o_2$. In our algorithm, we
first create an ordinary Voronoi diagram by using the centers of
all uncertain objects. Then, from each Voronoi edge $e_{ij}$
(i.e., $bs_{c_ic_j}$) of two objects $o_i$ and $o_j$, we compute
the PBR that encloses $pb_{o_{i}o_{j}}$.

Algorithm~\ref{algo:Prob-Bisector-TwoDim} computes the
probabilistic bisector of two equi-range objects according to
Lemma~\ref{lemma:five} (Line~\ref{algo:Prob-Bisector-TwoDim}2).
Otherwise, it calls the function $FindProbBisector2D$ to determine
$pb_{o_{i}o_{j}}$ for two non-equi-range objects $o_i$ and $o_j$.

\setlength{\algomargin}{2em} \dontprintsemicolon
\begin{algorithm}[htbp]
\begin{small}
\caption{ProbBisector2D($o_i,o_j,e_{ij},O$)}
\label{algo:Prob-Bisector-TwoDim}
    \uIf{$r_{i}=r_{j}$}
    {
             $pb_{o_{i}o_{j}} \leftarrow
             ProbBisector(o_i,o_j,e_{ij})$\;
    }
    \Else
    {
        $pb_{o_{i}o_{j}} \leftarrow FindProbBisector2D(o_{i},o_{j},e_{ij},O)$\;
    }
    \Return $pb_{o_{i}o_{j}}$;
    \end{small}
\end{algorithm}

The function $FindProbBisector2D$ (see
Algorithm~\ref{algo:calc-prob-bisect2D}) takes two non-equi-range
objects $o_i$, $o_j$, the bisector $e_{ij}$ (i.e., a Voronoi edge)
of $c_i$ and $c_j$, and the set of objects $O$ as input, and
returns $pbr$ for $pb_{o_{i}o_{j}}$. The algorithm finds lower
($lval$) and upper ($uval$) bounds representing the required
deviations of the probabilistic bisector from the bisector of
$c_i$ and $c_j$, such that the PBR can be computed by drawing two
lines parallel to $e_{ij}$ at $lval$ and $uval$, respectively.
Algorithm~\ref{algo:calc-prob-bisect2D} first initializes $ipb$
with the intersection point of $e_{ij}$ and $\overline{c_ic_j}$
(Line~\ref{algo:calc-prob-bisect2D}.1). Then, the function
$InitPBRBound$ computes initial lower ($lval$) and upper ($uval$)
bounds of $pbr$ (Line~\ref{algo:calc-prob-bisect2D}.2). This
function first determines a point $x^\prime$ on the line
$\overline{c_ic_j}$ where $p(o_i,x^\prime)\approx p(o_j,x^\prime)$
(We use a similar search technique as described for the 1D
space).\setlength{\algomargin}{2em} \dontprintsemicolon
\begin{algorithm}[htbp]
\begin{small}
\caption{FindProbBisector2D($o_i,o_j,e_{ij},O$)}
\label{algo:calc-prob-bisect2D}
    $ipb\leftarrow Intersect(e_{ij},\overline{c_ic_j})$\;
    $InitPBRBound(lval,hval,ipb)$\;
    $IL\leftarrow FindInfluencedPart(o_i,o_j,e_{ij})$\;
   \For{each $ls \in IL$}
    {
        $UpdatePBRBound(lval,hval,ls)$\;
    }
    $pbr\leftarrow [lval,hval]$\;
    \Return $pbr$;
    \end{small}
\end{algorithm}If $x^\prime$ is to the left of $e_{ij}$, then $lval$ and
$hval$ are set to $x^\prime$ and $x$, respectively. On the other
hand, if $x^\prime$ is to the right of $e_{ij}$, then $lval$ and
$hval$ are set to $x$ and $x^\prime$, respectively. After that,
the function $FindInfluencePart$ finds a list $IL$ that contains
different segments of the bisector $e_{ij}$, where other objects
influence $pb_{o_io_j}$ (see Lemma~\ref{lemma:eight}). The
function returns $IL$ as an empty list when no other object
influences the probabilistic bisector. In that case the current
$lval$ and $hval$ defines $pbr$. In Lemma~\ref{lemma:seven}, we
have seen that the maximum distance of $pb_{o_io_j}$ from the
bisector of $c_i$ and $c_j$ is on the line $\overline{c_ic_j}$.
Thus, the initially computed $pbr$ encloses the curve of
$pb_{o_io_j}$. On the other hand, if $IL$ is not empty, then for
each line segment $ls\in IL$, the function $UpdatePBRBound$ is
called to update $lval$ and $hval$ based on the influence of other
objects. As $lval$ and $hval$ represent the deviation of
$pb_{o_io_j}$ from $e_{ij}$, we need compute the deviations for
each line segment $ls$, and then take the minimum of all $lval$s
and the maximum of all $hval$s to compute the $pbr$. To avoid a
brute-force approach of computing $lval$ and $hval$ for every
point of an $ls\in IL$, we compute $lval$ and $hval$ for two
extreme points and the mid-point of $ls$. Finally, the algorithm
returns $pbr$ for $pb_{o_io_j}$.

\setlength{\algomargin}{2em} \dontprintsemicolon
\begin{algorithm}[htbp]
\begin{small}
\caption{ProbVoronoi2D(O)}
\label{algo:Prob-Voronoi-TwoDim}
    $PVD\leftarrow\emptyset$\;
    $VD\leftarrow VornoiDiagramOfCentroids(O)$\;

    \For{each Voronoi edge $e_{ij} \in VD$, where $o_i,o_j \in O$}
    {
            $PVD \leftarrow PVD \cup ProbBisector2D(o_i,o_j,e_{ij},O)$\;
    }
    \Return $PVD$;
    \end{small}
\end{algorithm}

Algorithm~\ref{algo:Prob-Voronoi-TwoDim} shows the steps of
$ProbVoronoi2D$ that computes $PVD$ for a given set $O$ of 2D
objects. In Line~\ref{algo:Prob-Voronoi-TwoDim}.2, the algorithm
first creates a Voronoi diagram $VD$ for all centers $c_i$ of
objects $o_i\in O$ using~\cite{fortune87.algo}. Then for each
Voronoi edge $e_{ij}$ between two objects $o_i$ and $o_j$, the
algorithm calls the function $ProbBisector2D$ to compute the
probabilistic bisector as PBR between two candidate objects, and
finally it returns the PVD for the given set $O$ of objects.

\begin{figure}[htbp]
    \centering
        \includegraphics[width=1.5in]{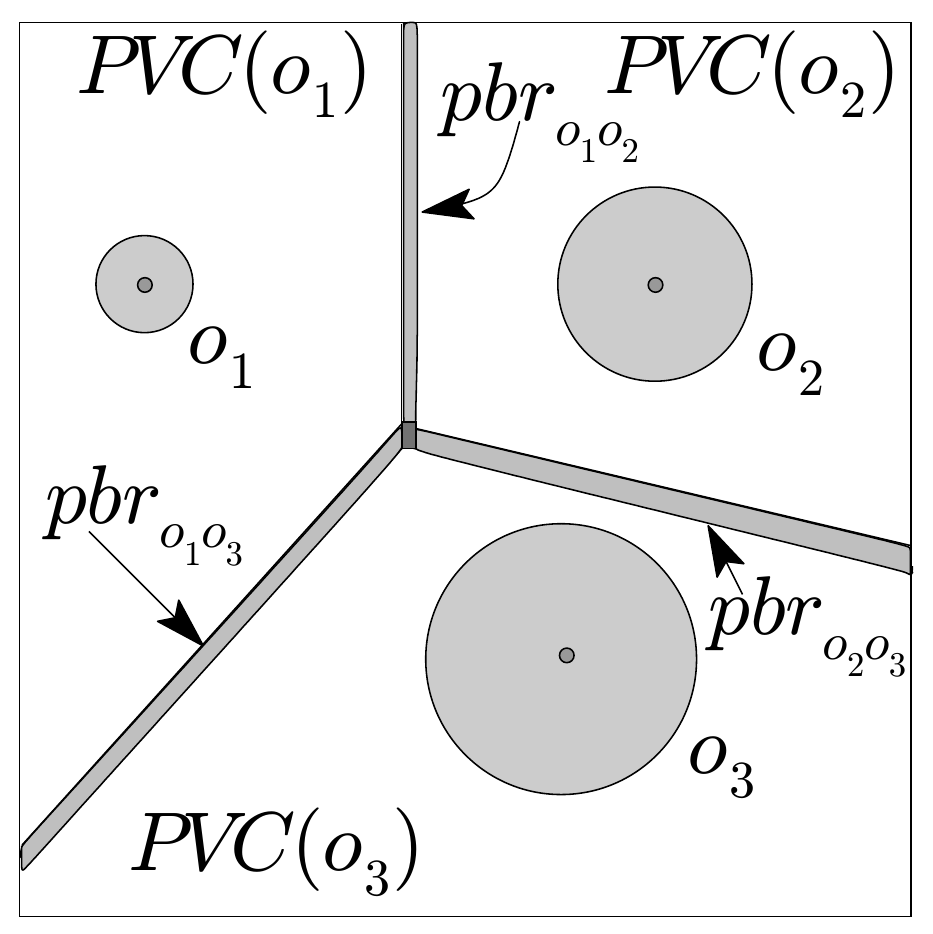}
    \caption{The PVD of three objects $o_1$, $o_2$, and $o_3$}
    \label{fig:circle_pvd1}
\end{figure}

Figure~\ref{fig:circle_pvd1} shows the PVD for objects
$o_1$, $o_2$, and $o_3$. In this figure, $PVC(o_1)$, $PVC(o_2)$, and $PVC(o_3)$ represent the PVCs for objects $o_1$, $o_2$, and $o_3$, respectively. The boundaries between PVCs, i.e., PBRs of objects, $pbr_{o_{1}o_{2}}$, $pbr_{o_{2}o_{3}}$, and $pbr_{o_{1}o_{3}}$, are shown using grey bounded regions. For any point inside a PVC, the corresponding object is guaranteed to be the most probable NN. On the other hand, for any point inside a PBR, any of the two objects that share the PBR can be the most probable NNs. If more than two PBRs intersect each other in a region, any object associated with these
PBRs can be the most probable NN to a query point in that region.
Figure~\ref{fig:circle_pvd1} shows a dark grey region where
$pbr_{o_1o_2}$, $pbr_{o_1o_3}$, and $pbr_{o_2o_3}$ meets.

\eat{When the query point is at $q^\prime$, PVD-PMNN
returns $o_3$ as the most probable NN as $q^\prime$ falls into
PVC($o_3$). When the query point moves to $q^{\prime\prime}$, the
algorithm returns $o_2$ as the answer. If more than two PBRs
intersect each other in a region, any object associated with these
PBRs can be the most probable NN to a query point in that region.
Figure~\ref{fig:circle-pvd}(a) shows a dark grey region where
$pbr_{o_1o_2}$, $pbr_{o_1o_3}$, and $pbr_{o_2o_3}$ meets.}

\emph{Complexity}: The complexity of
Algorithm~\ref{algo:Prob-Voronoi-TwoDim} can be estimated as
follows. The complexity of creating a Voronoi diagram (Line
\ref{algo:Prob-Voronoi-TwoDim}.2) is $O(n\log
n)$~\cite{fortune87.algo}, where $n$ is the number of objects. The
complexity of finding probabilistic bisectors
(Lines~\ref{algo:Prob-Voronoi-TwoDim}.3-\ref{algo:Prob-Voronoi-TwoDim}.4)
is $O(n_{e}C_{pb})$, where $n_{e}$ is the number of Voronoi edges
and $C_{pb}$ is the expected cost of computing the probabilistic
bisector between two circular objects. For real data sets, $n_e$
is expected to be a small integer since an object has only a small
number of surrounding objects. The total complexity of the
algorithm is $O(n\log n)$+$O(n_{e}C_{pb})$. $C_{pb}$ can be
estimated as follows. Let $C_b$ be the cost of computing the
probability of an object being the NN of a query point, $D$ be the
expected distance between the initial probabilistic bisector $ipb$
and the actual probabilistic bisector, and $L$ be the expected
number of points in the bisector that needs to be considered to
find upper and lower bounds of the probabilistic bisector. Then we
have $C_{pb}=O(LC_{b}\log_{2}D)$. This is because, the cost of
finding a probabilistic bisector is $O(1)$ for the cases when our
algorithm can directly compute the probabilistic bisector, and for
other cases our algorithm first finds $ipb$ by $O(1)$ and then
search for the actual probabilistic bisector by using
Algorithm~\ref{algo:calc-prob-bisect2D} by $O(L\log D)$. Note
that, for both 1D and 2D, the run-time behavior of our algorithm
is dominated by those cases for which there is no closed form for
a given probabilistic bisector, i.e., the algorithm needs to
search for the bisector by using the initial probabilistic
bisector.

\subsection{Discussion}
\label{subsec:discus}

\noindent\emph{PVD for Other Distributions:} In this paper, we
assume the uniform distributions for the pdf of uncertain objects
to illustrate the concept of the PVD. However, the pdf that
describes the distribution of an object inside the uncertainty
region can follow arbitrary distributions, e.g., Gaussian. The
concept of PVD can be extended for any arbitrary distribution. For
example, for an object with Gaussian pdf having a circular
uncertain region, the probability of the object of being around
the center of the circular region is higher than that of the
boundary region of the circle. For such distributions, a
straightforward approach to compute the probabilistic bisector
between any pair of objects is as follows. First, we can use the
bisector of the centroids of two candidate objects as the initial
probabilistic bisector. Then we can refine the initial
probabilistic bisector to find the actual probabilistic bisector.
Finding suitable initial probabilistic bisectors for efficient
computation of probabilistic bisectors (e.g., lemmas for different
cases for 1D and 2D data sets similar to the uniform pdf) for an
arbitrary distribution is the scope of future investigation.

\noindent\emph{PVD for Higher Dimensions:} We can compute the PVD for higher
dimensional spaces, similar to 1D and 2D spaces. For example, in a 3D space, an uncertain
object can be represented as a sphere instead of a circle in 2D.
Then, the probabilistic bisector of two equal size spheres will be
a plane bisecting the centers of two spheres. Using this as a
base, similar to 2D, we can compute the PVD for 3D objects. We
omit a detailed discussion on PVDs in spaces of more than 2
dimensions.

\noindent\emph{Higher order PVDs}: In this paper, we focus on the
first order PVD. By using this PVD, we can find the NN for a given
query point. Thus, the PVD can be used for continuously reporting
1-NN for a moving query point. To generalize the concept for $k$NN
queries, we need to develop the $k$-order PVD. The basic idea
would be to find the probabilistic bisectors among size-$k$
subsets of objects. The detailed investigation of higher order
PVDs is a topic of future study.

\noindent\emph{Handling Updates}: To handle updates on the data
objects, like traditional Voronoi diagrams, a straightforward
approach is to recompute the entire PVD. There are
algorithms~\cite{de94:incv,mosta03:voronoiupdate} to incrementally
update a traditional Voronoi diagram. Similar ideas can be applied
to the PVD to derive incremental update algorithms. We will defer
such incremental update algorithms for future work.

It is noted that, to avoid an expensive computation of the PVD for
the whole data set and to cope with updates for the data objects,
we propose an alternative approach based on the concept of local
PVD (see Section~5.5.2). In this approach, only a subset of
objects that fall within a specified range of the current position
of the query is retrieved from the server and then the local PVD
is created for these retrieved objects to answer PMNN queries. If
there is any update inside the specified range, the process needs
to be repeated. Since, this approach works only with the
surrounding objects of a query, updates from objects that are
outside the range do not affect the performance of the system.

\eat{To handle updates on the data objects, like traditional
Voronoi diagrams, a straightforward approach is to recompute the
entire PVD. There are algorithms~\cite{de94:incv} to incrementally
update a traditional Voronoi diagram. Similar ideas can be applied
to the PVD to derive incremental update algorithms. We will defer
such incremental update algorithms for future work.}

%

\section{Processing PMNN Queries}
\label{sec:pmnn}

In this section, based on the concept of PVD we propose two
techniques: a pre-computation approach and an incremental approach for
answering PMNN queries. In the pre-computation approach, we first
create the PVD for the whole data set and then index the PVCs for
answering PMNN queries. We name the
pre-computation based technique for processing PMNN queries as
P-PVD-PMNN. On the other hand, in the incremental
approach, we retrieve a set of surrounding objects with respect
the current query location and then create the local PVD for these
retrieved data set, and finally use this local PVD to answer PMNN
queries. We name this approach I-PVD-PMNN in this paper.
%


\subsection{Pre-computation Approach}
\label{subsec:precom}

In the pre-computation approach, we first create the PVD for all
objects in the database. After computing the PVD, we only need to
determine the current Probabilistic Voronoi Cell (PVC), where the
current query point is located. The query evaluation algorithm can
be summarized as follows.

Initially, the query issuer requests the most probable NN for the
current query position $q$. After receiving the PMNN request for
$q$, the server algorithm finds the current PVC to which the query
point falls into using a function $IdentifyPVC$ and updates $cpvc$
with the current PVC. The algorithm reports the corresponding
object $p$ as the most probable NN and the cell $cpvc$ to the
query issuer. Next time when $q$ is updated at the query issuer,
if $q$ falls inside $cpvc$, no request is made to the server as
the most probable NN has not been changed. Otherwise, the query
issuer again sends the PMNN request to the server to determine the
new PVC and the answer for the updated query position.

\eat{Then, we  organize the PVCs of the PVD in
a hierarchy using the $R$-tree~\cite{guttman:rteee}. Finally, we process PMNN queries
by using Algorithm~\ref{algo:Prob-Rank-Update}.  We name the
pre-computation based technique for processing PMNN queries as
P-PVD-PMNN.}

\begin{figure}[htbp]
    \centering
        \includegraphics[width=2.5in]{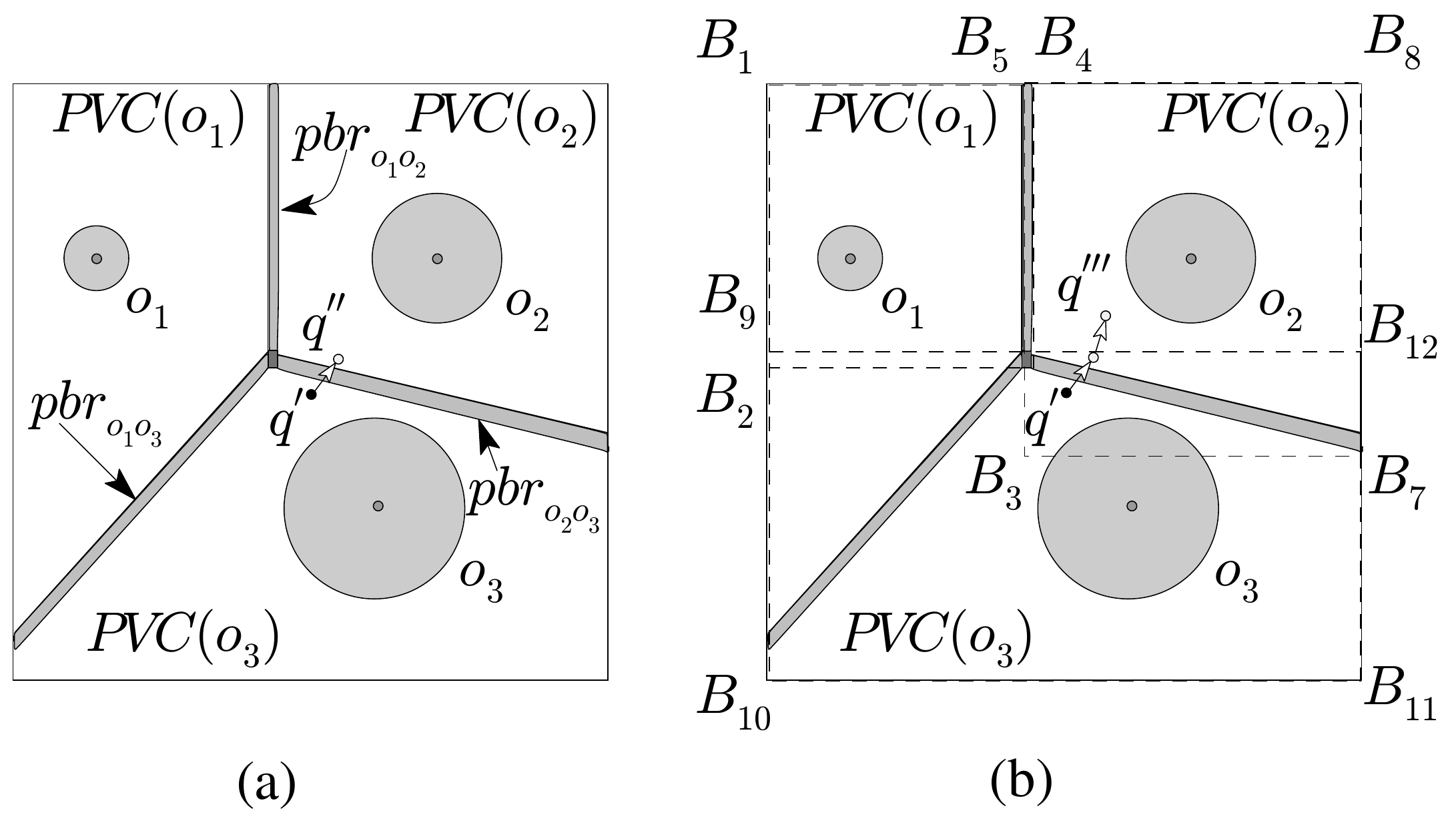}
    \caption{(a) The PVD, and (b) the MBRs of PVCs for objects $o_1$-$o_3$}
    \label{fig:circle-pvd}
\end{figure}

As the PVD in a 1D space contains a set of non-overlapping ranges
representing PVCs for objects, the algorithm returns a single object as
the most probable NN for any query point. On the other hand, in a
2D space, the boundary between two PVCs is a region (i.e., PBR) rather than a
line. When a query point falls inside a PBR, the algorithm can
possibly return both objects that share the PBR as the most probable NNs,
or preferably can decide the most probable NN by computing a top-1-PNN
query. (Since, for a realistic setting a PBR is small region compared to that of PVCs, our approach incurs much less computational overhead than that of the sampling based approach for processing a PMNN query.)

Figure~\ref{fig:circle-pvd}(a) shows that when the query point is at $q^\prime$, PVD-PMNN
returns $o_3$ as the most probable NN as $q^\prime$ falls into
PVC($o_3$). When the query point moves to $q^{\prime\prime}$, the
algorithm returns $o_2$ as the answer.

A naive approach of identifying the desired PVC (i.e.,
$IdentifyPVC$ function) requires an exhaustive search of all the
PVCs in a PVD, which is an expensive operation. Indexing Voronoi
diagrams~\cite{berchtold98:vdind,berchtold00:vdindex,samet:multidimension}
is an well-known approach for efficient nearest neighbor search in
high-dimensional spaces. Thus, for efficient search of the PVCs,
we index the PVCs of the PVD using an
$R^{*}$-tree~\cite{beckmann:rstar}, a variant of the
$R$-tree~\cite{guttman:rteee,samet:multidimension}. In a 1D space,
each PVC is represented as a 1D range and is indexed using a 1D
$R^{*}$-tree. Since there is no overlap among PVCs, a query point
always falls inside a single PVC, where the corresponding object
is the most probable NN to the query point. On the other hand, in
a 2D space, each PVC cell is enclosed using a Minimum Bounding
Rectangle (MBR), and is indexed using a 2D $R^{*}$-tree. Since the
MBRs representing PVCs overlap each other, when a query point
falls inside only a single MBR, the corresponding object is
confirmed to be the most probable NN to the query point. However,
when a query point falls inside the overlapping region of two or
more MBRs, the actual most probable NN can be identified by
checking the PVCs of all candidate MBRs.
Figure~\ref{fig:circle-pvd}(b) shows the MBRs $[B_1,B_2,B_3,B_4]$,
$[B_5,B_6,B_7,B_8]$, and $[B_9,B_{10},B_{11},B_{12}]$ for the PVCs
of objects $o_1$, $o_2$, and $o_3$, respectively. In this example,
the query point $q^\prime$ intersects both $[B_5,B_6,B_7,B_8]$ and
$[B_9,B_{10},B_{11},B_{12}]$, and the actual most probable NN
$o_3$ can be determined by checking the PVCs of $o_3$ and $o_2$;
on the other hand, the query point $q^{\prime\prime\prime}$ only
intersects a single MBR $[B_5,B_6,B_7,B_8]$, so the corresponding
object $o_2$ is the most probable NN to $q^{\prime\prime\prime}$.

Since the above approach only retrieves the current PVC of a
moving query point, it needs to access the PVD using the
$R^{*}$-tree as soon as the query leaves the current PVC. This may
incur more I/O costs than what can be achieved. To further reduce
I/O and improve the processing time, we use a buffer management
technique, where instead of only retrieving the PVC that contains
the given query point, we retrieve all PVCs whose MBRs intersect
with a given range, called a buffer window, for a given query
point. These PVCs are buffered and are used to answer subsequent
query points of a moving query. This process is repeated for a
PMNN query when the buffered cells cannot answer the query.

\eat{It is noted that, to avoid an expensive computation of the
PVD for the whole data set and to cope with updates for the data
objects, for a PMNN query the server only needs to retrieve a
sub-set of objects that falls within a specified range of the
current position of the query, and then creates a PVD for those
objects. This process needs to be repeated as soon as the most
probable NN for the moving query cannot be answered by the already
retrieved data. We will further investigate this approach in our
future work.}

Since the creation of the entire PVD is computationally expensive,
the pre-computation based approach is justified when the PVD can
be re-used which is the case for static data, or when the query
spans over the whole data space. To avoid expensive
pre-computation, next, we propose an incremental approach which is
preferable when the query is confined to a small region of the
data space or when there are frequent updates in the database.

\subsection{Incremental Approach}
 \label{subsec:psr}

\eat{In this section, we describe the evaluation of a PMNN query
in a client-server model. The moving user issues a query to the
server through the wireless network. For a PMNN, a user is needed
to be updated continuously about her most probable nearest
neighbors as the user moves in along her trajectory. Due to high
latency and low bandwidth of wireless networks, communication
costs dominate processing costs while evaluating a PMNN query in a
client-server model. To reduce the communication overheads, we
adopt the concept of known region that retrieves extra data from
the server~\cite{nutanong08:vstar}. }

\eat{In the incremental approach, we first retrieve a set of
surrounding objects by expanding the search space from the current
query location. After retrieving all the objects inside the search
space, also called known region, we develop a local PVD for those
objects. We use the local PVD and  the knowledge of the known
region to efficiently answer the PMNN query. Since, the total data
space is unknown in this case, we exploit the knowledge of the
known region to compute a probabilistic bound of a data object of
being the NN from the query point.}

In this section, we describe our incremental evaluation technique
for processing a PMNN query based on the concept of known region
and the local PVD. Next, we briefly discuss the concept of known
region, and then present the detailed algorithm of our incremental
approach.

\textbf{Known Region: }Intuitively, the \emph{known region} is an
explored data space where the position of all objects are known.
We define the known region as a circular region that bounds the
top-$k$ probable NNs with respect to the current query point (i.e., the center point of the
region). For a given point $q_s$, the server expands the search space
to incrementally access objects in the order of their $mindist$
from $q_s$ until it finds top $k$ probable nearest neighbors with
respect to $q_s$ (we use existing algorithm~\cite{beskales08.vldb} to find
top-$k$NN). Then the known region is
determined by a circular region centered at $q_s$ that encloses all
these $k$ objects. Figure~\ref{fig:circle_safe} shows the
known circular region $K(q_{s},r)$ using a dashed circle, where
$k=3$. Then the radius $r$ of this known area is
determined by
$max(maxdist(q_{s},o_{1}),maxdist(q_{s},o_{2}),maxdist(q_{s},o_{3}))$. In this example, top-3 most probable nearest neighbors are $o_{1}$,$o_{2}$, and $o_{3}$.

\eat{In addition to these $k$ objects, the server
also retrieves objects that intersect with this known region.

Let $q_s$ be the current query position for which the data
retrieval request is made. Figure~\ref{fig:circle_safe} shows the
known circular region $K(q_{s},r)$ using a dashed circle, where
$k=3$. Then the radius $r$ of this known area is determined by the circular explored space. In this example, the server retrieves top-3 probable nearest neighbors $o_{1}$,$o_{2}$, and $o_{3}$.}

\begin{figure}[htbp]
    \centering
        \includegraphics[width=1.5in]{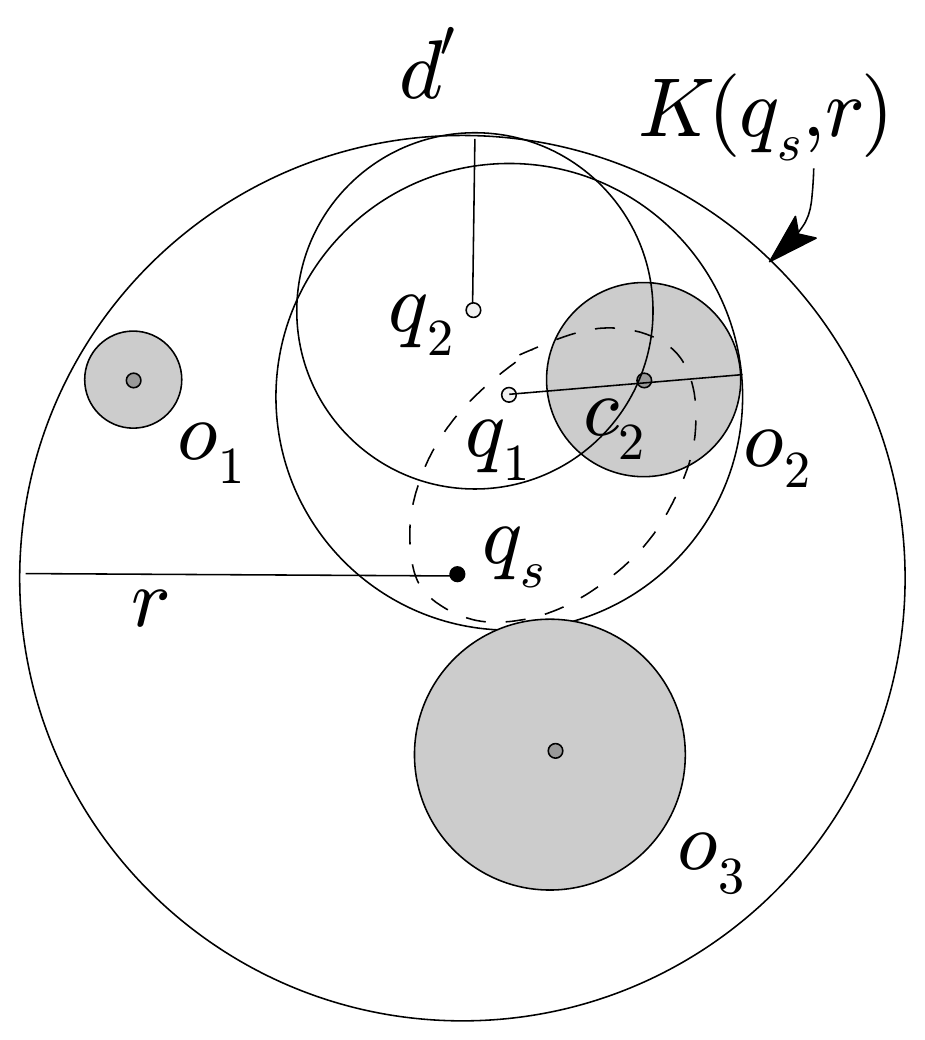}
    \caption{Known region and objects $o_1$, $o_2$, and $o_3$}
    \label{fig:circle_safe}
\end{figure}

\eat{Then the radius $r$ of this known area is determined by
$max(maxdist(q_{s},o_{1}),maxdist(q_{s},o_{2}),...,maxdist(q_{s},o_{5}))$.
In this example, the server retrieves top-3 probable nearest
neighbors $o_{1},o_{2}, and o_{3}$. In addition, the
server also retrieves all the objects that fall inside or
intersect with this known region $K(q_{s},r)$. In our example,
$o_{6}$ and $o_7$ are also retrieved as these objects also
intersect with $K(q_{s},r)$.}

\eat{Our algorithm first requests the server for objects and the
known region by providing the starting point of the trajectory as
a query point. This process needs to be repeated as soon as the
user's request for a PMNN cannot be satisfied by the already
retrieved data at the client. Next we explain how to evaluate a
PMNN at the client based on the retrieved objects within the known
region. }

The key idea of incremental approach is to consider only a sub-set
of objects surrounding the moving query point while evaluating a
PMNN query. For example, in a client-server paradigm, the client
first requests the server for objects and the known region by
providing the starting point of the moving query path as a query
point. Then the client locally creates a PVD based on the
retrieved objects, and uses the local PVD for answering the PMNN
query. This process needs to be repeated as soon as the user's
request for the PMNN query cannot be satisfied by the already
retrieved data at the client. Though this incremental approach
applies to both centralized and client-server paradigms, without
loss of generality, next we explain how to incrementally evaluate
a PMNN query in the client-server paradigm.

\textbf{Algorithm: }After retrieving a set of objects from the
server, the client locally computes a PVD for those objects. Then,
the client can use the local PVD to determine the most probable
nearest neighbor among the objects inside the known region.
However, since the client does not have any knowledge about
objects that are outside of the known region, the most probable
nearest neighbor based on the local PVD formed for objects inside
the known region, might not guarantee the most probable nearest
neighbor with respect to \emph{all} objects in the database. This
is because, a PVC of the local PVD determines the region where the
corresponding object is the most probable NN with respect to
objects inside the known region. However, certain locations of the
PVC can have other non-retrieved objects, which are outside the
known region, as the most probable NN. Thus, we need to determine
a region in the PVC for which the query result is guaranteed. That
is, all locations inside this guaranteed region will have the
corresponding object as the most probable NN. To define the
guaranteed region for an object, we have two conditions.

\eat{
Thus, in our approach, we classify PVC cells into
two categories: inner cell, and outer cell. A cell that is totally
contained within the known region is an inner cell. On the other
hand, a cell that has some portions outside the boundary of the
known region is an outer cell. If the current query position is
anywhere inside an inner PVC cell (not shown in the figure), then
the corresponding object is ensured of being the most probable
nearest neighbor among all the objects in the database. On the
other hand, if the query location is inside an outer PVC cell,
then it is only ensured that the corresponding object is the most
probable nearest among all the objects inside the known region.}

Let $q$ be a query point and $o_i$ be an object inside the known
region. Then, if the query point $q$ is inside a PVC cell of
object $o_i$ and the condition in the following equation (see
Equation~\ref{eq:100safe}) holds, then it is ensured that $o_i$ is
the most probable NN among all objects in the database.

\begin{equation}
\label{eq:100safe}
    maxdist(q,c_{i}) \leq r - dist(q,q_s).
\end{equation}

The condition in Equation~\ref{eq:100safe} ensures that no object
outside the known region can be the nearest neighbor for the given
query point. This is because, when a circle centered at $q$
completely contains an object, all objects outside this circle
will have zero probability of being the NN to $q$.

To formally define a region based on the above inequality, we re-arrange Equation~\ref{eq:100safe} as follows.

\begin{align*}
\label{eq:100safe1}
    dist(q,c_{i}) + r_i \leq r - dist(q,q_s)\\
    => dist(q,c_{i}) + dist(q,q_s) \leq r - r_i
\end{align*}

We can see that the boundary of the above formula forms an elliptic region in a 2D Euclidean space, where the two foci of the ellipse are $q_s$ and $c_i$. i.e.,  the sum of the distances from $q_s$ and $c_i$ to any point on the ellipse is $r-r_i$. Figure~\ref{fig:circle_safe} shows an example, where the elliptical region for object $o_2$ is shown using dashed border. Figure~\ref{fig:circle_safe} shows that when the query point is at
$q_{1}$, the object $o_2$ is confirmed to be the most probable nearest neighbor, as
$dist(q_{1},c_{2})+r_{2}<r-dist(q_{1},q_{s})$.

From the above discussion, we see that for an object $o_i$, the intersection region of the PVC and the elliptical region for $o_i$ forms a region where all points in this region has $o_i$ as the most probable NN. Figure~\ref{fig:circle_inc_pvd} shows the PVD and elliptical regions for objects $o_1$, $o_2$, and $o_3$, and a moving query path from $q^{\prime}$ to $q^{\prime\prime\prime}$. In this figure, since $q^{\prime}$ is inside the intersection region of $PVC(o_3)$ and elliptical region of $o_3$, thus $o_3$ is guaranteed to be the most probable NN for $q^{\prime}$ with respect to all objects in the database. Similarly, $o_2$ is the most probable NN when the query point moves to $q^{\prime\prime\prime}$.

If a query point is outside the intersection region of a PVC and the corresponding elliptical region, but falls inside the PVC, still there is a possibility that the object associated with this PVC is the most probable NN for the query point. For example, in Figure~\ref{fig:circle_safe}, when the query point is at $q_{2}$, then
the condition in Equation~\ref{eq:100safe} fails. For
this case, our algorithm relies on the lower bound of
the probability for the object $o_2$ of being the nearest
neighbor from the query point $q_{2}$. We define the second condition based on the lower bound probability of an object.

\begin{figure}[htbp]
    \centering
        \includegraphics[width=1.5in]{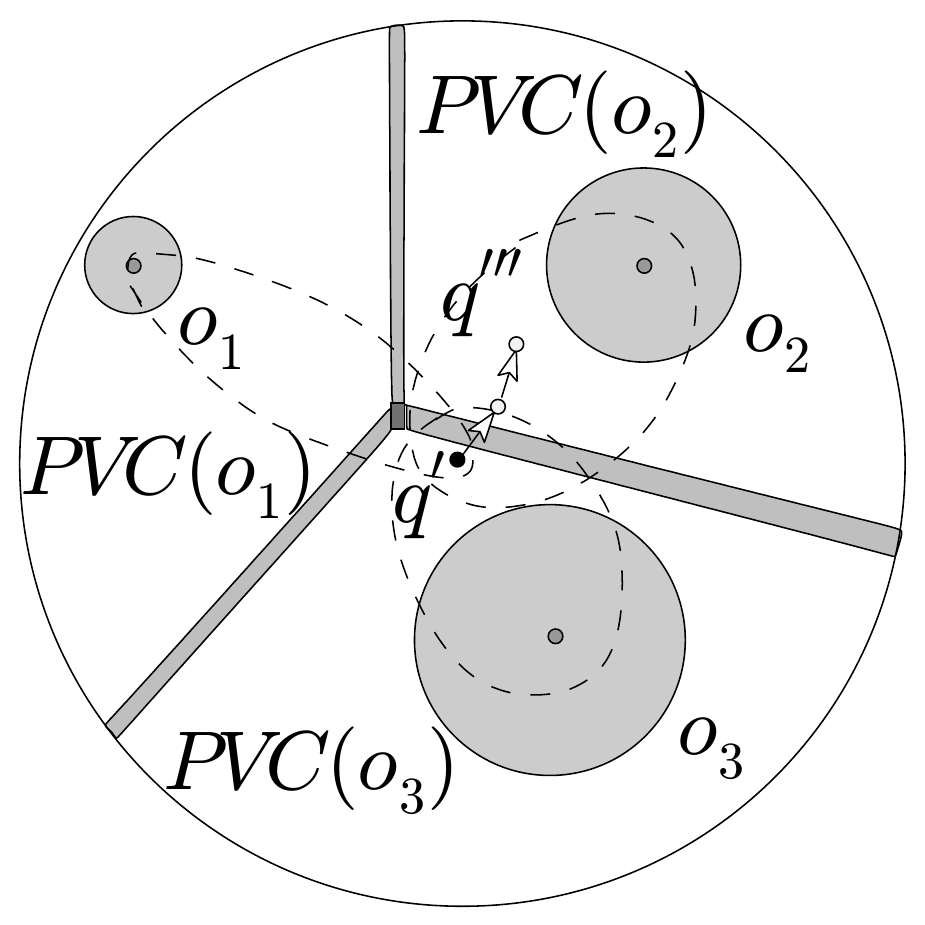}
    \caption{The incremental approach}
    \label{fig:circle_inc_pvd}
\end{figure}

We can compute the lower bound of probability,
$lp(o_i,q)$  for object $o_i$ of being the NN from
the query point $q$, by using pessimistic
assumption. For computing the lower bound
probability, we assume that a non-retrieved \emph{virtual} point
object is located at the minimum distance from the query point and
is just outside the boundary surface of the known region. For
example, in Figure~\ref{fig:circle_safe}, when the client is at
$q_{2}$, we assume that a point object exists at
$d^{\prime}$. Then, we estimate the probability of the
object $o_2$ being the NN to $q_{2}$, which gives us
the lower bound of the probability.

\eat{On the other hand, to
calculate the upper bound of the probability of an object, we use
an optimistic assumption i.e., we assume that there is no
non-retrieved object inside the circular region centered at the
query point $q$ with the radius $dist(q,maxdist(q,o_{i}))$. For
example, in Figure~\ref{fig:circle_safe}, for the current query
point $q^{\prime\prime}$, and the object $o_4$, we assume that
there is no object outside the known region that intersects or
fall inside the circle centered at $q^{\prime\prime}$ with a
radius $dist(q^{\prime\prime},c_4)+r_4$.}

By using the lower bound, the client can determine whether there
is a possibility of other non-retrieved objects being the most
probable NN from the current query location. If the probability of the virtual point object $o_v$,
$p(o_v,q)$ is less than the lower bound probability of the
candidate object $o_i$, $lp(o_i,q)$, then it is ensured that there
is no other object in the database that has higher probability for
being the NN of $q$ than that of $o_{i}$; otherwise there may
exist other object in the database with higher probabilities for
being the NN of $q$ than $o_i$. Thus, our second condition for the guaranteed region can be defined as follows:

\begin{equation}
\label{eq:100safe2}
   lp(o_i,q) \geq p(o_v,q).
\end{equation}

\eat{On the other hand, a user might be satisfied with the answer even
if an object is not guaranteed to be the most probable NN. In that
case, the user may want to know the maximum probability (upper
bound) of the object being the NN based upon the current knowledge
from the retrieved object within the known region. If the user's
required level of probability threshold, say $\beta$, is  smaller than the
estimated upper bound of object $o_i$, $up(o_i,q)$, the client
can avoid contacting the server as there is a possibility that $o_i$ is the most probable NN from $q$. This process
reduces the communication overheads for a PMNN query.}

Based on the above observations, we define a \emph{probabilistic safe
region} for an object $o_i$, as a region where $o_i$ is guaranteed
to be the most probable NN for every point
inside that region. Thus,
Equation~\ref{eq:100safe} and Equation~\ref{eq:100safe2} form the guaranteed region for an object $o_i$.

\eat{To estimate the upper bound of the probability of an object,
we use an optimistic assumption that is, we assume that there is
no non-retrieved object inside the circular region centered at the
query point $q$ with the radius $dist(q,maxdist(o_{i},q))$. For
example, in Figure~\ref{fig:circle_safe}, for the current query
point $q^{\prime\prime}$. If the user's required level of
threshold falls below the lower bound, the client does not need to
contact the server as the result is guaranteed to satisfy the
user's specified threshold. Even if the user's required level of
threshold is greater than the lower bound, but smaller than the
calculated upper bound of object $o_i$, the client may not want to
contact the server as there is a possibility that the object is
the most probable nearest neighbor. This process reduces the
communication overheads for a PMNN query.}

\eat{

\setlength{\algomargin}{2em} \dontprintsemicolon
\begin{algorithm}[htbp]
\label{algo:Prob-MNN} \caption{I-PVD-PMNN($lPVD,\beta$)}
  \For{each query point $q$ of a moving query}
{
    \uIf{$lPVD$ is NULL}
        {
             $T\leftarrow RetrieveObject(q,k)$\;
             $lPVD\leftarrow CreatePVD(T)$\;
         }

       $PVC(o_{i})\leftarrow IdentifyPVC(lPVD,q)$\;

       \uIf{$PVC(o_i)$ is an inner cell}
        {
             $Report(o_i,1)$\;
        }
        \Else
        {
        $d\leftarrow dist(q,(c_{i}))+r_{i}$\;
        $o_{v}\leftarrow VirtualObject()$\;
        \uIf{$d \leq r-dist(q_{s},q)$ or $lp(o_i,q)>p(o_v,q)$ }
            {
                $Report(o_i,1)$\;
           }
            \ElseIf{$up(o_i,q) \geq \beta$}
            {
                 $Report(o_i,\beta)$\;
          }
          \Else
            {
                  $T\leftarrow RetrieveObject(q,k)$\;
                 $lPVD\leftarrow CreatePVD(T)$\;
            }

        }
}
\end{algorithm}

}

We use the above two conditions and the local PVD to incrementally
evaluate a PMNN query. The algorithm first retrieves a set of
surrounding objects for the given query point $q$, and creates a
PVD, named $lPVD$, for those objects. Then, the algorithm finds
the PVC  and the corresponding object $o_i$ as the most probable
nearest neighbor of the query point $q$ with respect to the
objects within the known region. If $q$ is inside a PVC
cell, the object $o_i$ is returned as the most probable nearest
neighbor if $q$ satisfies Equation~\ref{eq:100safe} or Equation~\ref{eq:100safe2}.
If none of the above condition holds, the
algorithm requests a new set of objects with respect to the
current query point $q$, and repeats the above process on newly
retrieved set of objects.

\noindent\emph{Discussion:}
Our pre-computation based approach computes the PVD for all objects in the database and then indexes the PVD using an $R$-tree to efficiently  process PMNN queries. Since, the pre-computation of the PVD for the entire data set is computationally expensive, the pre-computation based approach is justified when the PVD can be re-used for large number of queries, as the cost is amortized among queries(e.g.,~\cite{okabe00:voronoi,zhang:sq}).
Thus, the pre-computation based approach is appropriate for the following settings: the data set largely remains static, there are large number of queries in the system, and the query spans over the whole data space.

On the other hand, in our incremental approach, we retrieve a set of surrounding objects for the current query location, and then incrementally process PMNN queries based only on these retrieved set of objects. Only data close to the given query are accessed for query evaluation. As the evaluation of this approach depends on the location of the query, this approach is also called the query dependent approach, as opposed to the data dependent approach (e.g., pre-computation based approach) where the location of queries are not taken into account. This incremental approach is preferred for the cases when there are updates in the database or the query is confined to a small region in the data space. A comparative discussion between the pre-computation approach and the incremental approach for point data sets can be found in~\cite{nutanong08:vstar,zhang:sq}.

\section{Experimental Study}
\label{sec:exp} We compare our PVD based approaches for the PMNN
query (P-PVD-PMNN and I-PVD-PMNN) with a sampling based approach
(Naive-PMNN), which processes a PMNN query as a sequence of static
PNN queries at sampled locations. Though in Naive-PMNN we use the
most recent technique of static top-1-PNN
queries~\cite{beskales08.vldb}, any existing technique for static
PNN queries~\cite{Cheng03.sigmod,Cheng04.tkde} can be used. Note
that, by using the existing method in~\cite{Cheng.ICDE10}, for
each uncertain object $o_i$, we could only define a region (or
UV-cell) where $o_i$ has a non-zero probability of being the NN
for any point in this region. Thus, this method cannot be used to
determine whether an object has the highest probability of being
the NN to a query point. Therefore, we compare our approach with a
sampling based approach.

In our experiments, we measure the query processing time, the I/O
costs, and the communication costs as the number of communications
between a client and a server. Note that while the processing and
I/O costs are the performance measurement metric for both
centralized and client-server paradigms, the communication cost
only applies to the client-server paradigm. In this paper, we run
the experiments in the centralized paradigm, where the query
issuer and the processor reside in the same machine. Thus, we
measure the communication cost as the number of times the query
issuer communicates with the query processor while executing a
PMNN query.

\eat{
In this article, we focus on presenting experimental results for
2D data, with representative results for 1D data. Because 2D
data can be seen as superset of 1D data and the experimental
results of 1D data show similar results as of 2D data.}

\subsection{Experimental Setup}

We present experimental results for both 1D and 2D data sets.

For 2D data, we have used both synthetic and real data sets. We
normalize the data space into a span of $10,000\times10,000$
square units. We generated synthetic data sets with uniform (U)
and Zipfian (Z) distributions, representing a wide range of real
scenarios. For both uniform and Zipfian, we vary the data set size
from 5K to 25K. To introduce uncertainty in data objects, we
randomly choose the uncertainty range of an object between
$5\times5$ and $30\times30$ square units, and approximated the
selected range using a circle. For real data distributions, we use
the data sets from Los Angeles (L) with 12K geographical objects
described by ranges of longitudes and latitudes~\cite{tiger}. Note
that, in both uniform and Zipfian distributions, objects can
overlap each other. More importantly, in Zipfian distribution,
most of the objects are concentrated within a small region in the
space, thereby objects largely overlap with each other. Also, our
real datasets include objects with large and overlapping regions.
Thus, we do not present any sperate experimental results for
overlapping objects.

For 1D data, we have only used syntectic data sets. In this case, we generated
synthetic data sets with uniform (U) and Zipfian (Z) distributions
in the data space of 10,000 units. The uncertainty range of
an object is chosen as any random value between 5 and 30 units. We also vary the data set size from 100
to 500. These values are comparable to 2D data
set sizes and scenarios.

For query paths, we have generated two different types of query trajectories,
random (R) and directional (D), representing the query movement
paths covering a large number of real scenarios. The default length of a trajectory
is a fixed length of 1000 steps, and consecutive points are
connected with a straight line of a length of 5 units. For each
type of query path, we run the experiments for 20 different
trajectories starting at random positions in the data space, and
determine the average results. We present the processing time, I/O cost, and the communication cost for executing a complete trajectory (i.e., a PMNN query). In our experiments, since the
trajectory of a moving query path is unknown, we use the generated
trajectories as input, but do not provide these to the server in
advance.

We run the experiments on a desktop computer with Intel(R) Core(TM) 2 CPU 6600 at 2.40 GHz and 2 GB RAM.

\subsection{Performance Evaluation}
In this section, we evaluate our proposed techniques: pre-computation
approach (P-PVD-PMNN) and incremental approach (I-PVD-PMNN) in
Sections~\ref{subsubsec:ppvd} and~\ref{subsubsec:ipvd},
respectively.

It is well known that pre-computation based approach is
suitable for settings when the PVD can be re-used (e.g., static data sets) for large number of queries or the query span the whole data space, and on the other hand the
incremental or local approach is suitable for settings when the query is confined to
a small space and there are frequent updates in the database (e.g.,~\cite{nutanong08:vstar,okabe00:voronoi,zhang:sq}). Since two approaches aim at two different environmental settings
and also the parameters of these two techniques differ from each
other, we independently evaluate them and compare them with the
sampling based approach.

\subsubsection{Pre-computation Approach}
\label{subsubsec:ppvd} In the pre-computation approach, we first
create the PVD for the entire data set and  use an $R^{*}$-tree to
index the MBRs of PVCs. On the other hand, for Naive-PMNN we use
an $R^{*}$-tree to index uncertain objects. In both cases, we use
the page size of 1KB and the node capacity of 50 entries for the
$R^{*}$-tree.

\noindent\emph{\textbf{Experiments with 2D Data Sets: }}

We vary the following parameters in our experiments: the
length of a query trajectory, the data set size, and the size of the
buffer window that determines the number of PVCs retrieved
each time with respect to a query point.

\emph{Effect of the Length of a Query Trajectory}: In this set of
experiments, we vary the length of moving queries from 1000 to
5000 units of the data space. We run the experiments for data sets U(10K), Z(10K),
and L(12K). Since the real data set size is 12K, the data set sizes
for U and Z are both set to 10K. Figures~\ref{fig:vq} show the
processing time, I/O costs, and the number of communications
required for a PMNN query of different query trajectory length. Figures~\ref{fig:vq}(a)-(c) present the results for U data sets, where we can see that, for both P-PVD-PMNN and Naive-PMNN, the
processing time, I/O costs, and the number of communications increase with the increase of the length of the query trajectory, which is expected. Figures also show that our P-PVD-PMNN approach outperforms the Naive-PMNN by at least
an order of magnitude in all metrics. This is because, P-PVD-PMNN only needs to identify the current PVC
rather than computing top-1-PNN for every sampled location of the
moving query.

The results for both Z (see Figures~\ref{fig:vq}(d)-(f)) and L
(see Figures~\ref{fig:vq}(g)-(i)) data sets show similar trends
with U data set as described above.

\begin{figure*}[htbp]
 \begin{center}
    \begin{tabular}{cccc}
        \hspace{-5mm}
      \resizebox{40mm}{!}{\includegraphics{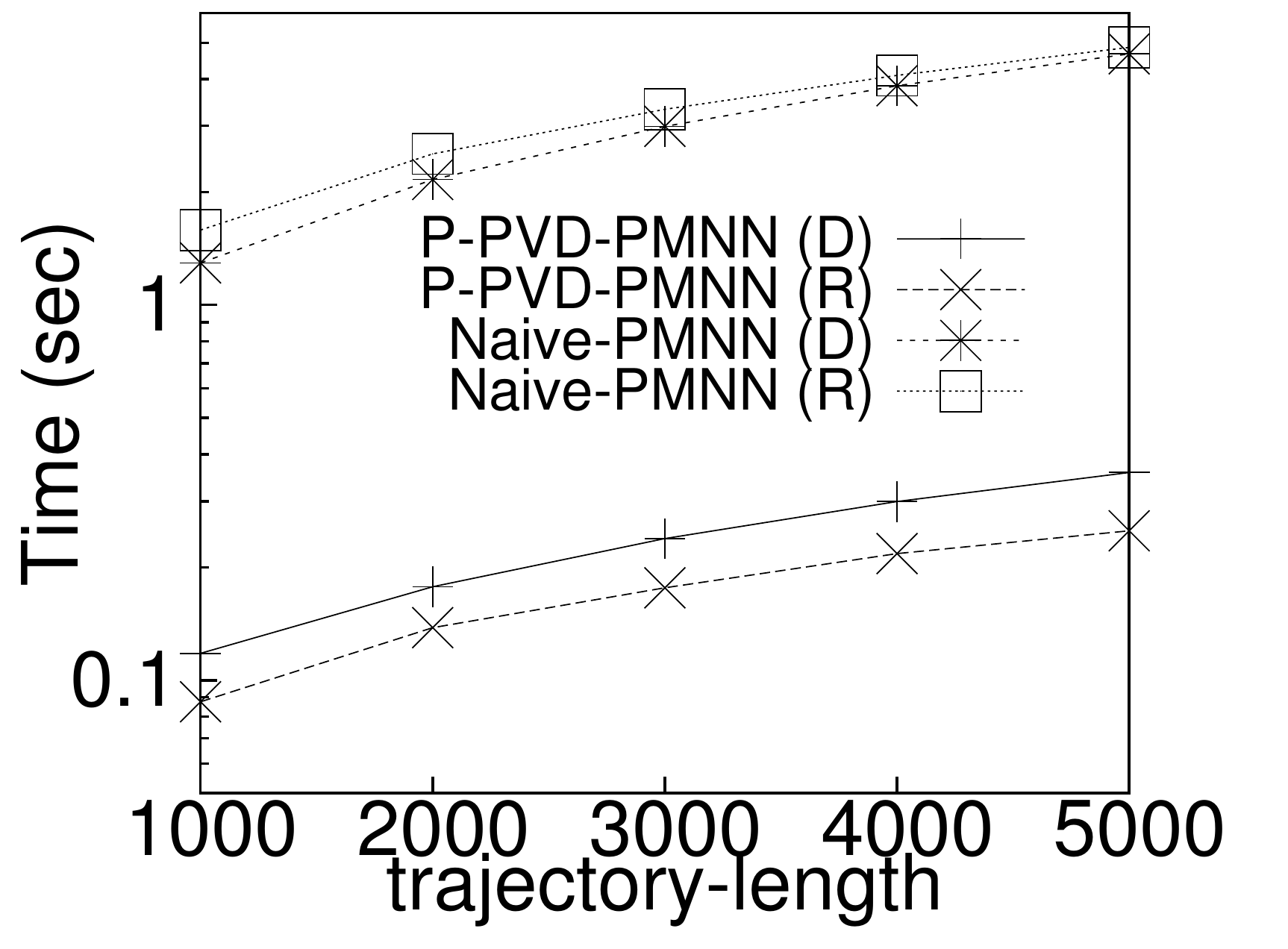}} &
        \hspace{-4mm}

        \resizebox{40mm}{!}{\includegraphics{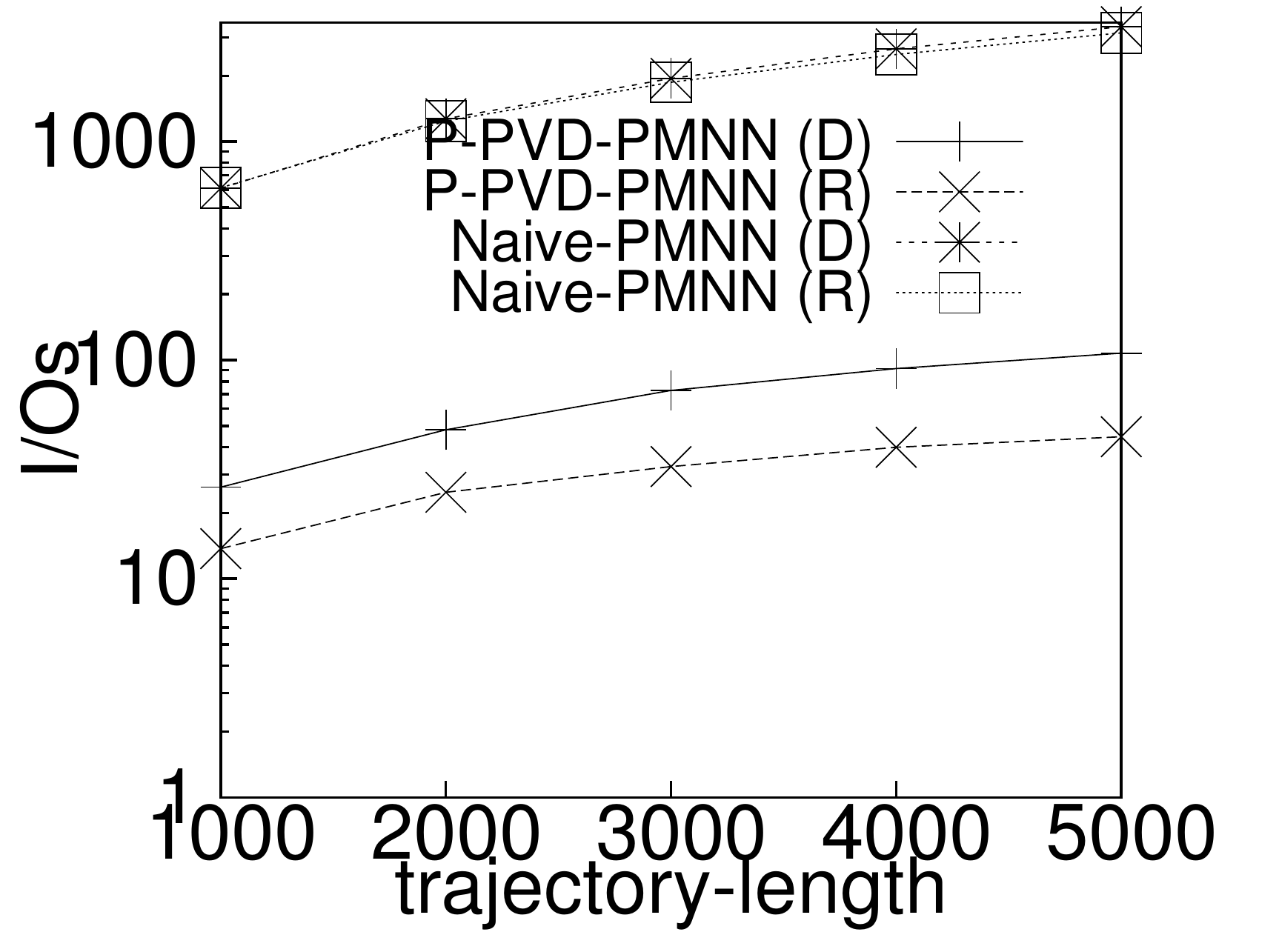}} &
         \hspace{-4mm}

        \resizebox{40mm}{!}{\includegraphics{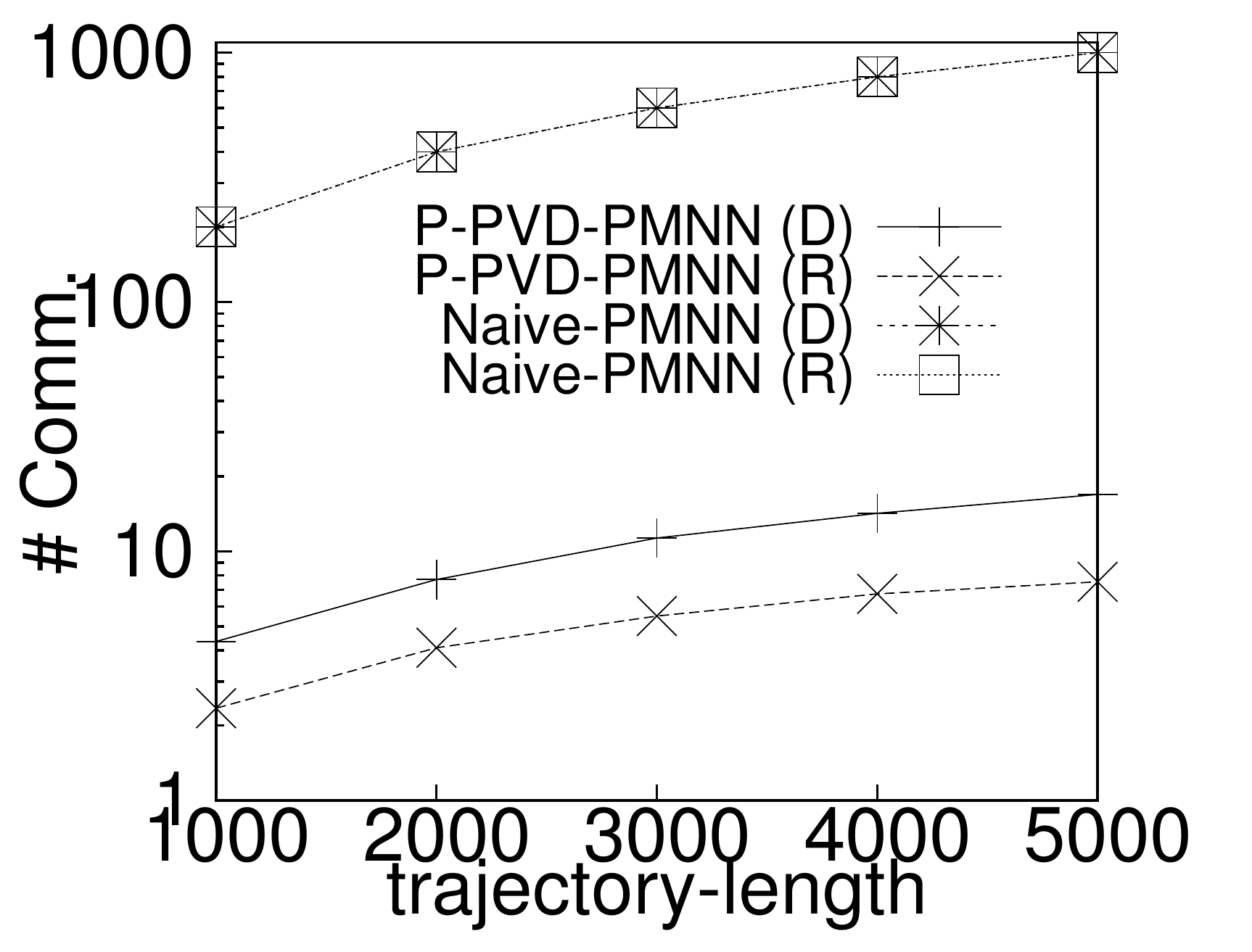}}\\
       \scriptsize{(a)\hspace{0mm}} & \scriptsize{(b)} & \scriptsize{(c)}\\
      \end{tabular}
    \begin{tabular}{cccc}
        \hspace{-5mm}
      \resizebox{40mm}{!}{\includegraphics{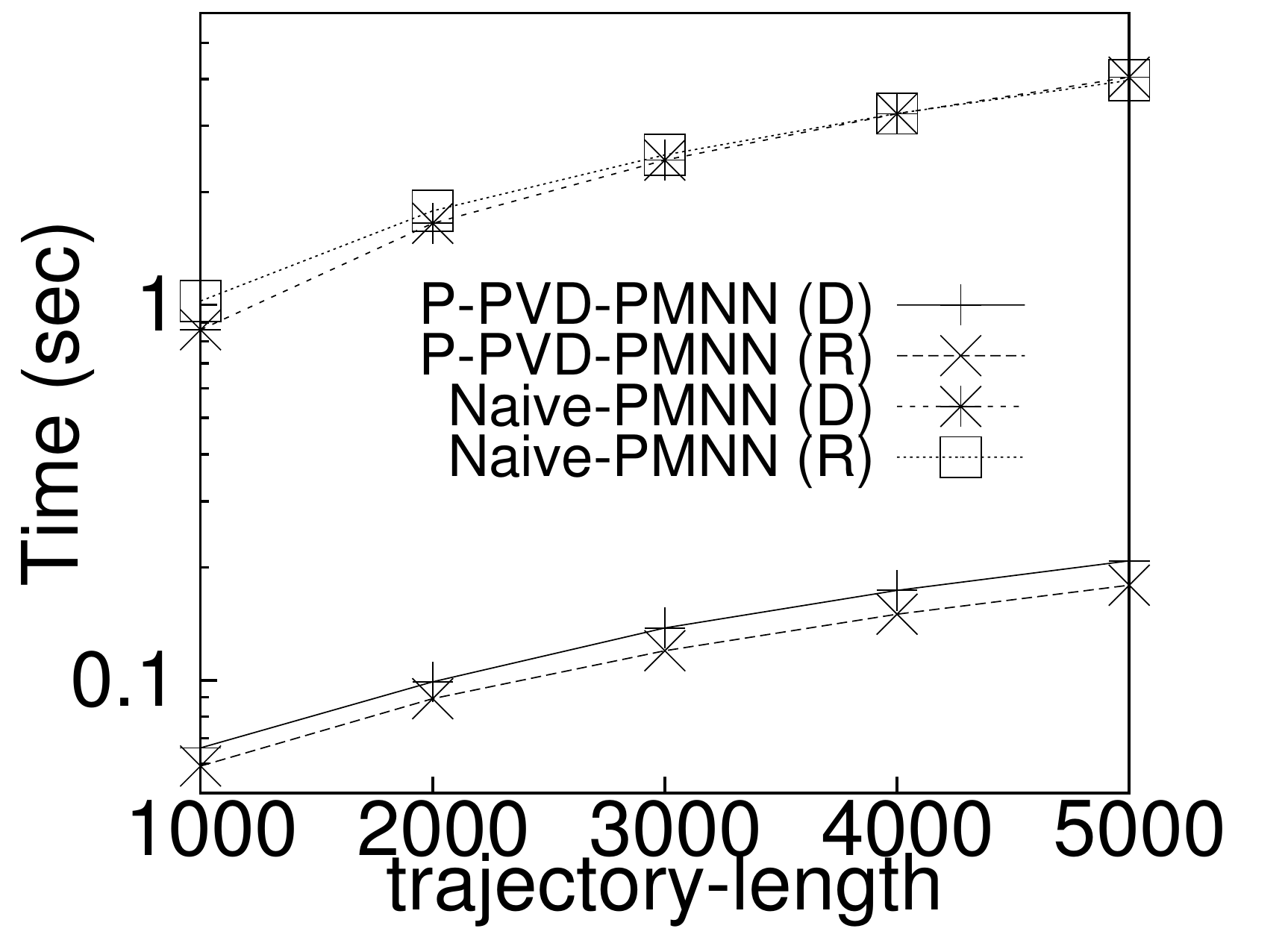}} &
        \hspace{-4mm}

        \resizebox{40mm}{!}{\includegraphics{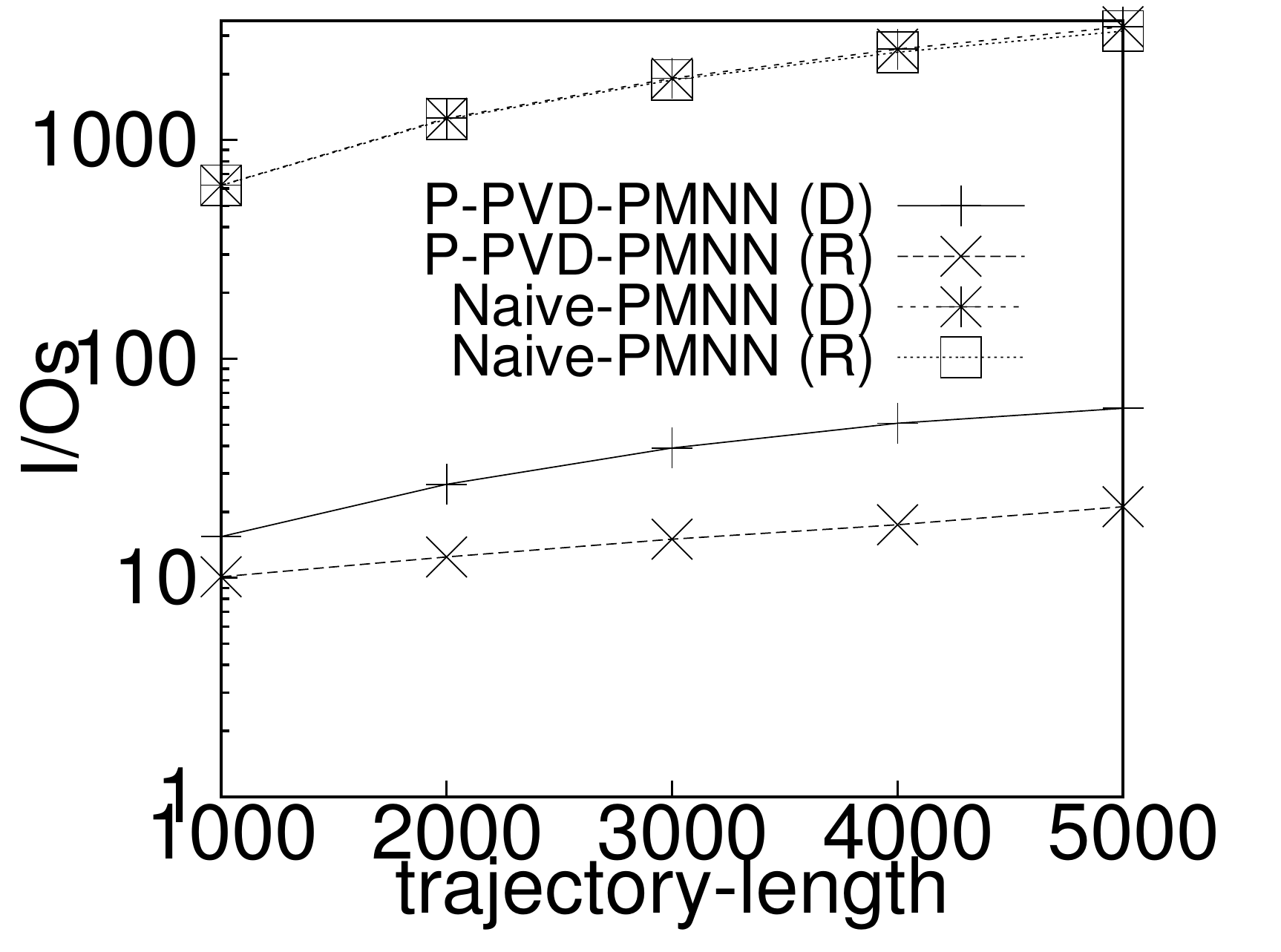}} &

        \hspace{-4mm}

        \resizebox{40mm}{!}{\includegraphics{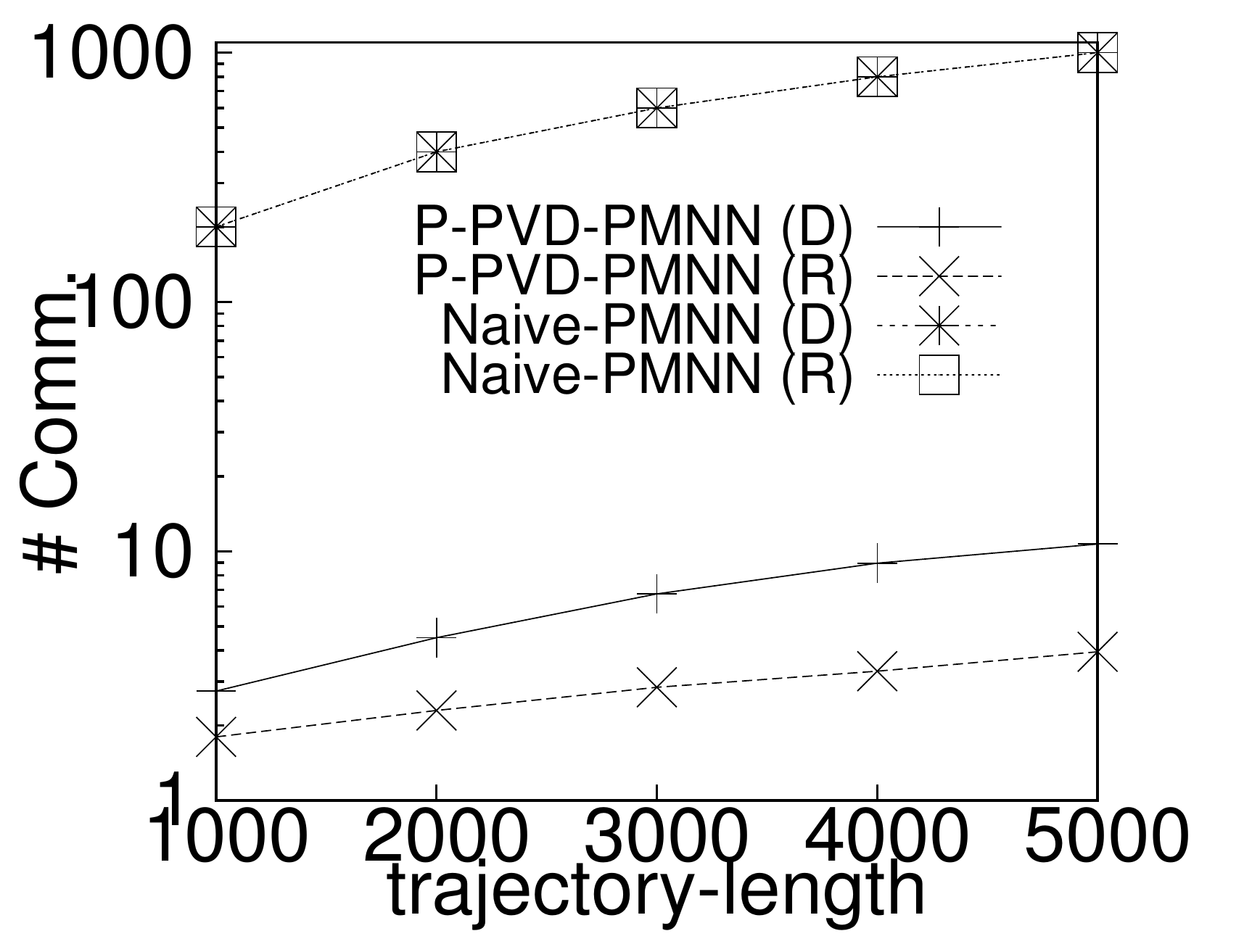}}\\
       \scriptsize{(d)\hspace{0mm}} & \scriptsize{(e)} & \scriptsize{(f)}\\
      \end{tabular}
      \begin{tabular}{cccc}
        \hspace{-5mm}
      \resizebox{40mm}{!}{\includegraphics{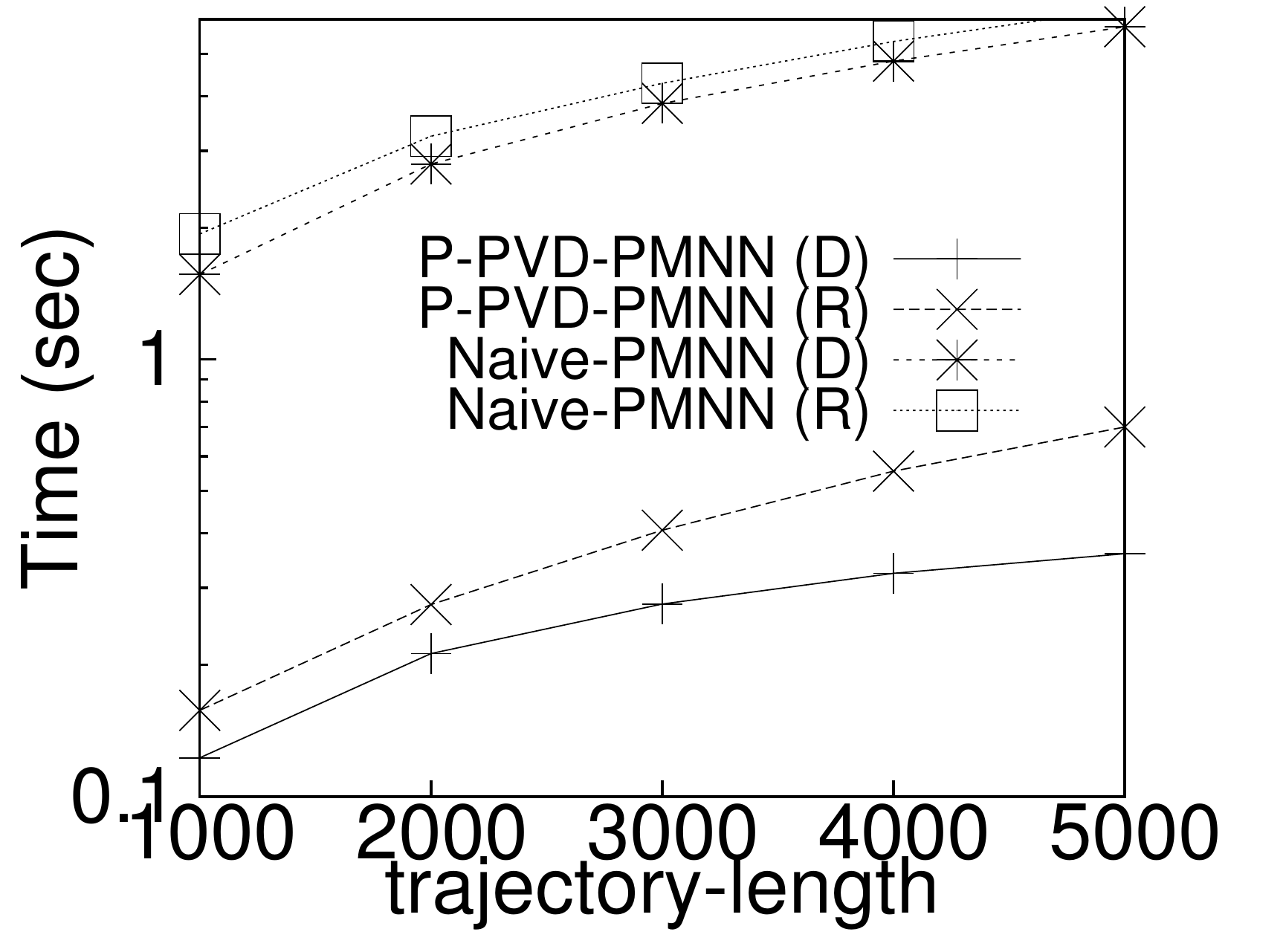}} &
        \hspace{-4mm}

        \resizebox{40mm}{!}{\includegraphics{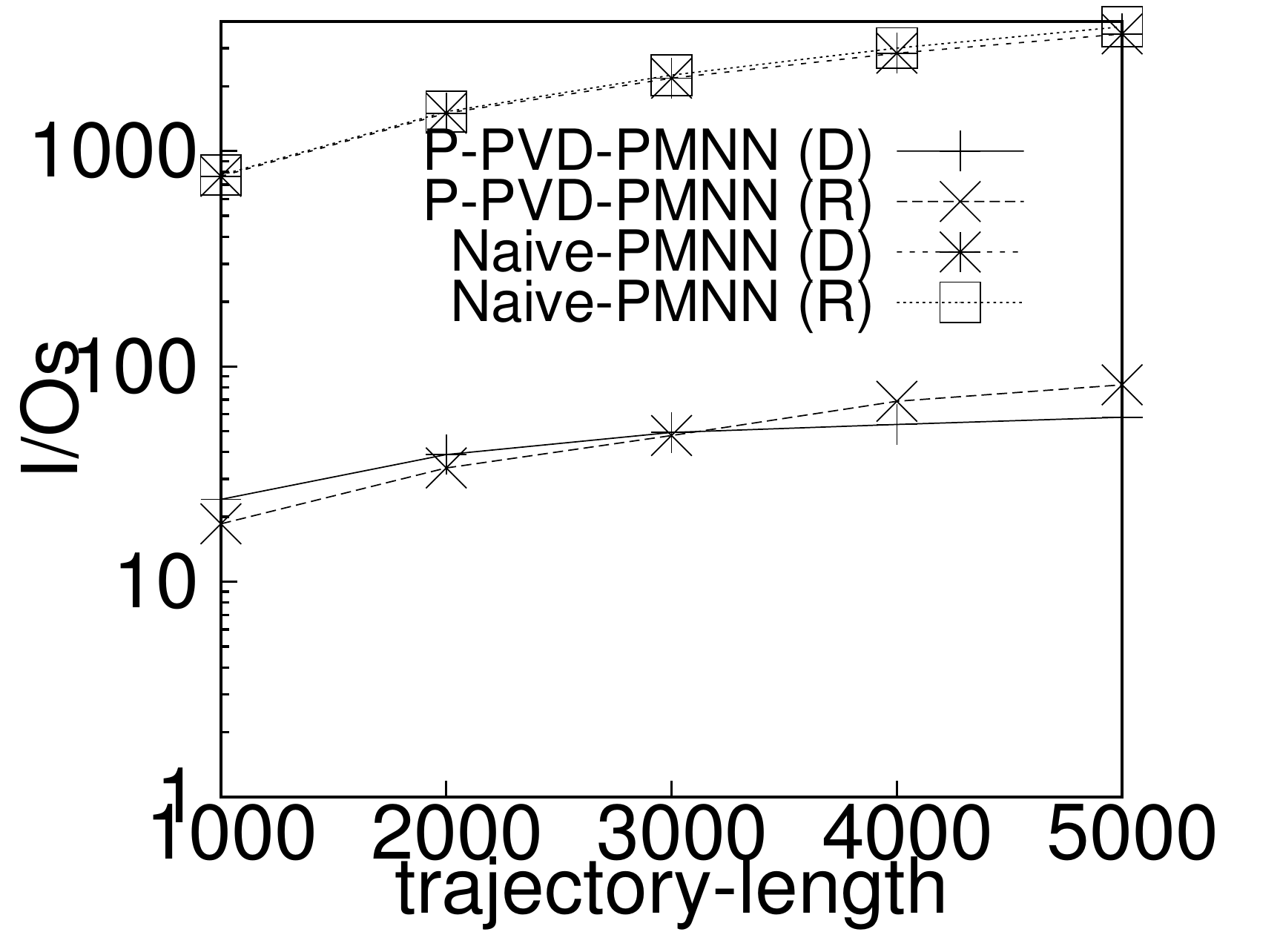}} &
         \hspace{-4mm}

        \resizebox{40mm}{!}{\includegraphics{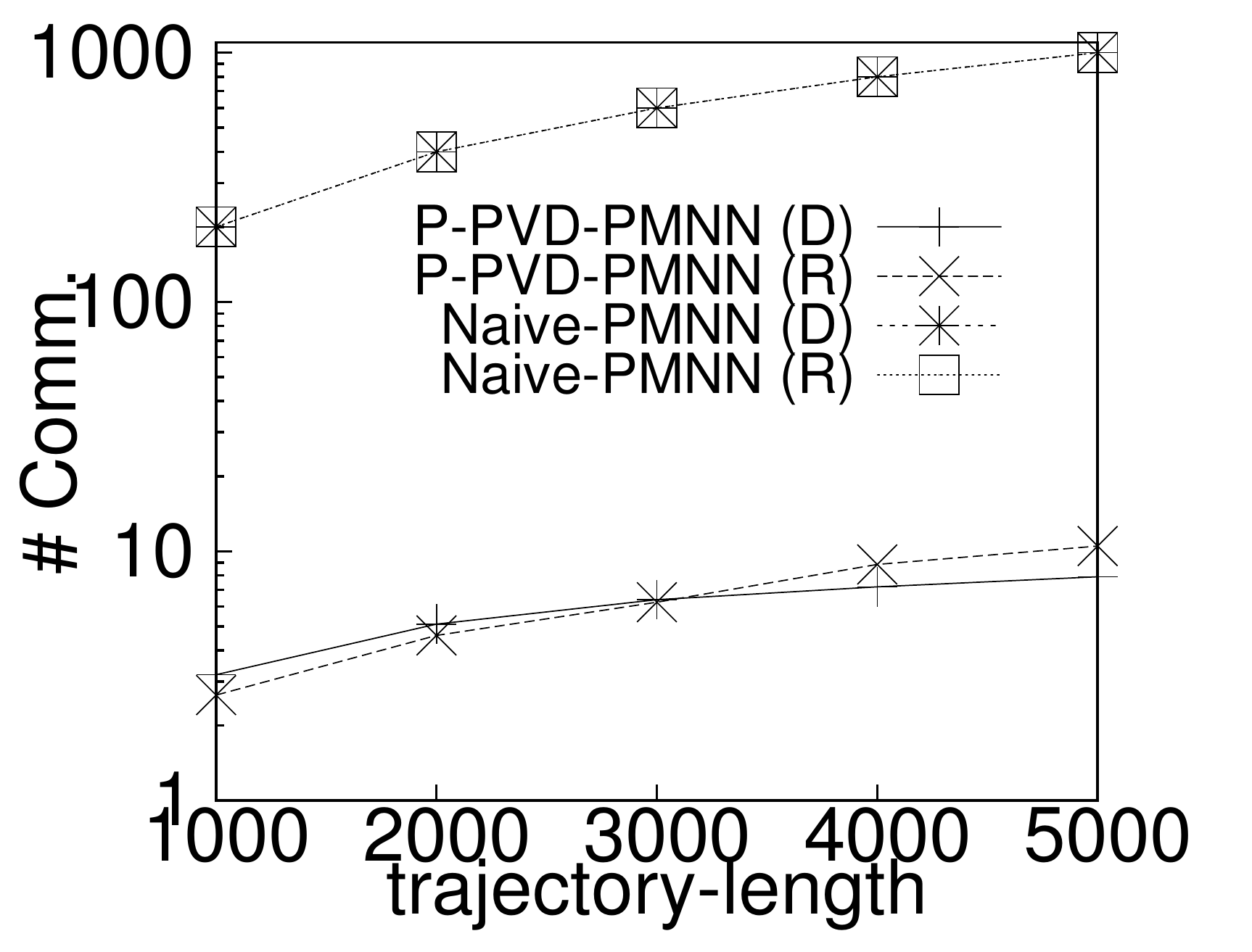}}\\
       \scriptsize{(g)\hspace{0mm}} & \scriptsize{(h)} & \scriptsize{(i)}\\
      \end{tabular}
    \caption{The effect of the query trajectory length in U (a-c), Z (d-f), and L (g-i)}
    \label{fig:vq}
  \end{center}
\end{figure*}

\begin{figure*}[htbp]
  \begin{center}
    \begin{tabular}{cccc}
        \hspace{-5mm}
      \resizebox{40mm}{!}{\includegraphics{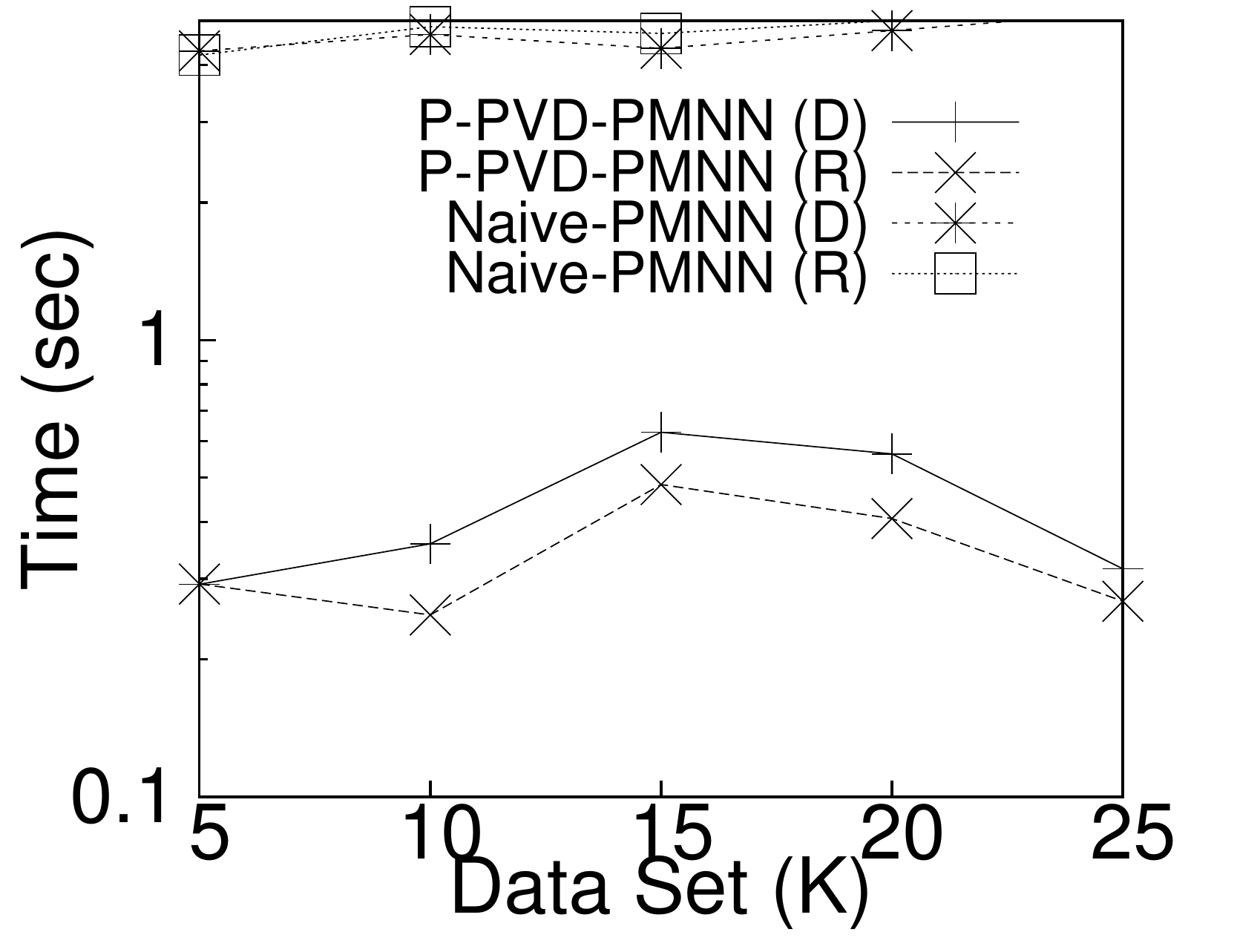}} &
        \hspace{-4mm}

        \resizebox{40mm}{!}{\includegraphics{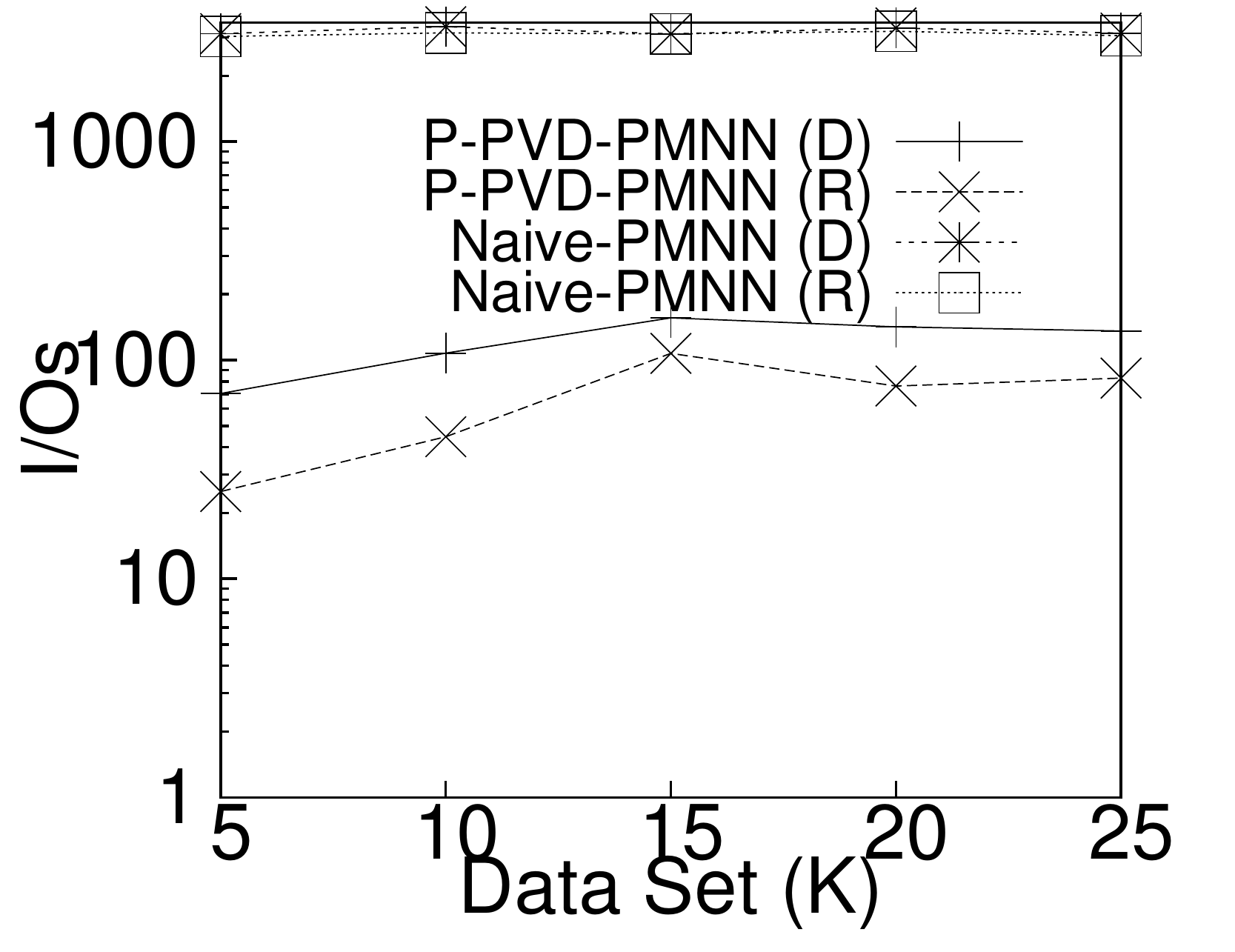}} &

        \hspace{-4mm}

        \resizebox{40mm}{!}{\includegraphics{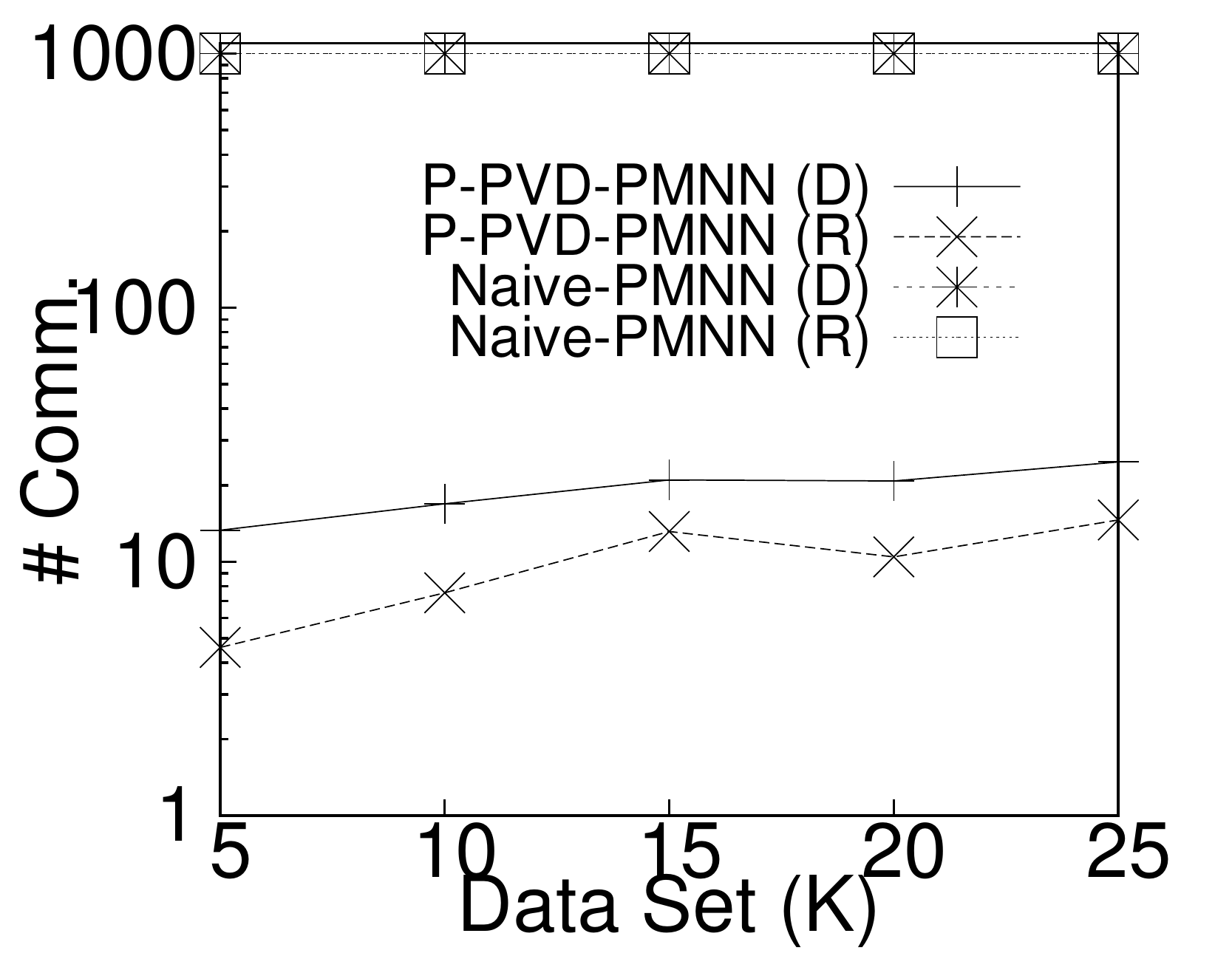}}\\
       \scriptsize{(a)\hspace{0mm}} & \scriptsize{(b)} & \scriptsize{(c)}\\
      \end{tabular}
      \begin{tabular}{cccc}
        \hspace{-5mm}
      \resizebox{40mm}{!}{\includegraphics{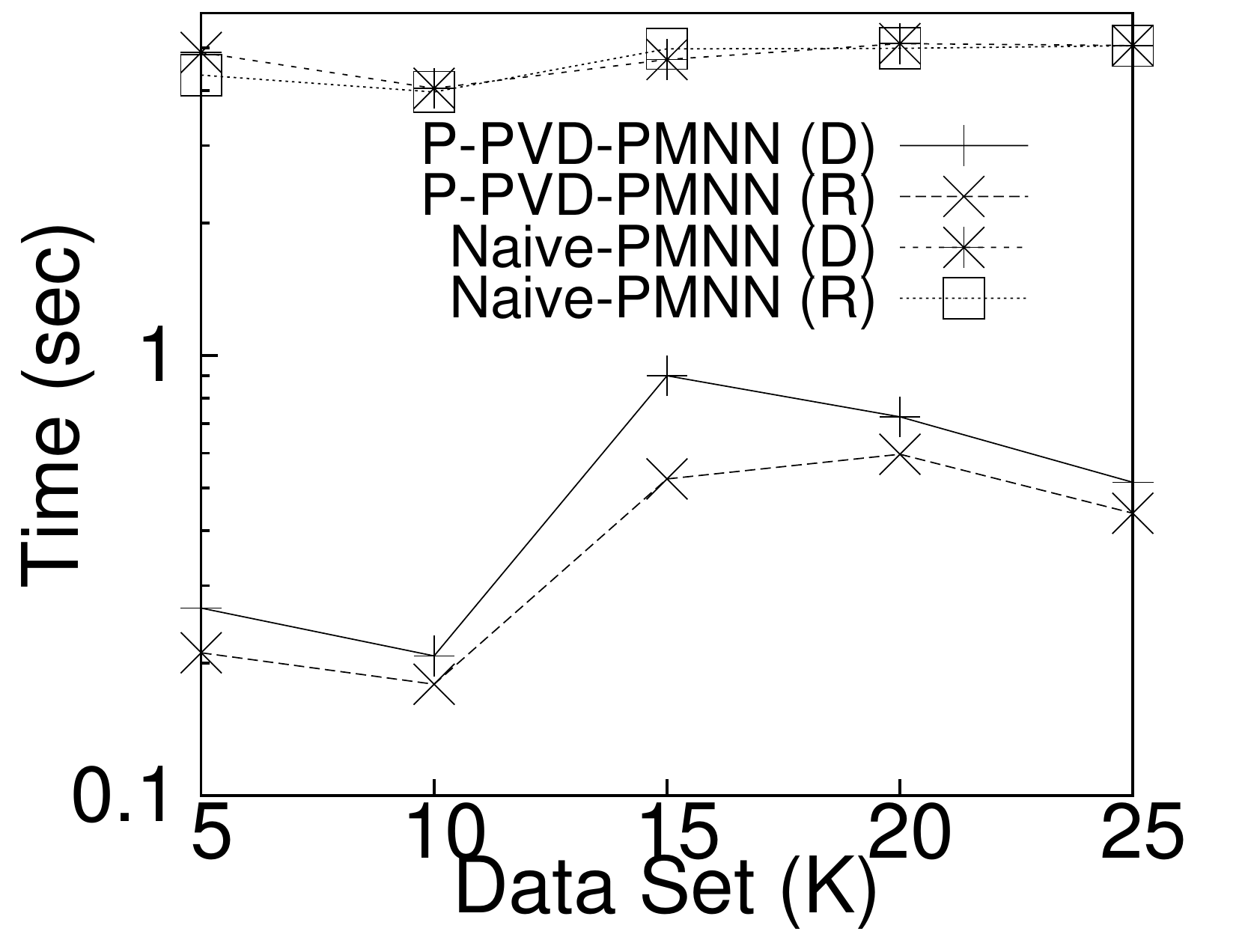}} &
        \hspace{-4mm}

        \resizebox{40mm}{!}{\includegraphics{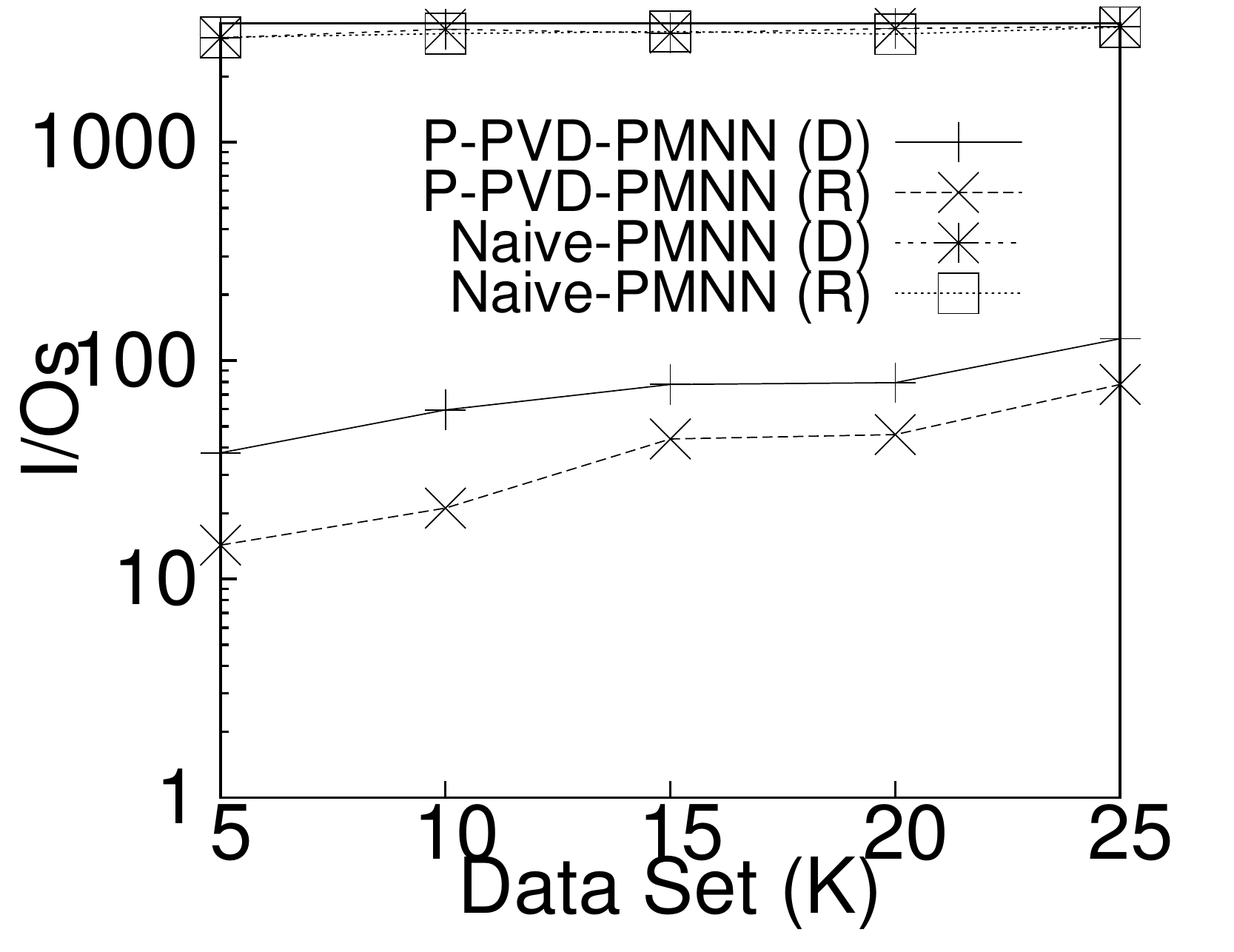}} &

        \hspace{-4mm}

        \resizebox{40mm}{!}{\includegraphics{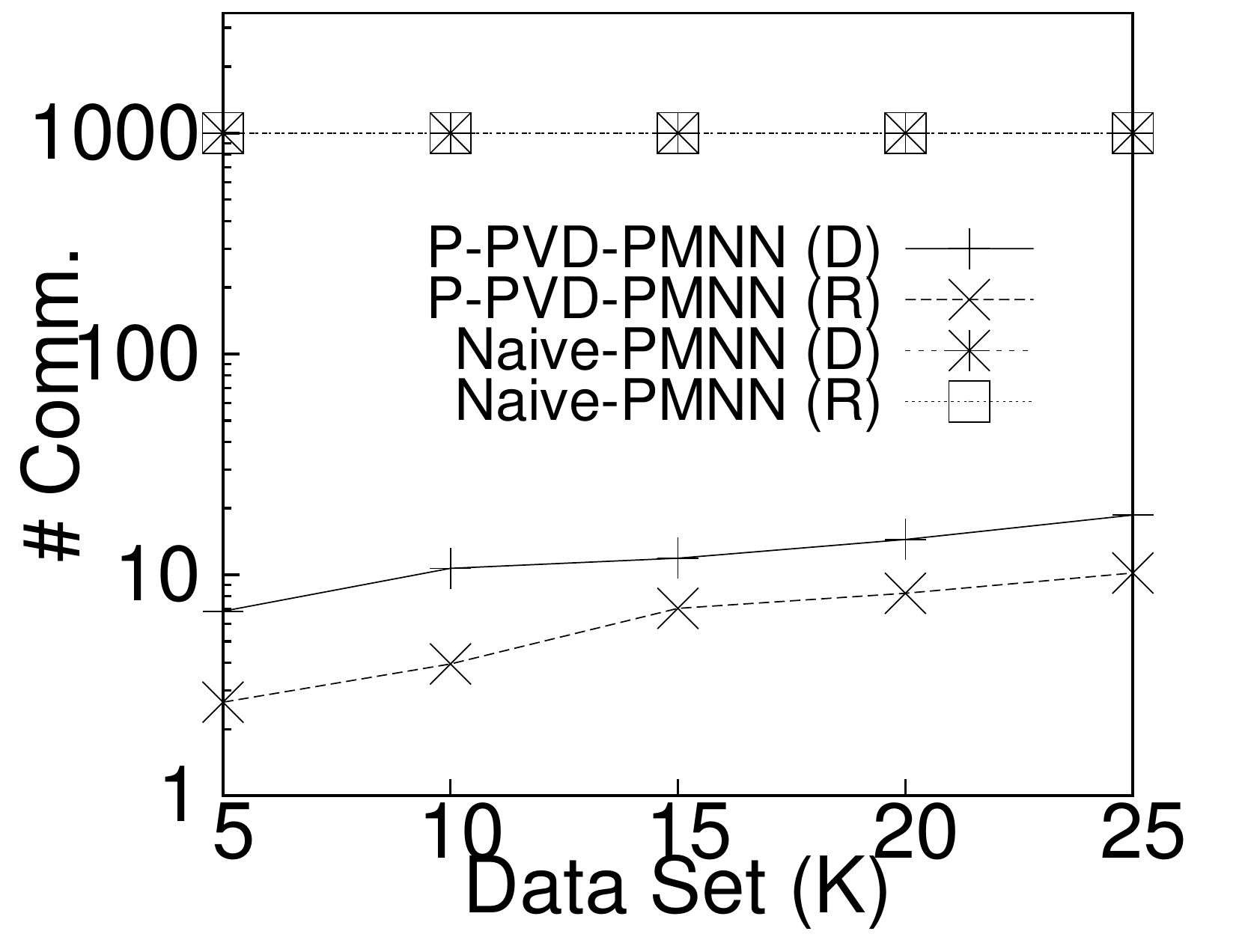}}\\
       \scriptsize{(d)\hspace{0mm}} & \scriptsize{(e)} & \scriptsize{(f)}\\
      \end{tabular}
    \caption{The effect of the data set size in U (a-c), Z (d-f)}
    \label{fig:vds}
  \end{center}
\end{figure*}

\emph{Effect of Data Set Size}: In this set of experiments, we
vary the data set size from 5K to 25K and compare the performance
of our P-PVD-PMNN with Naive-PMNN for both U (see
Figures~\ref{fig:vds}(a)-(c)) and Z (see
Figures~\ref{fig:vds}(d)-(f)) distributions. In these experiments,
we set the trajectory length to 5000 units.

Figures~\ref{fig:vds}(a)-(f) show that, in general for P-PVD-PMNN, the processing time and
I/O costs, and the number of communications increase with the increase of the data set size. The reason is as follows. For a larger data set, since the density of objects is high, we have smaller PVCs. Thus, for a larger data set, as the query point moves, it crosses the boundaries of PVCs more frequently than that of a smaller data set. This operation incurs extra computational overhead for a larger data set. On the other hand, for Naive-PMNN, the processing time, I/O costs, and the communication costs remain almost constant with the increase of the data set size. This is because, unless the $R^{*}$-tree has a new level due to the increase of the data set size, the processing costs for Naive-PMNN do not vary with increase of the data set size, which is the case in Figures~\ref{fig:vds}(a)-(f)).

Figures also show that our P-PVD-PMNN outperforms
Naive-PMNN by an order of magnitude in processing time, 2 orders
of magnitude in I/Os and number of communications for all data
sets. The results also show that P-PVD-PMNN performs similar for
both directional (D) and random (R) query movement paths.

\begin{figure*}[htbp]
  \begin{center}
    \begin{tabular}{cccc}
        \hspace{-5mm}
      \resizebox{40mm}{!}{\includegraphics{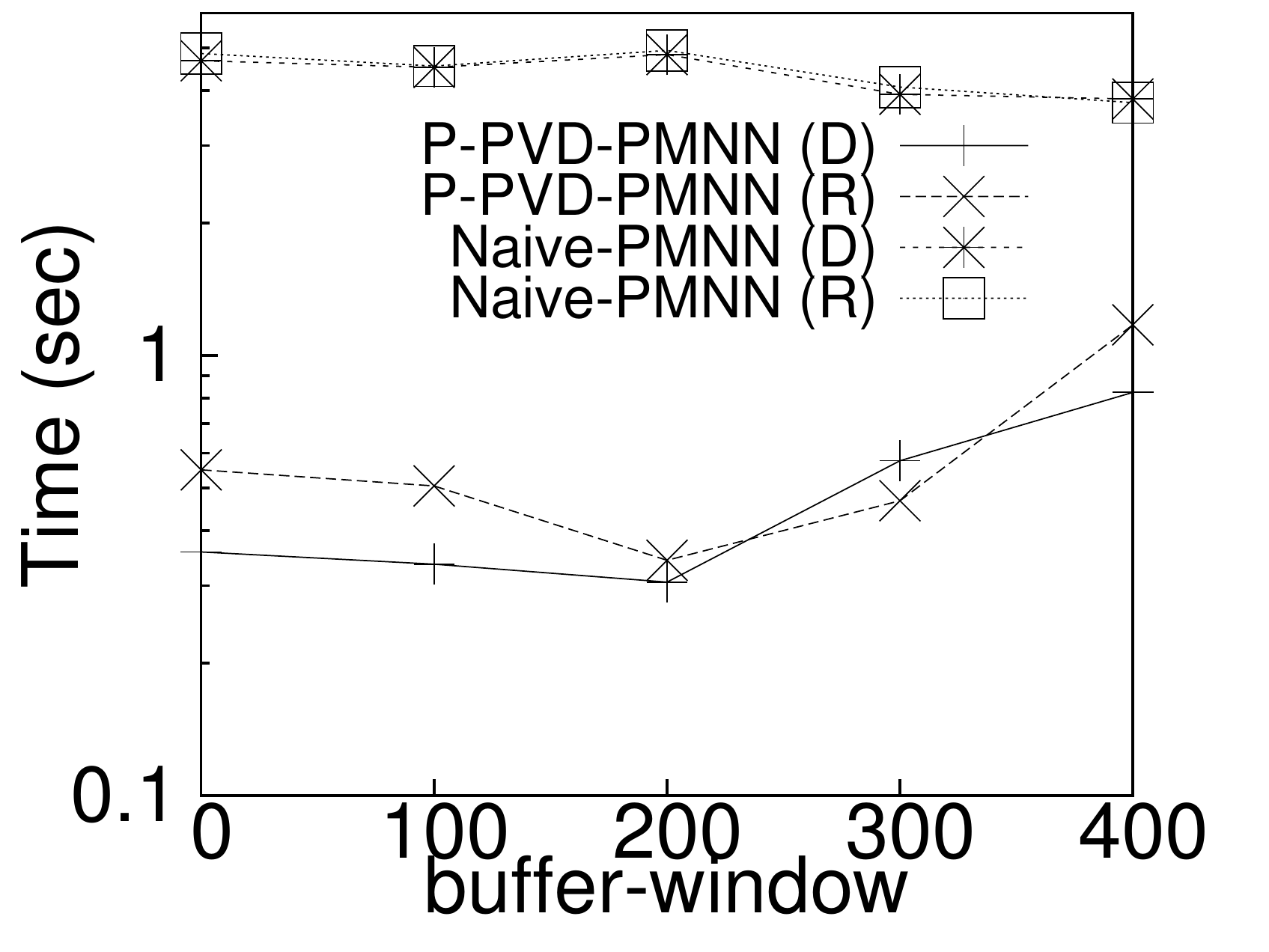}} &
        \hspace{-4mm}
        \resizebox{40mm}{!}{\includegraphics{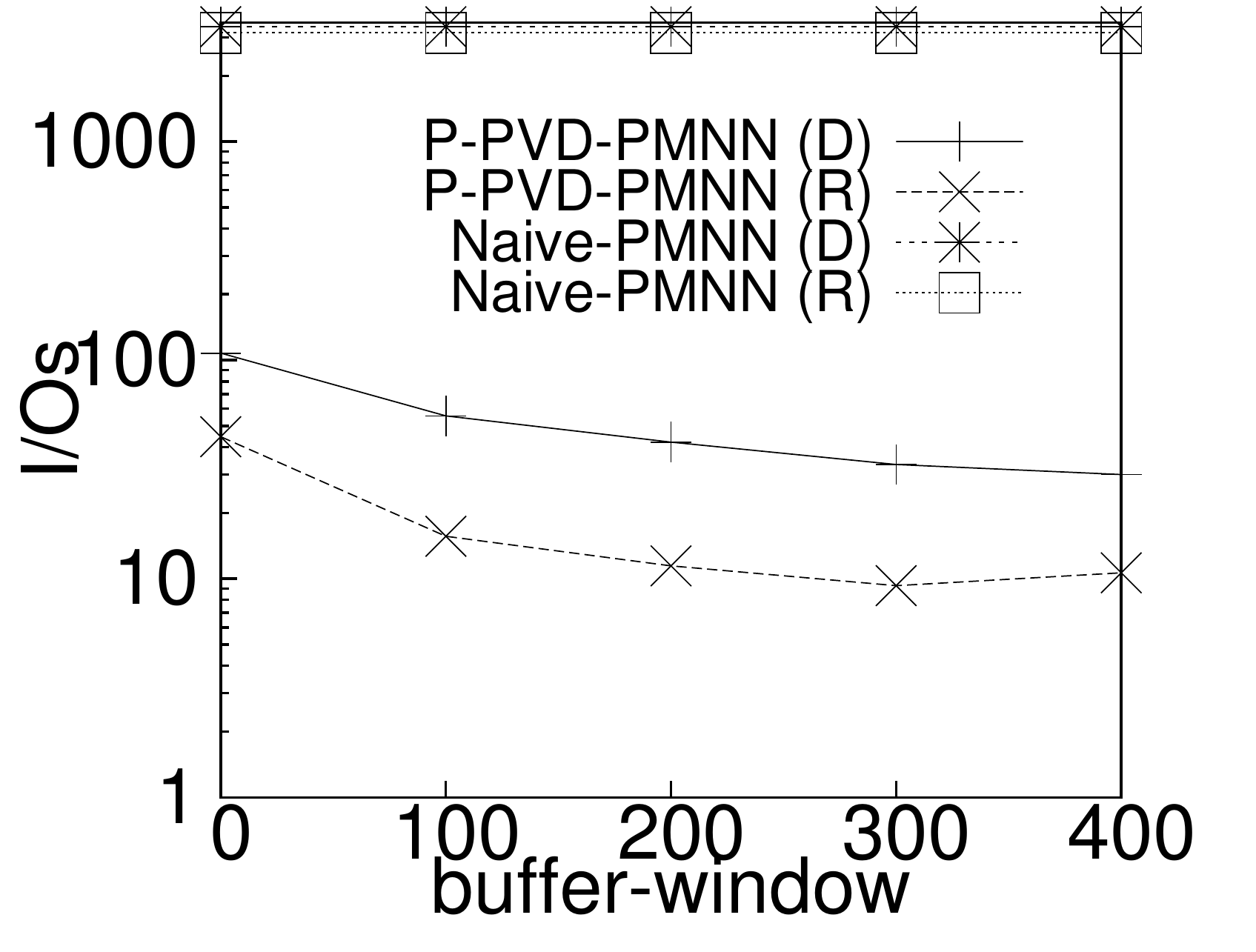}} &
         \hspace{-4mm}

        \resizebox{40mm}{!}{\includegraphics{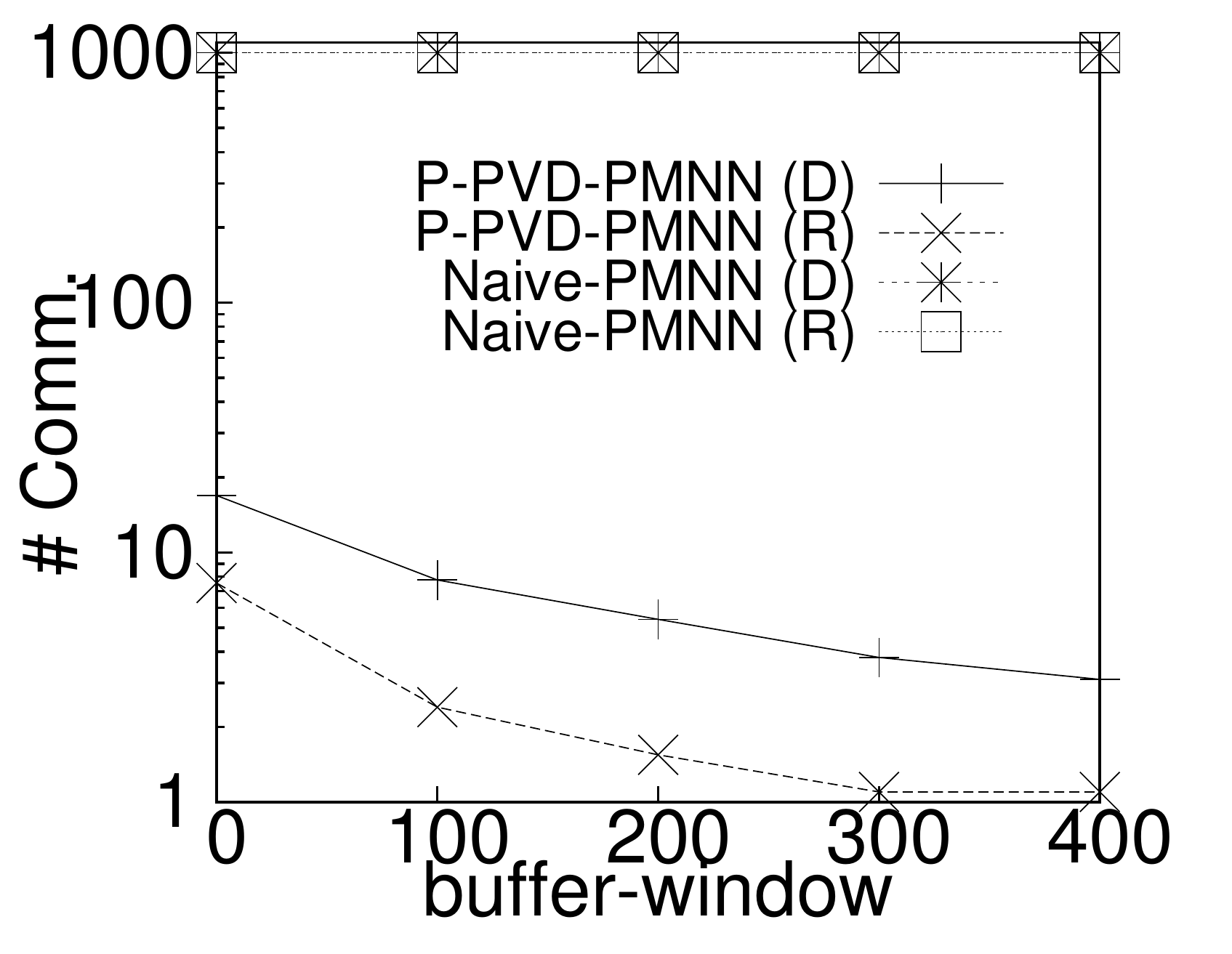}}\\
       \scriptsize{(a)\hspace{0mm}} & \scriptsize{(b)} & \scriptsize{(c)}\\
      \end{tabular}
      \begin{tabular}{cccc}
        \hspace{-5mm}
      \resizebox{40mm}{!}{\includegraphics{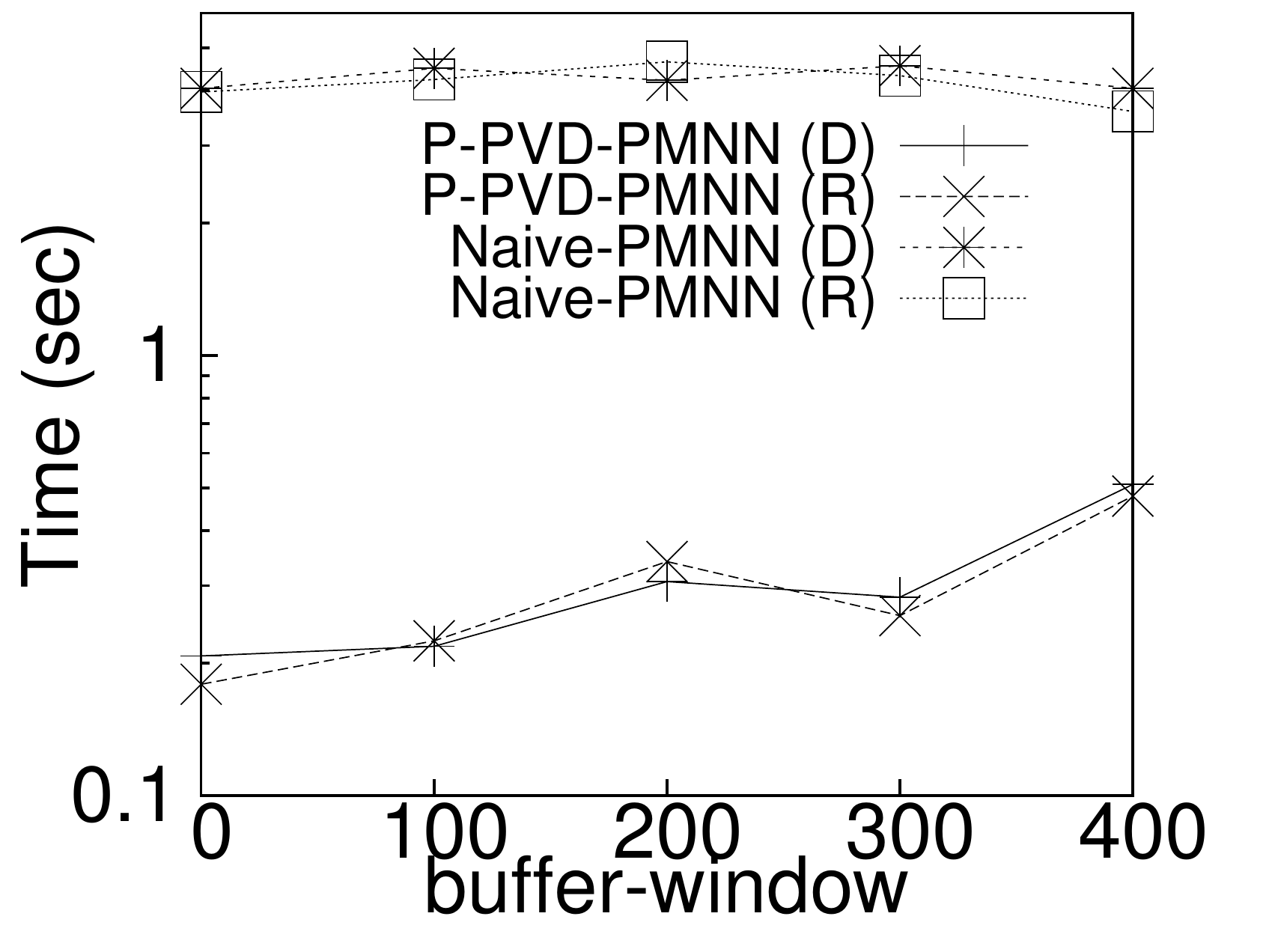}} &
        \hspace{-4mm}

        \resizebox{40mm}{!}{\includegraphics{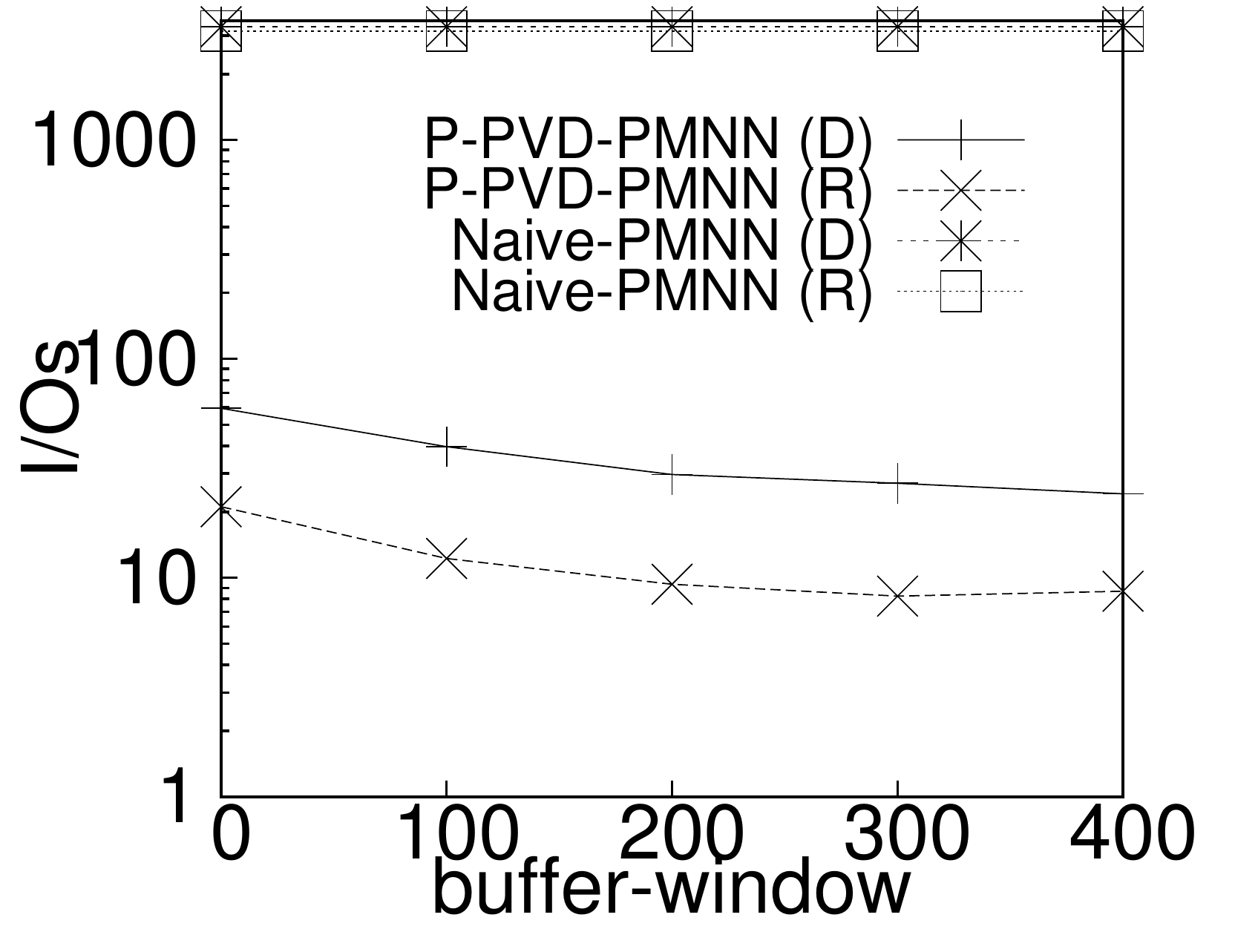}} &
        \hspace{-4mm}

        \resizebox{40mm}{!}{\includegraphics{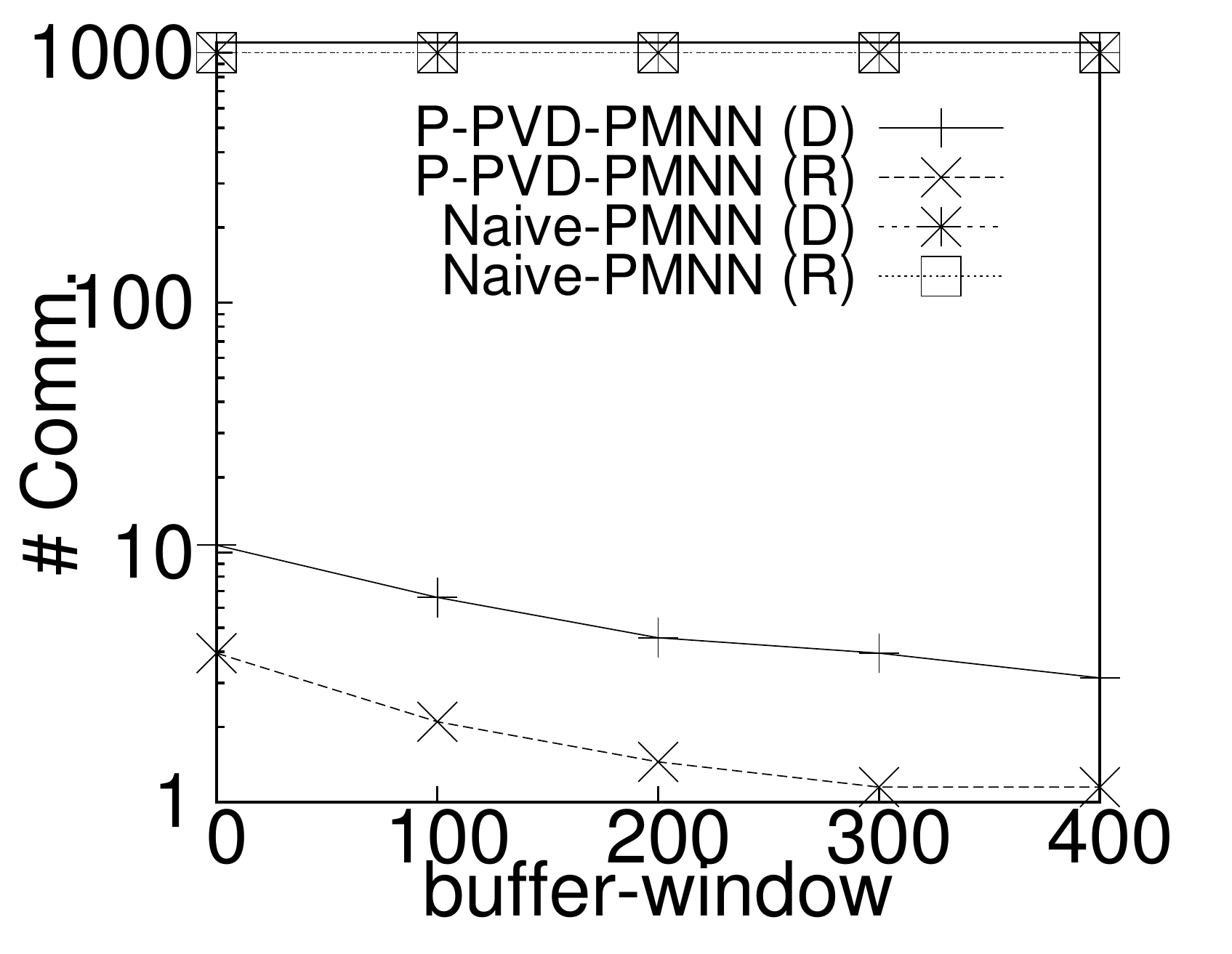}}\\
       \scriptsize{(d)\hspace{0mm}} & \scriptsize{(e)} & \scriptsize{(f)}\\
      \end{tabular}
      \begin{tabular}{cccc}
        \hspace{-5mm}
      \resizebox{40mm}{!}{\includegraphics{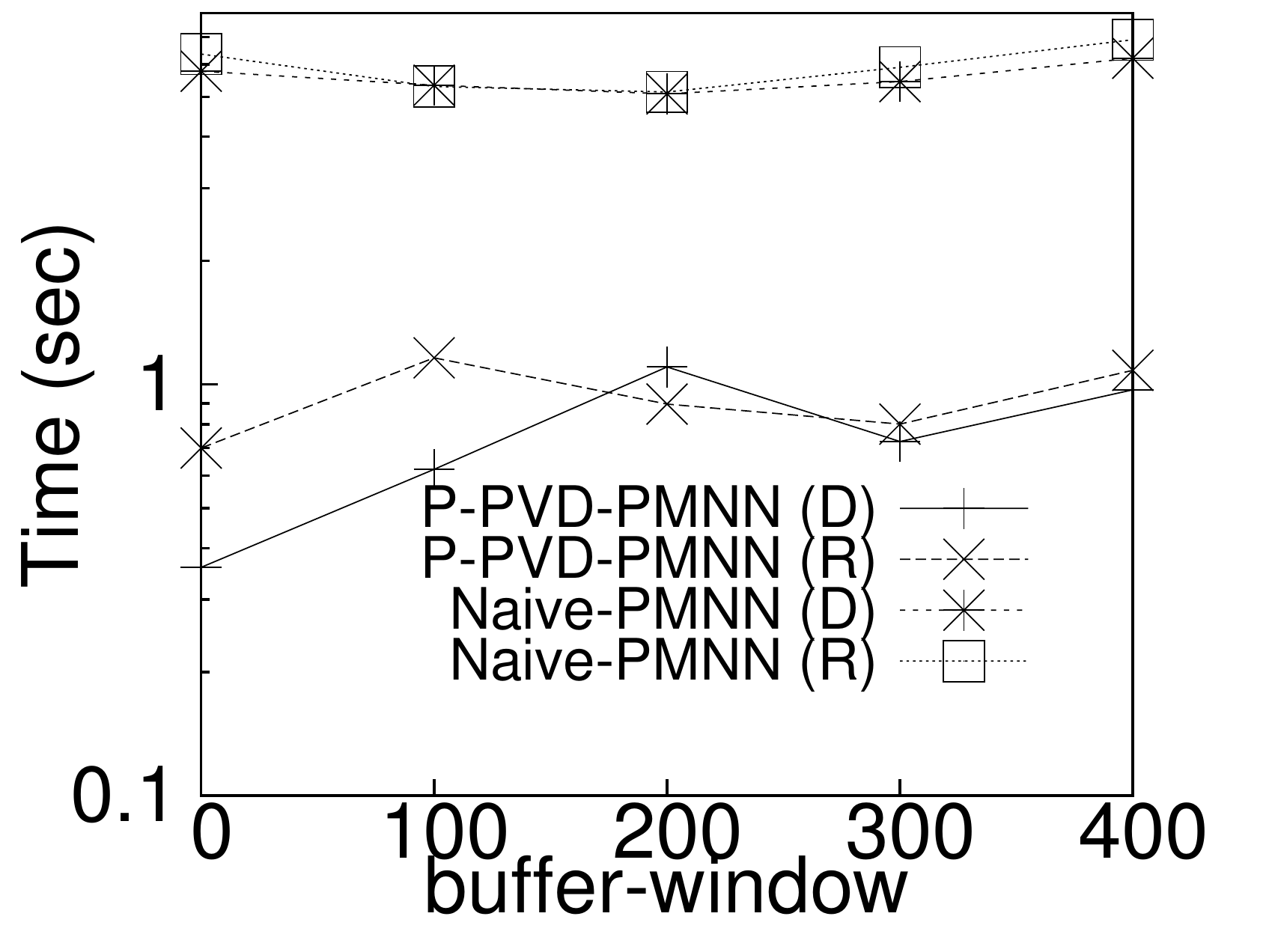}} &
        \hspace{-4mm}

        \resizebox{40mm}{!}{\includegraphics{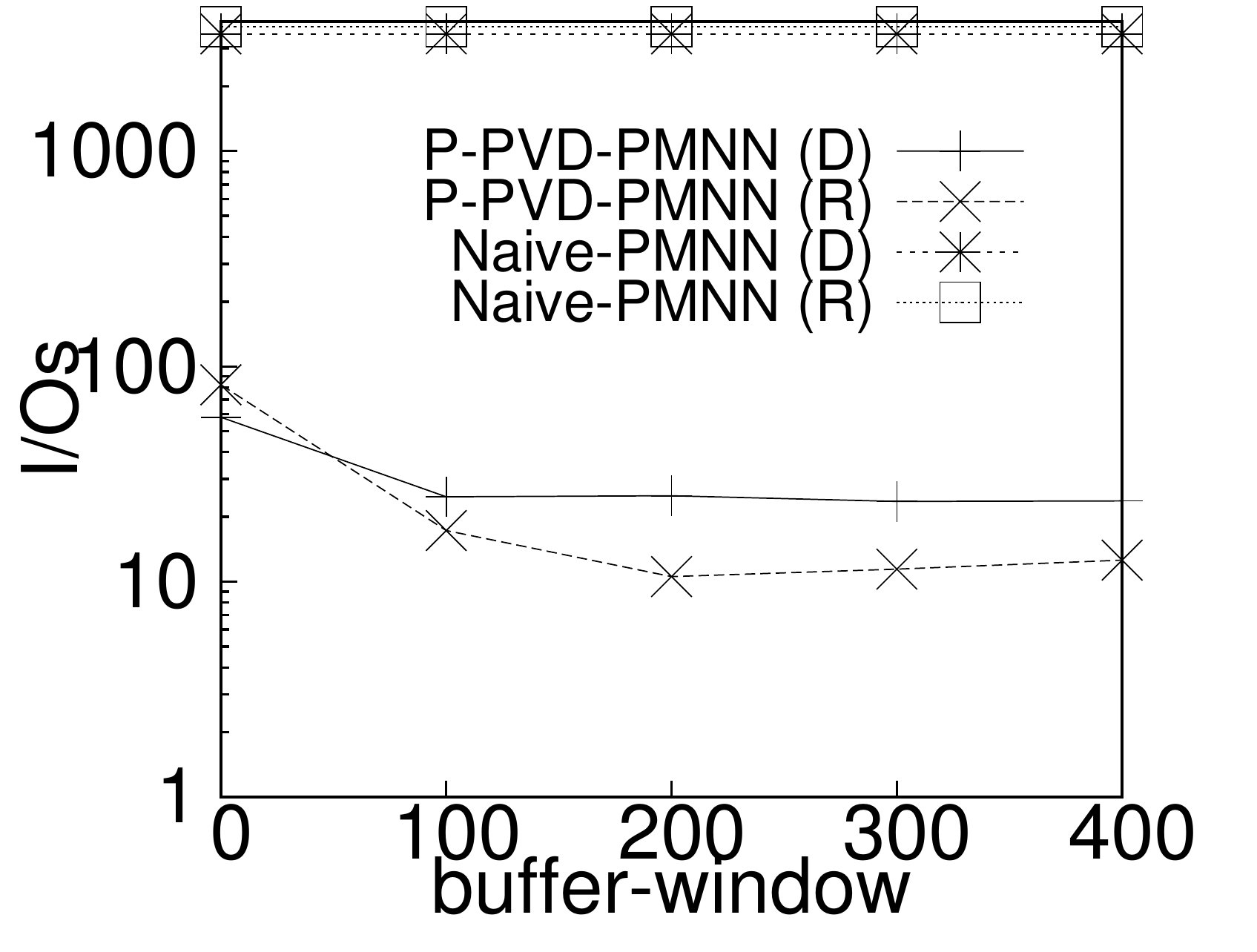}} &
         \hspace{-4mm}

        \resizebox{40mm}{!}{\includegraphics{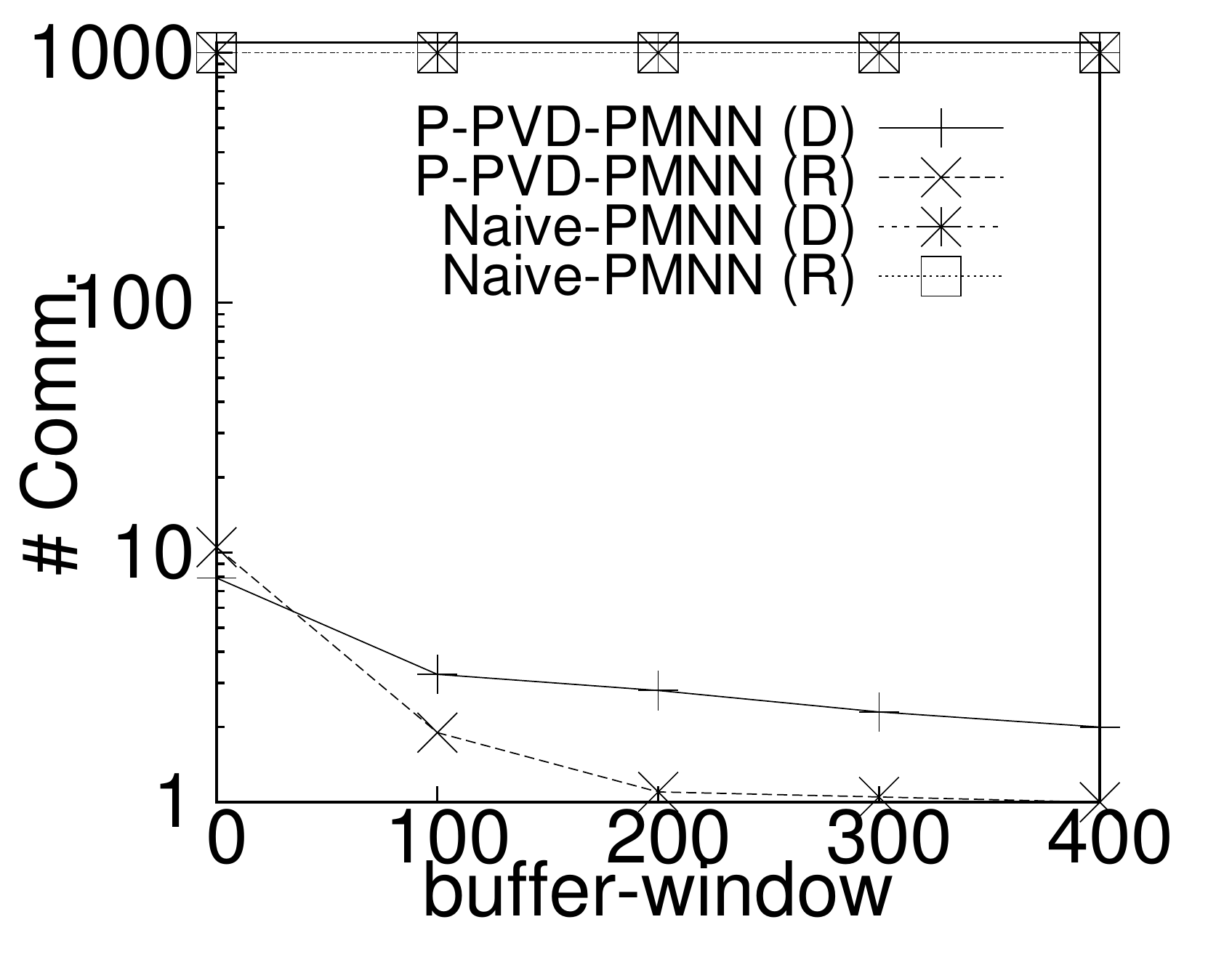}}\\
       \scriptsize{(g)\hspace{0mm}} & \scriptsize{(h)} & \scriptsize{(i)}\\
      \end{tabular}
    \caption{The effect of buffer window in U (a-c), Z (d-f), and L (g-i)}
    \label{fig:vr}
  \end{center}

\end{figure*}

\emph{Effect of Buffer Window}: In this set of experiments, we study the impact of introducing a buffer for processing a PMNN query. We vary the value of buffer window from 0 to 400 units of the data space,
and then run the experiments for data sets U(10K), Z(10K),
and L(12K). We set the trajectory length to 5000 units.

In these experiments, all PVCs whose MBRs intersect with a buffer
window centered at $q$ having the length and width of the buffer window are
retrieved from the $R^{*}$-tree and sent to the client. The client
stores these PVCs in its buffer. When buffer window is 0, the algorithm
only retrieves those PVCs whose MBRs contain the given query
point. On the other hand, when buffer window is 100, all PVCs whose MBRs
intersect with the buffer window centered at $q$ having the length and width of 100 units (i.e., the buffer window covers $100\times100$ square units in the data space) are retrieved. In this setting, we expect that the
I/O costs will be reduced for a larger value of buffer window, because the
server does not need to access the $R^{*}$-tree as long as these
buffered PVCs can serve the subsequent query points of a moving
query.

Figures~\ref{fig:vr}(a)-(c) show the processing time, the I/O
costs, and the number of communications, respectively, for varying
the size of the buffer window from 0 to 400 units for U data set. Figure~\ref{fig:vr}(a) shows
that for P-PVD-PMNN, in general the processing time increase with increase of buffer window. The reason is that for a very large buffer window, a large number of PVCs are buffered and the processing
time increases as the algorithm needs to check these PVCs for a
moving query.\eat{the processing time slightly reduces for
varying the buffer window from 0 to 200, and then increases for varying the buffer window
from 200 to 400. The reason is that for a very large buffer window
($>200$), a large number of PVCs are buffered and the processing
time increases as the algorithm needs to check these PVCs for a
moving query.} On the other hand, Figure~\ref{fig:vr}(b) shows
that for P-PVD-PMNN, I/O costs decrease with the increase of the buffer window.
This is because, for a larger value of buffer window the algorithm fetches
more PVCs at a time from the server, and thereby needs to access
the PVD using the $R^{*}$-tree reduced number of times. The figure
also shows that P-PVD-PMNN outperforms Naive-PMNN by an order of
magnitude in processing time and 2 orders of magnitude in I/O.
Figure~\ref{fig:vr}(c) shows that the number of communications for P-PVD-PMNN continuously
decreases with the increase of buffer window as the client fetches more
PVCs at a time from the server.  However, for Naive-PMNN, the client
communicates with the server for each sampled location of the
query, and thus the number of communications remain constant.

The results on Z (see Figures~\ref{fig:vr}(d)-(f)) and L (see
Figures~\ref{fig:vr}(g)-(i)) data sets show similar trends with U
data set described above.

\begin{figure*}[htbp]
  \begin{center}
    \begin{tabular}{cccc}
        \hspace{-5mm}
      \resizebox{40mm}{!}{\includegraphics{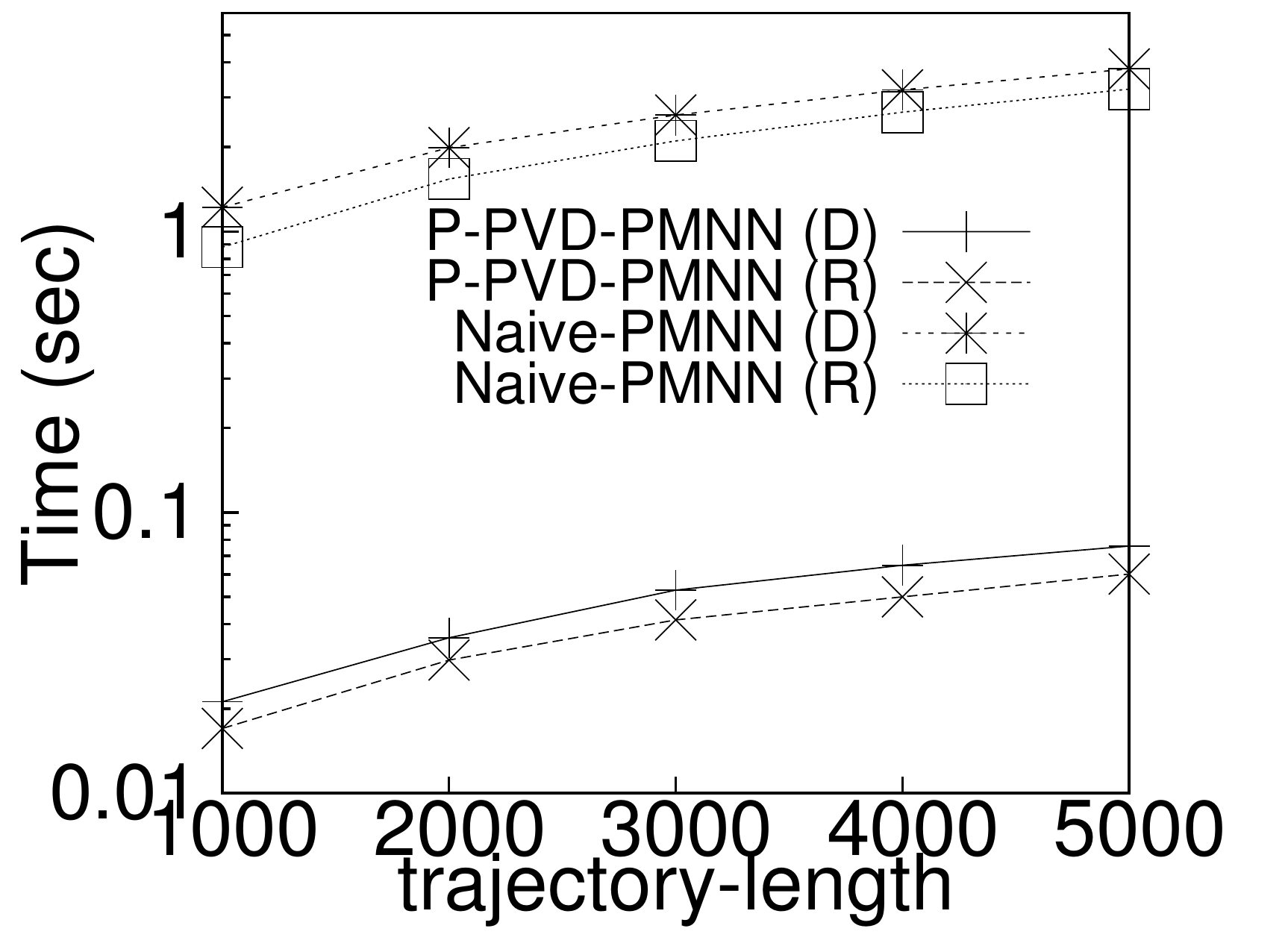}} &
        \hspace{-4mm}

        \resizebox{40mm}{!}{\includegraphics{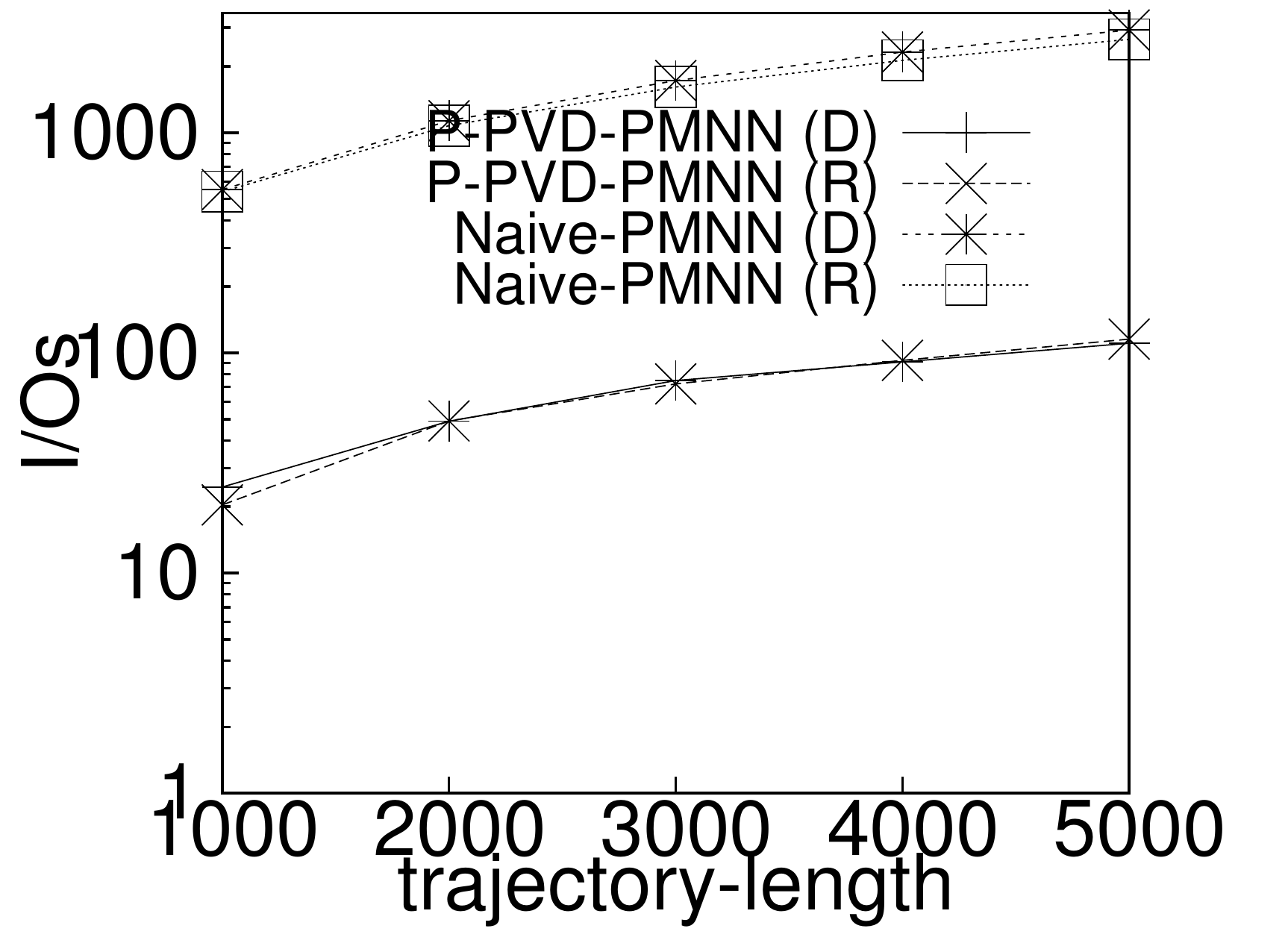}} &
         \hspace{-4mm}

        \resizebox{40mm}{!}{\includegraphics{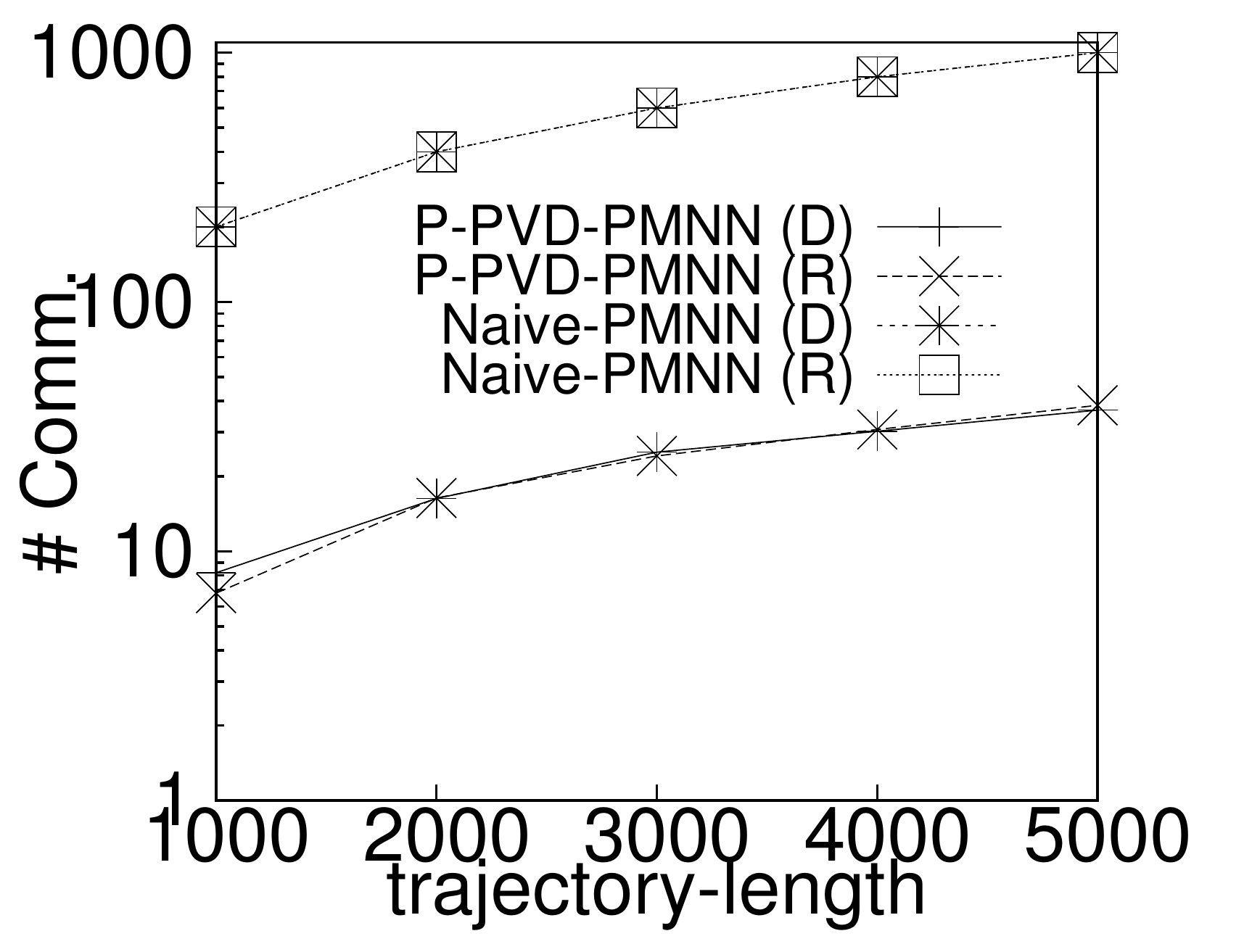}}\\
       \scriptsize{(a)\hspace{0mm}} & \scriptsize{(b)} & \scriptsize{(c)}\\
      \end{tabular}
    \begin{tabular}{cccc}
        \hspace{-5mm}
      \resizebox{40mm}{!}{\includegraphics{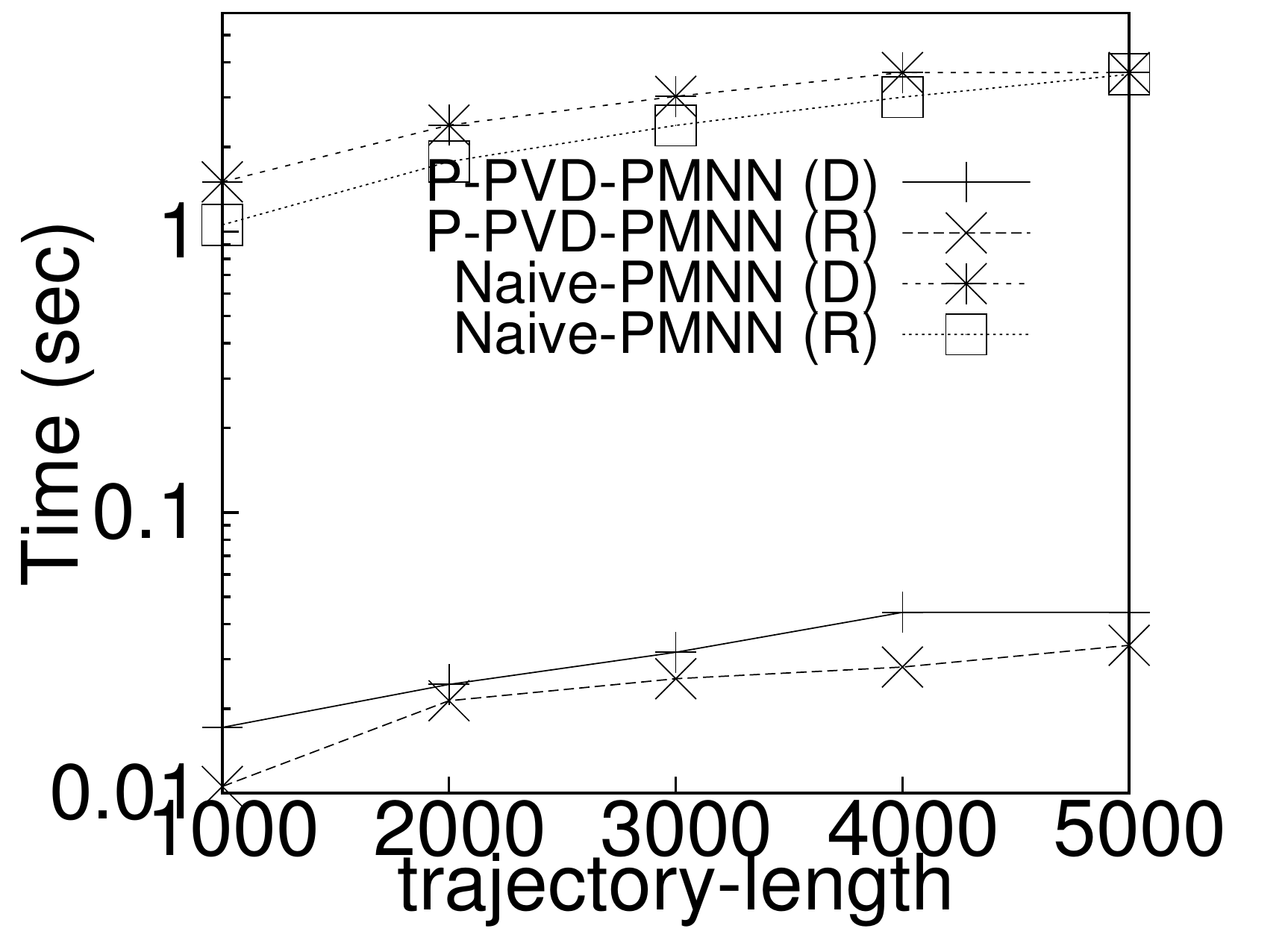}} &
        \hspace{-4mm}

        \resizebox{40mm}{!}{\includegraphics{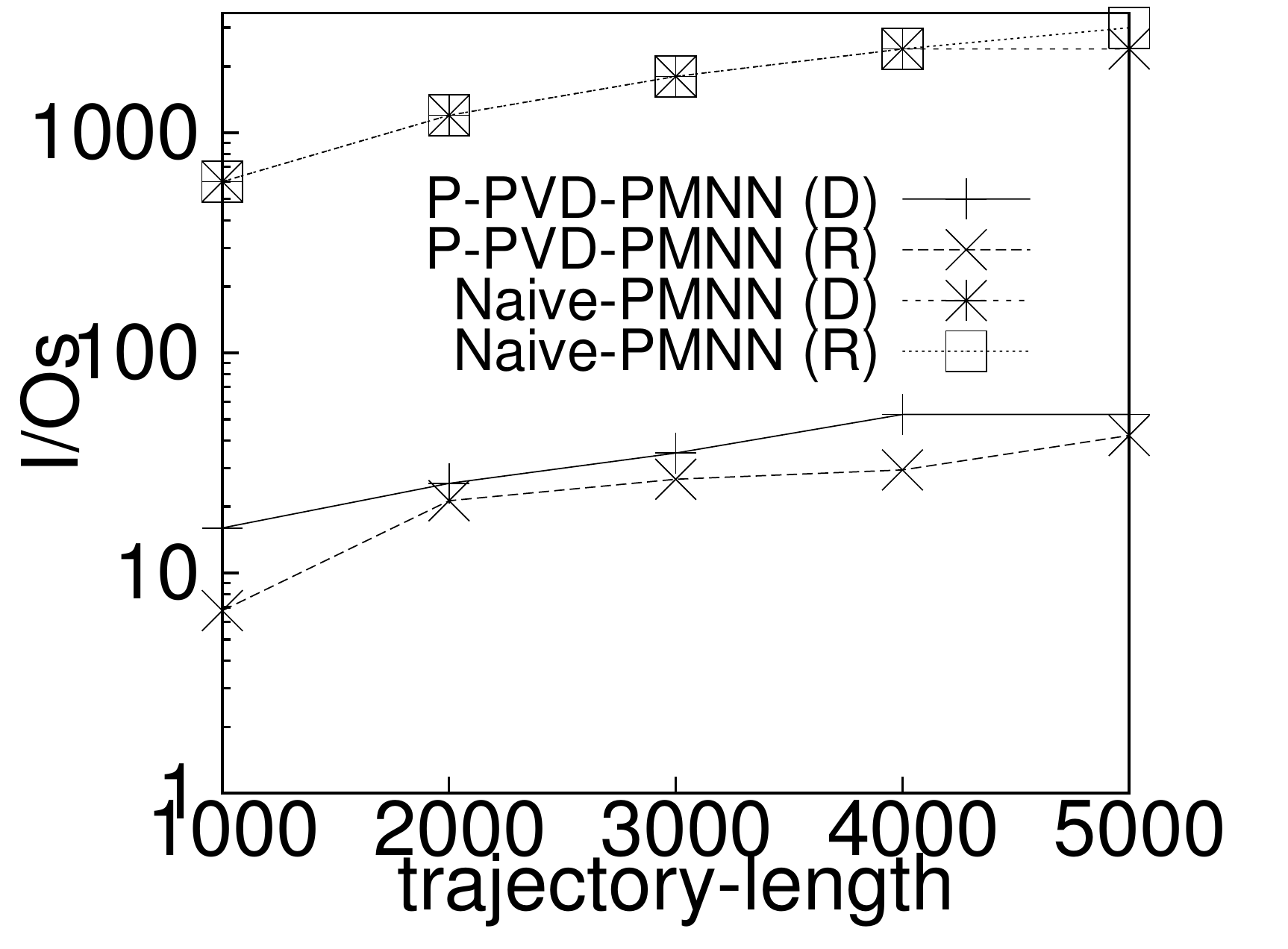}} &

        \hspace{-4mm}

        \resizebox{40mm}{!}{\includegraphics{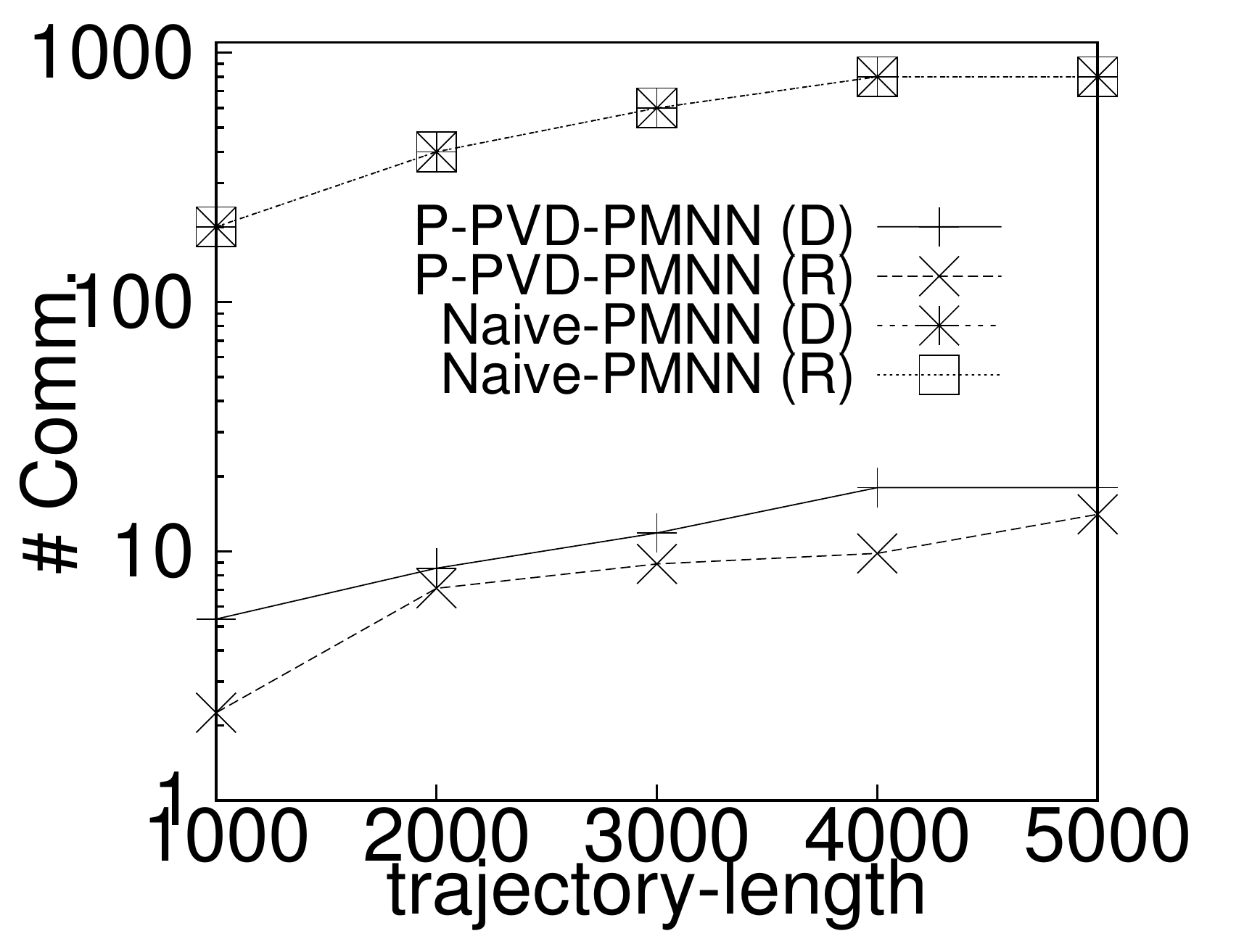}}\\
       \scriptsize{(d)\hspace{0mm}} & \scriptsize{(e)} & \scriptsize{(f)}\\
      \end{tabular}
    \caption{The effect of the query trajectory length in U (a-c), Z (d-f) for 1D data}
    \label{fig:vq1D}
  \end{center}
\end{figure*}

\begin{figure*}[htbp]
  \begin{center}
    \begin{tabular}{cccc}
        \hspace{-5mm}
      \resizebox{40mm}{!}{\includegraphics{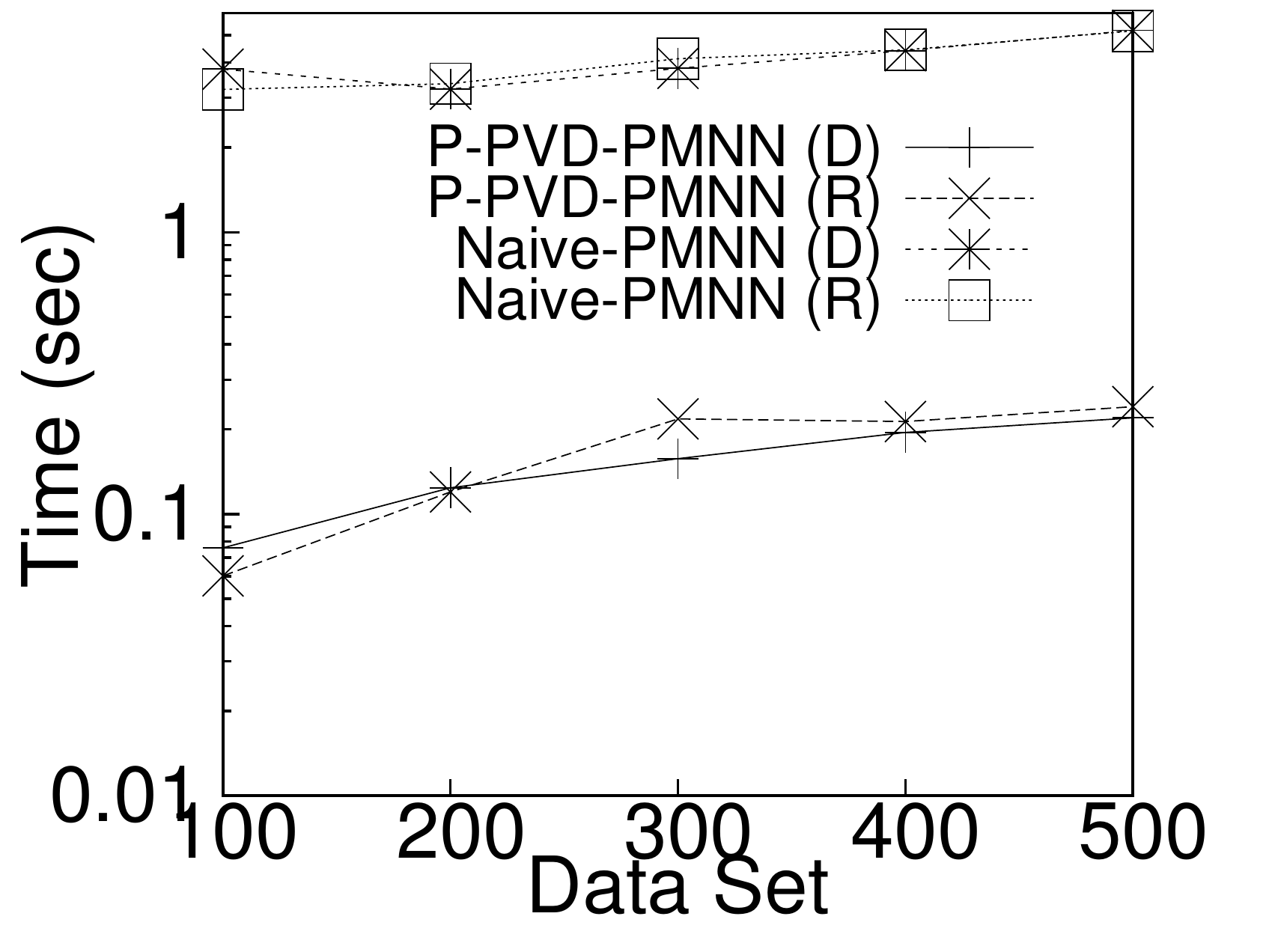}} &
        \hspace{-4mm}

        \resizebox{40mm}{!}{\includegraphics{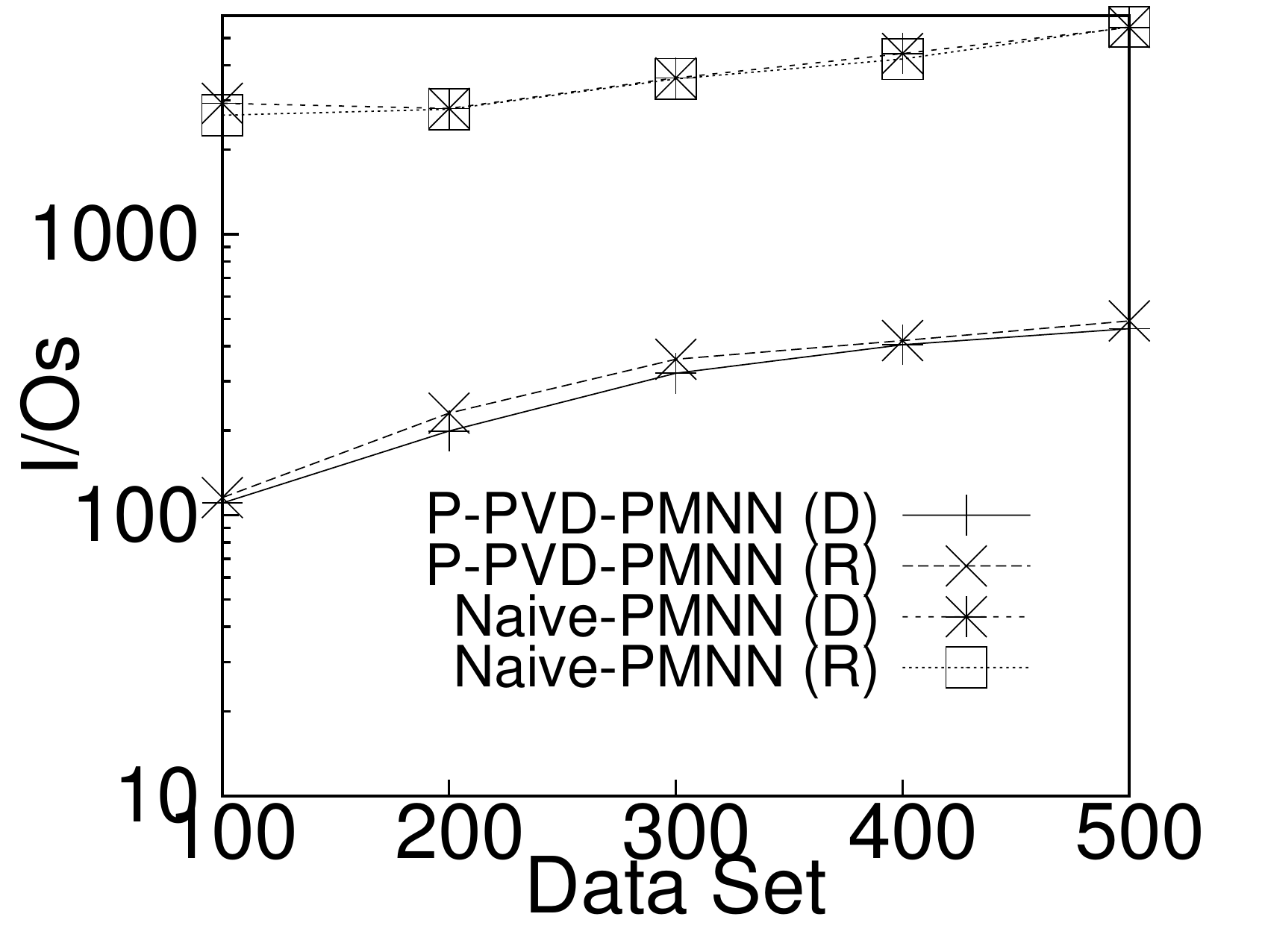}} &
         \hspace{-4mm}

        \resizebox{40mm}{!}{\includegraphics{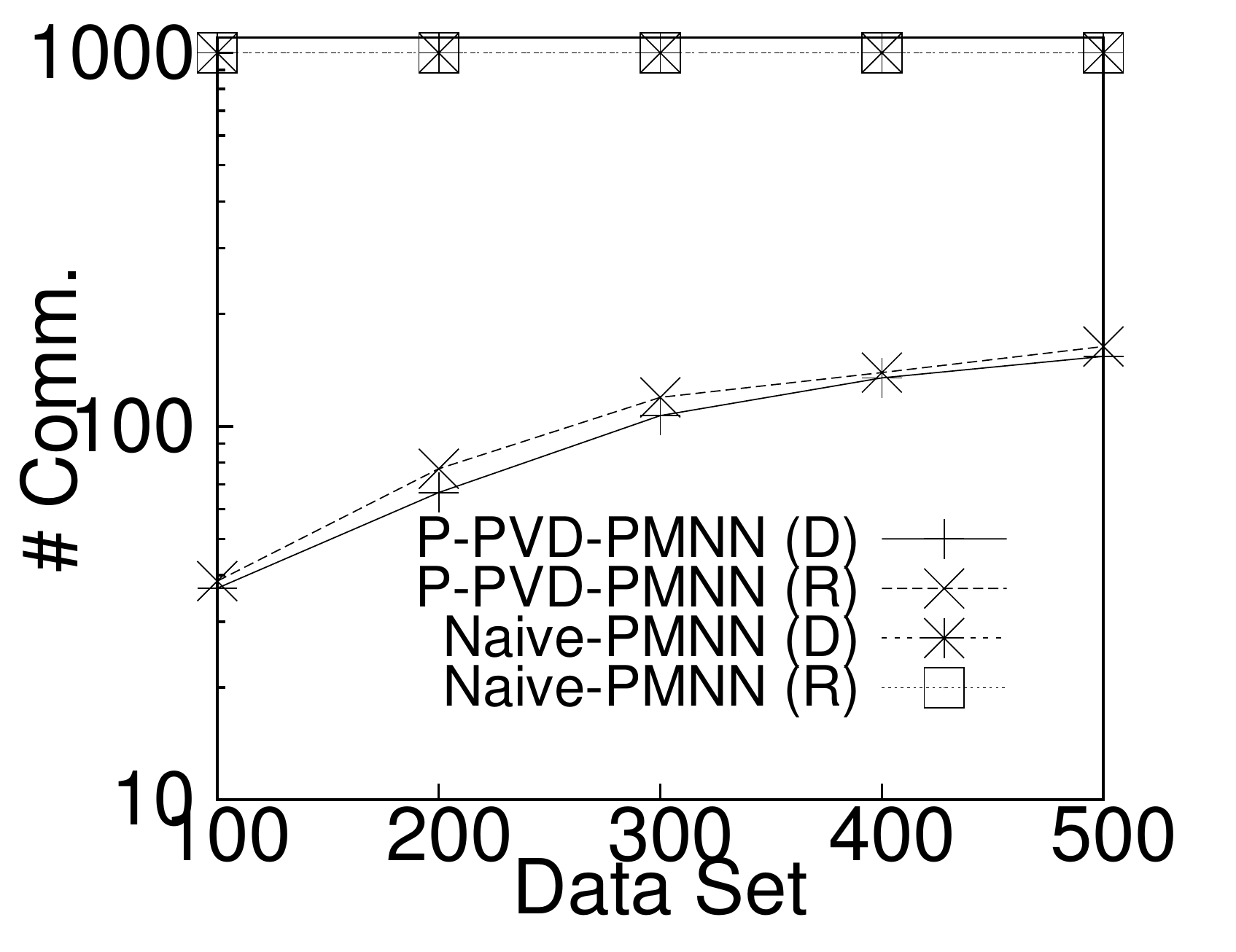}}\\
       \scriptsize{(a)\hspace{0mm}} & \scriptsize{(b)} & \scriptsize{(c)}\\
      \end{tabular}
      \begin{tabular}{cccc}
        \hspace{-5mm}
      \resizebox{40mm}{!}{\includegraphics{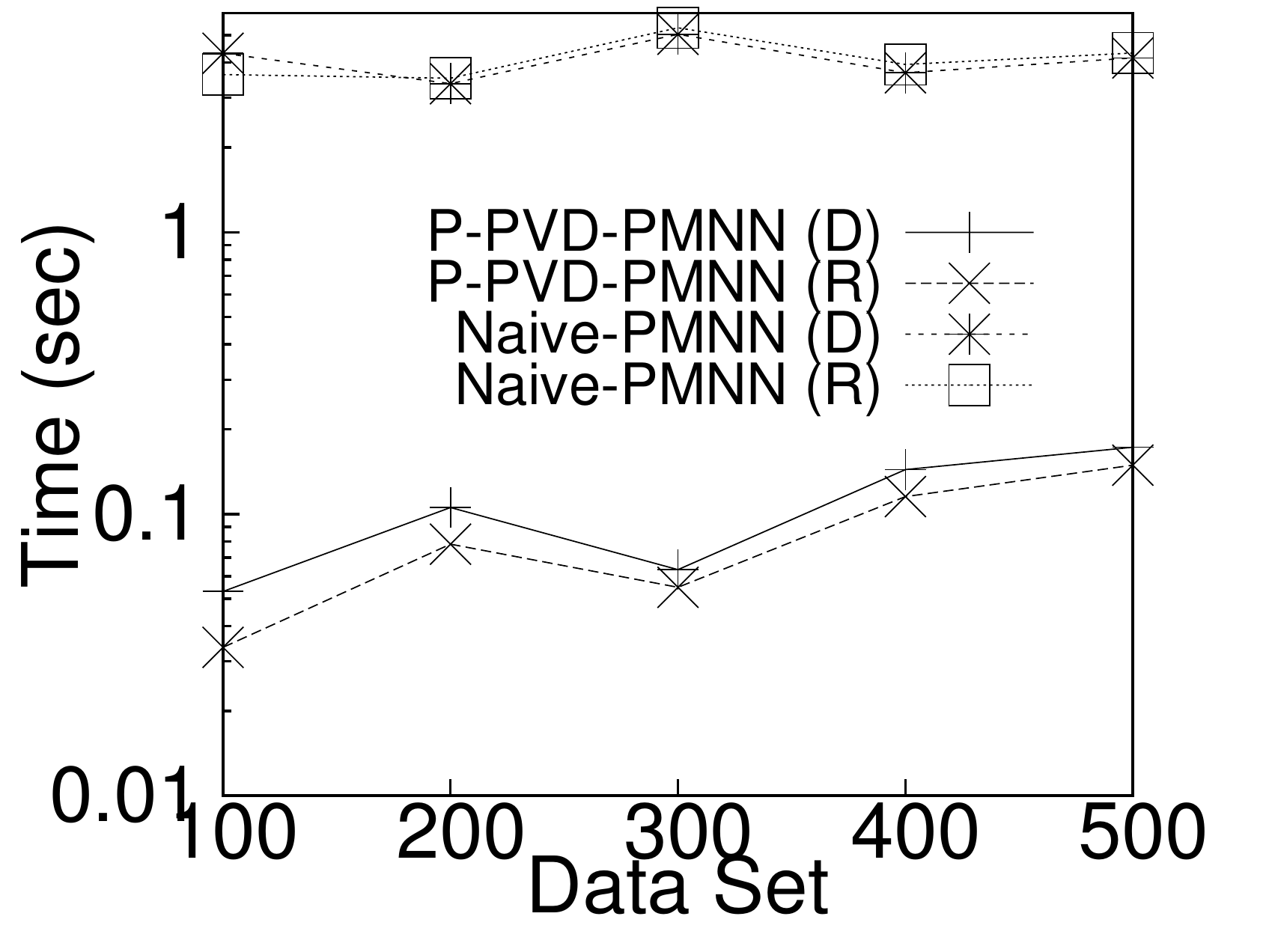}} &
        \hspace{-4mm}

        \resizebox{40mm}{!}{\includegraphics{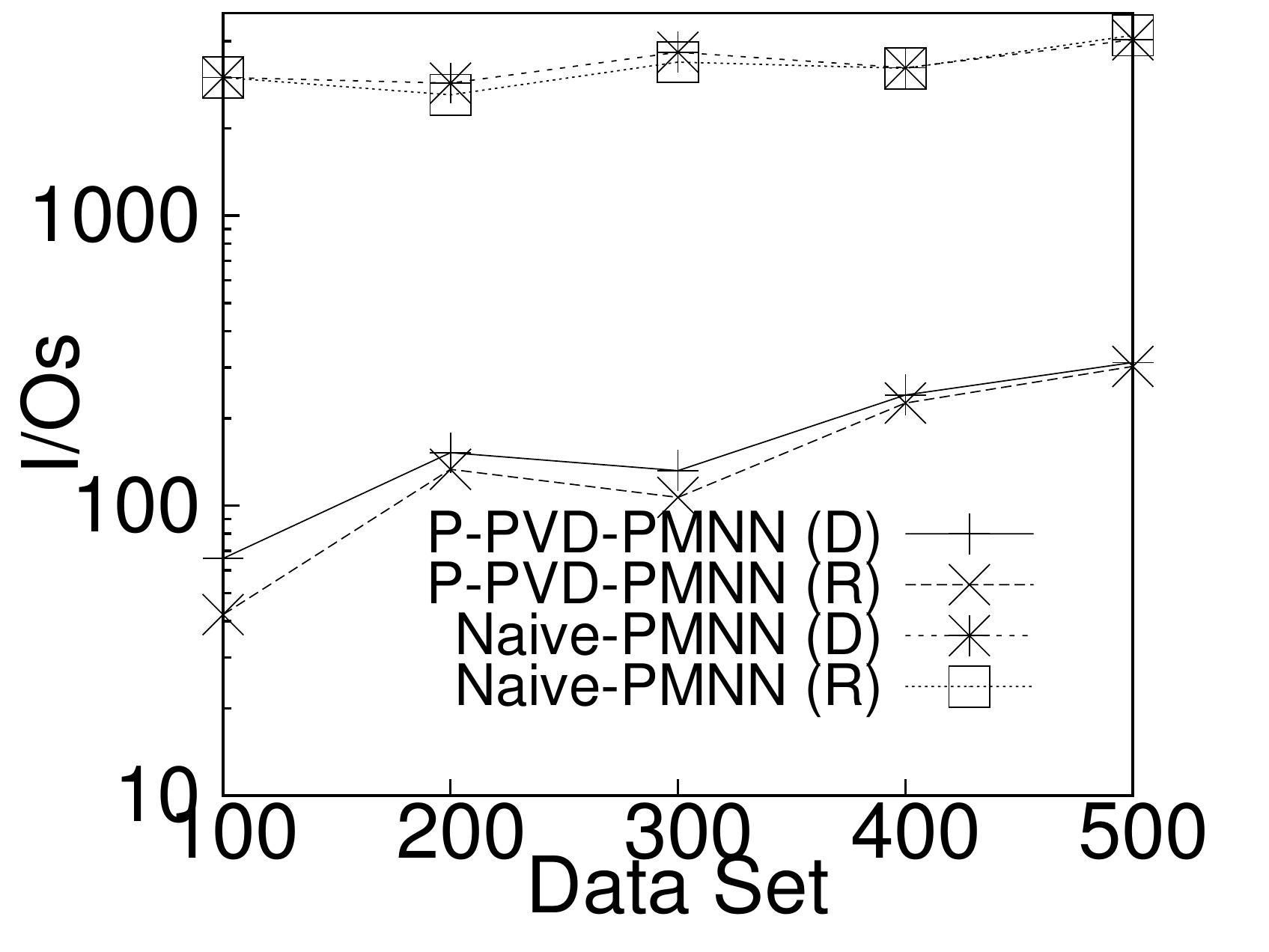}} &
        \hspace{-4mm}

        \resizebox{40mm}{!}{\includegraphics{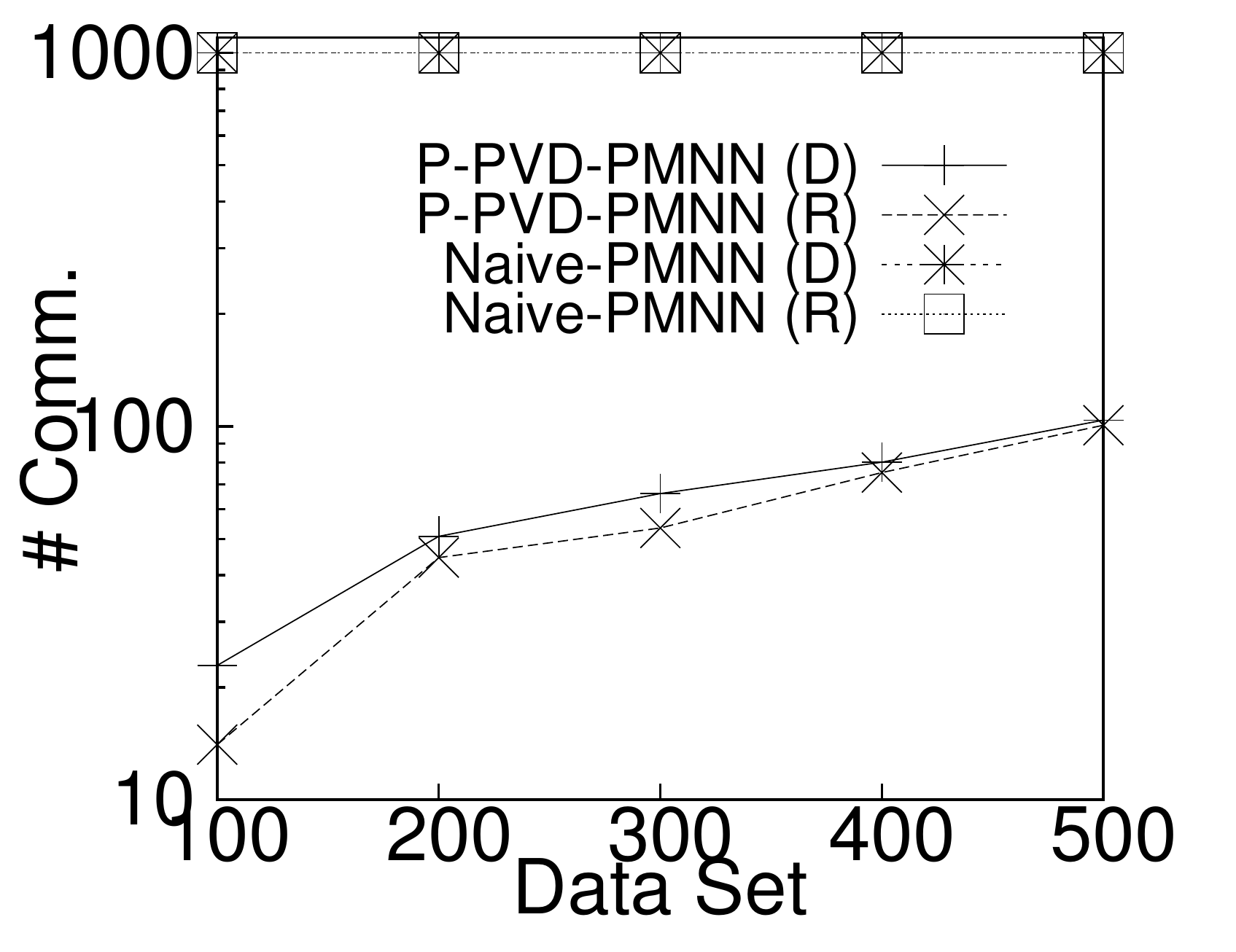}}\\
       \scriptsize{(d)\hspace{0mm}} & \scriptsize{(e)} & \scriptsize{(f)}\\
      \end{tabular}
    \caption{The effect of the data set size in U (a-c), Z (d-f) for 1D data}
    \label{fig:vds1D}
  \end{center}
\end{figure*}

\begin{figure*}[htbp]
  \begin{center}
    \begin{tabular}{cccc}
        \hspace{-5mm}
      \resizebox{40mm}{!}{\includegraphics{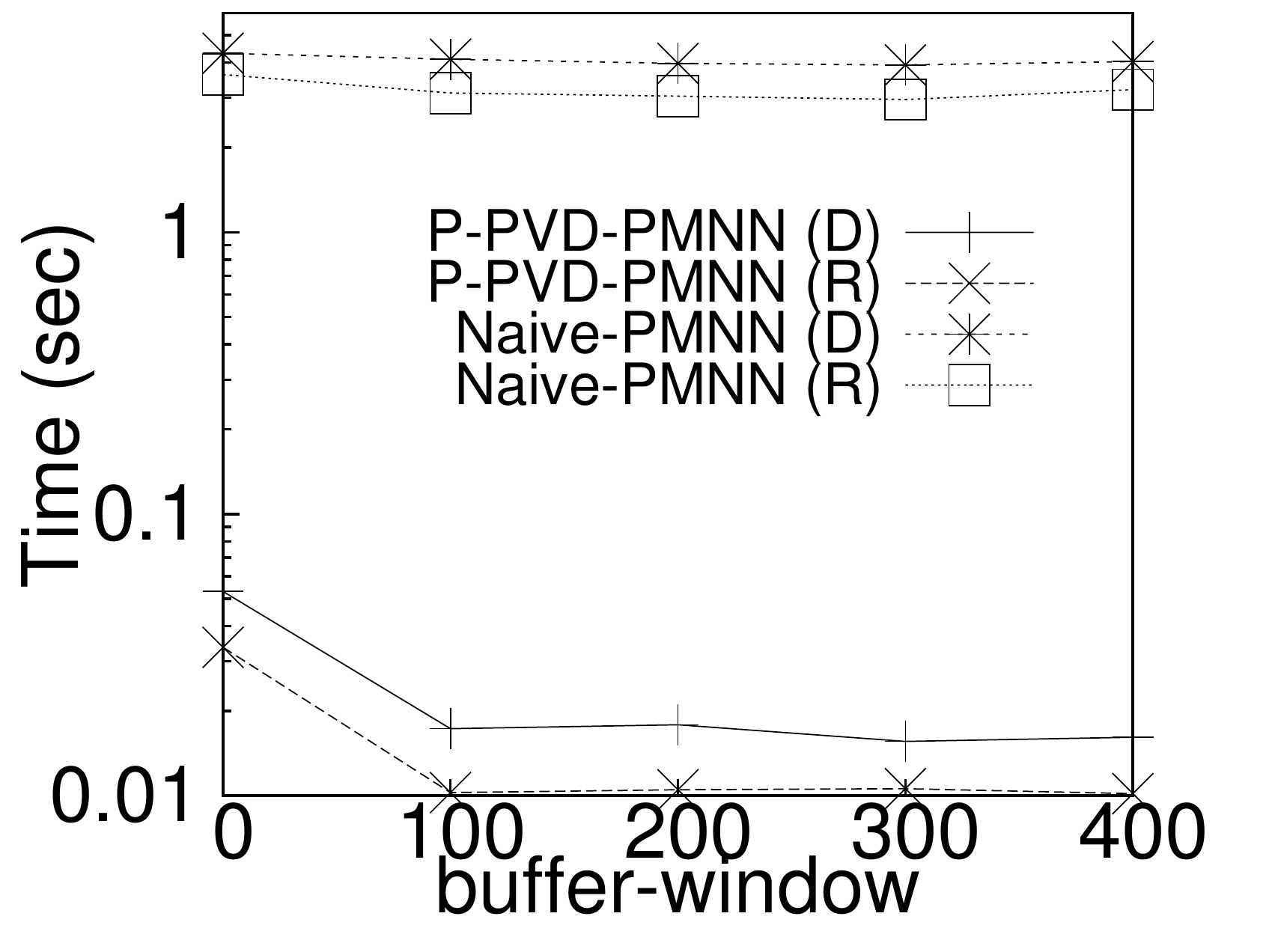}} &
        \hspace{-4mm}
        \resizebox{40mm}{!}{\includegraphics{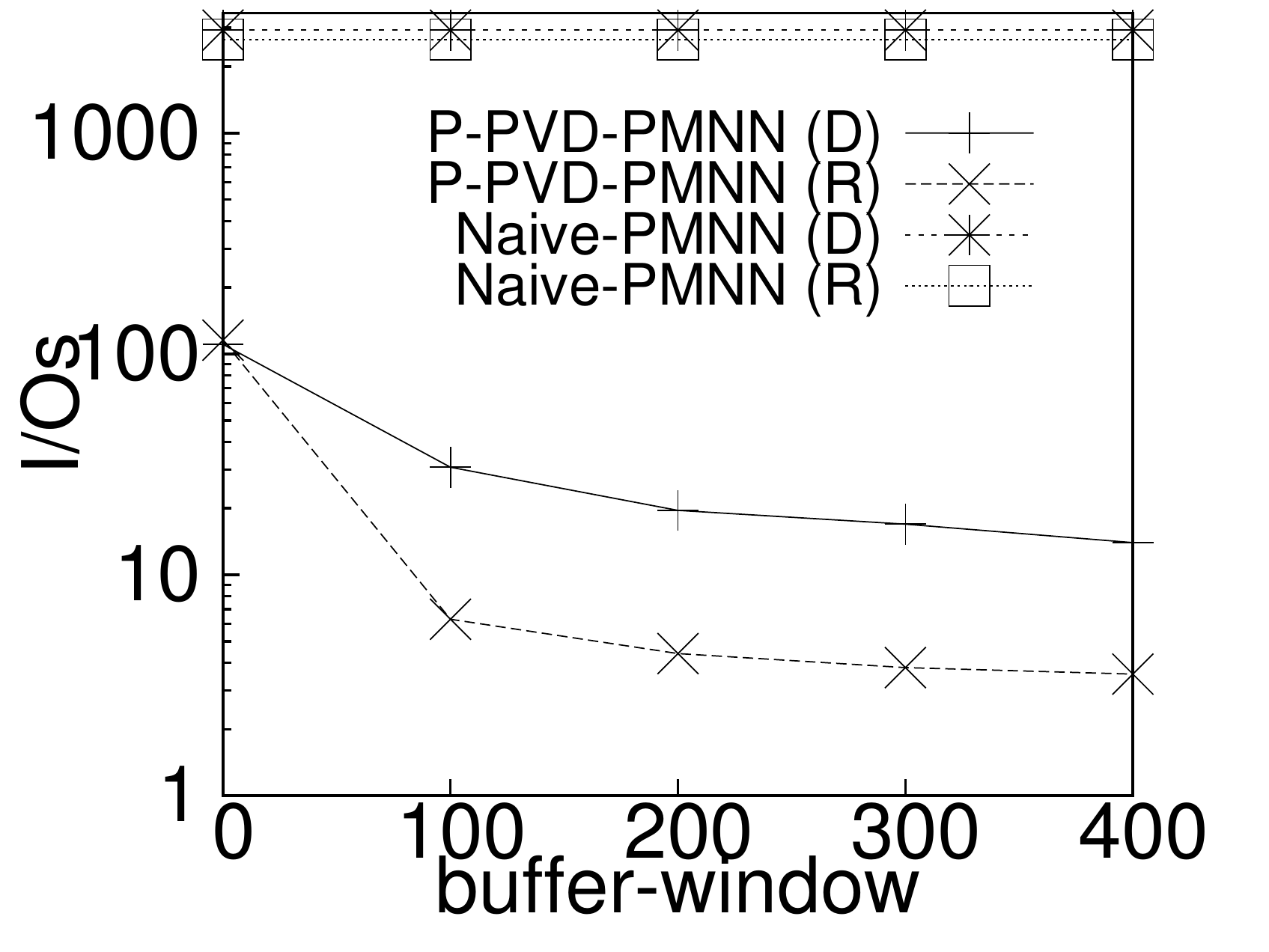}} &
         \hspace{-4mm}
        \resizebox{40mm}{!}{\includegraphics{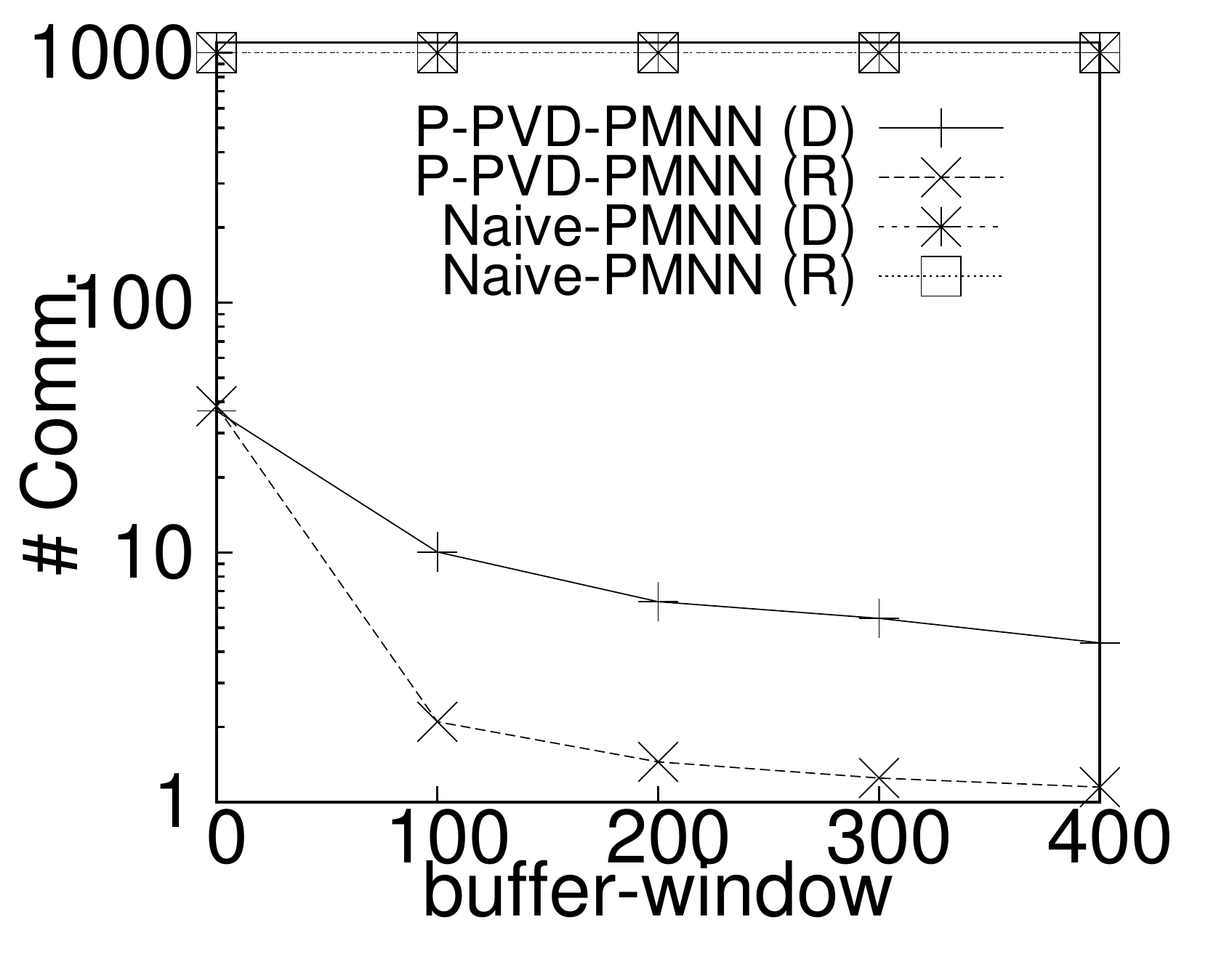}}\\
       \scriptsize{(a)\hspace{0mm}} & \scriptsize{(b)} & \scriptsize{(c)}\\
      \end{tabular}
      \begin{tabular}{cccc}
        \hspace{-5mm}
      \resizebox{40mm}{!}{\includegraphics{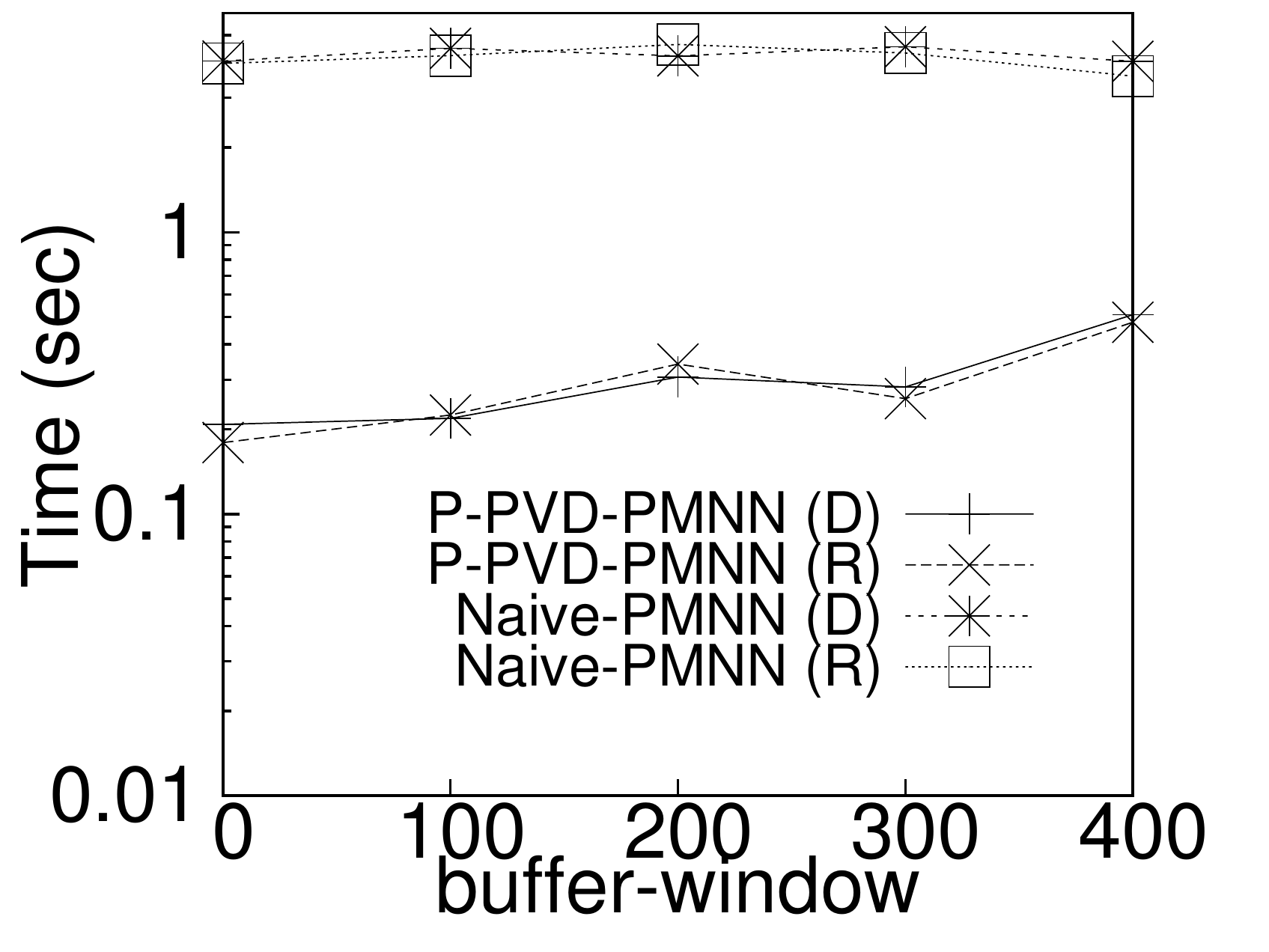}} &
        \hspace{-4mm}
        \resizebox{40mm}{!}{\includegraphics{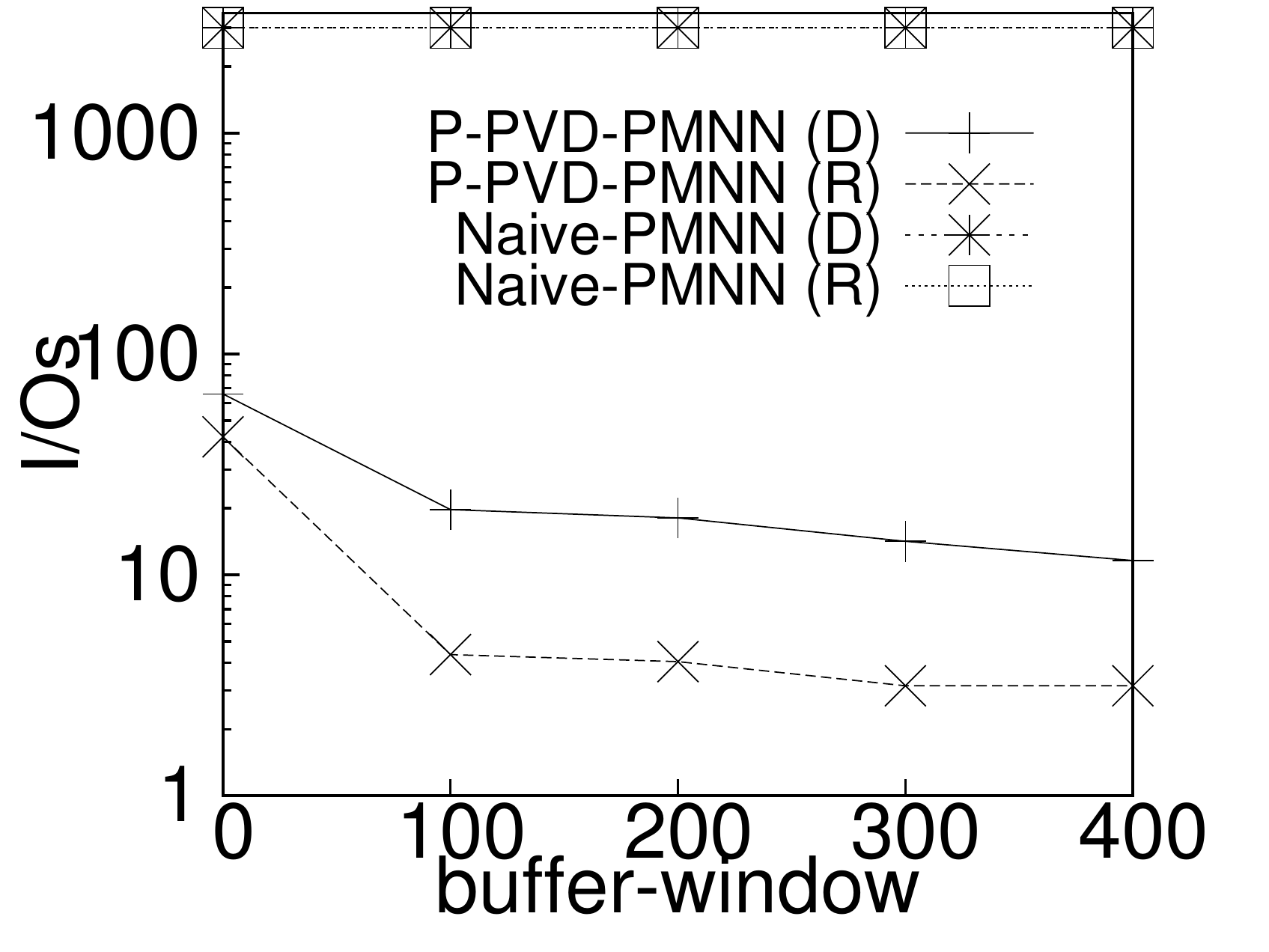}} &
          \hspace{-4mm}
        \resizebox{40mm}{!}{\includegraphics{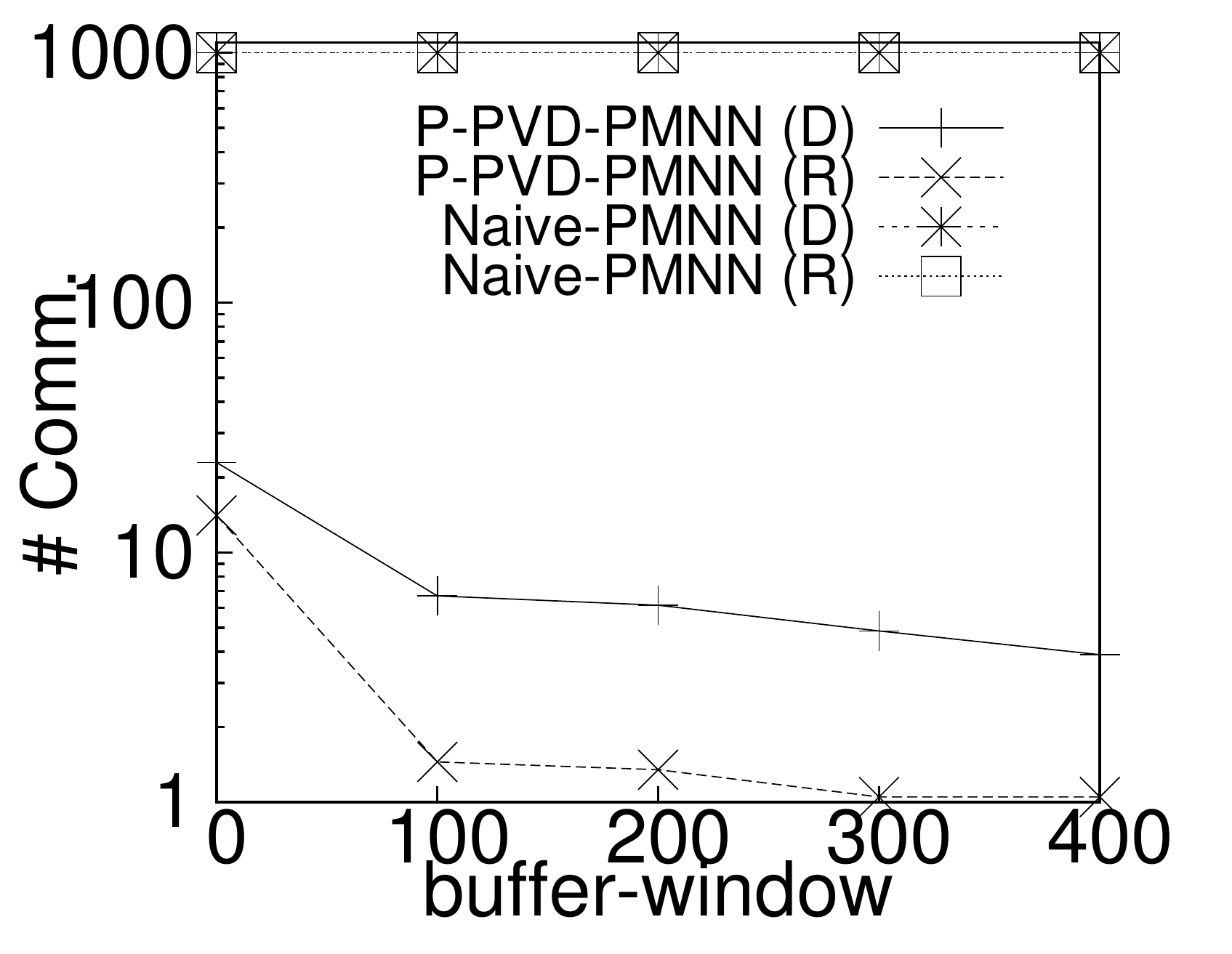}}\\
       \scriptsize{(c)\hspace{0mm}} & \scriptsize{(d)}\\
      \end{tabular}
    \caption{The effect of buffer window in U (a-c), Z (d-f) for 1D data}
    \label{fig:vr1D}
  \end{center}
\end{figure*}

\noindent\emph{\textbf{Experiments with 1D Data Sets:}}

For 1D data, we have run a similar set of experiments to 2D ones, where we vary the length of the query trajectory, the data set size, and the size of the buffer window.

\emph{Effect of the Length of a Query Trajectory}: In this set of
experiments, we vary the query trajectory length from 1000 to 5000
units while evaluating a PMNN query for 1D data sets. We run the
experiments for both U (see Figures~\ref{fig:vq1D}(a)-(c)) and Z
(see Figures~\ref{fig:vq1D}(d)-(f)) data sets. The data set size
is set to 100. We can see that, for both P-PVD-PMNN and
Naive-PMNN, the processing time, I/O costs, and number of
communications increase with the increase of the query trajectory
length for 1D sets, which is expected. Figures also show that our
P-PVD-PMNN outperforms the Naive-PMNN by at least an order of
magnitude in terms of processing time, I/Os, and communication
costs.

\emph{Effect of Data Set Size}: We also run the experiments with
varying data set size (see Figures~\ref{fig:vds1D}(a)-(c) for U
and Figures~\ref{fig:vds1D}(d)-(f) for Z data sets). In these
experiments, the trajectory length is set to 5000 units. Figures
show that, for P-PVD-PMNN, the processing time, I/O costs and
number of communications increase with the increase of data set
size for 1D sets. This is because, for a larger data set, we have
smaller PVCs and thereby a moving query needs to check higher
number of PBRs than that of a smaller data set.
Figures~\ref{fig:vds1D}(a)-(f) also show that our P-PVD-PMNN
outperforms Naive-PMNN by at least an order of magnitude in all
evaluation metrics. The results also show that P-PVD-PMNN performs
similar for both directional (D) and random (R) query movement
paths.

\emph{Effect of Buffer Window}: In this set of experiments, we
study the impact of introducing a buffer for processing a PMNN
query. We vary the value of buffer window from 0 to 400 units, and
then run the experiments for U (see Figures~\ref{fig:vr1D}(a)-(c))
and Z (see Figures~\ref{fig:vr1D}(d)-(f)) data sets. In these
experiments, we set the data set size to 100 and the trajectory
length to 5000 units. \eat{Like 2D data, the results show that the
processing time, I/O cost, and the number of communications
decrease with the increase of the buffer window. This is because,
for a large buffer window P-PVD-PMNN retrieves more PVCs at a time
than that of a smaller buffer window, and thereby reduces
communication, I/O and processing costs. However, for Zipfian data
the processing time slightly increases with the increase of the
buffer window, but the I/O cost always decreases with the increase
of the buffer window. Since the in-memory processing cost
increases with the increase of the buffer window, the total
processing costs can also increase with the increase of the buffer
window.}The experimental results show that P-PVD-PMNN outperforms
Naive-PMNN by 1-2 orders of magnitude for I/O and processing
costs, and 2-3 orders of magnitude in terms of communication
costs.

\subsubsection{Incremental Approach}
\label{subsubsec:ipvd}

In the incremental approach, we use an $R^{*}$-tree to index the
MBRs of uncertain objects, for both I-PVD-PMNN and Naive-PMNN.
In both cases, we use the page size of 1KB and the node capacity
of 50 entries for the $R^{*}$-tree.

\begin{figure*}[htbp]
  \begin{center}
    \begin{tabular}{cccc}
        \hspace{-5mm}
      \resizebox{40mm}{!}{\includegraphics{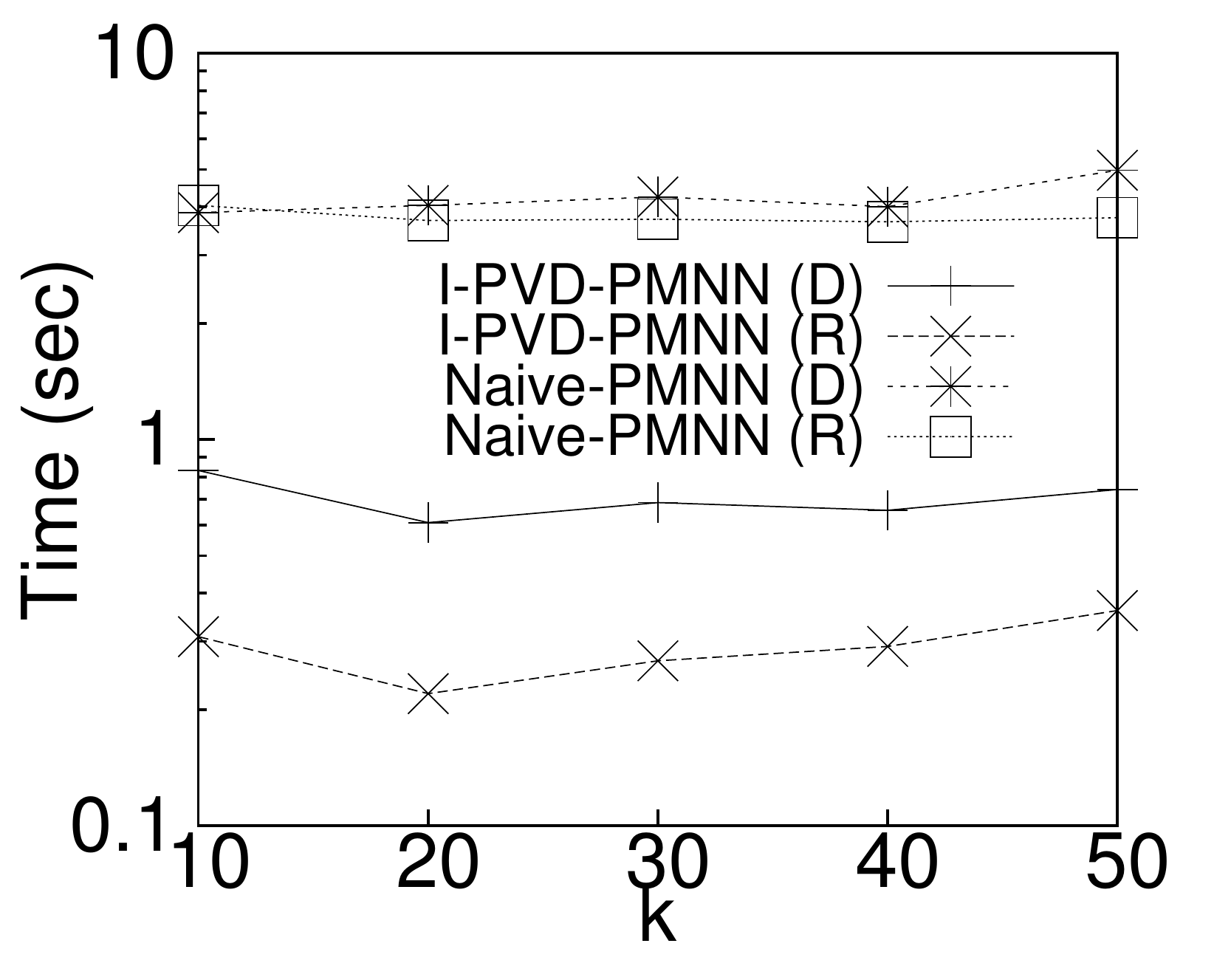}} &
        \hspace{-4mm}
        \resizebox{40mm}{!}{\includegraphics{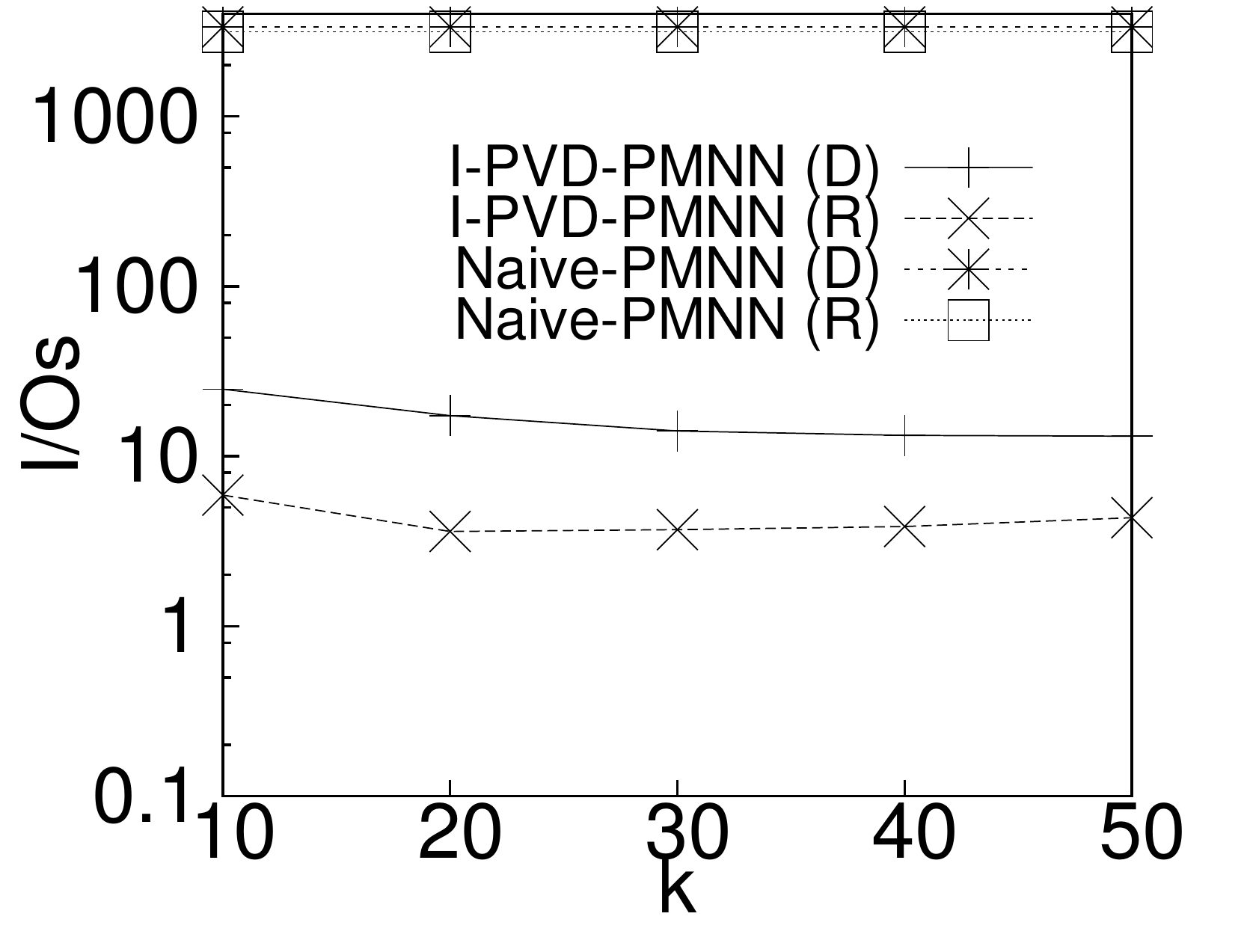}} &
         \hspace{-4mm}
        \resizebox{40mm}{!}{\includegraphics{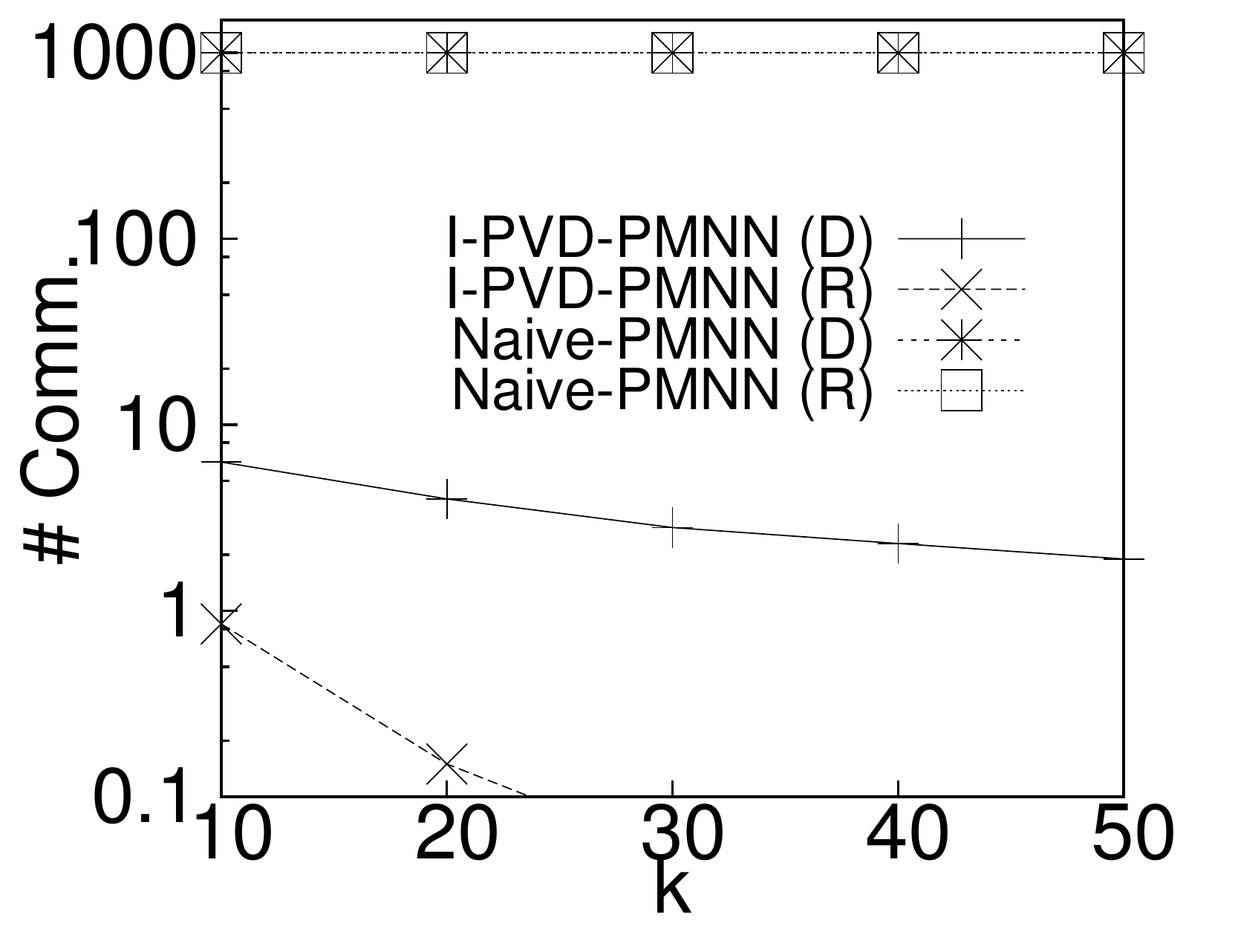}}\\
       \scriptsize{(a)\hspace{0mm}} & \scriptsize{(b)} & \scriptsize{(c)}\\
      \end{tabular}
      \begin{tabular}{cccc}
        \hspace{-5mm}
      \resizebox{40mm}{!}{\includegraphics{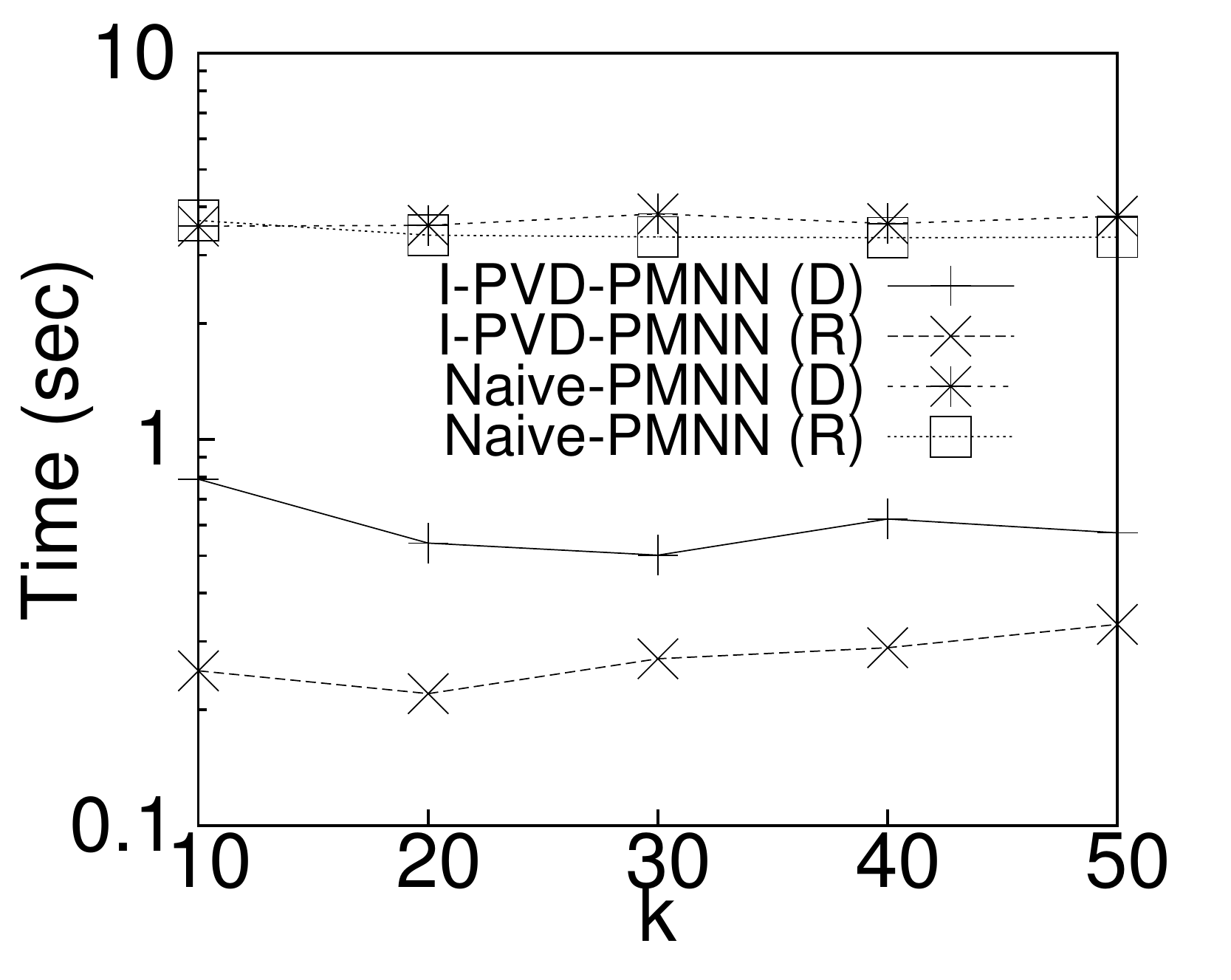}} &
        \hspace{-4mm}
        \resizebox{40mm}{!}{\includegraphics{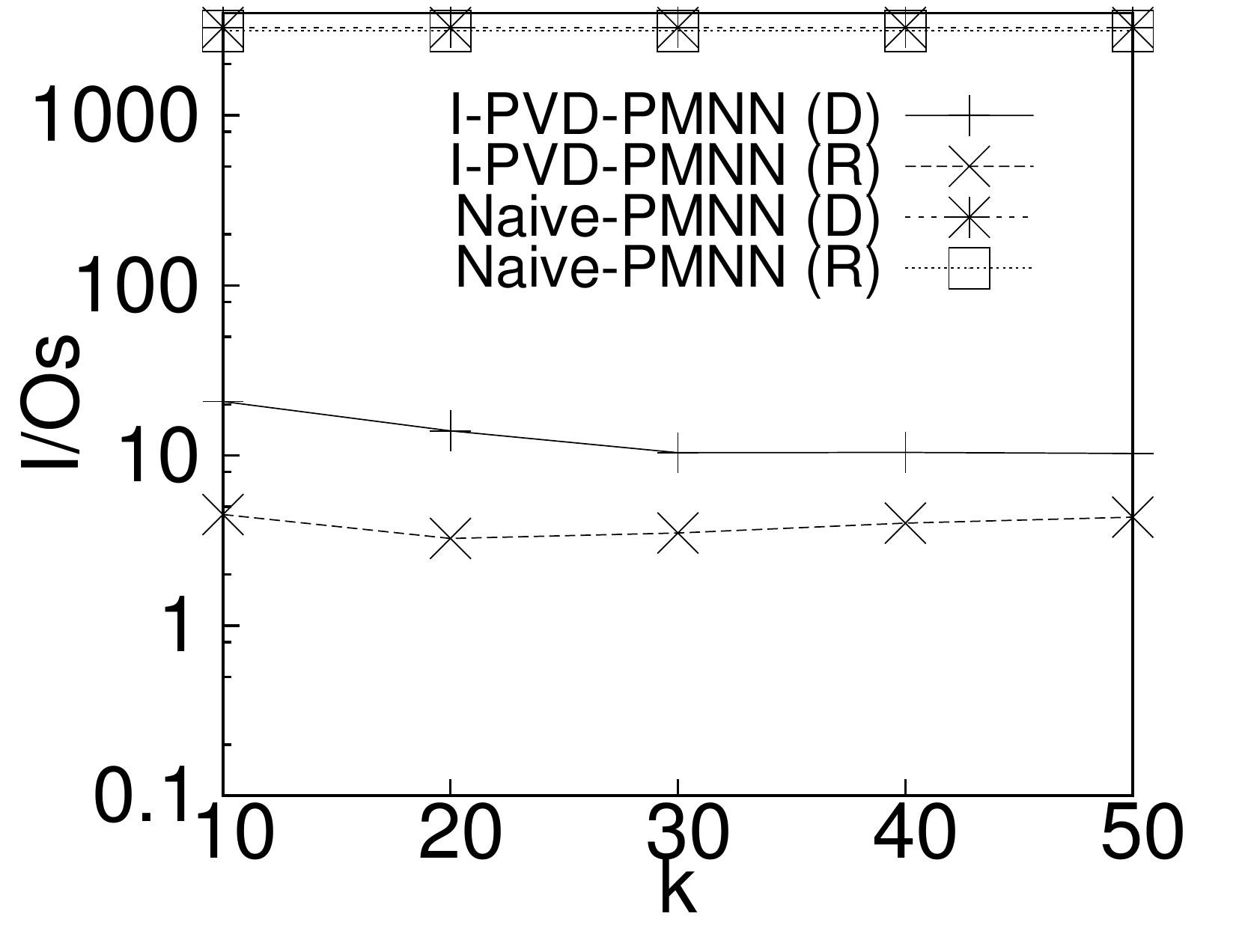}} &
          \hspace{-4mm}
        \resizebox{40mm}{!}{\includegraphics{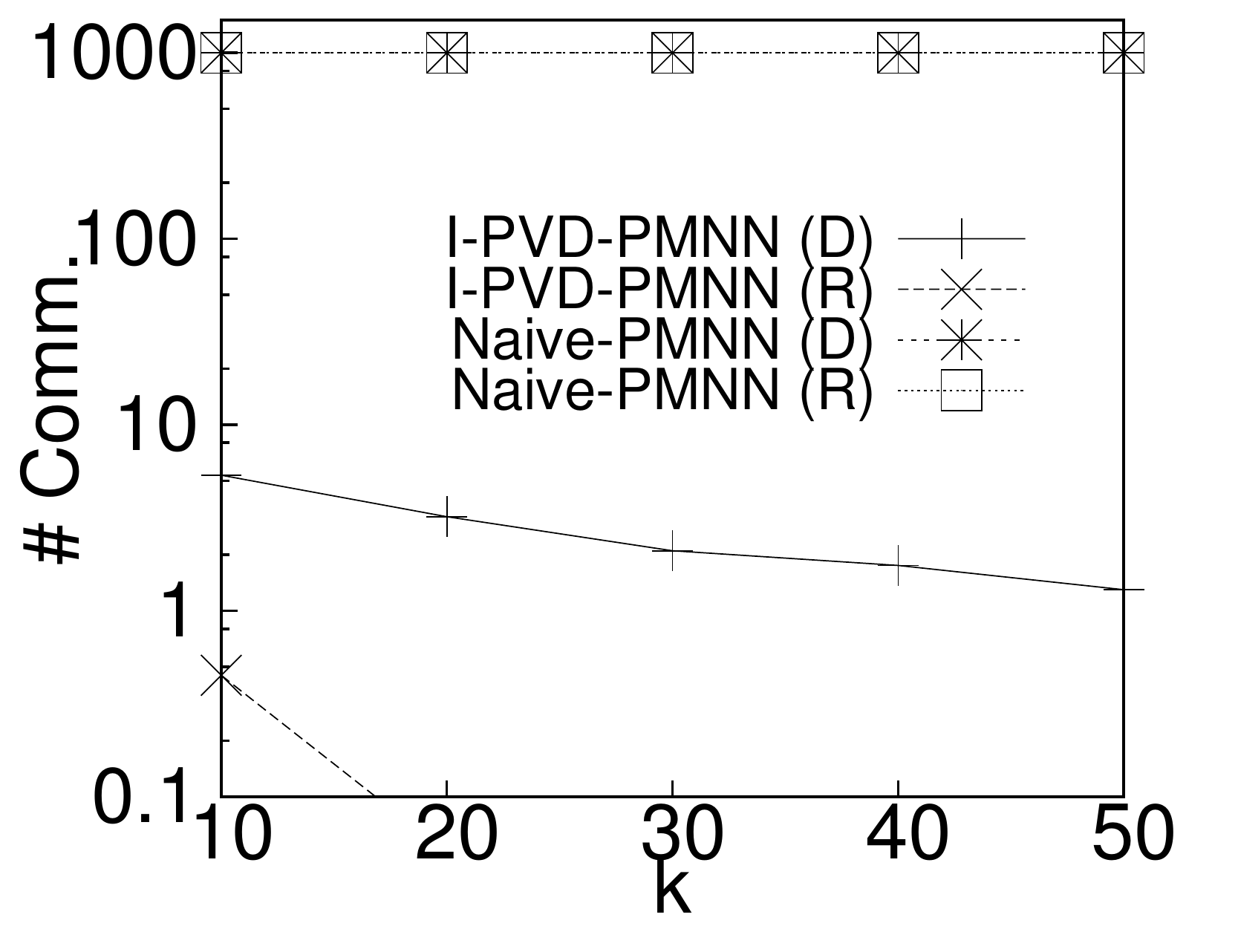}}\\
        \scriptsize{(d)\hspace{0mm}} & \scriptsize{(e)} & \scriptsize{(f)}\\
      \end{tabular}
          \begin{tabular}{cccc}
        \hspace{-5mm}
      \resizebox{40mm}{!}{\includegraphics{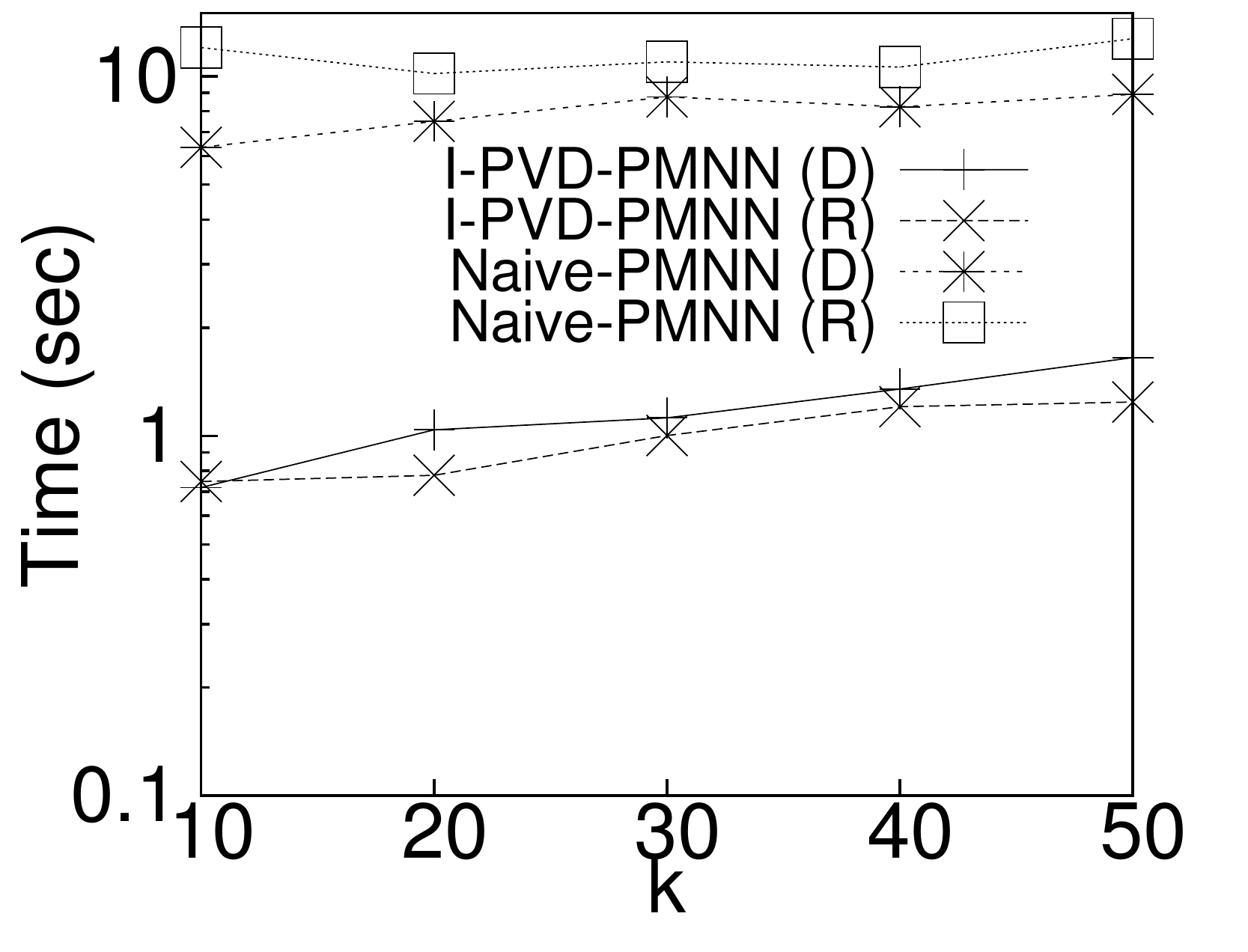}} &
        \hspace{-4mm}
        \resizebox{40mm}{!}{\includegraphics{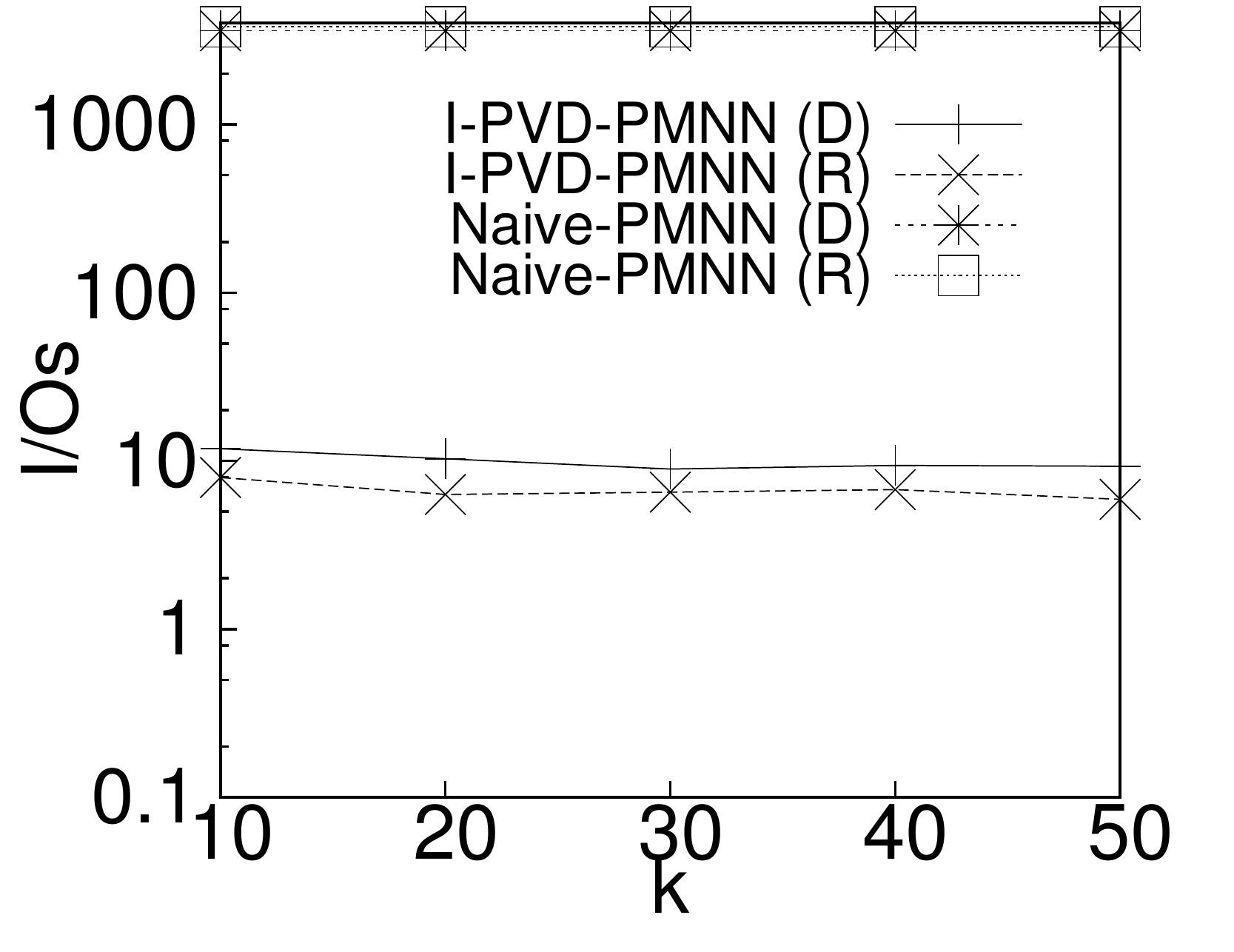}} &
          \hspace{-4mm}
        \resizebox{40mm}{!}{\includegraphics{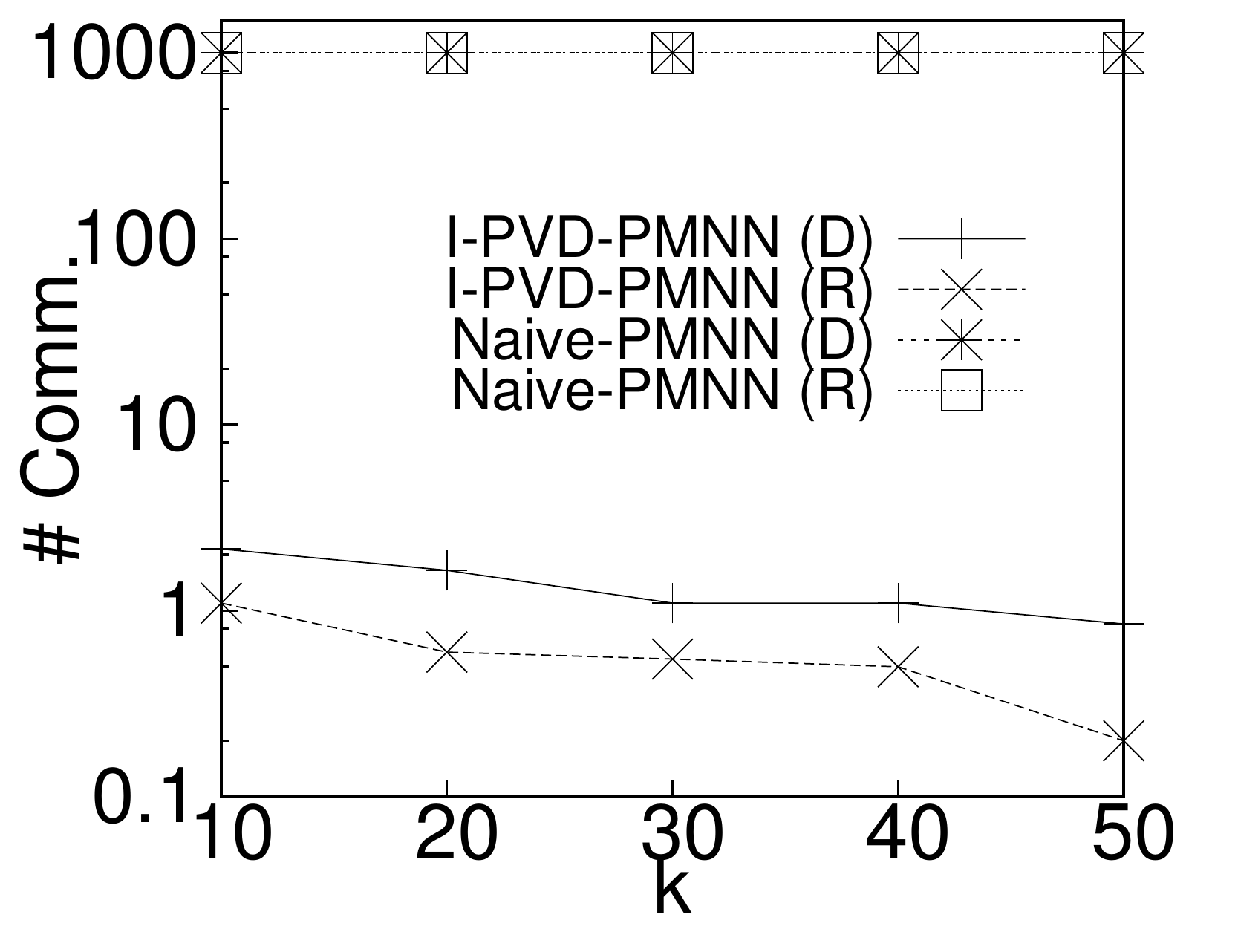}}\\
          \scriptsize{(g)\hspace{0mm}} & \scriptsize{(h)} & \scriptsize{(i)}\\
      \end{tabular}
    \caption{The effect of (\emph{k}) in U (a-c), Z (d-f), and L (g-i)}
    \label{fig:vk}
  \end{center}
\end{figure*}

\noindent\emph{\textbf{Experiments with 2D Data Sets:}}

We vary the following parameters in our experiments: the value of $k$ (i.e., the number of objects retrieved at each step), the data set size, and the length of the query trajectory, and compare the performance of I-PVD-PMNN with Naive-PMNN.

\emph{Effect of $k$: }In this set of experiments, we study the
impact of $k$ in the performance measure for processing a PMNN
query. We vary the value of $k$ from 10 to 50, and then run
the experiments for all available data sets (U, Z, and L). In these
experiments, for both U and Z, we have set the data set size to 10K. Figures~\ref{fig:vk}(a)-(c) show the processing time, the
I/O costs, and the number of communications, respectively, for
varying $k$ from 10 to 50 for U data set. Figure~\ref{fig:vk}(a)
shows that the processing time almost remains constant for varying
$k$. The processing time of I-PVD-PMNN is on average 6 times less for directional (D) query paths than that of Naive-PMNN, and on average 13 times less for random (R) query paths than that of Naive-PMNN.
On the other hand, Figures~\ref{fig:vk}(b)-(c) show that
I/O costs and the number of communications decrease with the
increase of $k$. This is because, for a larger value of $k$, the
client fetches more data at a time from the server, and thereby
needs to communicate less number of times with the server. Figures also show that our I-PVD-PMNN outperforms the Naive-PMNN by
2-3 orders of magnitude for both I/O and communication costs.

Figures~\ref{fig:vk}(d)-(f) and (g)-(i) show the performance
behaviors of Z and L data sets, respectively, which are
similar to U data set.

\begin{figure*}[htbp]
  \begin{center}
    \begin{tabular}{cccc}
        \hspace{-5mm}
      \resizebox{40mm}{!}{\includegraphics{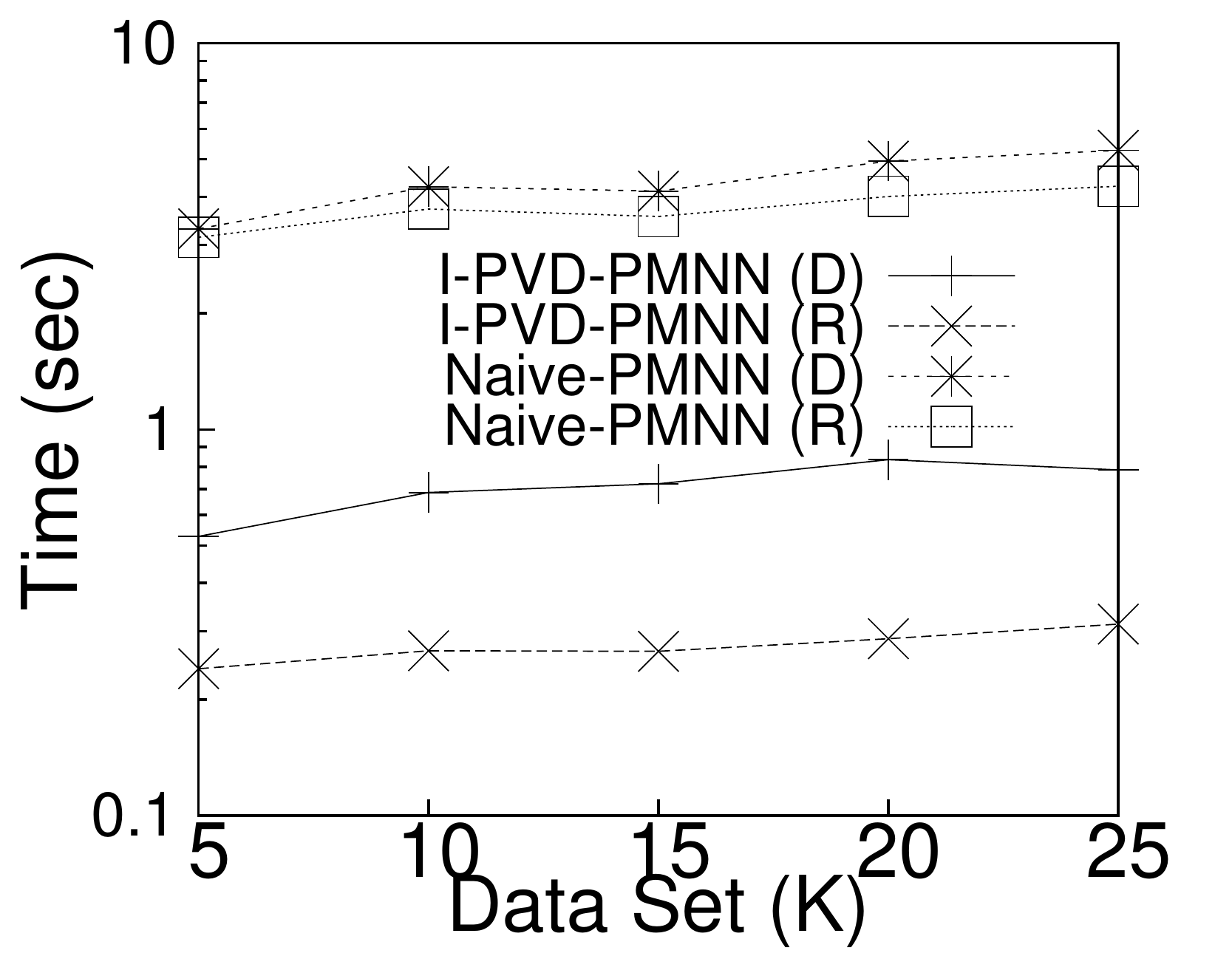}} &
        \hspace{-4mm}

        \resizebox{40mm}{!}{\includegraphics{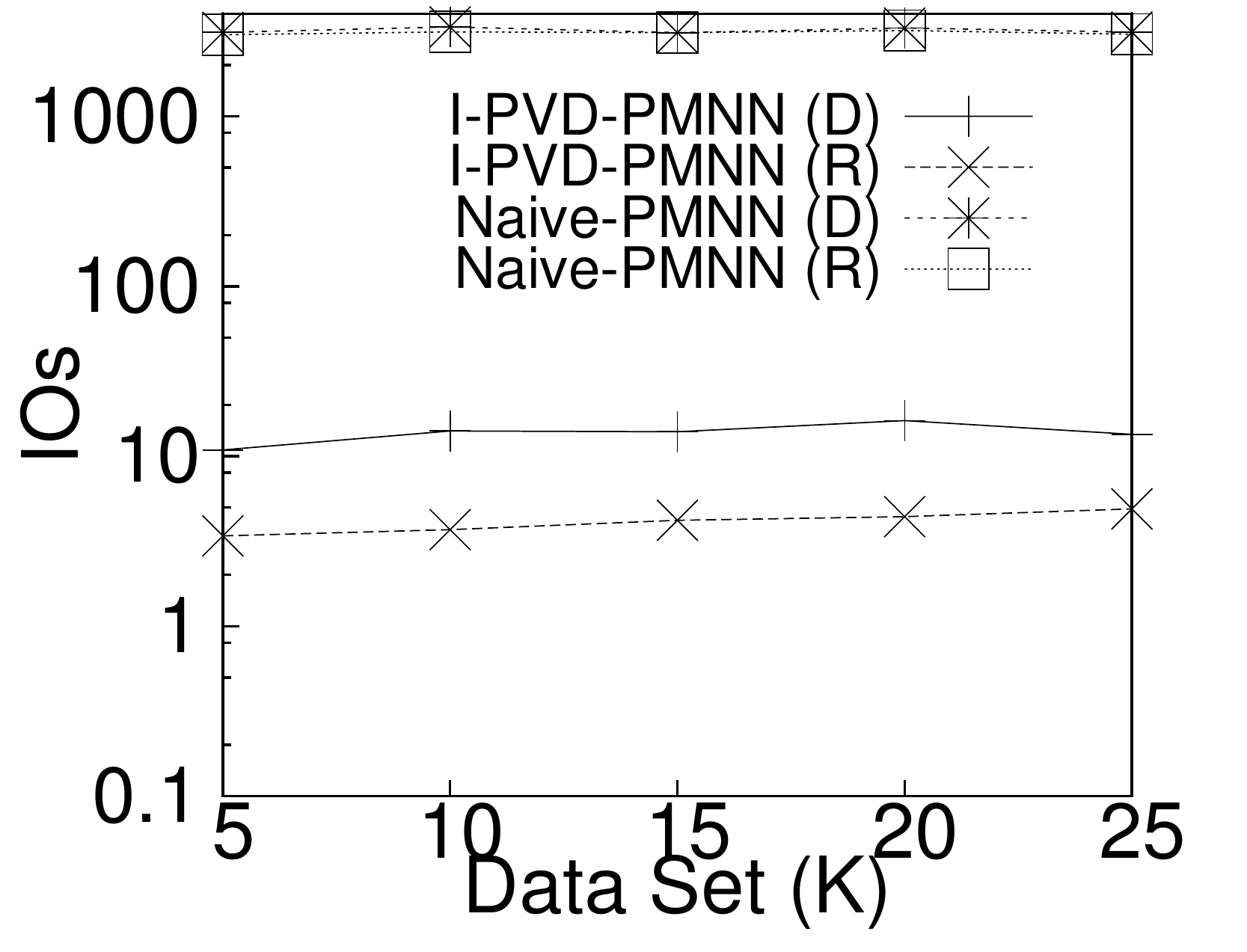}} &
         \hspace{-4mm}

        \resizebox{40mm}{!}{\includegraphics{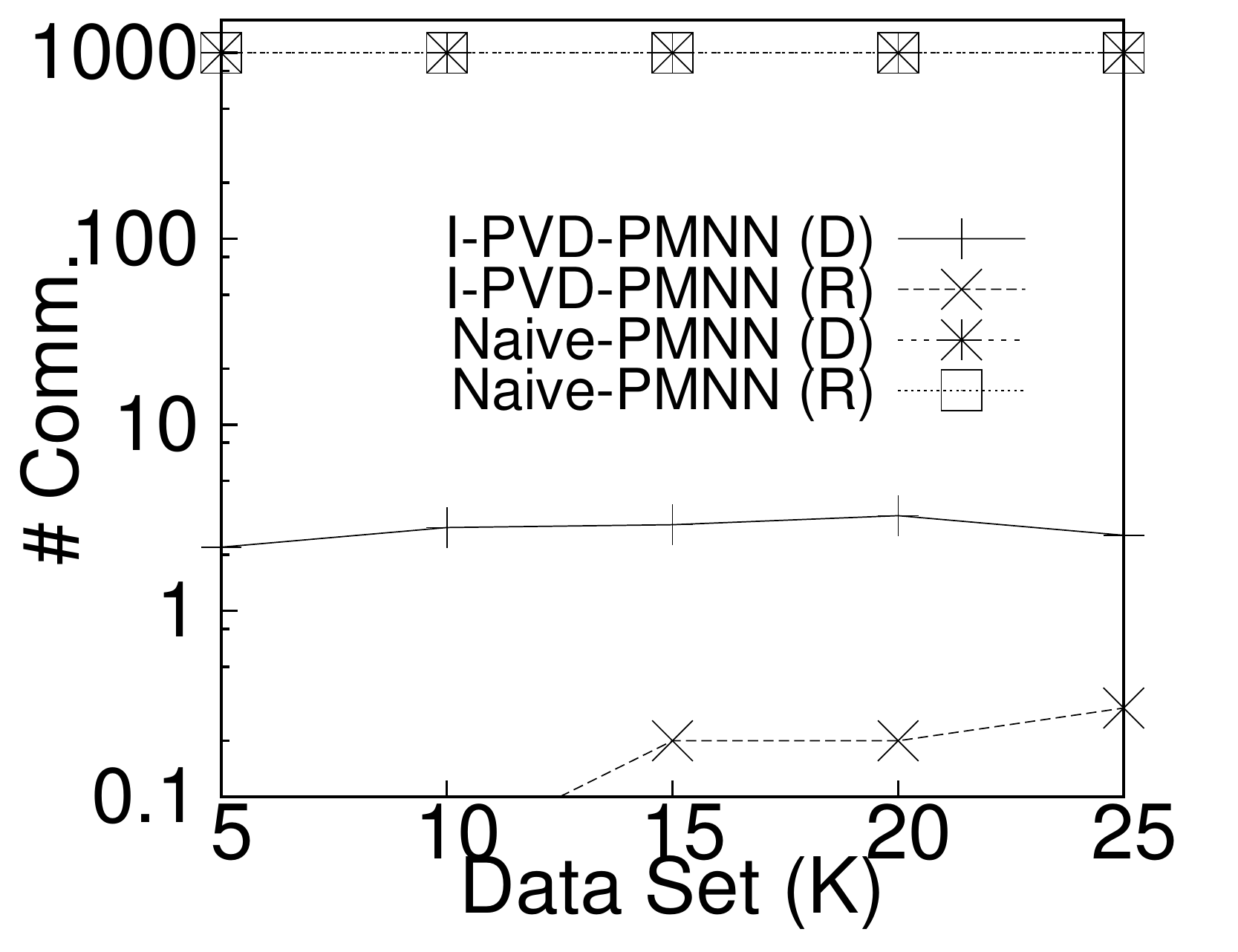}}\\
       \scriptsize{(a)\hspace{0mm}} & \scriptsize{(b)} & \scriptsize{(c)}\\
      \end{tabular}
      \begin{tabular}{cccc}
        \hspace{-5mm}
      \resizebox{40mm}{!}{\includegraphics{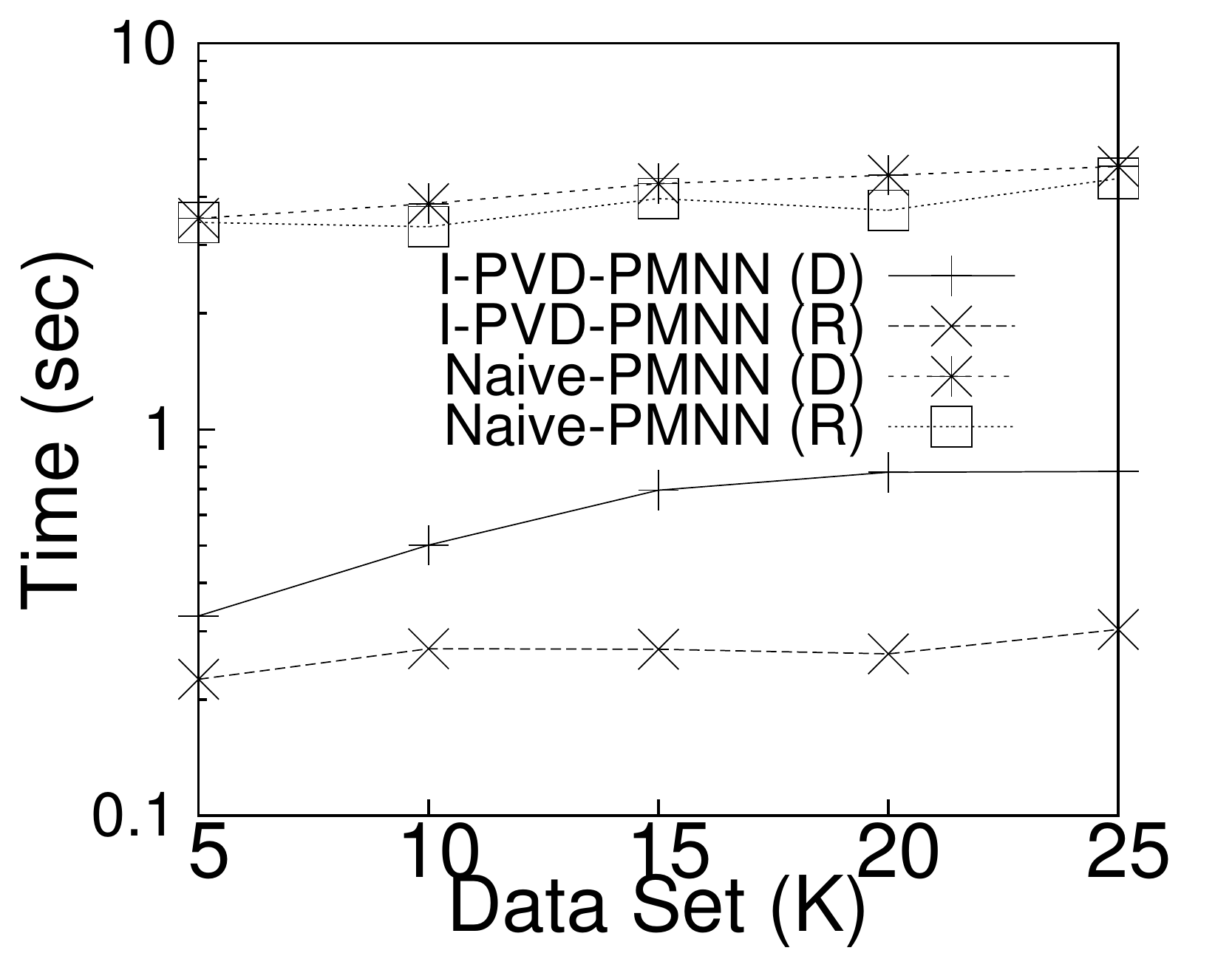}} &
        \hspace{-4mm}

        \resizebox{40mm}{!}{\includegraphics{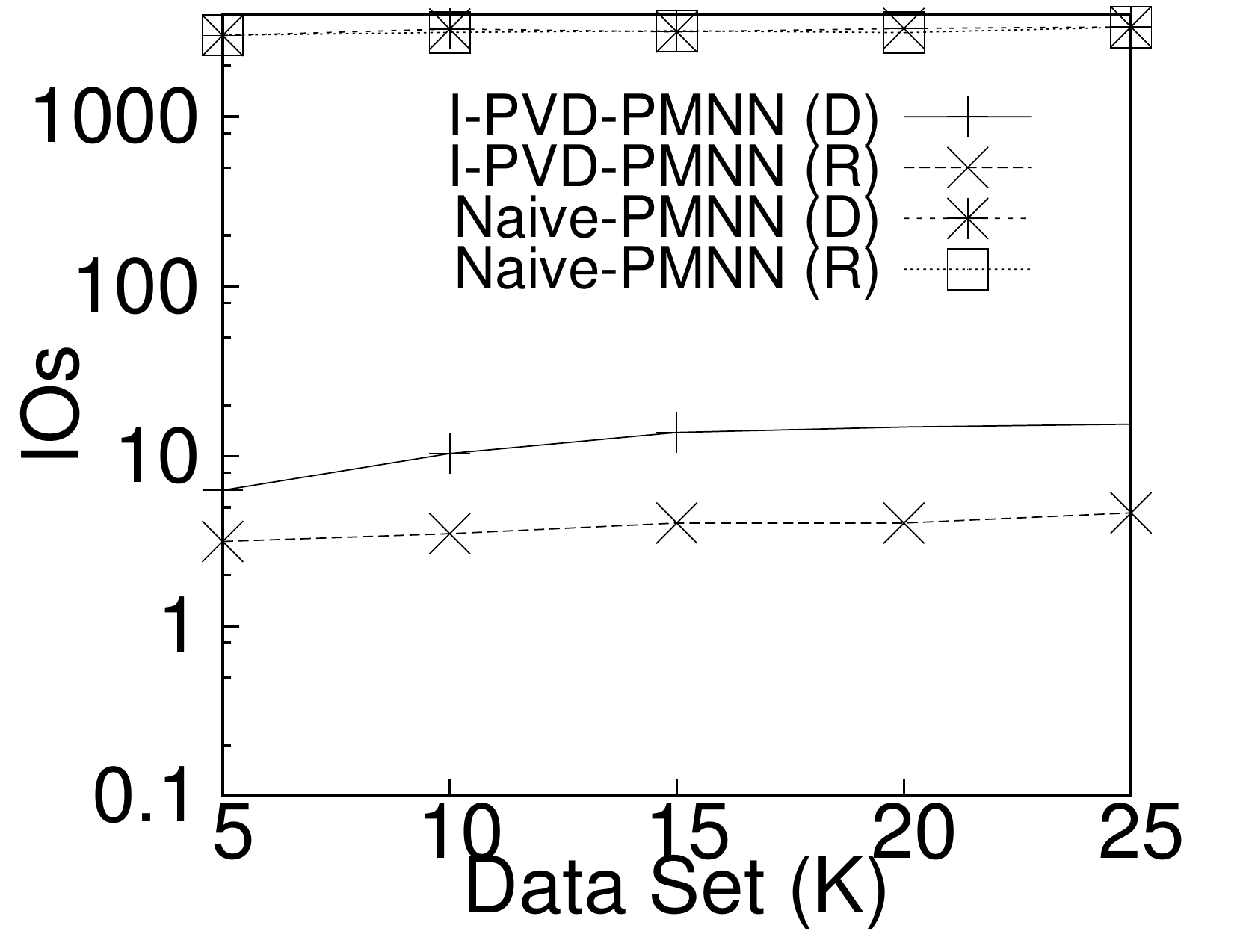}} &
        \hspace{-4mm}

        \resizebox{40mm}{!}{\includegraphics{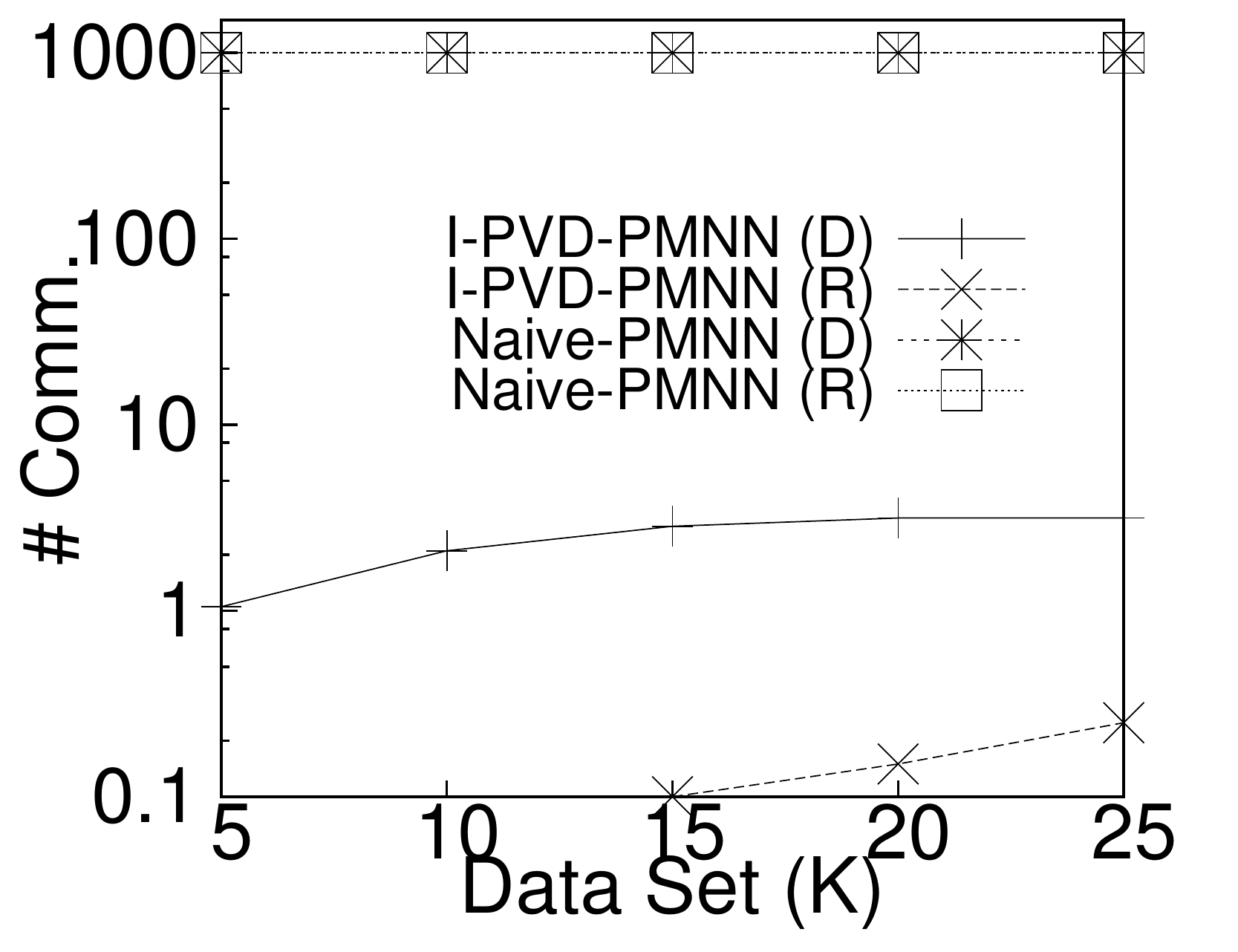}}\\
       \scriptsize{(d)\hspace{0mm}} & \scriptsize{(e)} & \scriptsize{(f)}\\
      \end{tabular}
    \caption{The effect of the data set size in U (a-c), Z (d-f)}
    \label{fig:vd}
  \end{center}
\end{figure*}

\begin{figure*}[htbp]
  \begin{center}
    \begin{tabular}{cccc}
        \hspace{-5mm}
      \resizebox{40mm}{!}{\includegraphics{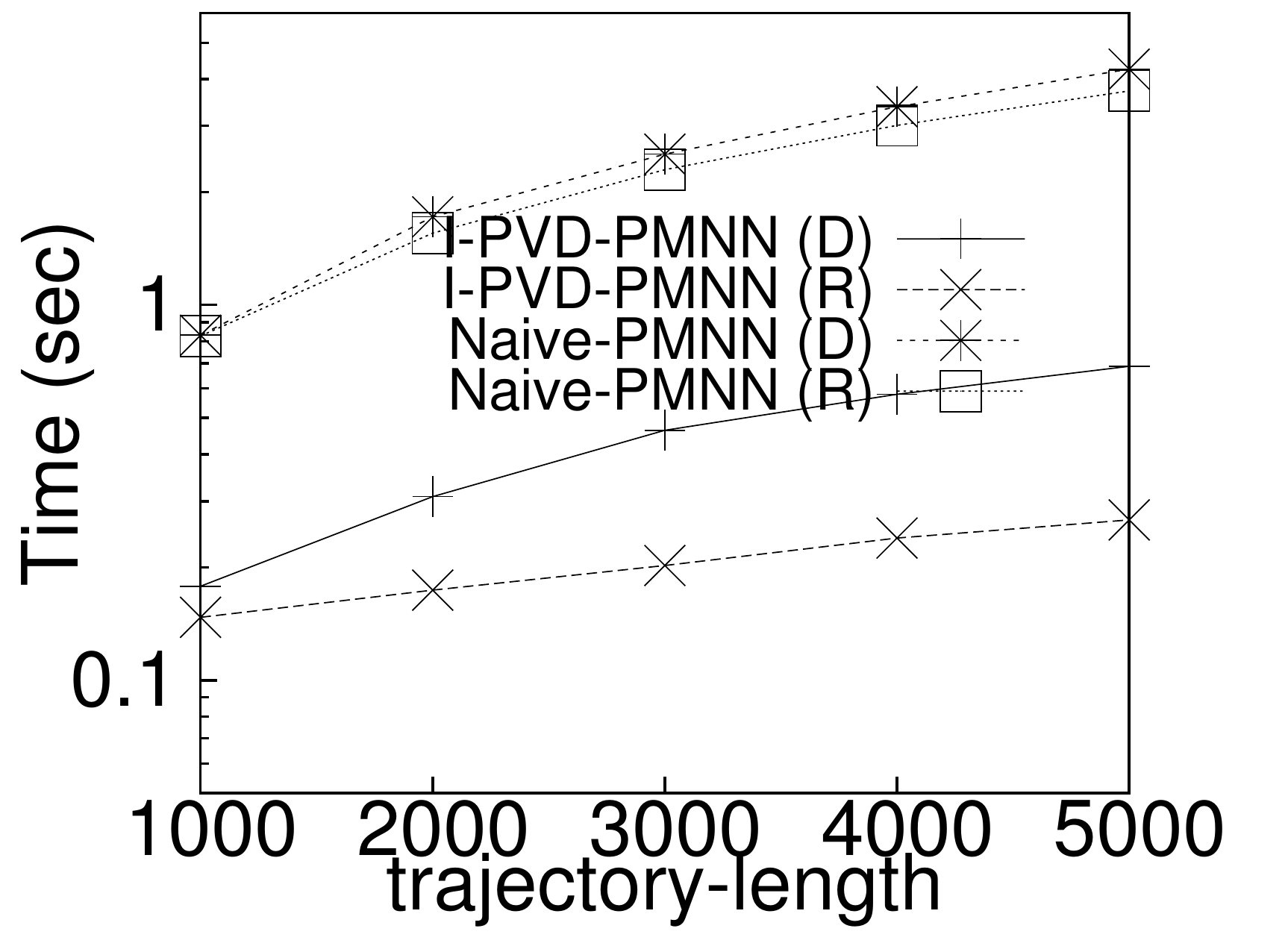}} &
        \hspace{-4mm}

        \resizebox{40mm}{!}{\includegraphics{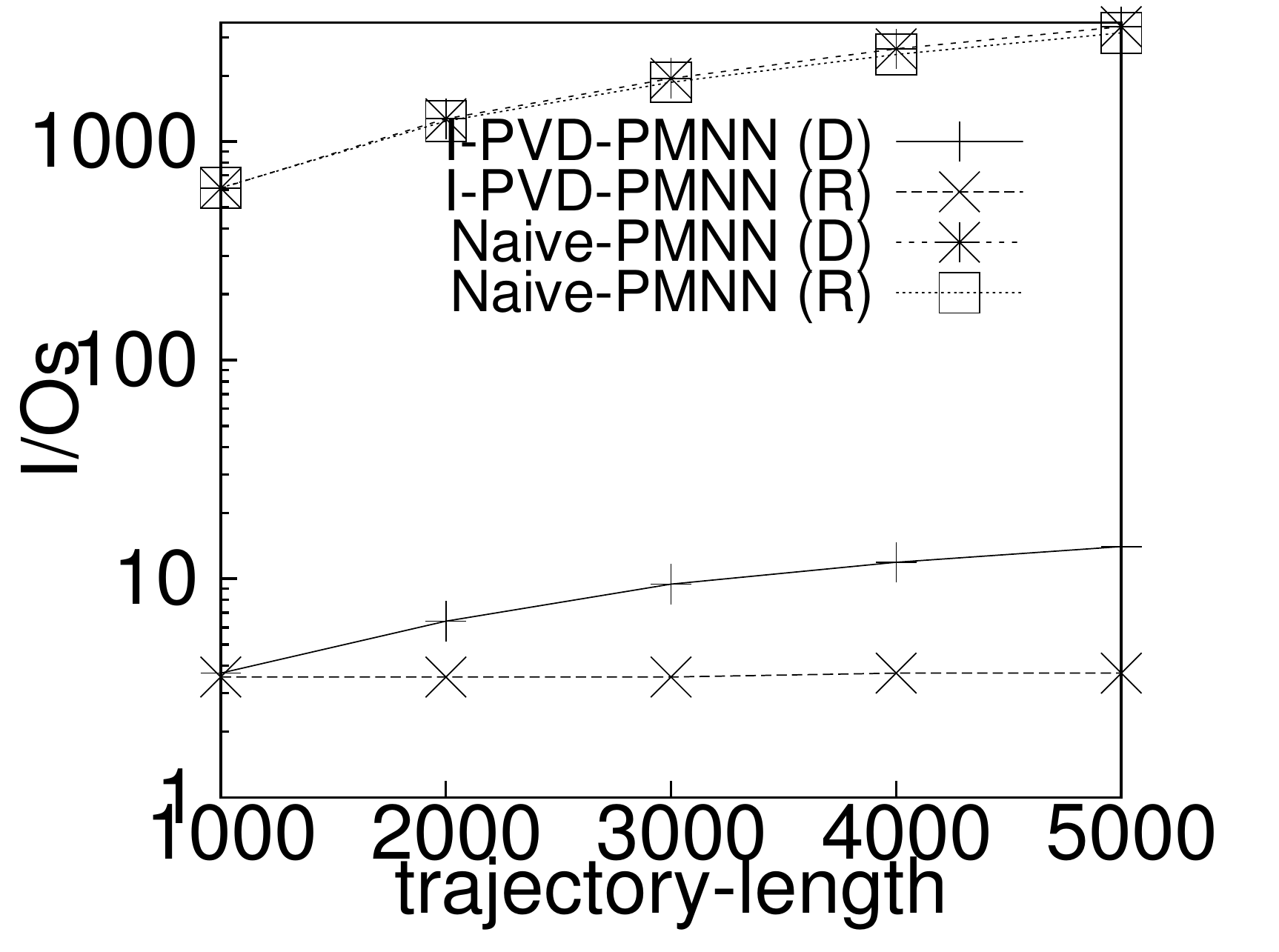}} &
         \hspace{-4mm}

        \resizebox{40mm}{!}{\includegraphics{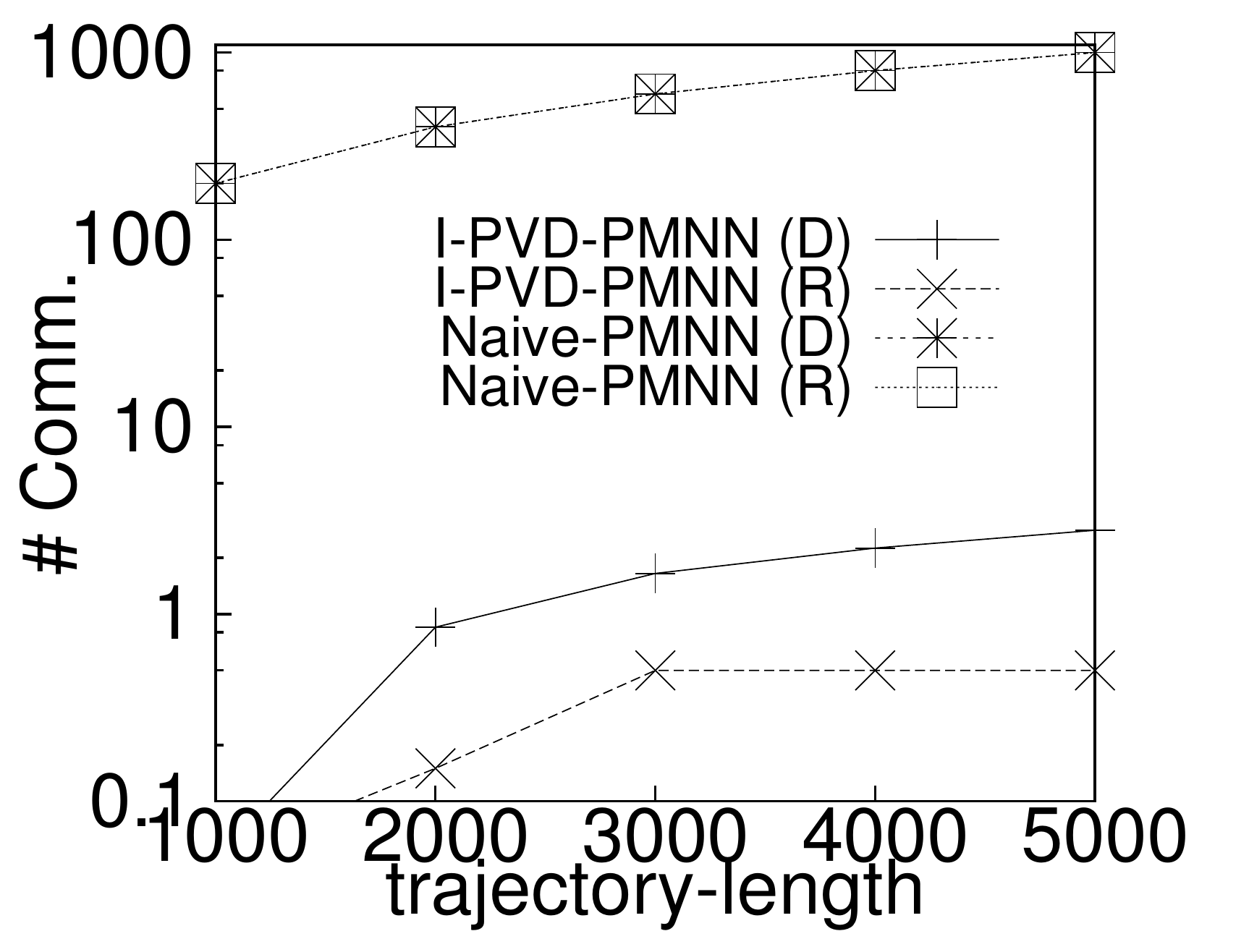}}\\
       \scriptsize{(a)\hspace{0mm}} & \scriptsize{(b)} & \scriptsize{(c)}\\
      \end{tabular}
    \begin{tabular}{cccc}
        \hspace{-5mm}
      \resizebox{40mm}{!}{\includegraphics{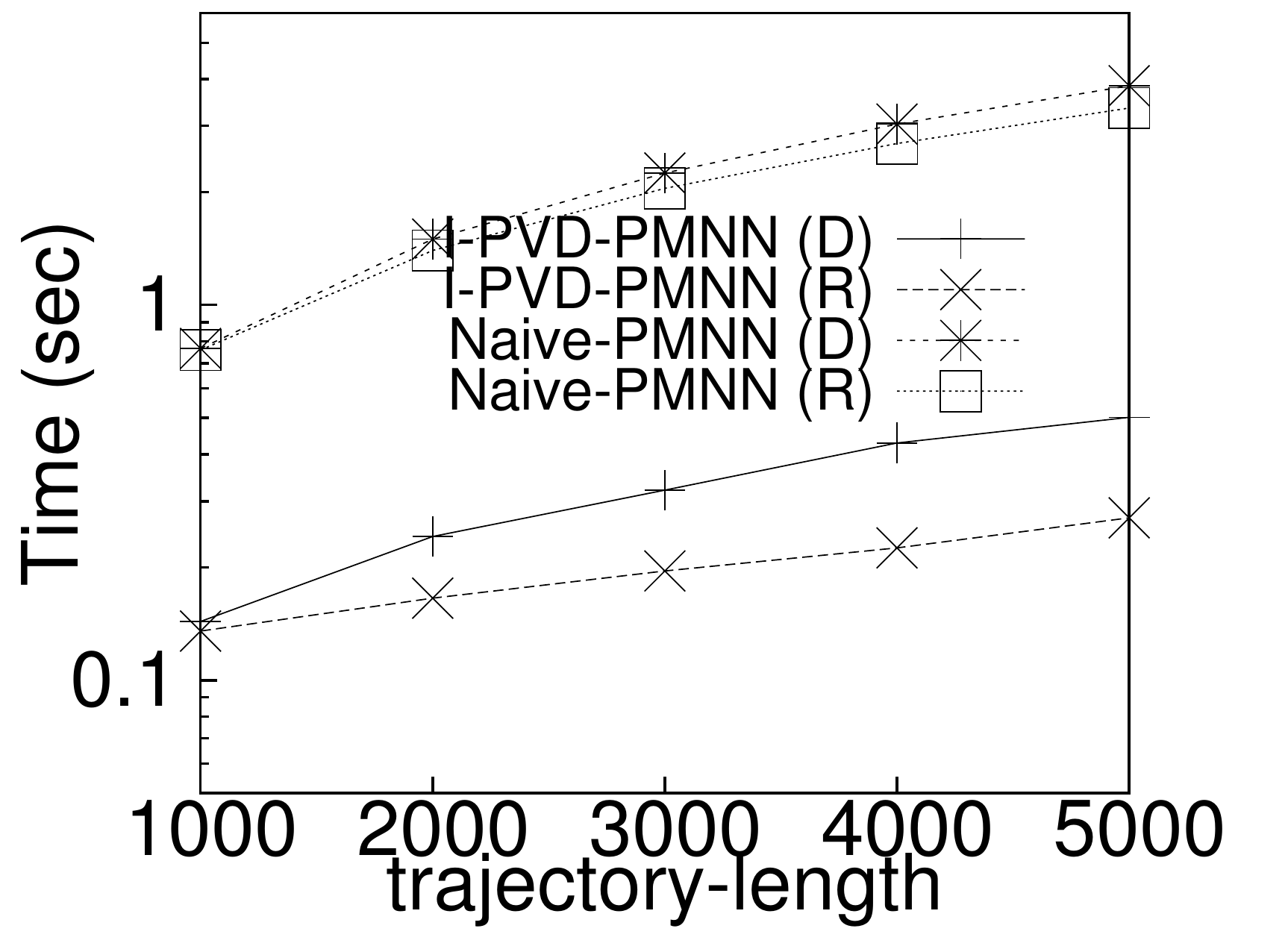}} &
        \hspace{-4mm}

        \resizebox{40mm}{!}{\includegraphics{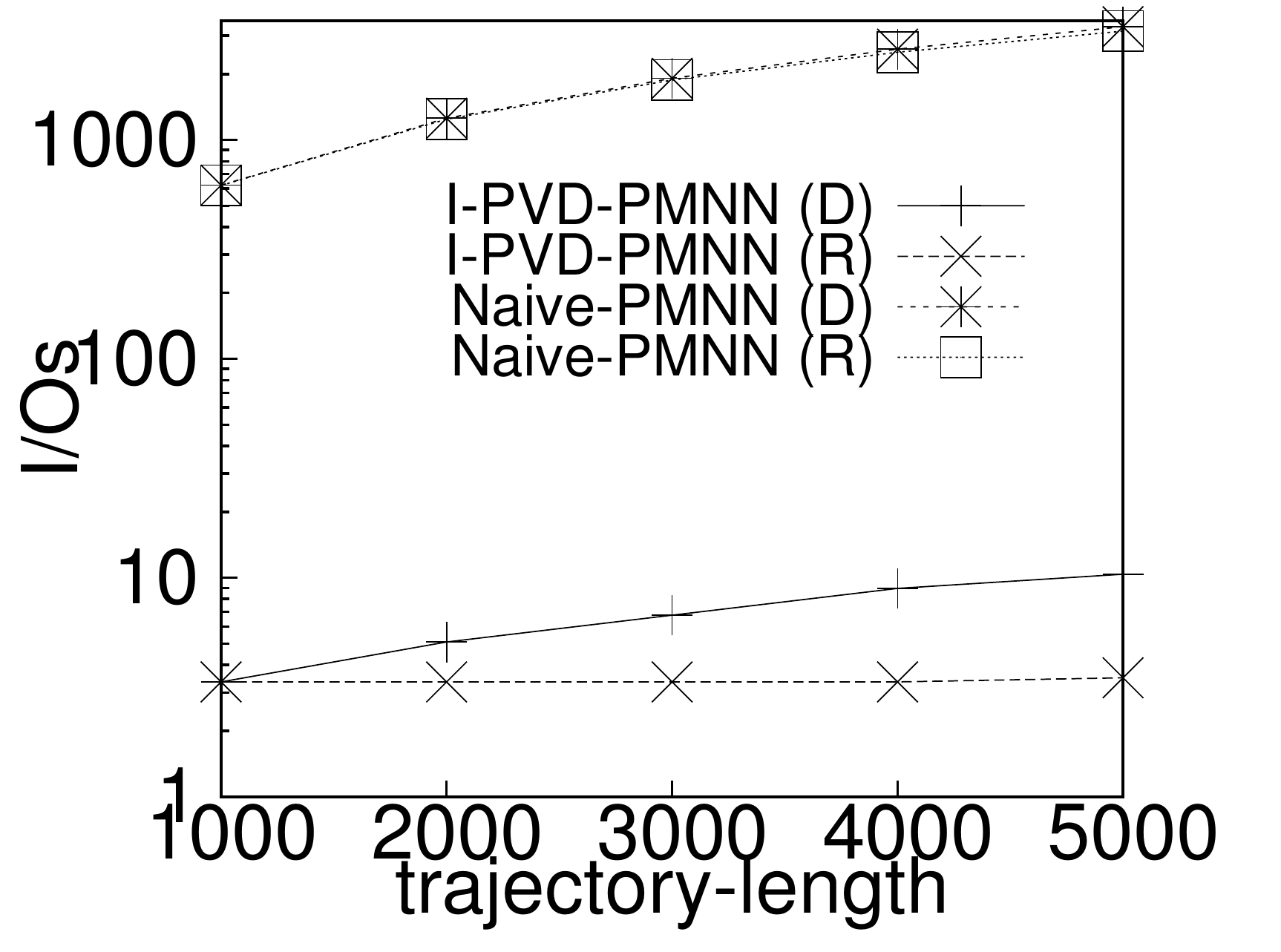}} &

        \hspace{-4mm}

        \resizebox{40mm}{!}{\includegraphics{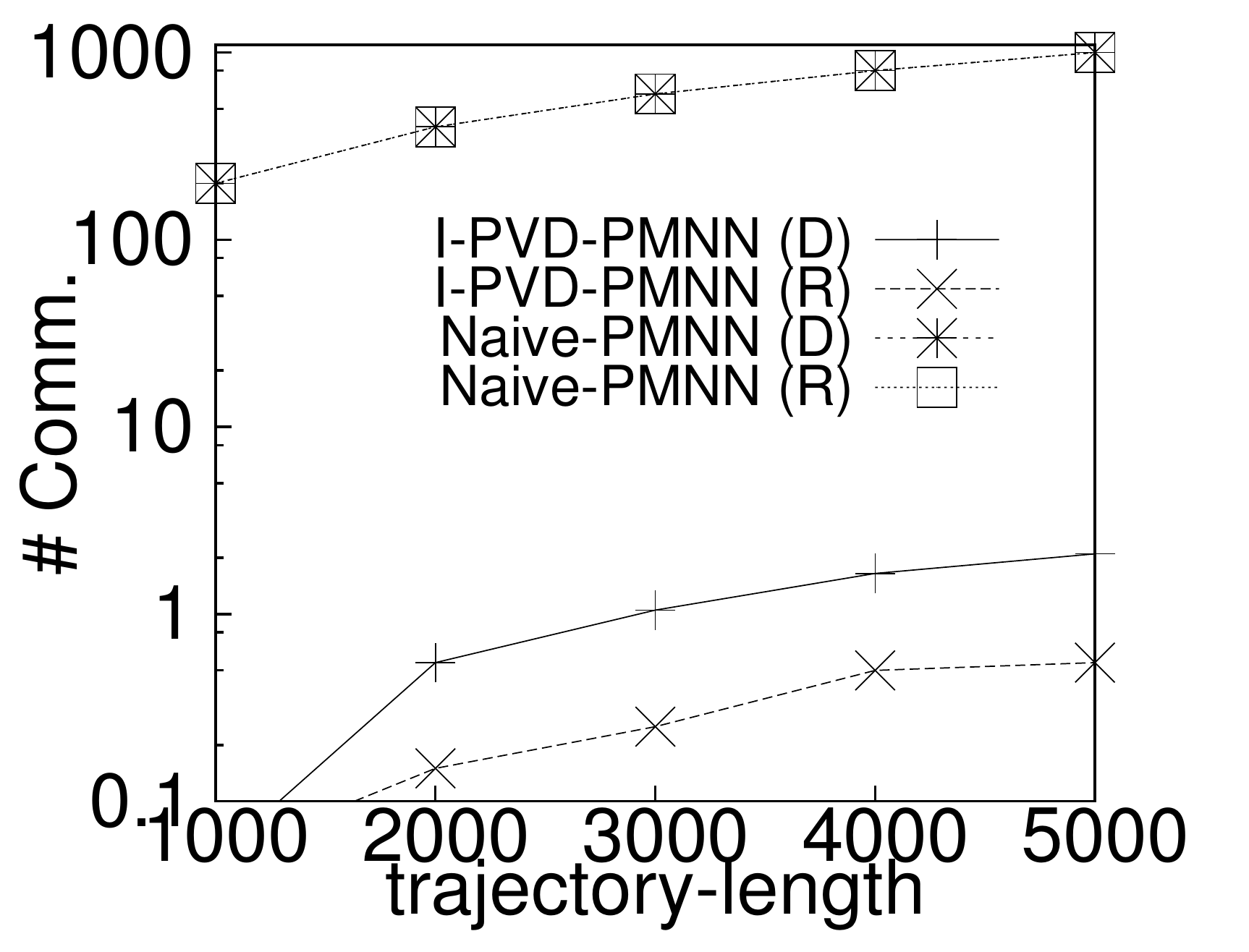}}\\
       \scriptsize{(d)\hspace{0mm}} & \scriptsize{(e)} & \scriptsize{(f)}\\
      \end{tabular}

    \caption{The effect of the query length in U (a-c), Z (d-f)}
    \label{fig:vqi}
  \end{center}

\end{figure*}

\emph{Effect of Data Set Size}: In this set of experiments,
we vary the data set size from 5K to 25K and compare the
performance of our approach I-PVD-PMNN with Naive-PMNN. We set the trajectory length to 5000 units. Also, in these
experiments, we have set the value of $k$ to 30. Figures~\ref{fig:vd} (a)-(c) and (d)-(f) show the
processing time, I/O costs, and the number of
communications for U and Z data sets, respectively.
Figures also show that our I-PVD-PMNN outperforms Naive-PMNN by
1-3 orders of magnitude for all data sets.

\emph{Effect of the Length of a Query Trajectory}: We vary the length of moving queries from 1000 to
5000 units of the data space. In these
experiments, for both U and Z, we have set the data set size to 10K. Also, in these
experiments, we have set the value of $k$ to 30. Figures~\ref{fig:vqi} show that the
processing time, I/O costs, and the number of communications
increase with the increase of the length of the query trajectory
for both U and Z data sets, which is
expected. The processing time of I-PVD-PMNN is on average 5 times less for directional (D) query path and is on average 10 times less for random (R) query paths compared to Naive-PMNN. Also I-PVD-PMNN outperforms Naive-PMNN by at least
an order of magnitude for both I/O and communication costs.

\begin{figure*}[htbp]
  \begin{center}
    \begin{tabular}{cccc}
        \hspace{-5mm}
      \resizebox{40mm}{!}{\includegraphics{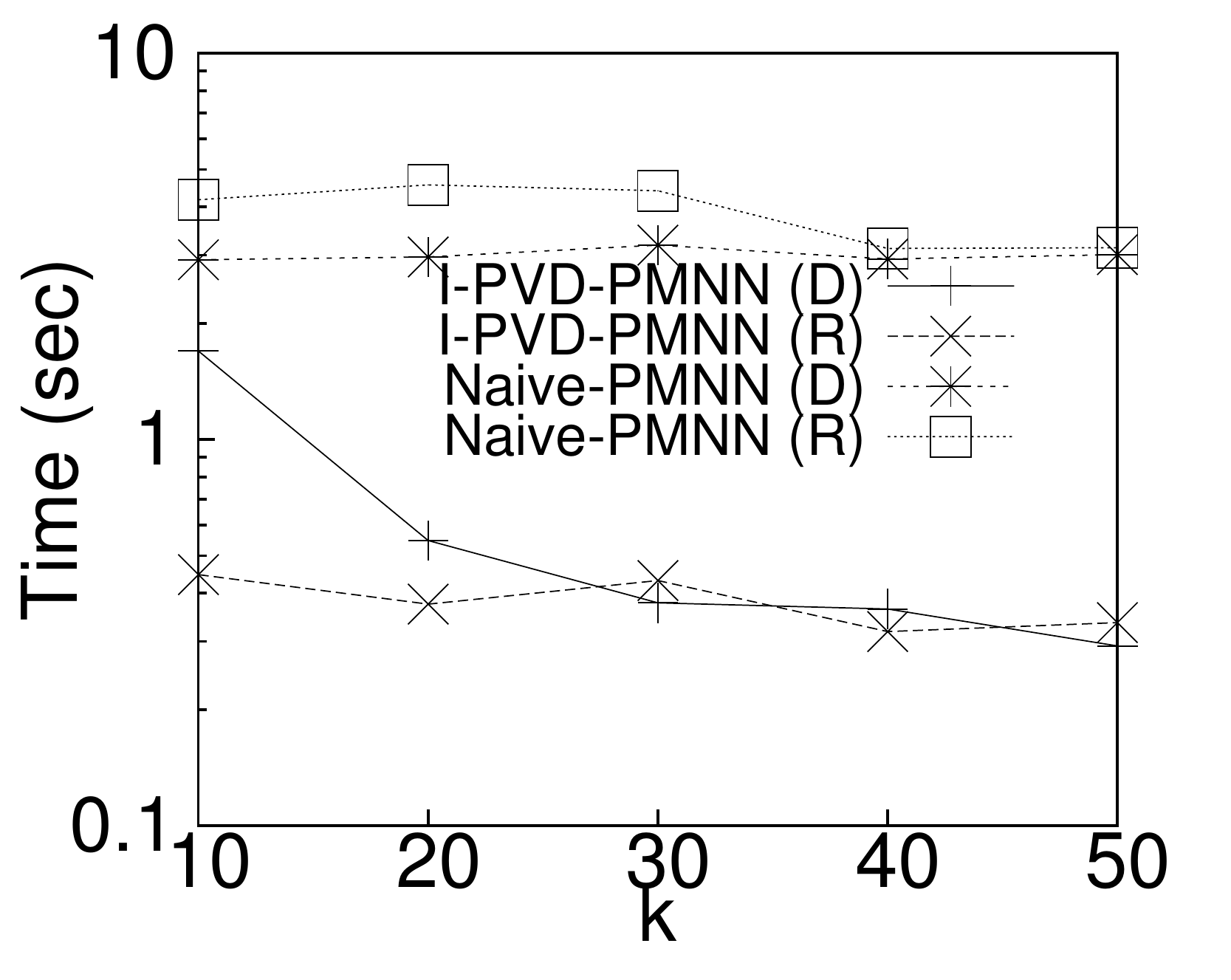}} &
        \hspace{-4mm}
        \resizebox{40mm}{!}{\includegraphics{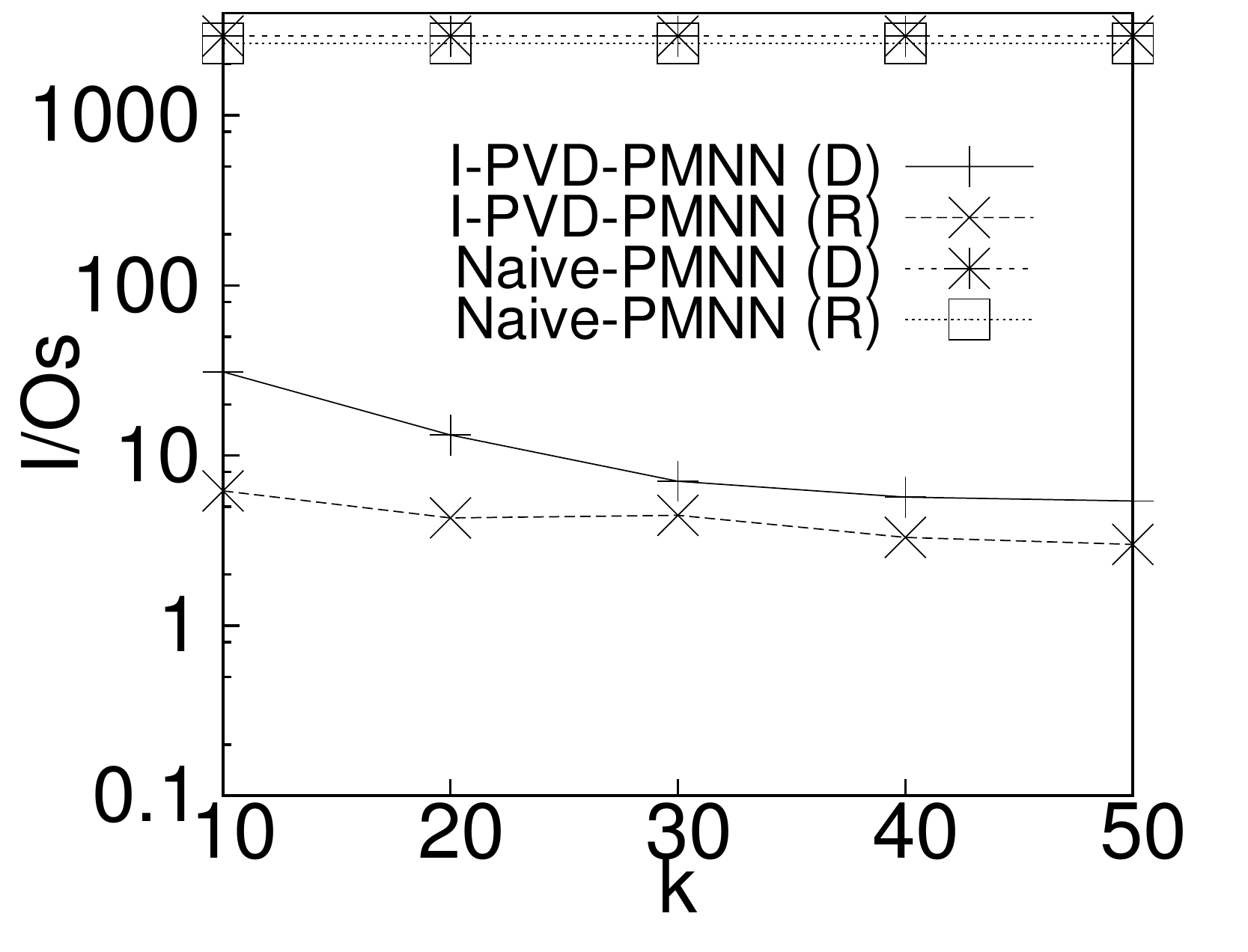}} &
         \hspace{-4mm}
        \resizebox{40mm}{!}{\includegraphics{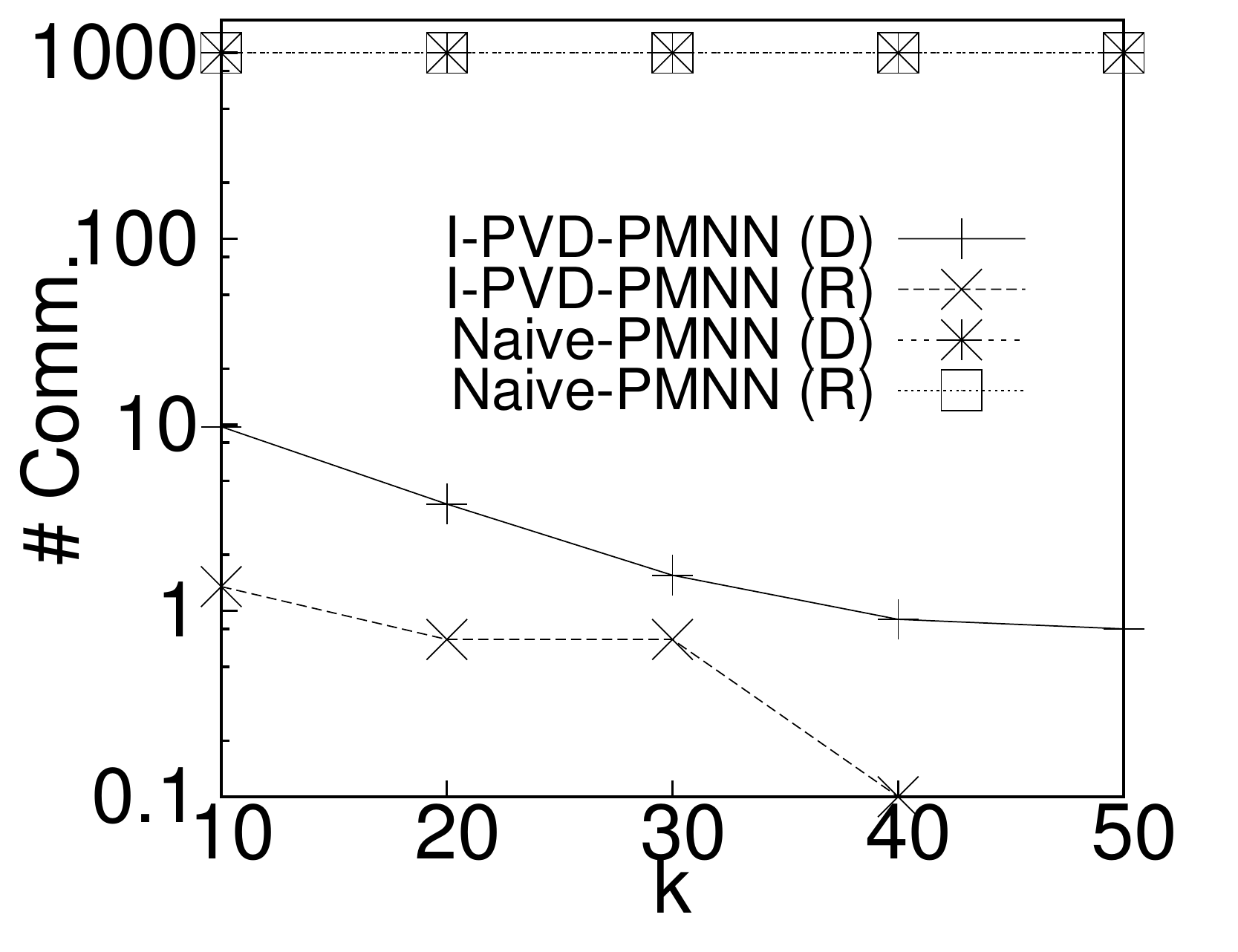}}\\
       \scriptsize{(a)\hspace{0mm}} & \scriptsize{(b)} & \scriptsize{(c)}\\
      \end{tabular}
      \begin{tabular}{cccc}
        \hspace{-5mm}
      \resizebox{40mm}{!}{\includegraphics{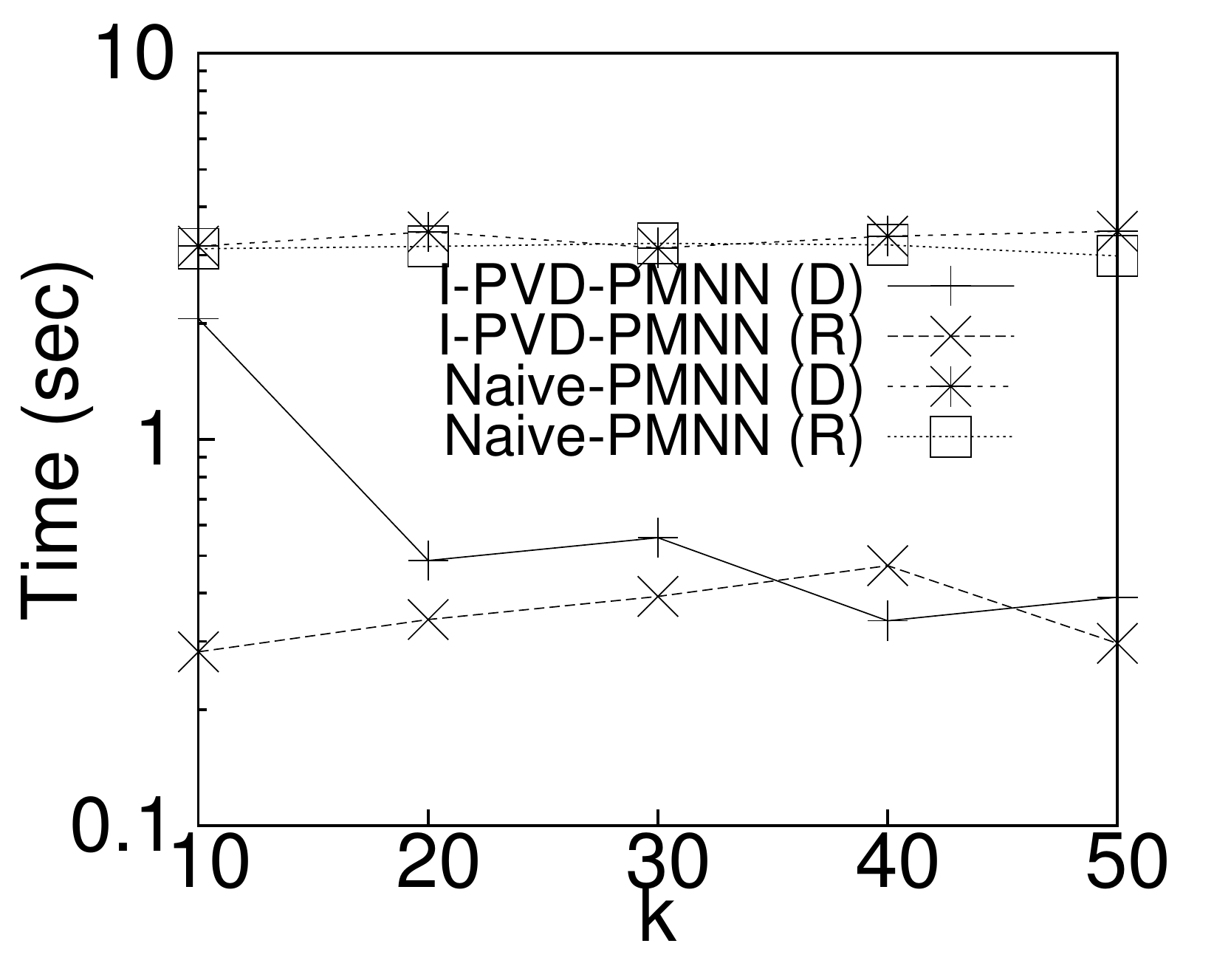}} &
        \hspace{-4mm}
        \resizebox{40mm}{!}{\includegraphics{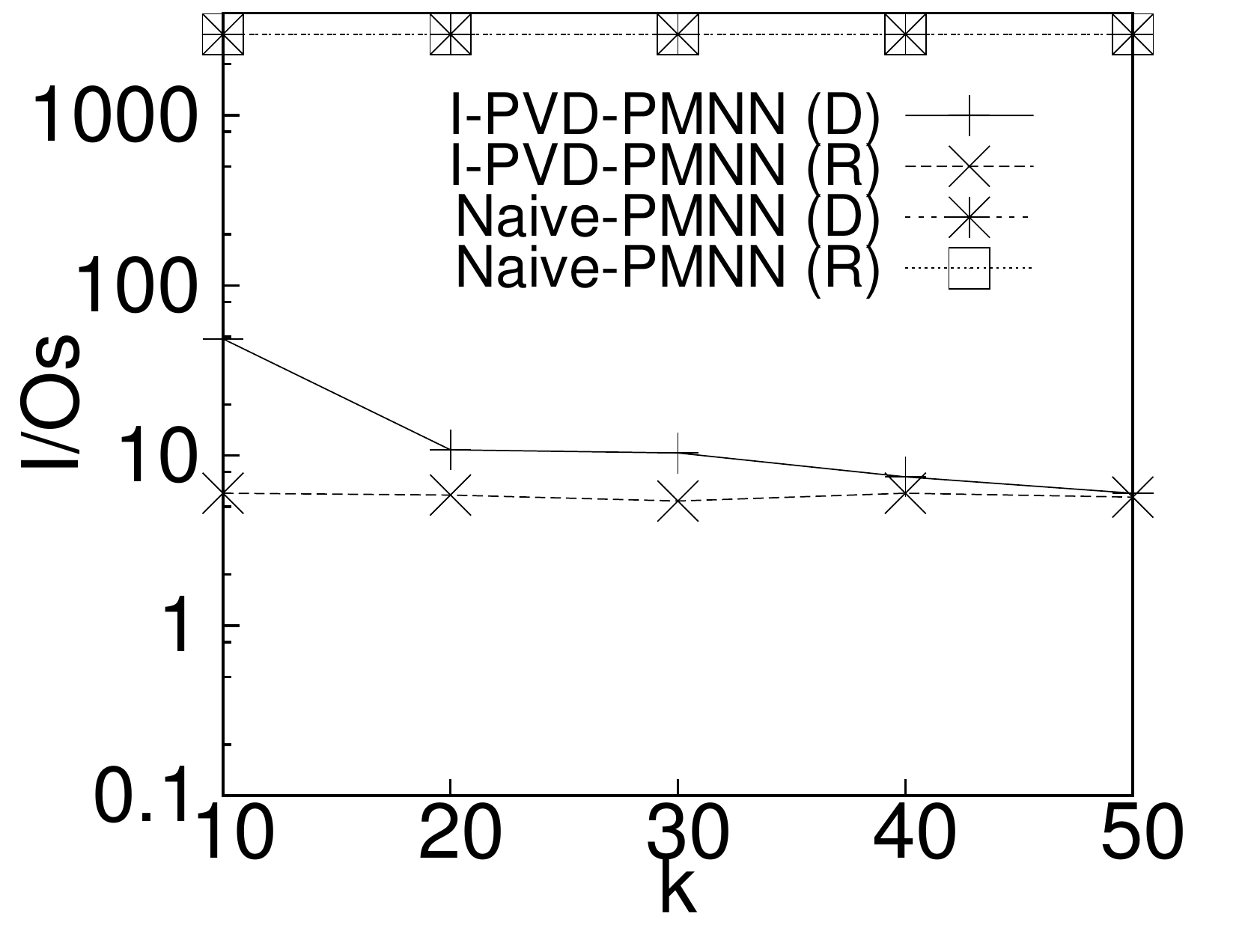}} &
          \hspace{-4mm}
        \resizebox{40mm}{!}{\includegraphics{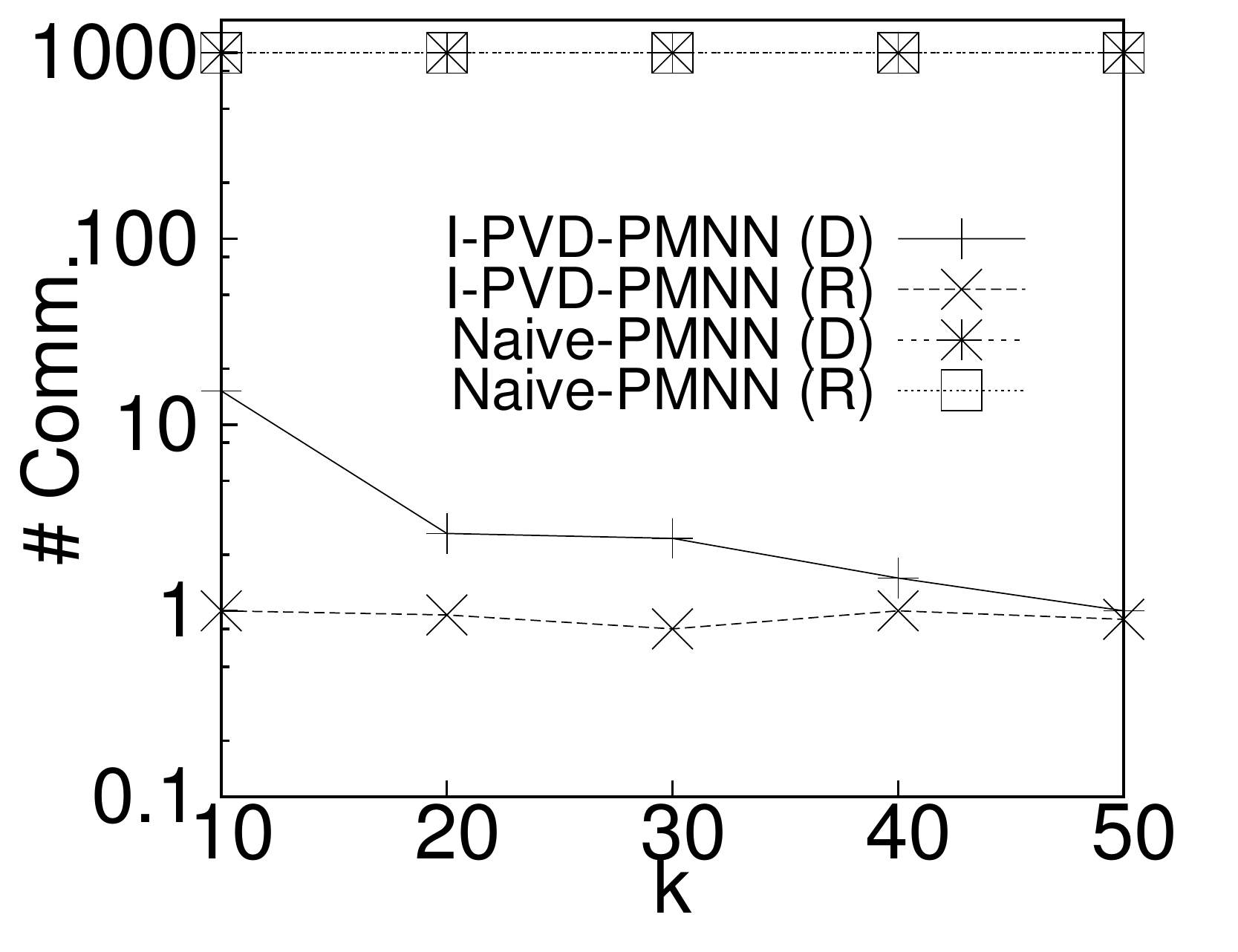}}\\
       \scriptsize{(d)\hspace{0mm}} & \scriptsize{(e)} & \scriptsize{(f)}\\
      \end{tabular}
    \caption{The effect of (\emph{k}) in U (a-c), Z (d-f) for 1D data}
    \label{fig:vk1D}
  \end{center}
\end{figure*}

\begin{figure*}[htbp]
  \begin{center}
    \begin{tabular}{cccc}
        \hspace{-5mm}
      \resizebox{40mm}{!}{\includegraphics{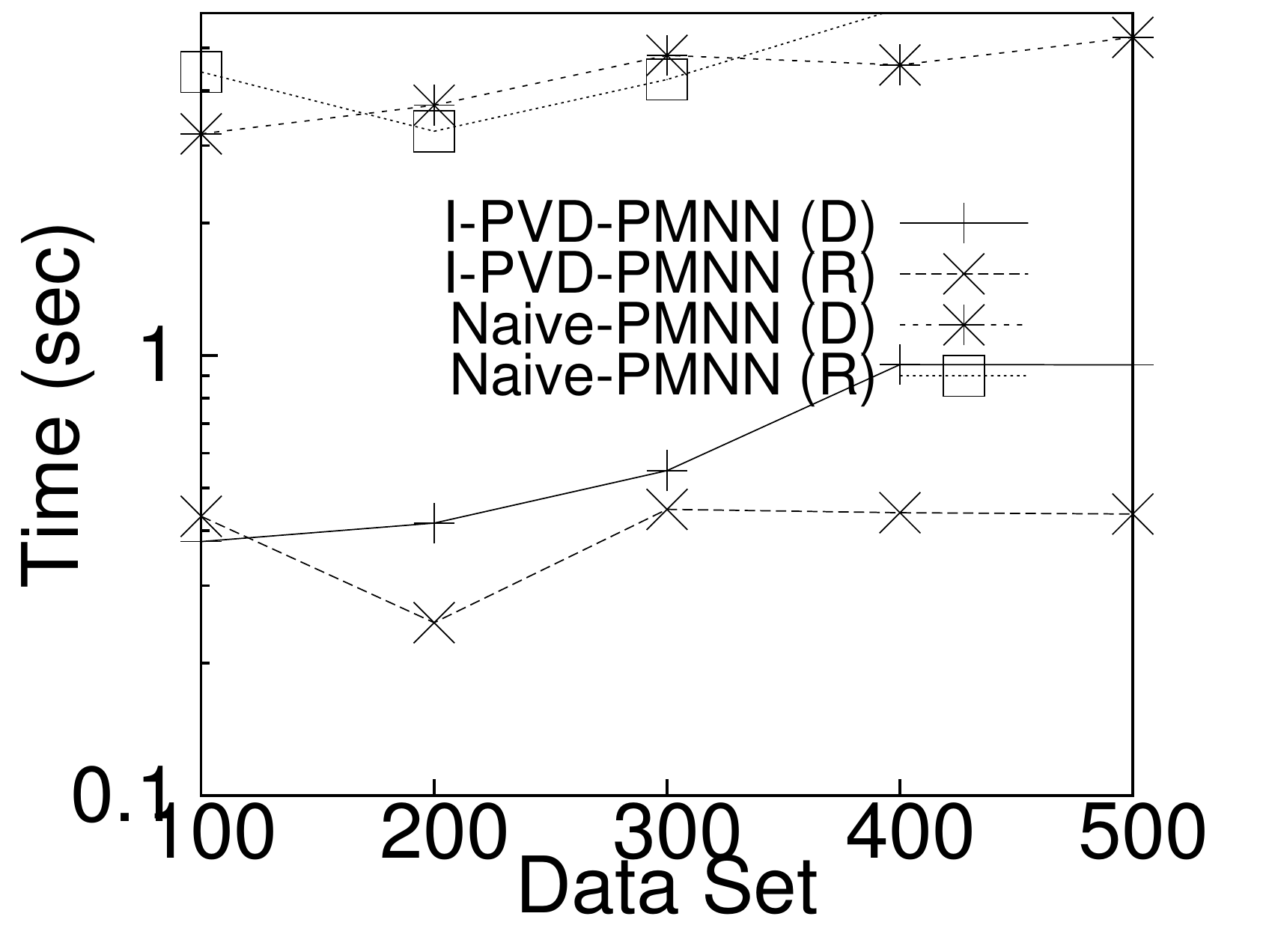}} &
        \hspace{-4mm}

        \resizebox{40mm}{!}{\includegraphics{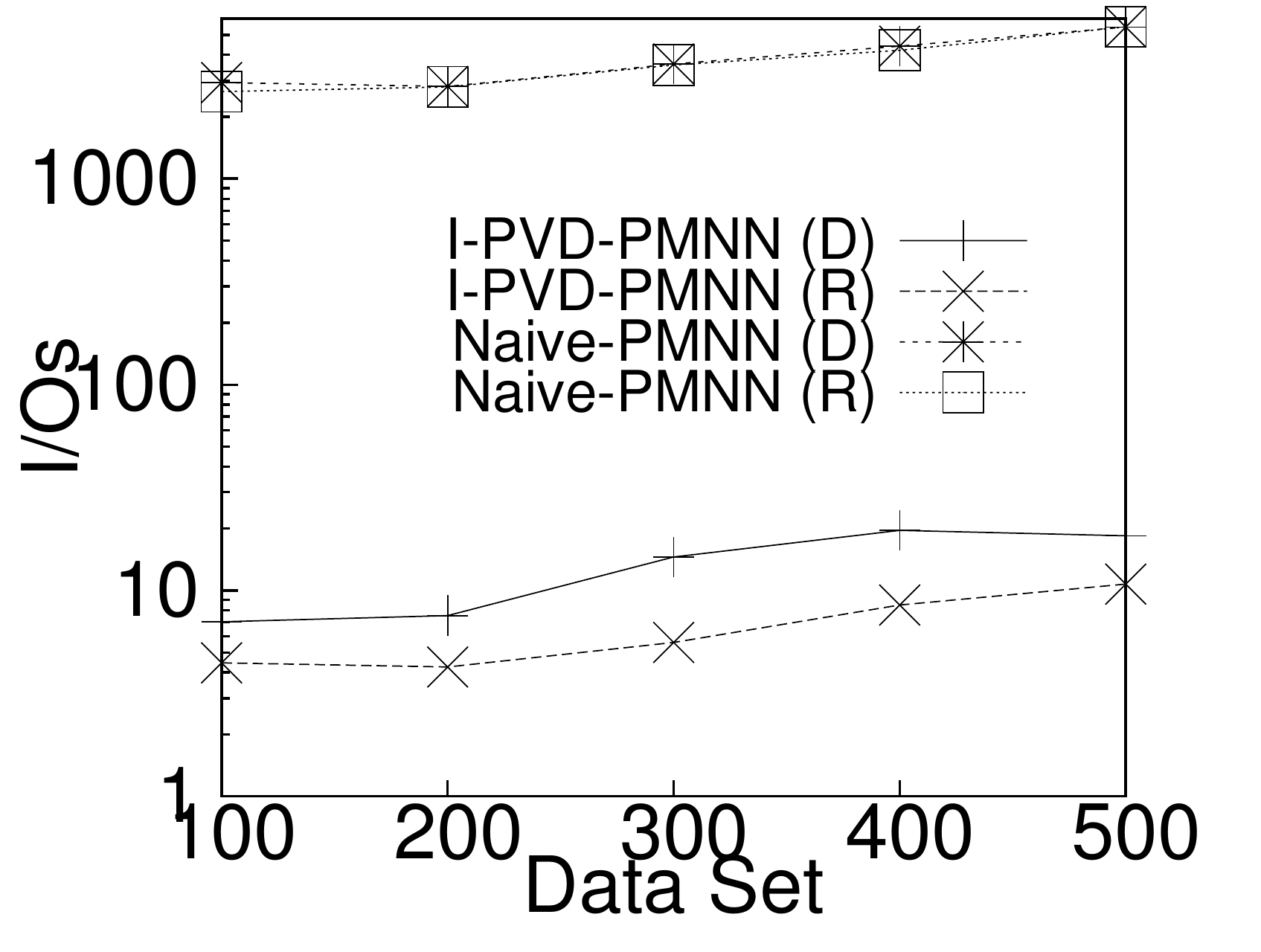}} &
         \hspace{-4mm}

        \resizebox{40mm}{!}{\includegraphics{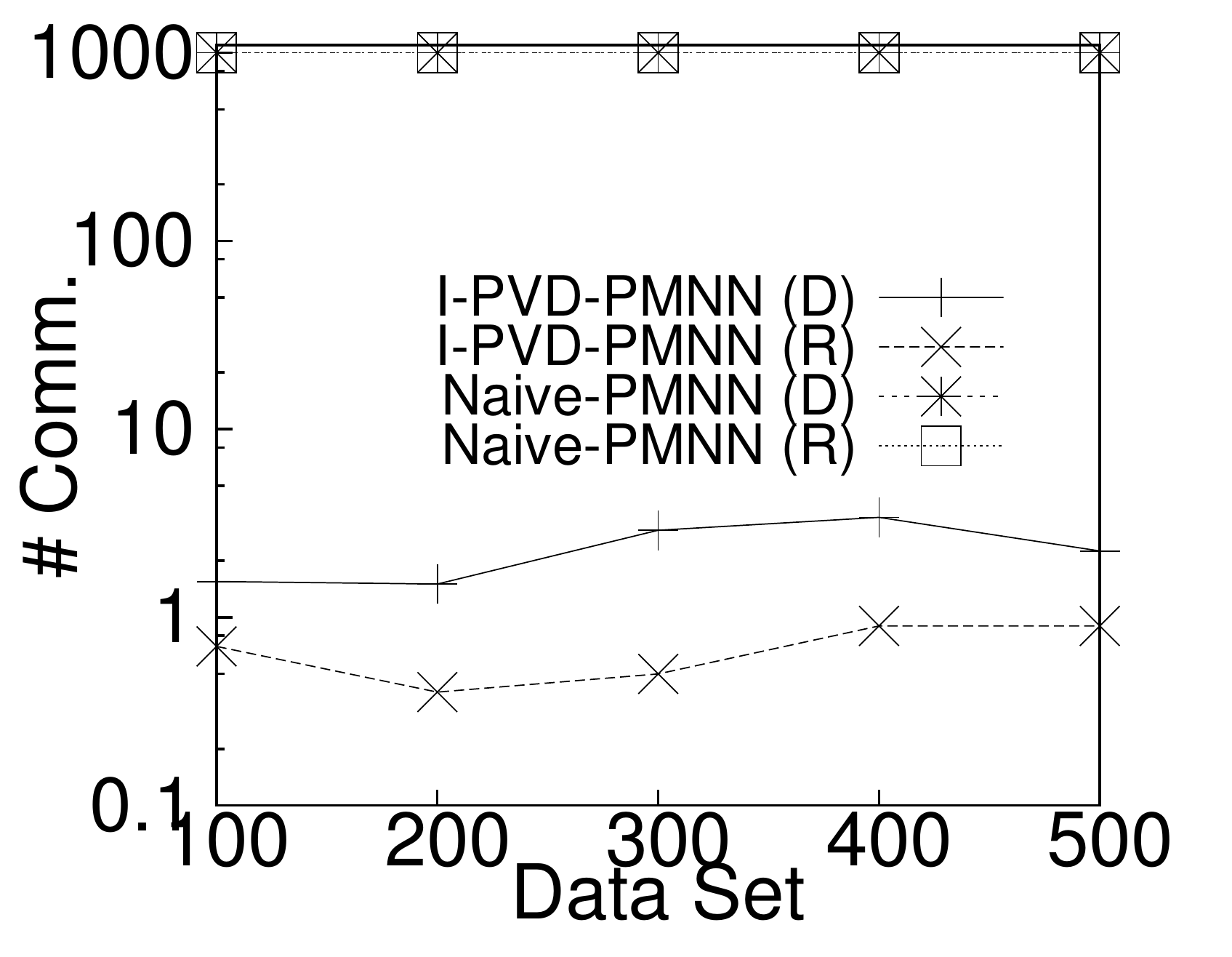}}\\
       \scriptsize{(a)\hspace{0mm}} & \scriptsize{(b)} & \scriptsize{(c)}\\
      \end{tabular}
      \begin{tabular}{cccc}
        \hspace{-5mm}
      \resizebox{40mm}{!}{\includegraphics{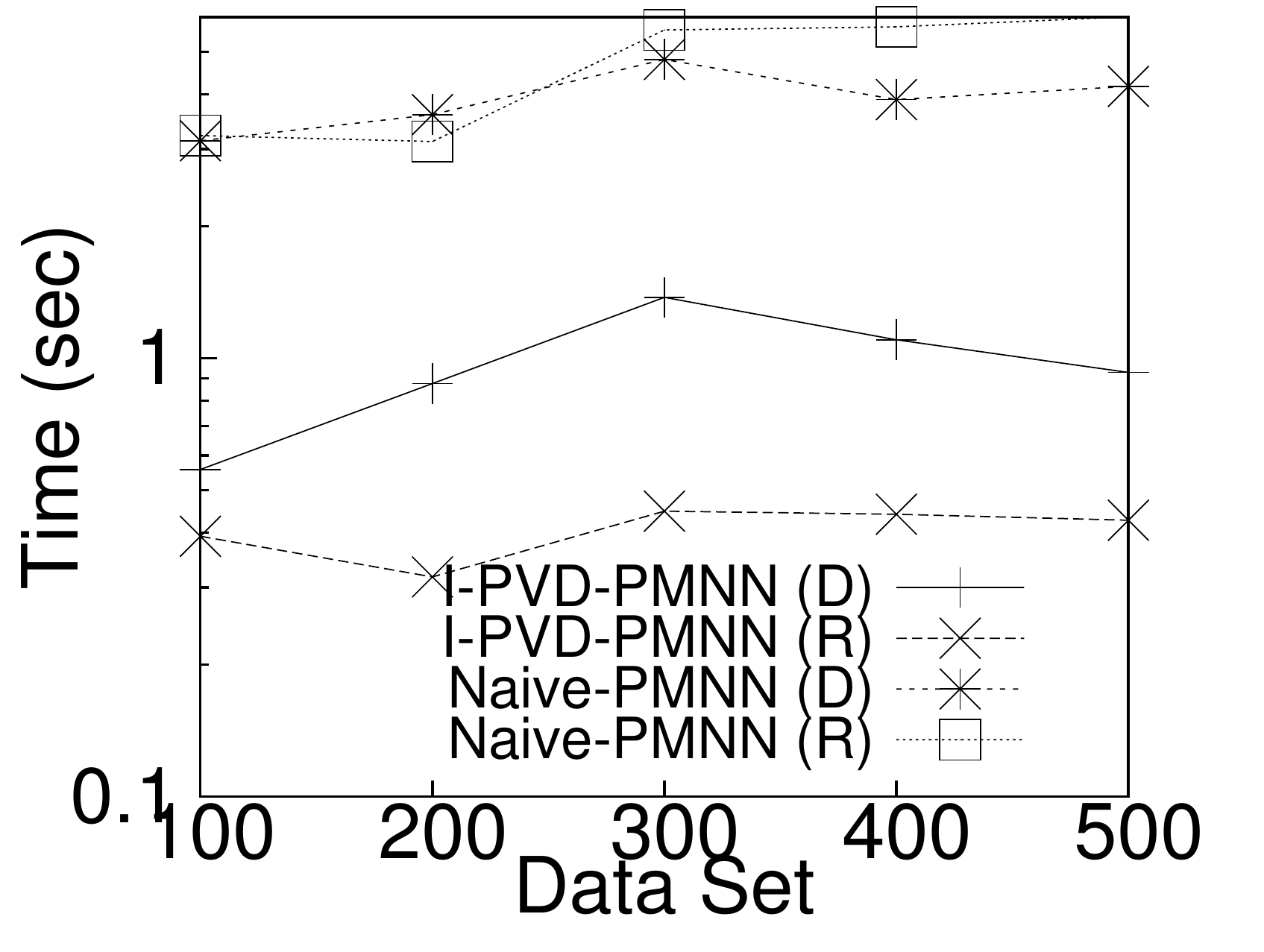}} &
        \hspace{-4mm}

        \resizebox{40mm}{!}{\includegraphics{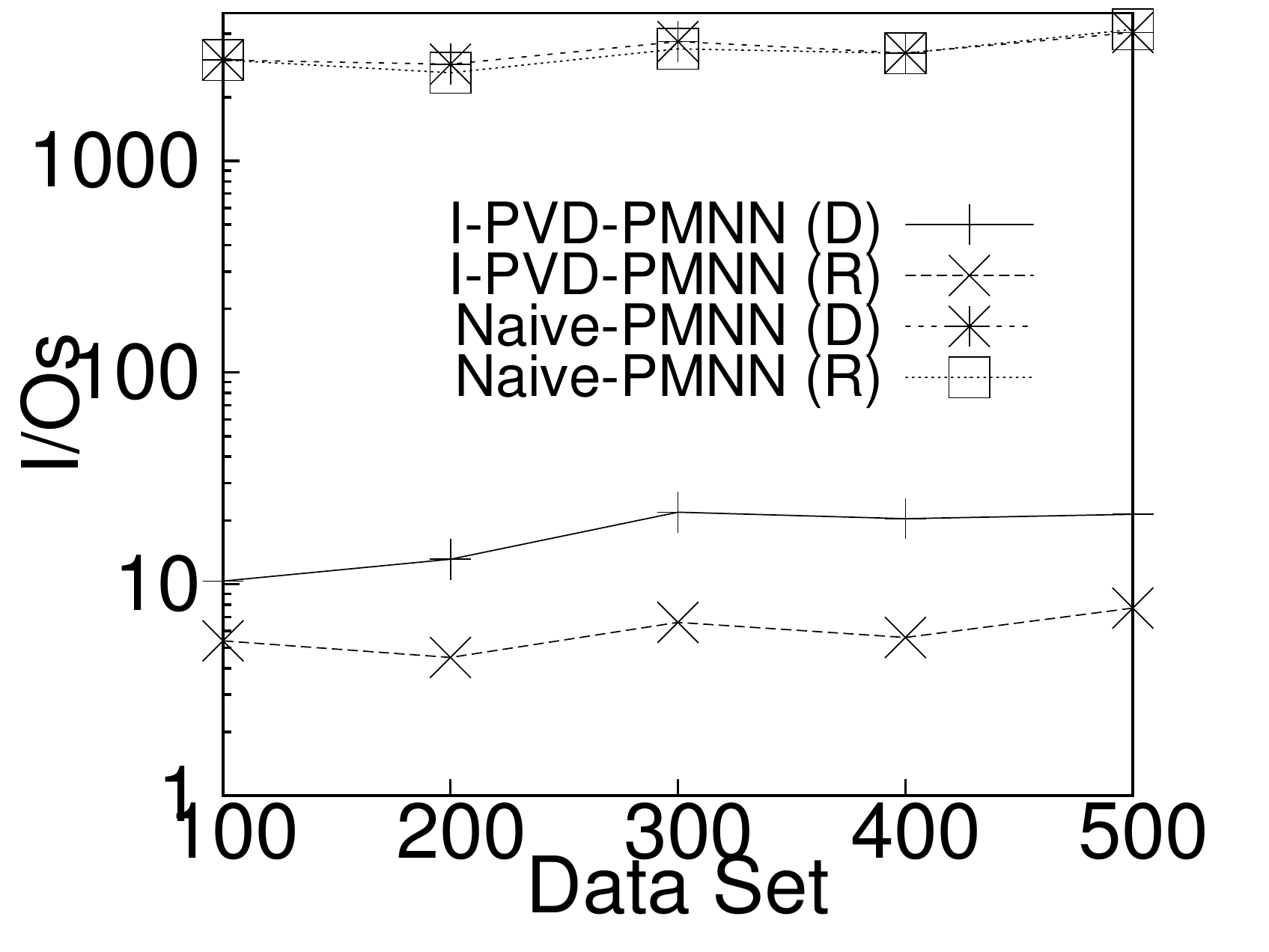}} &
        \hspace{-4mm}

        \resizebox{40mm}{!}{\includegraphics{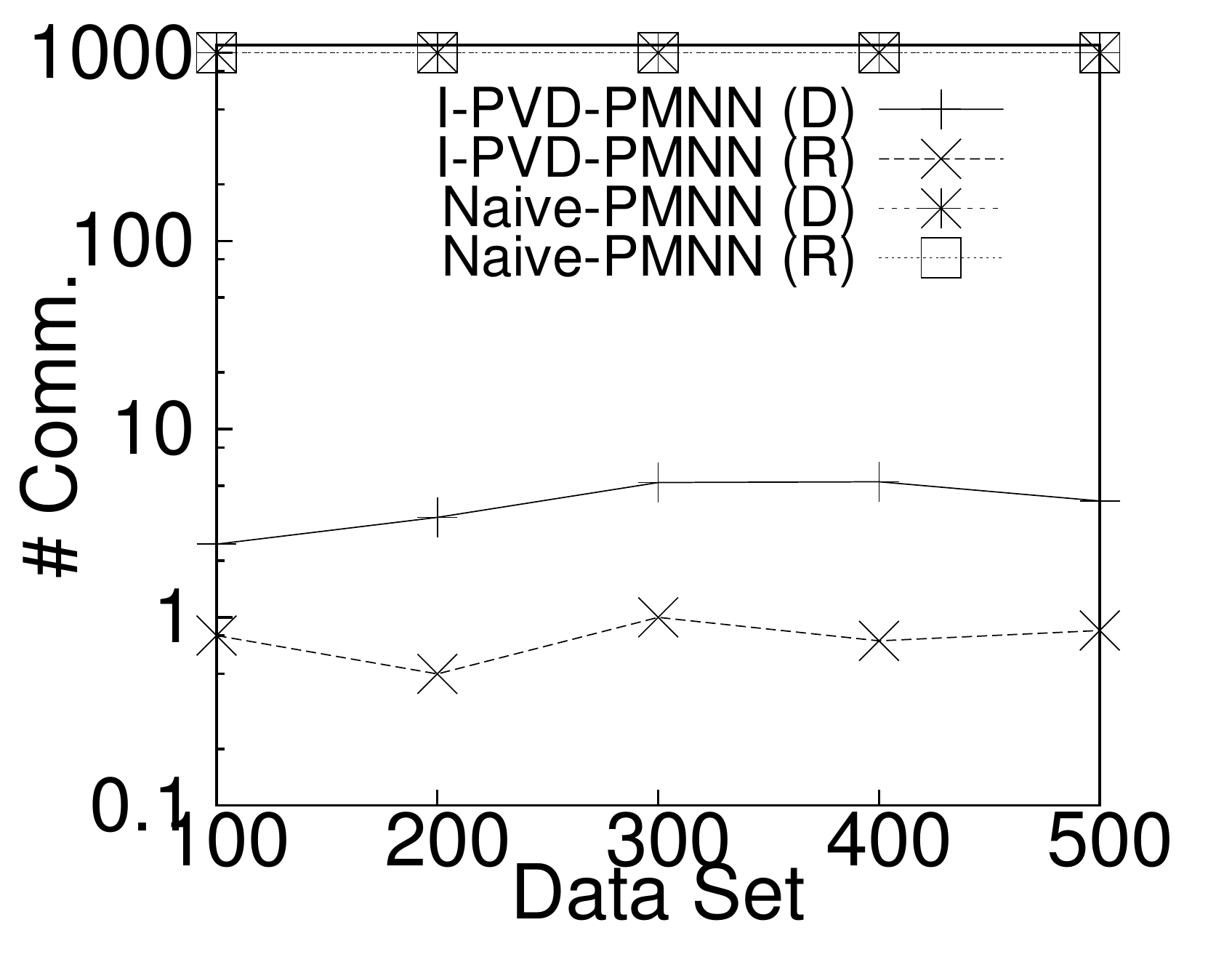}}\\
       \scriptsize{(d)\hspace{0mm}} & \scriptsize{(e)} & \scriptsize{(f)}\\
      \end{tabular}
    \caption{The effect of the data set size in U (a-c), Z (d-f) for 1D data}
    \label{fig:vd1D}
  \end{center}
\end{figure*}

\noindent\emph{\textbf{Experiments with 1D Data Sets:}}
We also evaluate our incremental approach with 1D data sets by varying the following parameters: the value of $k$, the data set size, and the length of the query trajectory.

\emph{Effect of $k$: }In this set of experiments, we study the
impact of $k$ in the performance measure of I-PVD-PMNN for 1D data sets. Figures~\ref{fig:vk1D}(a)-(e) show the results of U and Z data sets, for varying $k$ from 10 to 50. In these
experiments, we have set the data set size to 100. Figure~\ref{fig:vk1D}(a)
shows that the processing time almost remains constant for varying
$k$. Moreover, the processing time of I-PVD-PMNN is on average 6 times less for directional (D) query paths than that of Naive-PMNN, and on average 10 times less for random (R) query paths than that of Naive-PMNN.
Figures~\ref{fig:vk1D}(b)-(c) show that
the I/O costs and the number of communications decrease with the
increase of $k$. Figures also show that our I-PVD-PMNN outperforms Naive-PMNN by
2-3 orders of magnitude in terms of both I/O costs and communication costs.

Figures~\ref{fig:vk1D}(d)-(f) show the results for Z data set, which is similar to U data set.

\emph{Effect of Data Set Size: }In this set of experiments,
we vary the data set size from 100 to 500 and compare the
performance of our approach I-PVD-PMNN with Naive-PMNN. In these
experiments, we have set the value of $k$ to 30 and the trajectory length to 5000 units. Figures~\ref{fig:vd} (a)-(c) and (d)-(f) show the
processing time, I/O costs, and the number of
communications for U and Z data sets, respectively. The results reveal that the processing time, I/O costs, and the communications costs increase with the increase of the data set size.
Figures also show that our I-PVD-PMNN outperforms Naive-PMNN by
at least an order of magnitude for all data sets.

\begin{figure*}[htbp]
  \begin{center}
    \begin{tabular}{cccc}
        \hspace{-5mm}
      \resizebox{40mm}{!}{\includegraphics{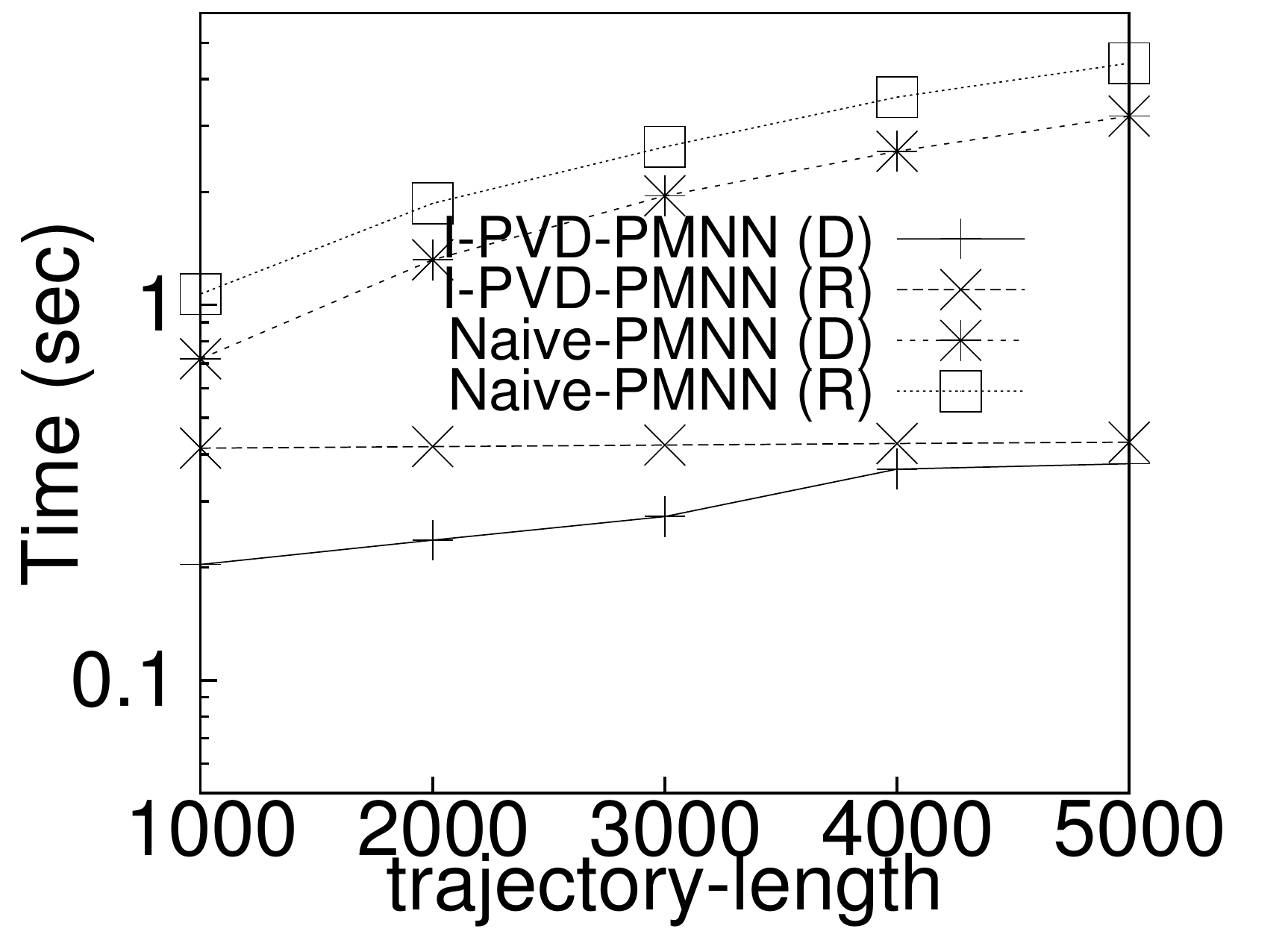}} &
        \hspace{-4mm}

        \resizebox{40mm}{!}{\includegraphics{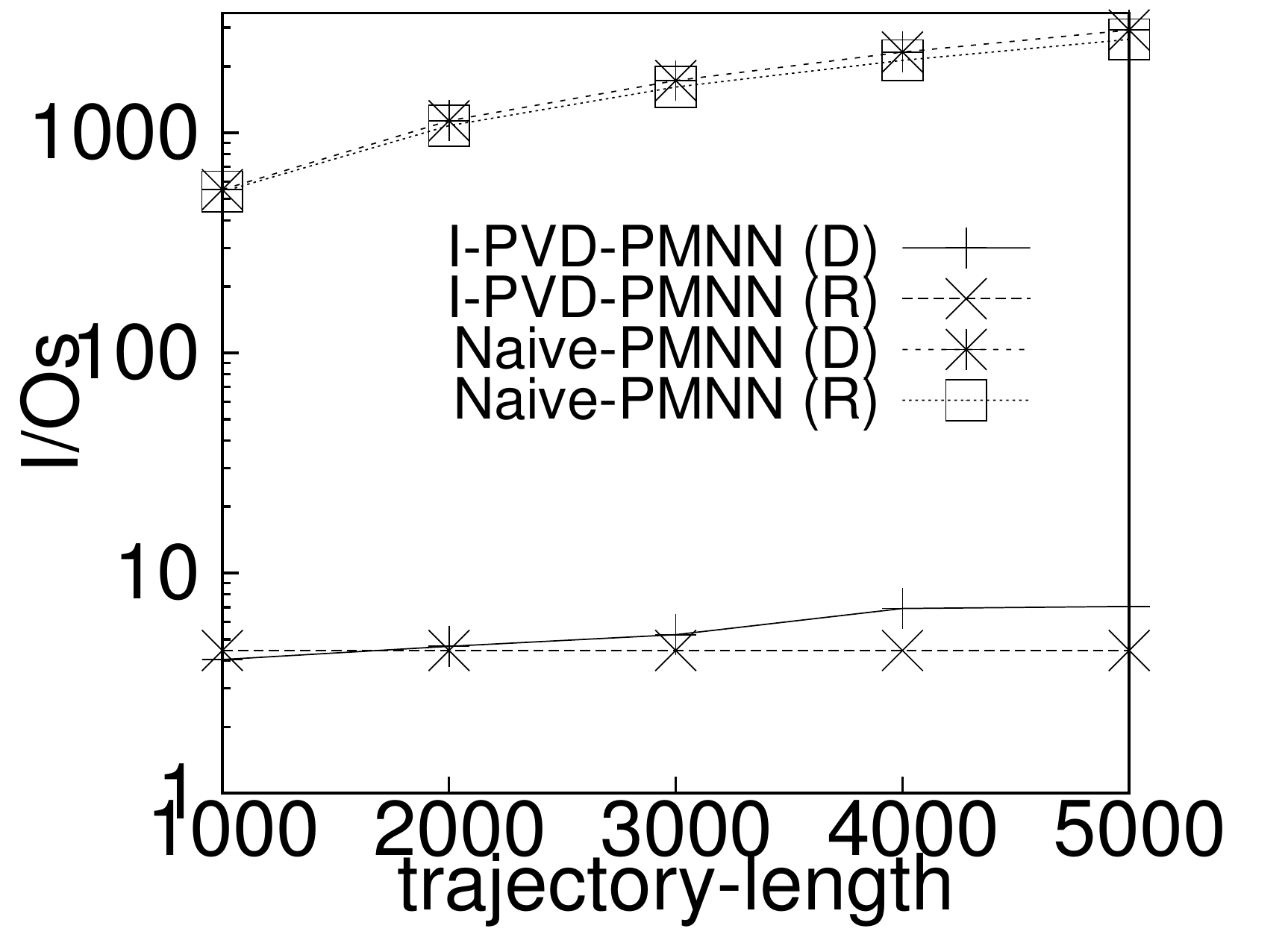}} &
         \hspace{-4mm}

        \resizebox{40mm}{!}{\includegraphics{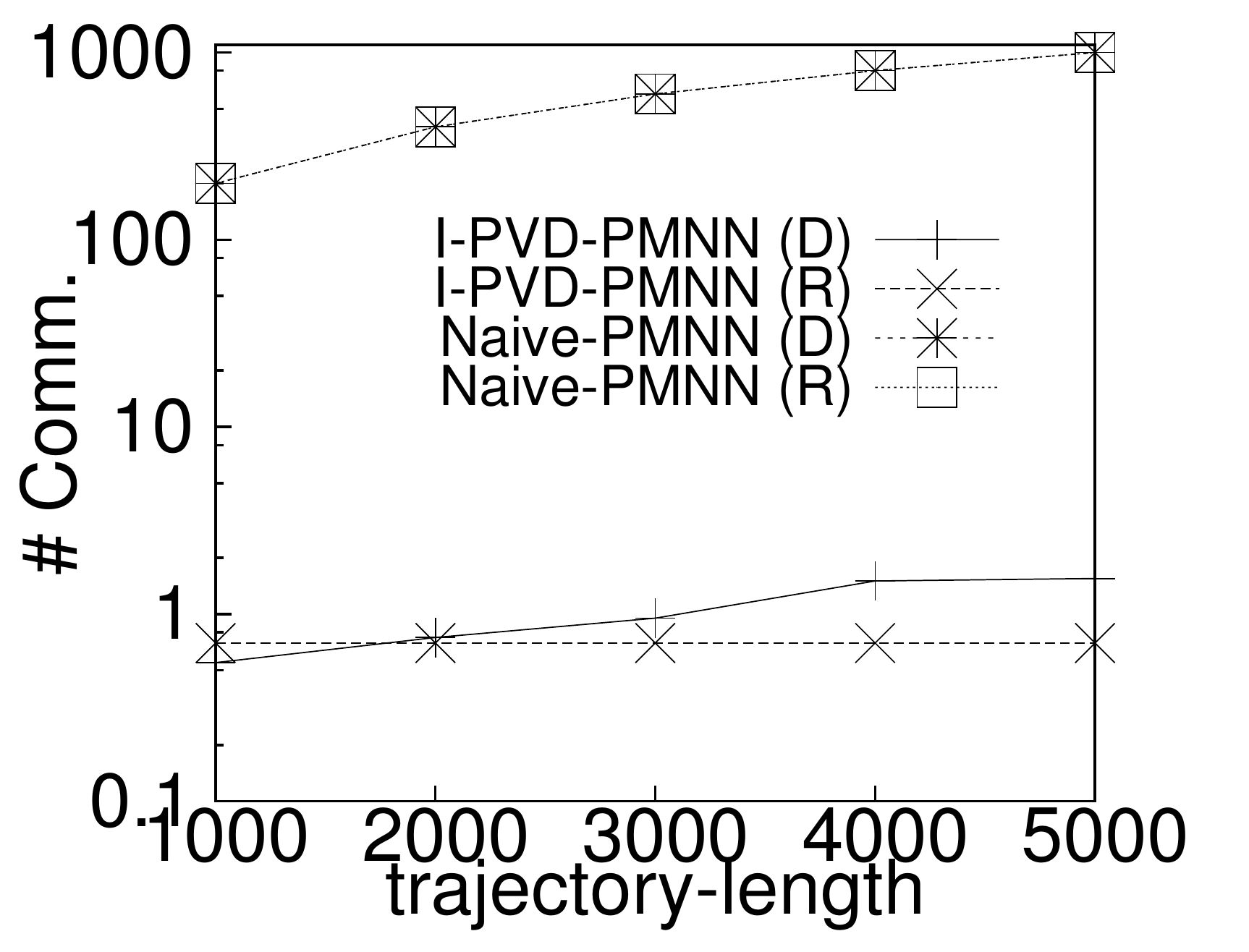}}\\
       \scriptsize{(a)\hspace{0mm}} & \scriptsize{(b)} & \scriptsize{(c)}\\
      \end{tabular}
    \begin{tabular}{cccc}
        \hspace{-5mm}
      \resizebox{40mm}{!}{\includegraphics{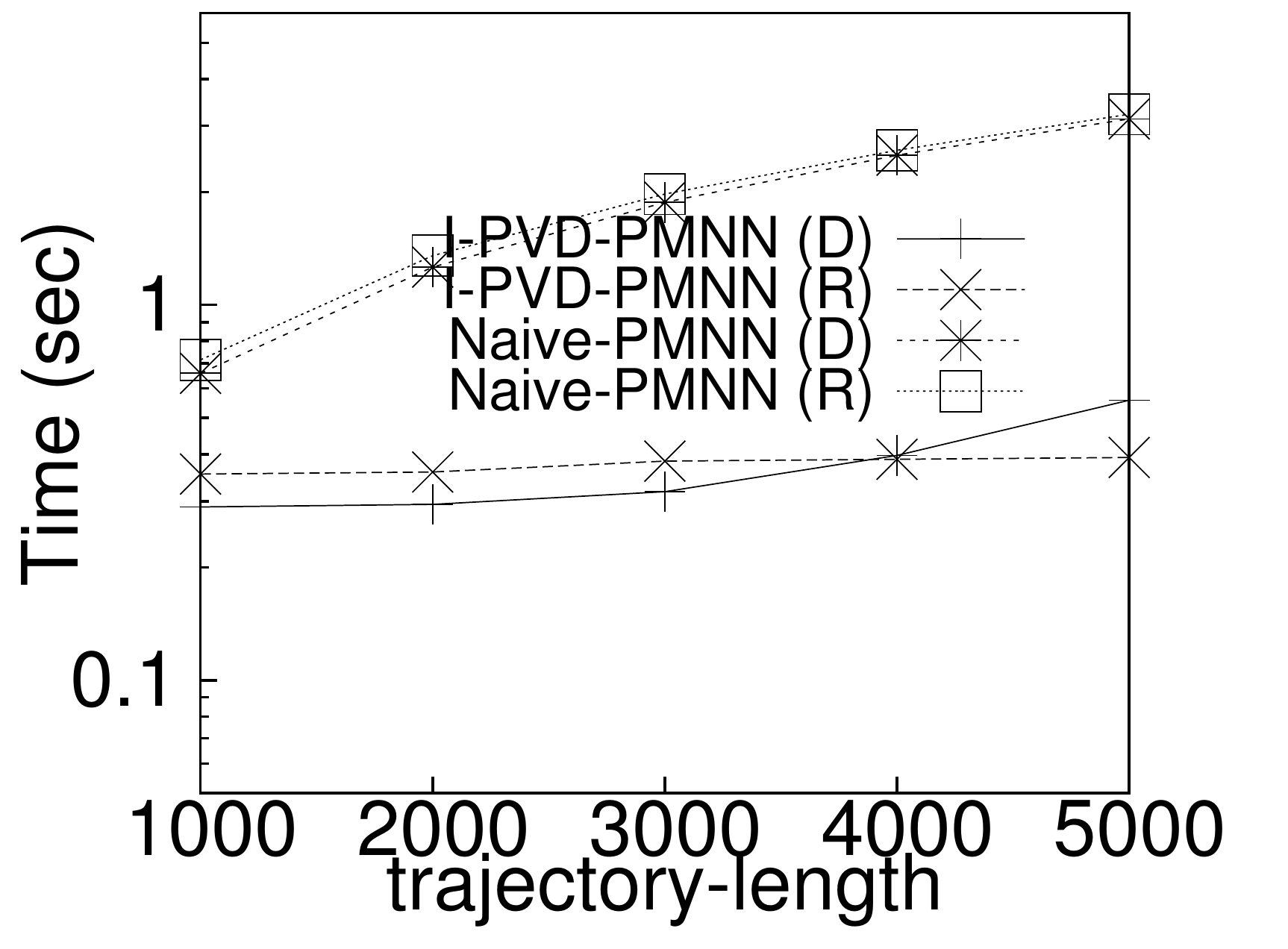}} &
        \hspace{-4mm}

        \resizebox{40mm}{!}{\includegraphics{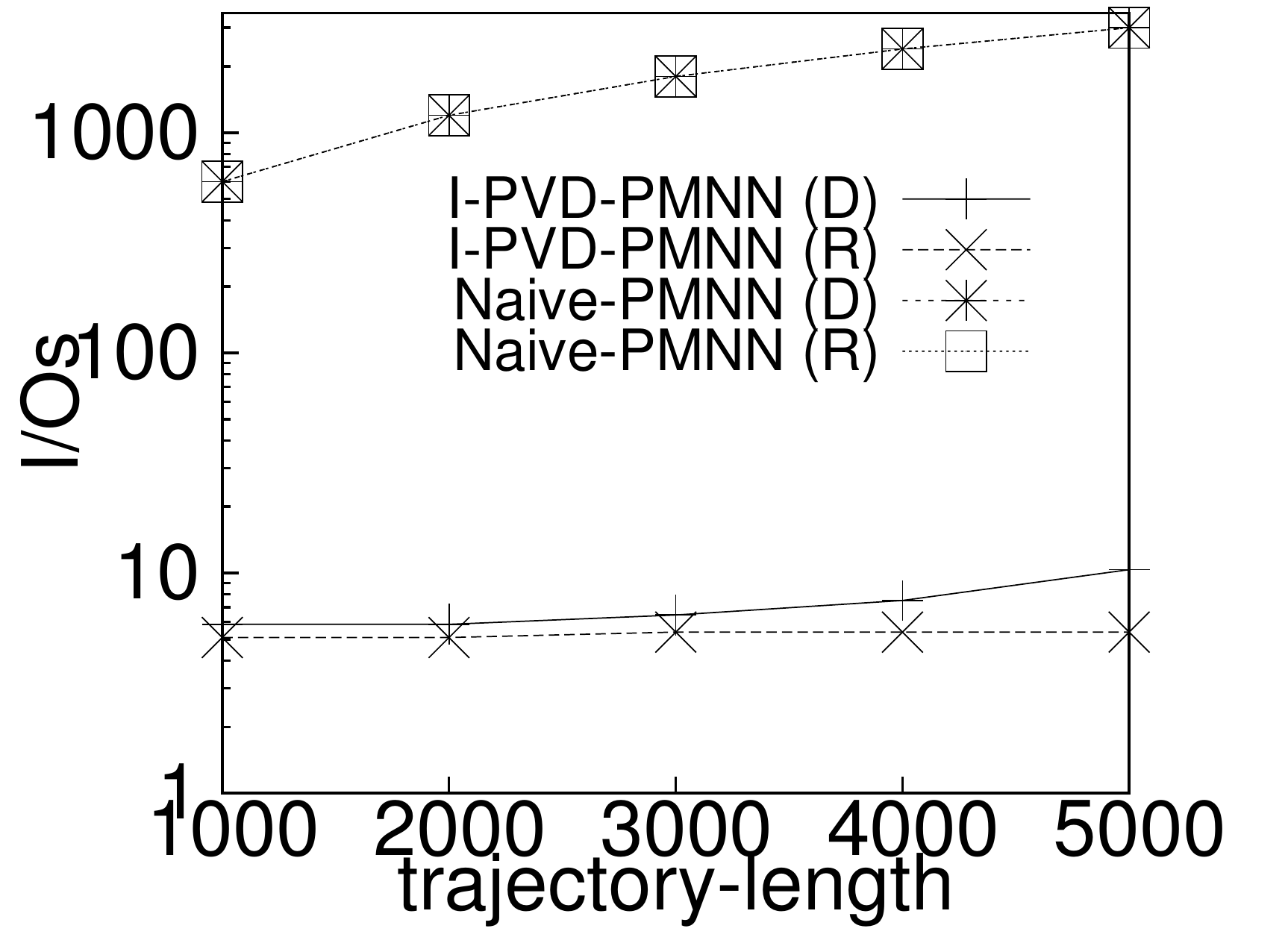}} &

        \hspace{-4mm}

        \resizebox{40mm}{!}{\includegraphics{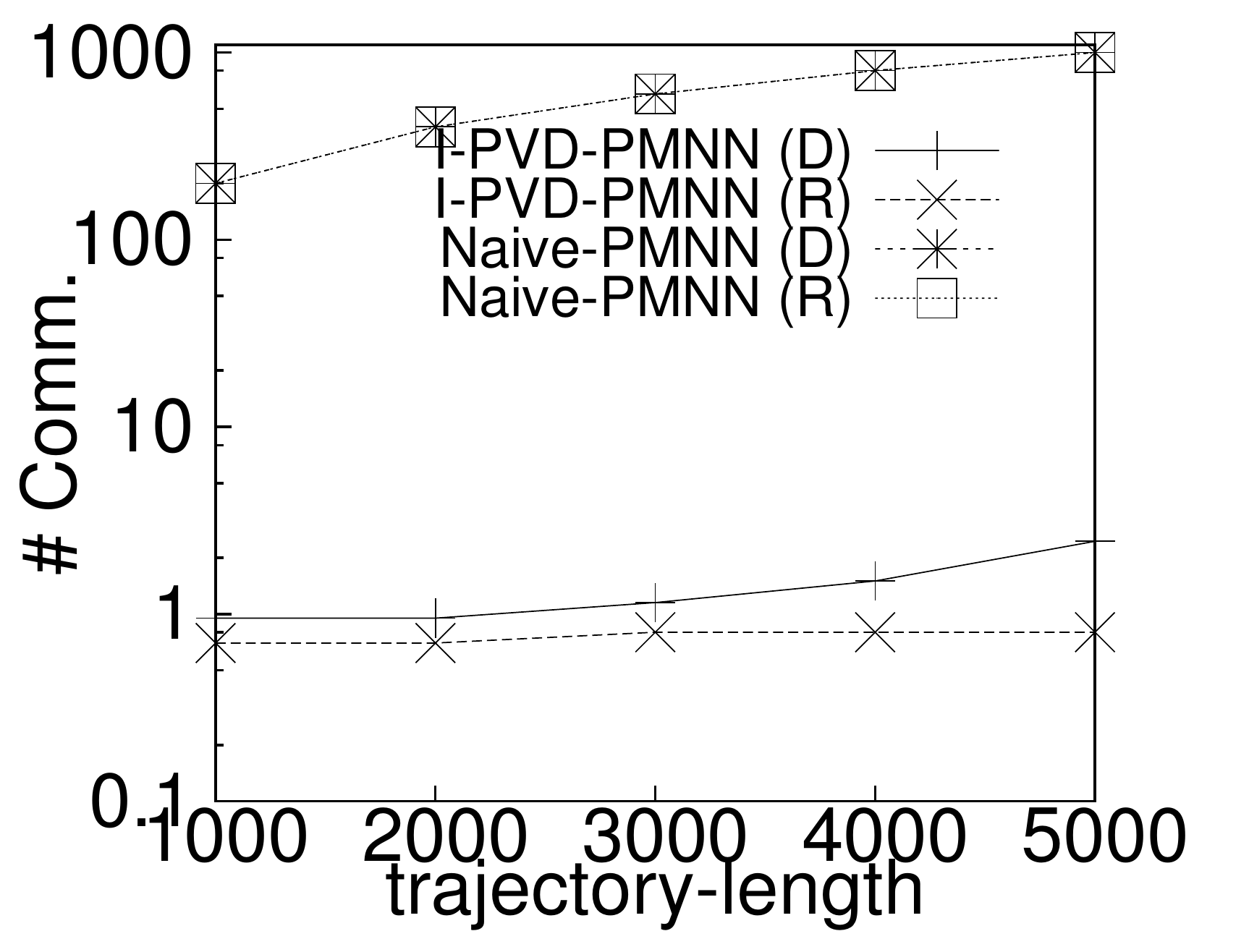}}\\
       \scriptsize{(d)\hspace{0mm}} & \scriptsize{(e)} & \scriptsize{(f)}\\
      \end{tabular}

    \caption{The effect of the query length in U (a-c), Z (d-f) for 1D data}
    \label{fig:vqi1D}
  \end{center}

\end{figure*}

\emph{Effect of the Length of a Query Trajectory: }We also vary the length of the query trajectory for 1D data sets and the results (Figures~\ref{fig:vqi1D}) for 1D data sets exhibit similar behavior to 2D data sets. In these experiments, we vary the trajectory length from 1000 to 5000 units of the data space. Also, we have set the data set size to 100, and the value of $k$ to 30. Figures~\ref{fig:vqi1D} show that for both U and Z data sets, the processing time, I/O costs, and the communication costs increase with the increase of the trajectory length. Figures also show that our I-PVD-PMNN outperforms Naive-PMNN in all evaluation metrics.

\eat{Note that we run experiments for Zipfian distribution which has
highly varied density of objects in different parts of the space.
Thus, we omit an extra set of experiments under different density
measures. Also in our experiments the length of two consecutive
query points is kept fixed. If we decrease the length between the
two consecutive query points, the response time should also reduce
as the same object will remain the most probable NN for many
consecutive points. We skip this
discussion as it is an established concept in safe region based
methods.}

\section{Summary}
\label{sec:conc}
In this paper, we have introduced the concept of Probabilistic Voronoi Diagrams
(PVDs). A PVD divides the data space using a probability measure.
Based on the PVD, we developed two different techniques: a
pre-computation approach and an incremental approach, for
efficient processing of Probabilistic Moving Nearest Neighbor
(PMNN) queries. Our experimental results show that our techniques
outperform the sampling based approach by up to two orders of
magnitude in our evaluation metrics.

Our work on PVD opens new avenues for future work. Currently our
approach finds the most probable NN for a moving query point; in
the future we aim to extend it for top-$k$ most probable NNs. PVDs
for other types of probability density functions such as normal
distribution are to be investigated. We also plan to have a
detailed investigation on PVDs of higher dimensional spaces.


\bibliographystyle{elsarticle-num}
\bibliography{wholething}







\end{document}